\newcommand{\circled}[2][]{\tikz[baseline=(char.base)]
    {\node[shape = circle, draw, inner sep = 0.5pt]
    (char) {\phantom{\ifblank{#1}{#2}{#1}}};%
    \node at (char.center) {\makebox[0pt][c]{#2}};}}
\numberwithin{equation}{section}
\theoremstyle{definition}
\DeclareMathOperator*{\argmax}{arg\,max}
\newtheorem{thm}{Theorem}[]
\newtheorem{theorem}[thm]{Theorem}
\newtheorem{thm3}{Theorem}[]
\newtheorem{definition}[thm3]{Definition}
\newtheorem{thm1}{Theorem}[]
\newtheorem{lemma}[thm1]{Lemma}
\newtheorem{thm7}{Theorem}[]
\newtheorem{corollary}[thm7]{Corollary}
\newtheorem{thm4}{Theorem}[]
\newtheorem{proposition}[thm4]{Proposition}
\newtheorem{thm2}{Theorem}[]
\newtheorem{assumption}[thm2]{Assumption}
\newtheorem{thm5}{Theorem}[]
\newtheorem{claim}[thm5]{Claim}
\newtheorem{thm6}{Theorem}[]
\newtheorem{example}[thm6]{Example}
\title{Flexible Moral Hazard Problems with Adverse Selection}
\author{Siwen Liu\thanks{Department of Economics, University of Bonn (siwen.liu@uni-bonn.de).}}
\date{October , 2025}
\begin{document}

\maketitle
\begin{abstract}
     We study a moral hazard problem with adverse selection in which a risk-neutral agent can directly control the output distribution and possesses private information about the production environment. The principal designs a menu of contracts, subject to limited liability, to motivate the agent to exert effort. Departing from classical models, the principal not only induces specific levels of effort by designing the "power" of the contracts but also regulates the supports of the implemented output distributions through shaping the "range" of the contracts. We characterize the set of output distributions that can be implemented in this environment. Our results show that it is either optimal for the principal to provide a single full-range contract, or the optimal low-type contract range excludes some high outputs, or the optimal high-type contract range excludes some low outputs. We provide sufficient and necessary conditions on when a single full-range contract is optimal under convex effort functions, and show that this condition remains sufficient with general effort functions.
\end{abstract}

\section{Introduction}
This paper studies a contract design problem that involves both moral hazard and adverse selection. It focuses on situations in which a principal (she) uses contracts to motivate an agent (he) to undertake costly hidden actions (moral hazard), while the agent simultaneously possesses private information about the production environment (adverse selection). The goal is to investigate the optimal menu of contracts for the principal in such situations. This paper employs the flexible moral hazard framework, based on the work of \cite*{georgia22}, and analyzes a model in which the agent can directly control the output distribution. 

In the model, the risk-neutral agent is capable of choosing an arbitrary distribution over the output space, incurring a cost determined by the distribution chosen. The principal can provide contracts that make payments to the agent contingent on the realized output. The contracts must satisfy limited liability, which requires that all payments be non-negative. The agent possesses private information about his cost function: the high-type (low-type) agent faces lower (higher) costs and lower (higher) cost margins. Under adverse selection, the principal designs a menu of contracts from which the agent can choose, aiming to maximize her expected profit.

The agent's ability to directly control output distributions contrasts with the classical moral hazard model, where the agent is limited to choosing an effort level that parameterizes the output distribution. This formulation captures real-world scenarios where the agent's action space is sufficiently rich to allow the implementation of any output distribution.

We focus on scenarios where the agent's cost of choosing an output distribution is moment-based: it depends solely on the expectation of a function over this distribution. This function, interpreted as the effort function, specifies the effort needed to produce each output with certainty. The corresponding expectation is interpreted as the \textit{aggregate effort} needed to generate the distribution. Aggregate effort plays a central role in the decision problems of both parties. In the absence of adverse selection, we show that the principal's profit-maximizing problem reduces to selecting an aggregate effort level to implement. Under adverse selection, how the aggregate effort is distributed across different output levels becomes important in the design problem. 

A prominent feature of our model is that the principal is equipped with two instruments to design contracts. The first instrument, the \textit{power} of a contract, measures the steepness of the contract and determines the level of aggregate effort the agent exerts. A contract with a higher power generally induces the agent to work more. The second instrument, the \textit{range} of a contract, denotes the set of outputs for which the contract provides maximal production incentives. The support of any output distribution implemented by the contract is necessarily a subset of its range. Under adverse selection, the power instrument motivates different types of agents to exert aggregate effort levels preferred by the principal, while the range instrument facilitates screening, confining agents to produce outputs within designated contract ranges. 

Contract range is a key design element in the model. It captures the principal's trade-off between enhancing production value and reducing information rents. Expanding the contract ranges broadens the set of output distributions that can be implemented, thereby increasing the expected value of production. Conversely, narrowing the contract ranges allows the principal to tailor contracts more specifically to each agent type, making them less appealing to the other type. In this way, the principal reduces the screening costs incurred. 

We begin by characterizing the set of output distributions that can be implemented under adverse selection. The characterization result highlights the interplay between the aggregate efforts and the supports of the output distributions in determining implementability (Theorem \ref{IMP-pp2}). In particular, we identify cases in which implementability is guaranteed only when the supports of the distributions overlap in a limited way.

We analyze the structure of the principal's optimal menus. In any optimal menu, the low-type contract's power must be no larger than that of the high-type contract (Proposition \ref{OM-pp1}). Moreover, the range of the low-type contract can always be extended to the leftmost point to include all the low outputs, and the range of the high-type contract can be similarly extended to the rightmost point to include all the high outputs, without harming the menu's feasibility (Proposition \ref{OM-pp2}). It implies that granting the low-type (high-type) agent maximal incentives to produce low-cost, low-return (high-cost, high-return) outputs does not undermine feasibility. Finally, both contract ranges from an optimal menu can be extended to full without violating feasibility if and only if the contracts share the same power (Theorem \ref{OM-pp3}). 

 Theorem \ref{OM-pp3} implies that either it is optimal for the principal to provide a single full-range contract, or, at the optimal menu, some high outputs must be excluded from the low-type contract range, or it's necessary to exclude some low outputs from the high-type contract range. The optimality of a single full-range contract means that there is no need for the principal to differentiate between the agents, nor a necessity to exclude any output levels from the contract ranges. By contrast, when differentiation becomes essential, the exclusion of certain outputs is necessary to reduce the agent's payoff from misreporting. Given the same contract, the high-type agent intends to produce higher outputs than the low-type agent. Thus, excluding high outputs from the low-type contract can effectively reduce the high-type agent's incentives to misreport, and excluding low outputs from the high-type contract achieves the reverse. These findings provide insight into the structure of optimal task allocations between heterogeneous employees in practical settings.

We next consider the case where the effort function is convex. In this setting, the agent's cost of choosing an output distribution rises with its risk. We show that the optimal menu features contracts with equal powers if and only if, in the corresponding pure moral hazard problems, the optimal low-type contract has weakly lower power than the optimal high-type contract (Theorem \ref{EF-th1}). Moreover, this condition is both sufficient and necessary for when the optimum of the principal's problem can be reached by providing a single full-range contract. Furthermore, we examine distortions in aggregate efforts and welfare outcomes at optimal menus under adverse selection, relative to the pure moral hazard and first-best benchmarks (Proposition \ref{EF-pp1}).

We further demonstrate that, with any general effort function, providing a menu consisting of a single full-range contract is optimal if providing this contract is also optimal in the principal's problem where the effort function is replaced by its lower convex envelope (Theorem \ref{OSC-th1}). This finding implies that the condition in Theorem \ref{EF-th1} is also sufficient for the optimum of the principal's problem to be reached by a single full-range contract.  

Finally, we generalize Theorem \ref{OM-pp3} to two broader settings. The first extends the agent's type space to a finite set. The second considers an arbitrary type space and an arbitrary structure on cost functions, subject to the restriction that the effort function is affine. This is the case where the output distribution enters the cost functions through its mean. For both settings, we show that the optimal menu either consists of a single full-range contract or includes some contracts with truncated ranges (Proposition \ref{NT-pp1} and Proposition \ref{OSG-pp1}). Thus, whenever the principal employs contracts with varying powers at an optimal menu, she must tailor their ranges to impose strict restrictions on the supports of the output distributions chosen by the agent. \\
\\
\textbf{Related Literature}: The study of principal-agent problems featuring moral hazard can be traced back to seminal works such as \cite{mirrlees} and \cite{holmstrom79}. These works, along with a substantial body of subsequent research, analyze the design of optimal incentive contracts through which a principal motivates an agent to exert costly effort. Most of the papers in this literature assume that the agent selects an effort level, which in turn parameterizes the resulting output distribution. By contrast, \cite{holm_milgrom_87} introduces a moral hazard setting in which the agent directly chooses the output distribution itself, rather than selecting an effort level that indirectly shapes it. The cost of choosing any output distribution is specified by a cost function. This formulation grants the agent high flexibility in determining the output distribution. We refer to this approach as the flexible moral hazard framework. 

Recent explorations of the flexible moral hazard framework include \cite{hebert18} and \cite{krahmer2025}, which analyze security design problems where a seller issues a security and directly shapes the probability distribution of the asset's return. \cite*{barron20} considers a moral hazard problem where the agent can costlessly add mean-preserving noise to the output distribution. \cite{mattsson23} and \cite{Bonham21} both study contract design problems where the agent can arbitrarily alter the output distribution, incurring a cost based on the Kullback-Leibler divergence between the initial and altered distributions. 

Our study builds on \cite{georgia22}, which develops a first-order approach to the flexible moral hazard problems with general cost functions that are smooth and convex. \cite{georgia22} provides a first-order characterization of when an output distribution can be implemented by a contract, as well as a necessary first-order condition for the principal's optimal contract. Extending \cite{georgia22}, we incorporate adverse selection into the moral hazard problem and study how asymmetric information influences the design of contracts.

 This paper is also related to the literature on moral hazard problems with adverse selection (\cite{laffont1986}, \cite{baron1987}, \cite{picard1987}, \cite{melumad}, \cite{faynzi}). More recently, \cite{gottlieb2017} studies a moral hazard problem with binary outputs, binary effort levels, and two-dimensional private information, highlighting issues of exclusions and distortions in the optimal menus. \cite{gottlieb22} shows that when the agent is risk-neutral and protected by limited liability, under an assumption that restricts the set of possible output distributions, it is always optimal for the principal to provide a single contract. \cite{castro21} extends this result to settings with a risk-averse agent under additional assumptions. \cite*{chade21} investigates conditions under which moral hazard and adverse selection can be decoupled, thereby reducing the moral hazard problem with adverse selection to a pure screening problem. While all the aforementioned studies are conducted under the classical moral hazard framework, our study employs the flexible moral hazard framework.
 
 \cite{krahmer2024} explores a screening problem where the agent can invest in the distribution of his private type. The study shows that the optimal contracts feature finite options under moment-based cost functions. Our study differs by focusing on a setting in which the agent is initially endowed with private information and subsequently chooses a hidden action (adverse selection followed by moral hazard). In contrast, \cite{krahmer2024} examines a setting where the agent first invests in his type, which then becomes the input to the subsequent screening problem (moral hazard followed by adverse selection). 

 Another closely related work is \cite*{castro25}, which looks into a problem that involves both flexible moral hazard and adverse selection. \cite{castro25} provides a characterization result on incentive-compatible menus and utilizes this result in several optimal contract design settings. Their approach reduces the problem into a pure screening problem by focusing on on-path incentive compatibility constraints and subsequently verifying whether all off-path deviations can be prevented. In contrast, our model cannot make use of their approach because of the imposed limited liability constraints and the complexity of the off-path deviations. Instead, we leverage the analytical tractability of the flexible moral hazard framework to directly deal with the double deviation issues that arise from the joint presence of moral hazard and adverse selection.\\
 \\
 This paper is organized as follows. In Section \ref{MBCF-model}, we set up the model and formulate the principal's problem. In Section \ref{MBCF-imp}, we present the characterization of implementable output distributions. Section \ref{MBCF-om} demonstrates the results on optimal contract powers and optimal contract ranges. Section \ref{MBCF-cef} focuses on the problem with convex effort functions. Section \ref{MBCF-os} explores the optimality of single full-range contracts. In Section \ref{MBCF-general}, we extend our adverse selection setting beyond binary types. Finally, we provide some concluding remarks in Section \ref{MBCF-con}.
\section{Model}
\label{MBCF-model}
  We consider a moral hazard problem in which a risk-neutral principal (she) hires a risk-neutral agent (he) to take costly actions to produce outputs, under a setting where both parties share symmetric information about the production environment. The agent can directly choose an output distribution $\mu \in \Delta([\underline{x},\overline{x}])$, where $[\underline{x},\overline{x}]\subset \mathbb{R}_{+}$. An output realization $x \in [\underline{x},\overline{x}]$ indicates the monetary value of the output to the principal.
   
   The agent has a binary type $t \in \{0,1\}$ that affects his cost function. It is known to both the principal and the agent that $t = 0$ occurs with probability $p_0\in (0,1)$, while $t=1$ occurs with probability $p_1 = 1-p_0$. However, the realization of $t$ is private to the agent. Let $C_{t}:\Delta([\underline{x},\overline{x}])\to \mathbb{R}_{+}$ denote the cost function corresponding to type $t \in \{0,1\}$. We assume that the cost functions are "moment-based" and take the following form 
\begin{equation*}
C_t(\mu) = K_t\big(\int c(x)\mu(dx)\big),\quad \forall \mu \in \Delta([\underline{x},\overline{x}]),\quad\forall t\in \{0,1\}.
\end{equation*}

We assume that $c:[\underline{x},\overline{x}] \to \mathbb{R}_{+}$ is differentiable and strictly increasing, and $K_t: [\underline{c},\overline{c}]\to \mathbb{R}_{+}$ ($t\in\{0,1\}$) is strictly increasing, strictly convex and twice differentiable, where $\underline{c}:= c(\underline{x})$ and $\overline{c} := c(\overline{x})$. We standardize $\underline{c}$ to be 0. According to \cite{georgia22}, the monotonicity of $c(\cdot)$ guarantees that the cost function $C_t(\cdot)$ ($t\in\{0,1\}$) is increasing in the sense of first-order stochastic dominance (if distribution $\mu'$ first-order stochastically dominates $\mu$ then $C_t(\mu')\geq C_t(\mu)$). The convexity of $K_t(\cdot)$ guarantees that $C_t(\cdot)$ is convex.
 
 We can interpret $c(\cdot)$ as the effort function where $c(x)$ is the amount of effort needed to produce $x$ with probability 1. We refer to $\int c(y)\mu(dy)$ as the \textit{aggregate effort} needed to generate output distribution $\mu \in \Delta([\underline{x},\overline{x}])$. Function $K_t(\cdot)$ indicates how the total cost depends on the aggregate effort given type $t\in \{0,1\}$. For convenience, sometimes we use $k_t(\cdot)$ to denote $K_t'(\cdot)$.
 
Note that the agent's effort function is consistently $c(\cdot)$ irrespective of his type, although the cost's dependency on aggregate effort varies according to type. We assume that $K_0(\underline{c}) = K_1(\underline{c}) = 0$ and $K_0(\alpha) > K_1(\alpha)$ for all $\alpha\in (\underline{c},\overline{c}]$.\footnote{The main results in this paper still hold in generalized settings where we only assume that $\underline{c} \geq 0$ and $K_0(\underline{c})\geq K_1(\underline{c})\geq 0$, while imposing a limited liability constraint such that the payment from the principal to the agent specified by any contract is no smaller than a constant $K\geq 0$.} It is less costly for a type-1 agent to produce outputs than a type-0 agent, given a certain level of aggregate effort.

 The agent's choice of output distribution is not contractible. However, the principal can provide a contract contingent on the realized output to motivate the agent to produce. A contract $s:[\underline{x},\overline{x}] \to \mathbb{R}$ is a measurable function that specifies the payment $s(x)$ from the principal to the agent when the realized output is $x\in [\underline{x},\overline{x}]$. We further require that $s(\cdot)$ satisfies limited liability, i.e., $s(x)\geq 0$ for all $x\in [\underline{x},\overline{x}]$.

The principal optimally designs a menu of contracts to maximize her expected profit. According to \cite{myerson82}, we can focus on the design of a direct incentive-compatible mechanism $(s_0(\cdot),s_1(\cdot),\mu_0,\mu_1)$, where $s_0(\cdot)$, $s_1(\cdot)$ are contracts allocated to the agent if he reports his type to be 0 and 1 respectively, and $\mu_0$, $\mu_1$ are the output distributions recommended to the agent corresponding to the reported type. 

The timeline of the model is as follows. First, the agent observes his private type $t\in \{0,1\}$. Then, the principal provides a menu of contracts (with recommended output distributions) $(s_0(\cdot),s_1(\cdot),\mu_0,\mu_1)$ to the agent\footnote{We can allow the agent to choose between accepting the menu or rejecting the menu and receiving his outside option with zero utility. Because the limited liability constraint ensures that the agent's utility from the menu is always non-negative, the agent is always willing to accept. Therefore, this step is omitted.}. The agent reports his type $t$ and gets the corresponding contract $s_{t}(\cdot)$. Next, the agent chooses an output distribution. Finally, the output is realized and the contract is carried out.

\subsection{Baseline Problem}
\label{MBCF-BP}
This section introduces the baseline problem without adverse selection. In the baseline problem, the agent's type $t$ is observed by the principal. The principal aims to design a contract to maximize her profit, knowing that the agent's cost function is given by $C_t(\mu) = K_t\big(\int c(y)\mu(dy)\big)$ for all $\mu \in \Delta([\underline{x},\overline{x}])$.

We introduce the notion of concavification that will be useful in the analysis that follows. For any function $g:[\underline{c},\overline{c}]\to \mathbb{R}$, we say $\widehat{g}:[\underline{c},\overline{c}]\to \mathbb{R}$ is the concavification of $g$ if $\widehat{g}$ is the smallest concave function that is everywhere weakly greater than $g$: 
\begin{equation*}
    \widehat{g}(\alpha): = \inf\{u(\alpha)\mid u\text{ is concave and }u\geq g\text{ over }[\underline{c},\overline{c}]\}.
\end{equation*}
\textbf{Agent's Problem:} Given a bounded and measurable contract $s(\cdot)$, the type-$t$ agent chooses an output distribution to maximize his expected payoff:
 \begin{equation}
 \max_{\mu\in \Delta([\underline{x},\overline{x}])} \int s(x)\mu(dx) - C_t(\mu).
 \label{PRE-A}
 \end{equation}

 We say contract $s(\cdot)$ implements distribution $\mu$ if $\mu$ solves problem \eqref{PRE-A}. Define $\theta:[\underline{c},\overline{c}]\to [\underline{x},\overline{x}]$ the inverse function of $c(\cdot)$. We have the following result on the aggregate effort corresponding to the output distribution that is implemented by a contract $s(\cdot)$. 
 \begin{lemma}
    Take an optimal solution $\mu^{*}$ to problem \eqref{PRE-A} and take the corresponding aggregate effort $\alpha^{*}: = \int c(y)\mu^{*}(dy)$. Then it holds that 
    \begin{equation}
    \alpha^{*} = \argmax_{\alpha\in [\underline{c},\overline{c}]} \big[\widehat{s\circ \theta}(\alpha) - K_t(\alpha)\big].
    \label{PRE-aalpha}
    \end{equation}
\label{PRE-lm2}
\end{lemma}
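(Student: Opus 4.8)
The plan is to exploit the two‑stage structure of the agent's problem that the preceding discussion already sets up: first fix the aggregate effort and optimize over distributions attaining it, then optimize over the aggregate effort itself. First I would record that every $\mu\in\Delta([\underline x,\overline x])$ has aggregate effort $\int c(x)\mu(dx)\in[\underline c,\overline c]$ (because $\underline c\le c\le\overline c$), and conversely every $\alpha\in[\underline c,\overline c]$ is the aggregate effort of some distribution --- e.g.\ of the Dirac mass at $\theta(\alpha)\in[\underline x,\overline x]$. Consequently the feasible set of \eqref{PRE-A} is partitioned according to the value of the aggregate effort, and since $C_t(\mu)=K_t\big(\int c(x)\mu(dx)\big)$ is constant on each cell of this partition, restricting \eqref{PRE-A} to the cell with aggregate effort $\alpha$ is exactly problem \eqref{PRE-Aalpha} (equivalently \eqref{PRE-Agamma}); by the preceding discussion its value is $\widehat{s\circ\theta}(\alpha)-K_t(\alpha)$. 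Hence the value of \eqref{PRE-A} equals $\sup_{\alpha\in[\underline c,\overline c]}\big[\widehat{s\circ\theta}(\alpha)-K_t(\alpha)\big]$.

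Given this identity, the rest is a short sandwich argument, which I would carry out as follows. Let $\mu^{*}$ be an optimal solution to \eqref{PRE-A} and set $\alpha^{*}:=\int c(y)\mu^{*}(dy)$. On the one hand, $\mu^{*}$ is feasible for problem \eqref{PRE-Aalpha} with parameter $\alpha=\alpha^{*}$, so its objective value there, namely $\int s(x)\mu^{*}(dx)-K_t(\alpha^{*})$, is at most $\widehat{s\circ\theta}(\alpha^{*})-K_t(\alpha^{*})$. On the other hand, optimality of $\mu^{*}$ for \eqref{PRE-A}, together with $C_t(\mu^{*})=K_t(\alpha^{*})$ and the identity from the previous paragraph, gives $\int s(x)\mu^{*}(dx)-K_t(\alpha^{*})=\sup_{\alpha\in[\underline c,\overline c]}\big[\widehat{s\circ\theta}(\alpha)-K_t(\alpha)\big]\ge\widehat{s\circ\theta}(\alpha^{*})-K_t(\alpha^{*})$. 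Chaining these two inequalities forces equality throughout, so $\widehat{s\circ\theta}(\alpha^{*})-K_t(\alpha^{*})=\sup_{\alpha\in[\underline c,\overline c]}\big[\widehat{s\circ\theta}(\alpha)-K_t(\alpha)\big]$; that is, $\alpha^{*}$ attains the supremum.

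Finally I would invoke strict concavity: $\widehat{s\circ\theta}(\cdot)-K_t(\cdot)$ is strictly concave on $[\underline c,\overline c]$ (the concavification $\widehat{s\circ\theta}$ is concave and $K_t$ is strictly convex), so it has at most one maximizer on $[\underline c,\overline c]$; since $\alpha^{*}$ is a maximizer, it is \emph{the} maximizer, i.e.\ $\alpha^{*}=\argmax_{\alpha\in[\underline c,\overline c]}\big[\widehat{s\circ\theta}(\alpha)-K_t(\alpha)\big]$, which is \eqref{PRE-aalpha}. This argument simultaneously shows that the $\argmax$ in the statement is well defined: nonempty because the supremum is attained (at $\alpha^{*}$, since $\mu^{*}$ exists by hypothesis), and a singleton by strict concavity.

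The steps are essentially bookkeeping once the two‑stage decomposition is granted, so \textbf{the only point requiring genuine care is the identity in the first paragraph}, specifically the claim that for each fixed $\alpha$ the supremum of $\int s(x)\mu(dx)$ over distributions with aggregate effort $\alpha$ equals $\widehat{s\circ\theta}(\alpha)$. This is the change of variables that pushes $\mu$ forward through $c$ to an effort distribution $\gamma$ on $[\underline c,\overline c]$ with mean $\alpha$, followed by the concavification identity $\sup\{\int s\circ\theta(c)\,\gamma(dc):\gamma\in\Delta([\underline c,\overline c]),\ \int c\,\gamma(dc)=\alpha\}=\widehat{s\circ\theta}(\alpha)$ for the bounded measurable function $s\circ\theta$: the ``$\le$'' direction is Jensen's inequality applied to any concave majorant of $s\circ\theta$, and the ``$\ge$'' direction follows because the left‑hand side, viewed as a function of $\alpha$, is itself concave and dominates $s\circ\theta$ pointwise, hence dominates its concavification. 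All of this has already been done in the discussion leading up to the lemma, so in the write‑up I would simply cite it and add the sandwich and strict‑concavity steps.
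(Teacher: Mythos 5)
Your proposal is correct and follows essentially the same two-stage decomposition that the paper sets up just before stating the lemma (fix the aggregate effort, concavify to solve the inner problem, then optimize over $\alpha$ and invoke strict concavity of $\widehat{s\circ\theta}-K_t$ for uniqueness). The extra care you take in justifying the concavification identity and in verifying that the $\argmax$ is a well-defined singleton fills in details the paper leaves implicit but does not change the argument.
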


 The idea of Lemma \ref{PRE-lm2} is illustrated as follows. To solve \eqref{PRE-A}, we first consider problem \eqref{PRE-Aalpha} where the agent focuses on choosing an optimal distribution among those with an aggregate effort of $\alpha$.
\begin{equation}
\max_{\mu \in \Delta([\underline{x},\overline{x}])}  \int s(x)\mu(dx) - K_t(\alpha) \quad
s.t.  \int c(x)\mu(dx) = \alpha. 
\label{PRE-Aalpha}
\end{equation}

We can reformulate \eqref{PRE-Aalpha} as an optimization problem in the effort space rather than the output space by a change of variables. We have that problem \eqref{PRE-Aalpha} is equivalent to 
\begin{equation}
\max_{\gamma \in \Delta[\underline{c},\overline{c}]}  \int s\circ \theta (c)\gamma(dc) - K_t(\alpha) \quad s.t. \int c\gamma(dc) = \alpha,
\label{PRE-Agamma}
\end{equation}
where $s\circ \theta:[\underline{c},\overline{c}]\to \mathbb{R}$ is the composition of $s(\cdot)$ and $\theta(\cdot)$. $s\circ \theta(\cdot)$ can be regarded as the contract written in terms of efforts instead of outputs ($s\circ \theta(c)$ is the payment to the agent if the realized output costs effort $c$). In problem \eqref{PRE-Agamma}, we choose an effort distribution with aggregate effort $\alpha$ that maximizes the agent's payoff. Since $\alpha$ is fixed in \eqref{PRE-Agamma}, any optimal solution $\gamma$ to \eqref{PRE-Agamma} maximizes $\int s\circ \theta(c)\gamma(dc)$ while satisfying $\int c\gamma(dc) = \alpha$. Hence, the optimal value of \eqref{PRE-Agamma} is given by $\widehat{s\circ\theta}(\alpha) - K_t(\alpha)$, ,  where $\widehat{s\circ \theta}(\cdot)$ is the concavification of $s\circ \theta(\cdot)$ on interval $[\underline{c},\overline{c}]$\footnote{This concavification technique has been used in similar contexts (see, e.g., \cite{BP2011}). The expression $\widehat{s\circ \theta}(\alpha) - K_t(\alpha)$ is the optimal value of problem \eqref{PRE-Agamma}, provided that the maximum exists. If the maximum does not exist, $\widehat{s\circ \theta}(\alpha) - K_t(\alpha)$ denotes the supremum payoff the agent can approximate.}.

To maximize his expected payoff, the agent chooses the optimal level of aggregate effort $\alpha$ to maximize $\widehat{s\circ \theta}(\alpha) - K_t(\alpha)$, as stated in the following lemma. This maximizer is unique since $\widehat{s\circ \theta}(\alpha) - K_t(\alpha)$ is strictly concave.\\
\\
\noindent\textbf{Principal's Problem:} The principal optimally designs a contract to maximize her profit and recommends to the agent an output distribution $\mu$ that the proposed contract can implement. Suppose that the principal would like to implement distribution $\mu$ by contract $s(\cdot)$. According to \cite{georgia22}, it is necessary that there exists a constant $m\in \mathbb{R}$ such that\footnote{The first equality in \eqref{PRE-FOC} holds $\mu$-almost everywhere. We omit the discussion of this technicality in the rest of the paper since it doesn't affect the analysis.} 
\begin{equation}
 s(x) \left\{
 \begin{array}{ll}
  = K_t'\big(\int c(y)\mu(dy)\big)\cdot c(x) + m    &\text{ if }x\in \text{supp}(\mu), \\
  \leq K_t'\big(\int c(y)\mu(dy)\big)\cdot c(x) + m     &\text{ if } x \notin \text{supp}(\mu).
 \end{array}
 \right.
 \label{PRE-FOC}
\end{equation}

Here, the term $K_t'\big(\int c(y)\mu(dy)\big)c(\cdot)$ is the Gateaux derivative of $C_t(\cdot)$ at $\mu$. The right-hand side of \eqref{PRE-FOC} quantifies the marginal cost associated with adjustments to the distribution around $\mu$, while the left-hand side of \eqref{PRE-FOC} is the contract payment, indicating the marginal gain from such adjustments. \eqref{PRE-FOC} presents how a contract is pinned down by the distribution it aims to implement within the support of the targeted distribution.

Under \eqref{PRE-FOC} and $\underline{c} = 0$, limited liability implies $m\geq 0$. Since $m$ is a constant term in the contract that doesn't affect the implemented distribution, we must have $m=0$ at the optimal contract. Hence, the principal's problem can be formulated as
\begin{equation}
\begin{aligned}
    \max_{s(\cdot), \mu \in \Delta([\underline{x},\overline{x}])} & \int (x-s(x))\mu(dx) \\
    s.t. \quad & s(x) = K_t'\big(\int c(y)\mu(dy)\big)\cdot c(x), \text{ for }x \in \text{supp}(\mu)\\
     & 0 \leq s(x)\leq K_t'\big(\int c(y)\mu(dy)\big)\cdot c(x), \text{ for }x \notin \text{supp}(\mu).
\end{aligned}
\label{PRE-P}
\end{equation}

Note that when we are stating the principal's problem, we assume that the agent chooses the recommended distribution as long as there doesn't exist a strictly better alternative, even though there can be other output distributions that yield the same payoff to him. 

Similar to Lemma \ref{PRE-lm2}, we have the following result on the aggregate effort corresponding to the optimal output distribution that solves the principal's problem. We define $\Theta(\cdot):= \hat{\theta}(\cdot)$ as the concavification of $\theta(\cdot)$ on the interval $[\underline{c},\overline{c}]$. 

\begin{lemma}
Take an optimal solution $s^{*}(\cdot)$, $\mu^{*}$ to problem \eqref{PRE-P} and take the corresponding aggregate effort $\alpha^{*}:= \int c(y)\mu^{*} (dy)$. Then it holds that 
\begin{equation}
    \alpha^{*} \in \argmax_{\alpha \in [\underline{c},\overline{c}]} \big[ \Theta(\alpha) - K_t'(\alpha)\cdot \alpha\big].
\label{PRE-sol}
\end{equation}
\label{PRE-lm1}
\end{lemma}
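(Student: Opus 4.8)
The plan is to reduce the principal's problem \eqref{PRE-P} to the one-dimensional optimization in \eqref{PRE-sol} by establishing two facts: (i) the value of \eqref{PRE-P} is at most $\max_{\alpha\in[\underline c,\overline c]}\big[\Theta(\alpha)-K_t'(\alpha)\cdot\alpha\big]$, and (ii) this maximum is attained by a feasible pair. Together these force any optimal solution to induce an aggregate effort lying in the argmax.

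First I would bound the value of \eqref{PRE-P} from above. Fix any feasible pair $(s,\mu)$ and set $\alpha:=\int c(y)\mu(dy)$. Using the support constraint in \eqref{PRE-P}, $\int s(x)\mu(dx)=K_t'(\alpha)\int c(x)\mu(dx)=K_t'(\alpha)\cdot\alpha$, so the profit equals $\int x\,\mu(dx)-K_t'(\alpha)\cdot\alpha$. Pushing $\mu$ forward through $c(\cdot)$ to the effort distribution $\gamma$ (as in the passage from \eqref{PRE-alpha} to \eqref{PRE-gamma}), we have $\int x\,\mu(dx)=\int\theta(c)\gamma(dc)$ and $\int c\,\gamma(dc)=\alpha$; since $\Theta=\widehat\theta\geq\theta$ is concave, Jensen's inequality gives $\int\theta\,d\gamma\leq\int\Theta\,d\gamma\leq\Theta(\alpha)$. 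Hence the profit is at most $\Theta(\alpha)-K_t'(\alpha)\cdot\alpha\leq\max_{\alpha'\in[\underline c,\overline c]}\big[\Theta(\alpha')-K_t'(\alpha')\cdot\alpha'\big]$, the right-hand maximum existing because $\Theta$ and $K_t'$ are continuous on the compact interval $[\underline c,\overline c]$.

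Next I would show the bound is achieved. Let $\alpha^{**}$ be a maximizer of $\Theta(\alpha)-K_t'(\alpha)\cdot\alpha$. Because $\theta(\cdot)$ is continuous (it is the inverse of a continuous strictly increasing function) and $\Theta$ is its concavification, there is a distribution $\gamma^{**}$, supported on at most two points, with mean $\alpha^{**}$ and $\int\theta\,d\gamma^{**}=\Theta(\alpha^{**})$; let $\mu^{**}$ be its push-forward through $\theta(\cdot)$, so $\mu^{**}\in\Delta([\underline x,\overline x])$ with $\int c\,d\mu^{**}=\alpha^{**}$ and $\int x\,d\mu^{**}=\Theta(\alpha^{**})$. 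Put $s^{**}(x):=K_t'(\alpha^{**})\cdot c(x)$ for all $x$; since $K_t'$ is nonnegative and $c\geq0$, limited liability holds, and the constraints of \eqref{PRE-P} hold (with $m=0$). Moreover $s^{**}\circ\theta$ is affine, hence its own concavification, so by Lemma \ref{PRE-lm2} the agent's optimal aggregate effort under $s^{**}$ solves $\max_{\alpha}\big[K_t'(\alpha^{**})\cdot\alpha-K_t(\alpha)\big]$, whose unique maximizer is $\alpha^{**}$ because $K_t'$ is strictly increasing; by the tie-breaking convention the agent then selects $\mu^{**}$. Thus $(s^{**},\mu^{**})$ is feasible for \eqref{PRE-P} with profit exactly $\Theta(\alpha^{**})-K_t'(\alpha^{**})\cdot\alpha^{**}$, so the value of \eqref{PRE-P} equals $\max_{\alpha}\big[\Theta(\alpha)-K_t'(\alpha)\cdot\alpha\big]$.

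Finally I would combine the two steps: for an optimal $(s^{*},\mu^{*})$ with aggregate effort $\alpha^{*}$, step (i) gives profit $\leq\Theta(\alpha^{*})-K_t'(\alpha^{*})\cdot\alpha^{*}$, while optimality together with step (ii) gives profit $=\max_{\alpha}\big[\Theta(\alpha)-K_t'(\alpha)\cdot\alpha\big]$; hence $\Theta(\alpha^{*})-K_t'(\alpha^{*})\cdot\alpha^{*}\geq\max_{\alpha}\big[\Theta(\alpha)-K_t'(\alpha)\cdot\alpha\big]$, which can only hold with equality, i.e. $\alpha^{*}\in\argmax_{\alpha\in[\underline c,\overline c]}\big[\Theta(\alpha)-K_t'(\alpha)\cdot\alpha\big]$. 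The step I expect to demand the most care is the attainability argument in (ii): checking that the concavification value $\Theta(\alpha^{**})$ is realized by a genuine probability distribution, and that the explicit contract $s^{**}$ not merely satisfies the first-order condition \eqref{PRE-FOC} but actually implements $\mu^{**}$ in the sense of \eqref{PRE-A}; the remaining computations are routine and are already foreshadowed by the derivation of \eqref{PRE-alpha}--\eqref{PRE-gamma}.
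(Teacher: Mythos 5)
Your proof is correct and follows essentially the same argument the paper sketches in the text leading up to Lemma \ref{PRE-lm1}: slicing by aggregate effort, showing the best profit attainable at each $\alpha$ is $\Theta(\alpha) - K_t'(\alpha)\cdot\alpha$ (upper bound via $\theta\leq\Theta$ and concavity, attainability via a two-point distribution and the affine contract $s(x)=K_t'(\alpha^{**})c(x)$ together with Lemma \ref{PRE-lm2}), and then optimizing over $\alpha$. You have simply made the paper's informal derivation fully rigorous, with the same decomposition and no new ingredients.
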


The idea of Lemma \ref{PRE-lm1} is illustrated as follows. Fix a pair of $s(\cdot)$ and $\mu$ feasible to problem \eqref{PRE-P}. If the aggregate effort of $\mu$ is equal to $\alpha$, i.e., $\int c(y)\mu(dy) = \alpha$, the agent's expected payoff from $s(\cdot)$ and $\mu$ is given by 
\begin{equation*}
\begin{aligned}
\int s(x)\mu(dx) - C_t(\mu) &=\int K_t'\big(\int c(y)\mu(dy)\big)\cdot c(x)\mu(dx) -K_t\big(\int c(y)\mu(dy)\big)\\
&= K_t'(\alpha)\cdot\alpha - K_t(\alpha),
\end{aligned}
\end{equation*}
and the principal's profit can be written as
\begin{equation*}
\begin{aligned}
    \int (x-s(x))\mu(dx) &= \int x \mu(dx) - K_t'(\alpha)\cdot \alpha. 
\end{aligned}
\end{equation*}

Among all the output distributions that induce an aggregate effort of $\alpha \in [\underline{c},\overline{c}]$, the profit-maximizing one is the optimal solution to
\begin{equation}
\max_{\mu \in \Delta([\underline{x},\overline{x}])}  \int x \mu (dx) - K_t'(\alpha) \cdot \alpha 
\quad s.t. \int c(x)\mu(dx) = \alpha.
\label{PRE-alpha}
\end{equation}

Similarly, \eqref{PRE-alpha} is equivalent to the following problem in the effort space:
\begin{equation}
\max_{\gamma \in \Delta([\underline{c},\overline{c}])} \int \theta(c)\gamma(dc) - K_t'(\alpha)\cdot \alpha \quad s.t. \int c\gamma(dc) = \alpha. 
\label{PRE-gamma}
\end{equation}

In problem \eqref{PRE-gamma}, we choose an effort distribution with aggregate effort $\alpha$ that maximizes the principal's profit. Then, any distribution with a mean $\alpha$ that attains the concavification value $\Theta(\alpha)$ is optimal to problem $\eqref{PRE-gamma}$. Given that the principal focuses on inducing an aggregate effort of $\alpha$, $\Theta(\alpha) - K_t'(\alpha)\cdot \alpha$ is her maximal attainable profit. The second term $K_t'(\alpha)\cdot \alpha$ is the total payment to the agent, while the first term $\Theta(\alpha)$ is the efficient production value generated when the aggregate effort $\alpha$ is distributed over outputs in the most productive way. The principal chooses an aggregate effort level to maximize this maximal attainable profit, as stated in the Lemma \ref{PRE-lm1}. 

Consequently, we can approach the principal's problem \eqref{PRE-P} in the following way. We first take the optimal level of aggregate effort $\alpha^{*}$ that solves \eqref{PRE-sol}. Given $\alpha^{*}$, we take the optimal effort distribution $\gamma^{*}$ from \eqref{PRE-gamma}. Take $\mu^{*}$ as the pushforward measure of $\gamma^{*}$ such that $\mu^{*} = \theta \# \gamma^{*}$ (i.e., for every Borel set $B \subseteq [\underline{x},\overline{x}]$, there is $\mu^{*}(B) = \gamma^{*}(\theta^{-1}(B))$). Then $\mu^{*}$ is a probability measure over $[\underline{x},\overline{x}]$ and is the optimal distribution for the principal to recommend to the agent. Take any contract $s^{*}(\cdot)$ that satisfies the constraints in \eqref{PRE-P} with $\mu^{*}$. Then $s^{*}(\cdot), \mu^{*}$ is an optimal solution to \eqref{PRE-P}. 

Given $\alpha^{*}$ that is optimal to \eqref{PRE-sol}, if $\theta(\alpha^{*}) = \Theta(\alpha^{*})$, a point mass distribution with support only at $\alpha$ solves \eqref{PRE-gamma}. If $\theta(\alpha^{*})\neq \Theta(\alpha^{*})$, there exists an optimal solution to \eqref{PRE-gamma} that is supported on only two points\footnote{\eqref{PRE-gamma} is an optimization problem with a linear objective and a linear constraint. Its optimum is attainable by some extreme point of the feasible set. By \cite{winkler}, any such extreme point takes the form of a discrete probability measure that assigns strictly positive mass to at most two points. \label{MBCF-EP}}. We can conclude that the maximal principal's profit can be attained by inducing an output distribution supported on either one or two points. 

For $t\in \{0,1\}$, when the solution to \eqref{PRE-sol} is unique, we define $\alpha_t^{MH}:= \argmax_{\alpha\in [\underline{c},\overline{c}]}\big[\Theta(\alpha) - K_t'(\alpha)\cdot \alpha\big]$ as the \textit{pure moral hazard solution} to the principal's problem.\\
\\
\textbf{First-best Contract: }The first-best problem for the principal is given by
\begin{equation}
    \max_{\mu\in \Delta([\underline{x},\overline{x}])} \int x\mu(dx) - K_t\big(\int c(y)\mu(dy)\big).
\label{PRE-FB}
\end{equation}

Following the previous idea, the optimal first-best output distribution corresponds to an aggregate effort that solves $\max_{\alpha \in [\underline{c},\overline{c}]} \big[ \Theta(\alpha) - K_t(\alpha)\big]$.

\cite{georgia22} provides first-order conditions on the optimal solutions to the agent's problem \eqref{PRE-A} and the principal's problem \eqref{PRE-P}. Those conditions can be used to verify whether a candidate solution is optimal, but do not directly provide the optimal solutions. Unlike their findings, Lemma \ref{PRE-lm2} and \ref{PRE-lm1} offer direct methods to solve the problems with moment-based cost functions. 

In the given formulation, aggregate effort is central to the decision problems of both parties, with output distributions chosen to attain the concavification value corresponding to the optimal aggregate effort. Lemma \ref{PRE-lm2} also provides an alternative way to understand why the first-order approach is always valid in our setting. Given an arbitrary contract $s(\cdot)$, according to Lemma \ref{PRE-lm2}, the agent would act as if he is provided with a contract $\widehat{s\circ \theta}(\cdot)$ that is concave in the aggregate effort. This concavification can be done because the agent is able to choose any arbitrary output distribution. Hence, the agent is faced with a convex decision problem, regardless of the shape of the contract he is provided with. This concavification idea is similar to \cite{barron20}, which shows that concave contracts should be provided when the agent can add arbitrary mean-preserving noise to his output costlessly.

The derivations of Lemma \ref{PRE-lm2} and \ref{PRE-lm1} depend on the simple form of the first-order condition \eqref{PRE-FOC} when the cost function is moment-based. Beyond this technical benefit, moment-based cost functions accommodate a wide range of cost structures relevant to practical scenarios. For instance, an affine effort function $c(\cdot)$ represents situations where the cost of selecting a distribution depends solely on its mean. When $c(\cdot)$ is convex (concave), the cost associated with choosing a distribution rises (falls) with its risk level. 
\subsection{Feasible Menu and Principal's Problem}
In the problem with adverse selection, the principal aims to maximize her profit by optimally designing a menu $(s_0(\cdot),s_1(\cdot),\mu_0,\mu_1)$ that satisfies the \textit{moral hazard (MH)}, \textit{limited liability (LL)} and \textit{incentive compatibility (IC)} constraints. Moral hazard requires that the output distribution $\mu_t$ ($t\in \{0,1\}$) can be implemented by contract $s_t(\cdot)$. Limited liability mandates that both contracts are non-negative. Incentive compatibility ensures that it is optimal for the agent to truthfully report his type.

Rather than directly formulating the principal's problem, we will reformulate the set of feasible menus by decomposing a contract into several design elements. We will focus on designing these elements instead of designing the contracts directly.\\
\\
\textbf{Design Elements:} If a non-negative contract $s_t(\cdot)$ induces the type-$t$ ($t\in\{0,1\}$) agent to choose the distribution $\mu_t$, whose corresponding aggregate effort is equal to $\alpha_t = \int c(y)\mu_t(dy)$ ($\alpha_t$ is unique due to Lemma \ref{PRE-lm2}), then the first order condition establishes that there exists a constant $m_t\in \mathbb{R}_{+}$, such that \begin{equation}
s_t(x) \leq K_t'(\alpha_t)\cdot c(x) + m_t, \quad \forall x\in [\underline{x},\overline{x}],
\label{MBC-eq1}
\end{equation}
and the inequality is binding for any $x\in \text{supp}(\mu)$.

Suppose we consider the set of outputs where \eqref{MBC-eq1} is binding. These are the outputs where the principal provides maximal incentives for the agent to produce. Any output distribution with an aggregate effort $\alpha_t$ can be implemented by $s_t(\cdot)$ if and only if the support of the distribution is a subset of this set. We can alter $s_t(\cdot)$ by setting the contract value outside of this set to zero without harming the implementation ability of the contract. We will focus on contracts of this kind when analyzing the principal's problem. 

Formally, we restrict ourselves to providing a contract $s_t(\cdot)$ ($t\in \{0,1\}$), which takes the following form, to the type-$t$ agent.\footnote{In Appendix \ref{MBCF-app-model}, Proposition \ref{MBCF-app-pp1}, we establish that this restriction is without loss of generality for the profit-maximizing principal under adverse selection.} 
\begin{equation}
s_t(x) = \mathbb{1}_{\{x\in R_t\}}\big[K_t'(\alpha_t)\cdot c(x) + m_t\big], \quad \forall x\in [\underline{x},\overline{x}].
\label{MBC-eq2}
\end{equation}

In \eqref{MBC-eq2}, $R_t$ is the \textit{contract range}, which is the set of outputs where maximal incentives are provided to the agent. We require $R_t$ to be a measurable subset of $[\underline{x},\overline{x}]$. Outside of the range $R_t$, the contract $s_t(\cdot)$ takes zero value. Within the range $R_t$, $s_t(\cdot)$ coincides with $K_t'(\alpha_t) \cdot c(\cdot) + m_t$, where $\alpha_t$ is the \textit{aggregate effort} it aims to implement, and $m_t$ is the \textit{constant payment}. We refer to the contract range $R_t$, aggregate effort $\alpha_t$, and constant payment $m_t$ as the design elements. 

Correspondingly, $s_t\circ \theta(\cdot)$ is given by
\begin{equation}
s_t\circ \theta(\alpha) = \mathbb{1}_{\{\alpha \in c(R_t)\}}\big[K_t'(\alpha_t)\cdot \alpha + m_t\big],\quad \forall \alpha \in [\underline{c},\overline{c}]. 
\label{MBC-eq3}
\end{equation}

We present how $s_t\circ \theta(\cdot)$ is pinned down by those design elements in Figure \ref{MBC-design-el}. $s_t\circ \theta(\cdot)$ is depicted in the blue line. The contract range $R_t$ specifies the part of $s_t\circ\theta(\cdot)$ where it coincides with a straight line, whose slope $K_t'(\alpha_t)$ is determined by the aggregate effort $\alpha_t$, while its intercept is determined by the constant payment $m_t$. We say $K_t'(\alpha_t)$ is the \textit{power} of contract $s_t(\cdot)$, which measures the steepness of the contract within the contract range. 
\begin{figure}[H]
    \centering
    \includegraphics[width=0.5\linewidth]{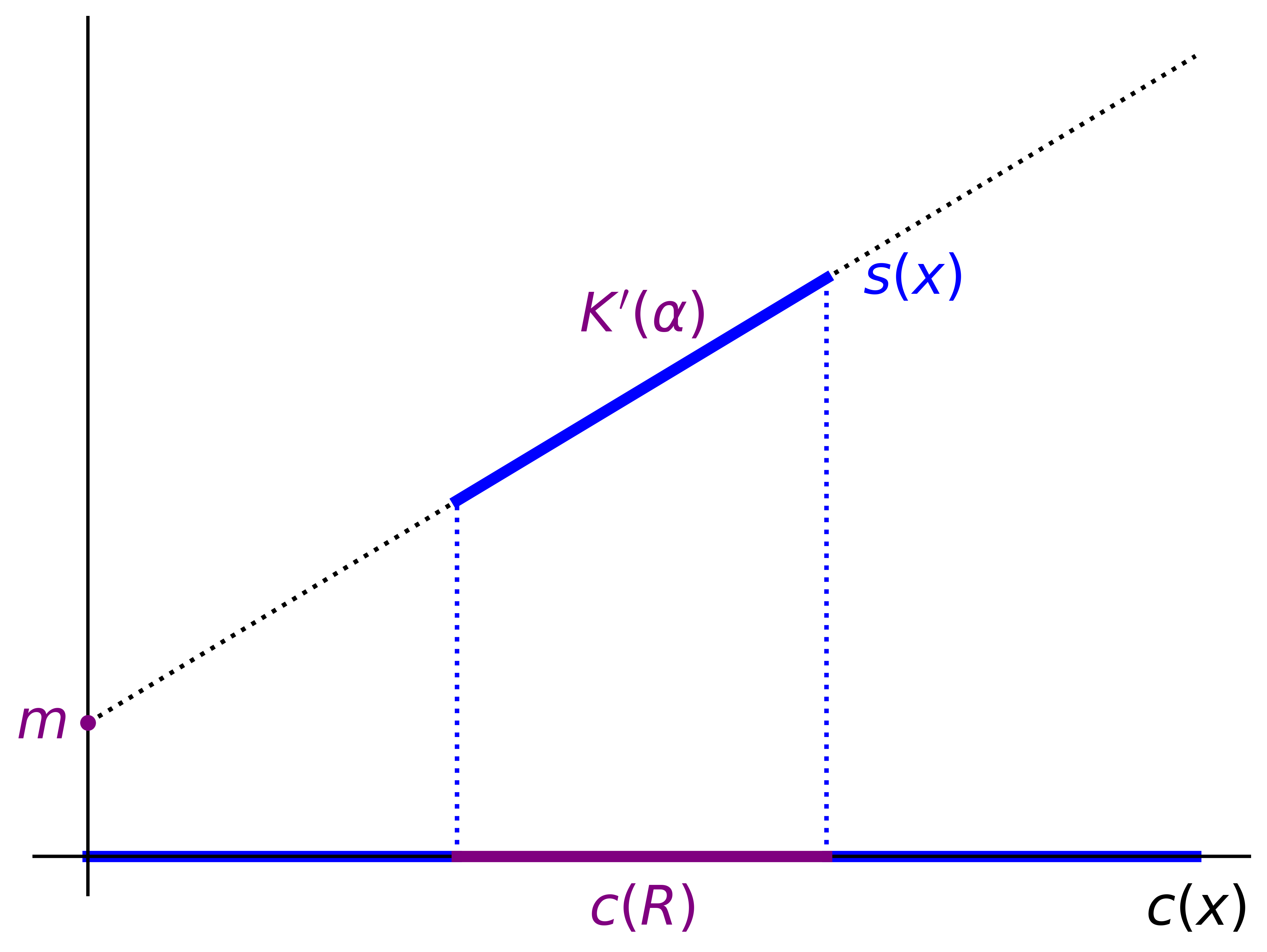}
    \caption{The form of $s_t\circ \theta(\cdot)$ as defined in \eqref{MBC-eq3}.}
    \label{MBC-design-el}
\end{figure}
When the range $R_t$ is expanded, the part of $s_t\circ \theta(\cdot)$ that coincides with the straight line increases. When $R_t$ is the full range, i.e., $R_t = [\underline{x},\overline{x}]$, $s_t\circ \theta(\cdot)$ becomes the straight line itself. We say contract $s_t(\cdot)$, as formulated in \eqref{MBC-eq2}, is \textit{full-range} when its corresponding $R_t$ is equal to $[\underline{x},\overline{x}]$.

Given a contract as formulated in \eqref{MBC-eq2}, with a type-$t$ agent, it can implement any output distribution that induces an aggregate effort of $\alpha_t$ and has a support that is a subset of $R_t$, provided such an output distribution exists and $m_t\geq 0$. Instead of designing a contract directly, we focus on designing $\alpha_t$, $m_t$, $R_t$, and consider the contract pinned down by those elements by \eqref{MBC-eq2}. 

For the rest of this paper, we use $(a, m, R,\mu)$ to denote a menu designed by the principal. $a = (\alpha_0,\alpha_1)\in \mathbb{R}^2$ represents the aggregate effort $\alpha_t$ she intends to implement for the type-$t$ ($t\in \{0,1\}$) agent. $m = (m_0,m_1)\in \mathbb{R}^2$ and $R = (R_0, R_1)\in (2^{[\underline{x},\overline{x}]})^2$ specify the constant payment $m_t$ and the contract range $R_t$ of the type-$t$ contract. $\mu = (\mu_0, \mu_1)\in \Delta([\underline{x},\overline{x}])^2$ refers to the output distribution $\mu_t$ recommended to the type-$t$ agent.\\
\\
\textbf{Feasible Menu: }A menu $(a,m,R,\mu)$ is \textit{feasible} if there exist $s_0,s_1:[\underline{x},\overline{x}]\to \mathbb{R}$ that satisfy the following constraints:
\begin{enumerate}
\item \textit{Moral hazard (MH)}
\begin{equation}
\begin{aligned}
&s_t\circ \theta(\alpha) = \mathbb{1}_{\{\alpha\in c(R_t)\}}\big[K_t'(\alpha_t)\cdot \alpha+m_t\big], \quad \forall \alpha\in [\underline{c},\overline{c}],\quad\forall t\in \{0,1\}.\\
& \int c(x)\mu_t(dx) = \alpha_t, \quad\text{supp}(\mu_t) \subseteq R_t,\quad\forall t\in \{0,1\}.
\end{aligned}
\label{MBC-MH}
\tag{MH}
\end{equation}
The first two constraints in \eqref{MBC-MH} illustrate how our contracts are pinned down by the design elements. The third constraint requires that the corresponding aggregate effort of the distribution $\gamma_t$ ($t\in \{0,1\}$) is equal to $\alpha_t$ and that the support of $\gamma_t$ is a subset of $c(R_t)$. This constraint ensures that the output distribution corresponding to $\gamma_t$ can be implemented by contract $s_t(\cdot)$, given that $m_t\geq 0$. 
\item \textit{Limited liability (LL)}
\begin{equation}
    m_0\geq 0\text{ and }m_1\geq 0.
\label{MBC-LL}
\tag{LL}
\end{equation}
To see why the LL constraints take this form, if we have $m_t\geq 0$, it guarantees that the contract $s_t(\cdot)$ as defined in \eqref{MBC-MH} is non-negative. On the other hand, if $\underline{x} \in R_t$, then limited liability requires that $0 \leq s_t(\underline{x}) = m_t$. If $\underline{x}\notin R_t$, since we want to ensure $s_t(\cdot)$ to implement an output distribution whose support is a subset of $R_t$, it is necessary that $0 \leq s_t(\underline{x}) \leq m_t$.\footnote{In the current formulation, \eqref{MBC-MH} and \eqref{MBC-LL} need to be combined to ensure that the contract $s_t(\cdot)$ ($t\in \{0,1\}$) pinned down by the design elements $\alpha_t$, $m_t$ and $R_t$ is non-negative and can actually implement an output distribution that induces aggregate effort $\alpha_t$ and has a support that is a subset of $R_t$.}
\item \textit{Incentive compatibility (IC)}
\begin{equation}
 K_0'(\alpha_0)\cdot \alpha_0 - K_0(\alpha_0) + m_0\geq \max_{\alpha\in [\underline{c},\overline{c}]}\big[\widehat{s_{1}\circ \theta}(\alpha) - K_0(\alpha) \big],
 \tag{IC0}
 \label{MBC-IC0}
\end{equation}
and
\begin{equation}
 K_1'(\alpha_1)\cdot \alpha_1 - K_1(\alpha_1)+m_1\geq \max_{\alpha\in [\underline{c},\overline{c}]}\big[\widehat{s_0\circ \theta}(\alpha) - K_1(\alpha) \big].
 \tag{IC1}
 \label{MBC-IC1}
\end{equation}
On the LHS of each IC constraint is the agent's payoff from truthfully reporting his type $t$ and taking the recommended action $\mu_t$ whose aggregate effort is equal to $\alpha_t$. On the RHS is the agent's maximal payoff from misreporting his type and obtaining the contract for the other type, whose form is given by Lemma \ref{PRE-lm2}.  
\end{enumerate}

Define the set of feasible menus by
\begin{equation}
\begin{aligned}
    \mathcal{F}:=  \Big\{ & (a,m,R,\mu): \alpha_0,\alpha_1 \in [\underline{c},\overline{c}],\quad \mu_0,\mu_1 \in \Delta([\underline{x},\overline{x}]),\\
    &R_0,R_1\subseteq [\underline{x},\overline{x}],
    \quad m_0,m_1\in \mathbb{R},\quad \exists s_0,s_1: [\underline{x},\overline{x}]\to \mathbb{R}\text{ such that }\\
    &\eqref{MBC-MH}, \eqref{MBC-LL},\eqref{MBC-IC0},\eqref{MBC-IC1}\text{ are all satisfied}.\Big\}
\end{aligned}
\label{MBC-fea}
\end{equation}
\textbf{Principal's problem: }The principal's objective is to choose a feasible menu that maximizes her expected profit. The principal's problem is given by 
\begin{equation}
\begin{aligned}
  \max_{(a,m,R,\mu)\in \mathcal{F}} \quad&p_0 \big[\int x\mu_0(dx) - K_0'(\alpha_0)\cdot \alpha_0 - m_0\big] \\
  \quad+ &p_1\big[\int x\mu_1(dx) - K_1'(\alpha_1)\cdot \alpha_1 - m_1\big].
\end{aligned}
\tag{P}
\label{MBC-P}
\end{equation}

In our current formulation of the principal's problem, the ranges of contracts $R$ and the output distributions $\mu$ are treated as separated design elements. Note that $\mu$ is only involved in the third constraint from \eqref{MBC-MH}. At an optimal menu $(a^{*},m^{*},R^{*},\mu^{*})$, for $t\in \{0,1\}$, $\mu_t^{*}$ is chosen among all $\mu_t\in \Delta([\underline{x},\overline{x}])$ such that $\int c(x)\mu_t(dx) = \alpha_t^{*}$ and $\text{supp}(\mu_t)\subseteq R_t^{*}$ to maximize $\int x\mu_t(dx)$. We straightforwardly have that $\mu_t^{*}$ is chosen to maximize a linear functional, subject to a moment constraint. It implies that the optimum of \eqref{MBC-P} can be reached by recommending output distributions that are supported on at most two points\footnote{The reason is similar to that in footnote \ref{MBCF-EP}}. 

The existence of an optimal menu to \eqref{MBC-P} is also guaranteed. To optimally solve problem \eqref{MBC-P}, we can restrict ourselves to menus where the effort distributions are supported on at most two singletons and contract ranges are equal to the supports of the output distributions. Then solving \eqref{MBC-P} is equivalent to maximizing a continuous function over a compact set in $\mathbb{R}^d$ with some $d\in \mathbb{N}$, which ensures the existence of an optimal solution. We formalize the idea in Appendix \ref{MBCF-app-model}, Proposition \ref{MBCF-app-pp2}.\\
\\
\textbf{Remark: }Any output distribution can clearly be implemented by a contract whose range coincides with the support of that distribution. One could alternatively formulate \eqref{MBC-P} by setting the contract range to equal the support of the output distribution it is designed to implement. However, in our current formulation, we distinguish between contract ranges and the supports of output distributions because we are also interested in studying the maximal contract ranges, which indicate to what extent the contract ranges can be extended without harming the feasibility of the menu.

The maximal contract ranges are shaped by the trade-off between raising production efficiency and reducing information rents, which is regarded as an important feature of this model. To see that, we first consider how the ranges of the contracts influence the agent's incentives. Take a feasible menu $(a,m,R,\mu)\in \mathcal{F}$. For $t\in \{0,1\}$, define $\underline{r}_t : = \inf c(R_t)$ and $\overline{r}_t := \sup c(R_t)$. 
Then $\widehat{s_t\circ \theta}$ can be written as 
\begin{equation}
\widehat{s_t\circ \theta}(\alpha) = \left\{
\begin{array}{ll}
\big(K_t'(\alpha_t) \cdot \underline{r}_t + m_t\big)\cdot\frac{\alpha}{\underline{r}_t}    &  \text{ if }\underline{c}\leq \alpha < \underline{r}_t, \\
K_t'(\alpha_t)\cdot \alpha + m_t  &\text{ if }\underline{r}_t \leq \alpha \leq \overline{r}_t, \\
\big(K_t'(\alpha_t) \cdot \overline{r}_t + m_t\big)\cdot\frac{\overline{c}-\alpha}{\overline{c}-\overline{r}_t} & \text{ if }\overline{r}_t < \alpha \leq \overline{c}. 
\end{array}
\right.
\label{MBC-conv}
\end{equation}

The concavification $\widehat{s_t\circ \theta}(\cdot)$ ($t\in \{0,1\}$) is piece-wise linear and consists of up to three segments. If $\underline{c},\overline{c}\in c(R_t)$, then $\widehat{s_t\circ \theta}$ is linear on the whole interval $[\underline{c},\overline{c}]$. If the range $R_t$ is reduced while $\alpha_t$ is fixed, the interval $[\underline{r}_t,\overline{r}_t]$ narrows, thereby reducing $\widehat{s_t\circ \theta}(\cdot)$ and diminishing the agent's incentives to misreport his type. As illustrated in Figure \ref{MBC-shrink-mis}, when the contract range of $s_t(\cdot)$ is reduced and the interval for the second line segment shifts from $[\underline{r}_t,\overline{r}_t]$ to $[\underline{r}_t',\overline{r}_t']$, $\widehat{s_t\circ \theta}(\cdot)$ transitions from the solid purple line to the solid blue line, resulting in a point-wise reduction.
\begin{figure}[H]
    \centering
    \includegraphics[width=0.5\linewidth]{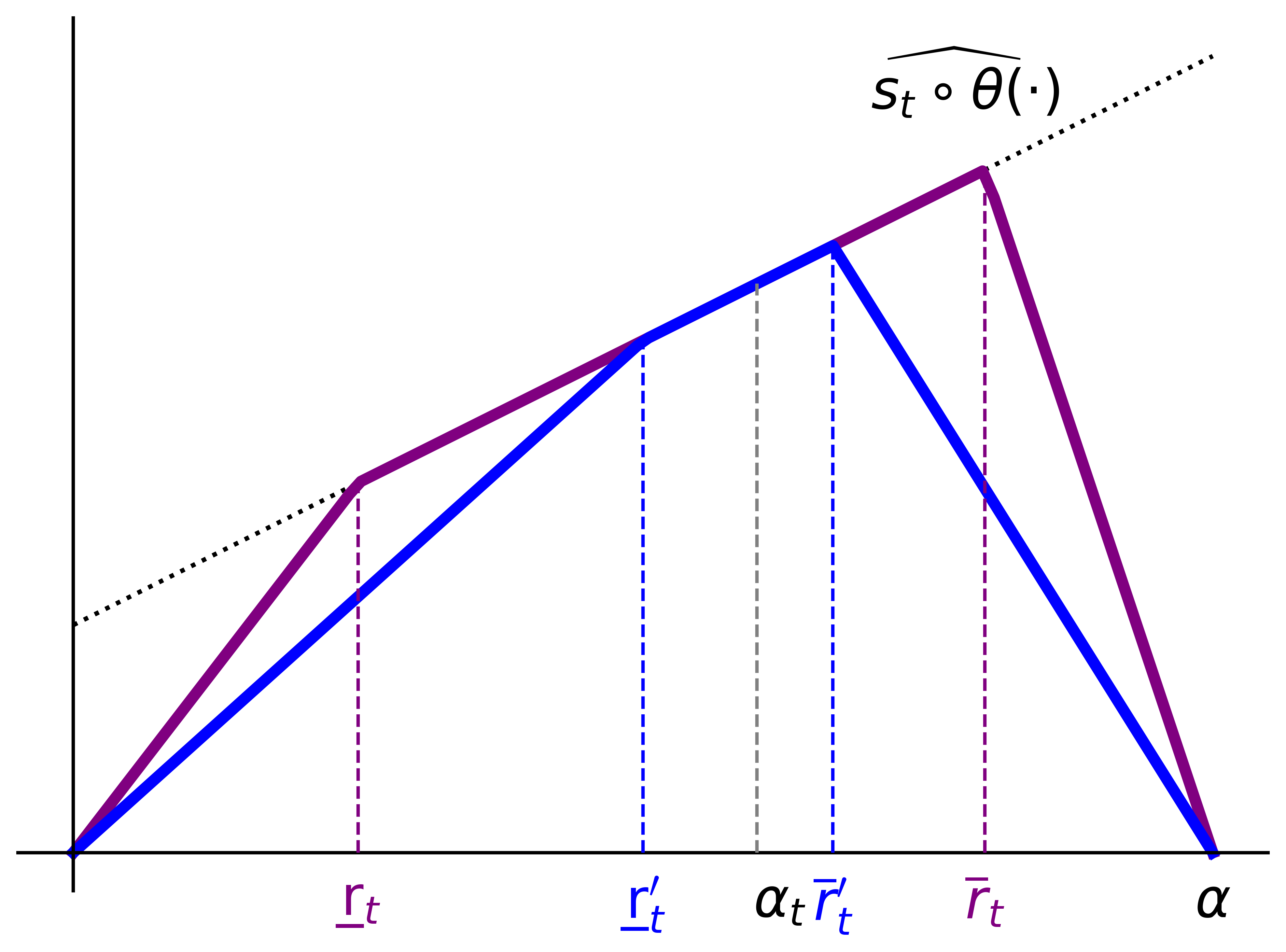}
    \caption{The change in $\widehat{s_t\circ \theta(\cdot)}$ when the contract range is reduced.}
    \label{MBC-shrink-mis}
\end{figure}
While the reduction in the contract range can diminish the agent's incentives to misreport, it also restricts the set of output distributions that the principal can implement, potentially reducing the attainable production value. When faced with a small contract range, the agent is only incentivized to produce outputs within a limited subset, potentially restricting his choice and reducing production efficiency.

 Studying the maximal contract ranges sheds light on the driving force behind the forms of optimal contracts. When the maximal contract range is equal to the full output space, it indicates that full flexibility can be granted to the agent to reach production efficiency. In contrast, if the maximal contract range is truncated, it reflects the principal's need to limit flexibility in order to reduce the screening costs.
\section{Implementation}
\label{MBCF-imp}
In this section, we study the set of feasible menus and characterize the output distributions that can be implemented by the principal in the presence of adverse selection. We begin by introducing two notions of implementability. 
\begin{definition}
A pair of aggregate efforts $a=(\alpha_0,\alpha_1) \in \mathbb{R}^2$ is implementable if there exist $m \in \mathbb{R}^2$, $R\in (2^{[\underline{x},\overline{x}]})^2$ and $\mu \in \Delta([\underline{x},\overline{x}])^2$ such that $(a,m,R,\mu)\in \mathcal{F}$.
\label{IMP-def1}
\end{definition}
\begin{definition}
A pair of output distributions $\mu = (\mu_0,\mu_1)\in \Delta([\underline{x},\overline{x}])^2$ is implementable if there exist $a\in \mathbb{R}^2$, $m \in \mathbb{R}^2$ and $R\in (2^{[\underline{x},\overline{x}]})^2$ such that $(a,m,R,\mu)\in \mathcal{F}$. 
\label{IMP-def2}
\end{definition}
From now on, we impose the following assumption on the cost functions.
\begin{assumption}
$K_0'(\alpha) > K_1'(\alpha)$ for all $\alpha \in (\underline{c},\overline{c}]$. 
\label{OM-as1}
\end{assumption}
Assumption \ref{OM-as1} shows that the type-0 agent has a higher marginal cost with regard to the aggregate effort. Hence, the type-0 agent requires a higher payment to be incentivized to increase effort. Assumption \ref{OM-as1} holds if $K_1(\alpha) = \eta \cdot K_0(\alpha)$ for all $\alpha \in [\underline{c},\overline{c}]$ with $\eta \in (0,1)$. 

Under Assumption \ref{OM-as1}, we have a simple characterization of the implementable aggregate efforts, stated in the following proposition. 
\begin{proposition}
Under Assumption \ref{OM-as1}, a pair of aggregate efforts $(\alpha_0,\alpha_1) \in [\underline{c},\overline{c}]^2$ is implementable if and only if
\begin{equation*}
    \alpha_0 \leq \alpha_1.
\end{equation*}
\label{IMP-pp1}
\end{proposition}
The necessity of $\alpha_0 \leq \alpha_1$ in Proposition \ref{IMP-pp1} follows directly from the on-path IC constraints. If a menu $(a,m,R,\mu)$ is feasible, it must satisfy these constraints, which prevent each type of agent from misreporting his type and adopting the action recommended to the other type: 
\begin{equation*}
\begin{aligned}
    K_0'(\alpha_0)\cdot\alpha_0 - K_0(\alpha_0) + m_0 &\geq K_1'(\alpha_1)\cdot \alpha_1 - K_0(\alpha_1) + m_1,\\
    K_1'(\alpha_1)\cdot \alpha_1 - K_1(\alpha_1) + m_1 &\geq K_0'(\alpha_0)\cdot \alpha_0 - K_1(\alpha_0) + m_0.
\end{aligned}
\end{equation*}

By combining the on-path IC constraints, we obtain that $K_0(\alpha_1) - K_0(\alpha_0)\geq K_1(\alpha_1) - K_1(\alpha_0)$. Under Assumption \ref{OM-as1}, this inequality directly implies that $\alpha_0 \leq \alpha_1$. Conversely, the sufficiency of $\alpha_0\leq \alpha_1$ can be established by constructing a feasible menu for each pair of $(\alpha_0,\alpha_1)$ satisfying this condition. In particular, one can construct a menu that minimizes the expected payment from the principal while implementing the desired aggregate efforts. The construction of these payment-minimizing menus is provided in Lemma \ref{IMP-lm1}.
\begin{lemma}
Take any pair of aggregate efforts $a=(\alpha_0,\alpha_1)\in [\underline{c},\overline{c}]^2$ such that $\alpha_0 \leq \alpha_1$. Take $R':=(\{\theta(\alpha_0)\},\{\theta(\alpha_1)\})$ and $\mu': = (\delta_{\theta(\alpha_0)},\delta_{\theta(
\alpha_1)})$. If $K_0'(\alpha_0)\leq K_1'(\alpha_1)$, define
\begin{equation*}
\begin{aligned}
    m_0' &:= \big[K_1'(\alpha_1)\cdot k_0^{-1}\big(K_1'(\alpha_1)\big) - K_0\big(k_0^{-1}\big(K_1'(\alpha_1)\big)\big)\big]-\big[K_0'(\alpha_0)\cdot \alpha_0 - K_0(\alpha_0)\big],\\
    m_1' &:= 0.
\end{aligned}  
\end{equation*}
If $K_0'(\alpha_0)> K_1'(\alpha_1)$, define 
\begin{equation*}
\begin{aligned}
    m_0':&= 0,\\
    m_1':&= \max\Big\{0, \big[K_0'(\alpha_0)\cdot \alpha_0 - K_1(\alpha_0)\big] - \big[K_1'(\alpha_1)\cdot \alpha_1 - K_1(\alpha_1)\big]\Big\}.
\end{aligned}
\end{equation*}
Take $m':=(m_0',m_1')$. Under Assumption \ref{OM-as1}, the menu $(a,m',R',\mu')$ minimizes the principal's expected payment among all feasible menus that implement $(\alpha_0,\alpha_1)$. 
\label{IMP-lm1}
\end{lemma}
\begin{proof}
See Appendix \ref{MBCF-app-imp}.   
\end{proof}
Lemma \ref{IMP-lm1} demonstrates that, to implement $(\alpha_0,\alpha_1)$ with minimum contract payments, the principal can restrict attention to menus in which the contract ranges are reduced to singletons. This is because narrowing the contract ranges weakens the agents' incentives to misreport and thus reduces the screening costs. In these payment-minimizing menus, the principal provides a higher constant payment to the agent with stronger incentives to misreport. If $K_0'(\alpha_0) \leq K_1'(\alpha_1)$, then the type-0 agent has greater incentives to misreport, and his constant payment is determined by the binding \eqref{MBC-IC0} constraint, while the type-1 agent obtains zero constant payment. Conversely, if $K_0'(\alpha_0) >K_1'(\alpha_1)$, then the type-1 agent has greater incentives to misreport, and his constant payment is pinned down by either the binding \eqref{MBC-IC1} or \eqref{MBC-LL} constraint. 

Proposition \ref{IMP-pp1} characterizes the set of implementable aggregate efforts. However, the principal's profit not only depends on the total amount of aggregate effort, but also on how the aggregate effort is distributed across different output levels. To enhance production efficiency, the principal may wish to implement distributions whose supports extend beyond singletons. Consequently, the optimal menus need not be restricted to those with singleton contract ranges. To better understand the principal's problem, it is therefore necessary to examine the implementation of output distributions. 

Consider any pair of output distributions $(\mu_0,\mu_1) \in \Delta([\underline{x},\overline{x}])^2$ with corresponding aggregate efforts $(\alpha_0,\alpha_1)$. Proposition \ref{IMP-pp1} implies that a necessary condition for $(\mu_0,\mu_1)$ to be implementable is $\alpha_0 \leq \alpha_1$. We show that if $(\alpha_0, \alpha_1)$ further satisfies $K_0'(\alpha_0)\leq K_1'(\alpha_1)$, then $(\mu_0,\mu_1)$ is always implementable. In contrast, if $K_0'(\alpha_0) > K_1'(\alpha_1)$, the implementability of $(\mu_0,\mu_1)$ additionally depends on the values of $\overline{\alpha}_0 := c\big(\max \text{supp}(\mu_0)\big)$, the effort corresponding to the highest output in the support of $\mu_0$, and $\underline{\alpha}_1 := c\big(\min \text{supp}(\mu_1)\big)$, the effort corresponding to the lowest output in the support of $\mu_1$. In this case, there exists an increasing function $\check{\alpha}_0:[\underline{c},\overline{c}]\to [\underline{c},\overline{c}]$ such that $(\mu_0,\mu_1)$ is implementable if and only if $\overline{\alpha}_0$ does not exceed $\check{\alpha}_0(\underline{\alpha}_1)$\footnote{The shape of $\check{\alpha}_0(\cdot)$ depends on the targeted $(\alpha_0,\alpha_1)$. We omit $(\alpha_0,\alpha_1)$ from the notation of $\check{\alpha}_0(\cdot)$ for simplicity.}. The exact expression of $\check{\alpha}_0(\cdot)$ is provided in Section \ref{APP-IMP-OD}, Appendix \ref{MBCF-app-imp}.
\begin{theorem}
Under Assumption \ref{OM-as1}, a pair of output distributions $(\mu_0,\mu_1)\in \Delta([\underline{x},\overline{x}])^2$ with corresponding aggregate efforts $(\alpha_0,\alpha_1)$ is implementable if and only if one of the following two conditions is satisfied
\begin{enumerate}[label=\circled{\arabic*}]
    \item $\alpha_0\leq \alpha_1$ and $K_0'(\alpha_0)\leq K_1'(\alpha_1)$.
    \item $\alpha_0\leq \alpha_1$, $K_0'(\alpha_0)> K_1'(\alpha_1)$ and $\overline{\alpha}_0 \leq \check{\alpha}_0(\underline{\alpha}_1)$. 
\end{enumerate}
\label{IMP-pp2}
\end{theorem}
\begin{proof}
See Appendix \ref{APP-IMP-OD}. 
\end{proof}
Consider the case where $(\mu_0,\mu_1)$ satisfies that $\alpha_0 \leq \alpha_1$ and $K_0'(\alpha_0) > K_1'(\alpha_1)$. When $(\mu_0,\mu_1)$ corresponds to a large $\overline{\alpha}_0$ and a small $\underline{\alpha}_1$, the supports of $\mu_0$ and $\mu_1$ overlap substantially. This makes incentive compatibility more difficult to achieve since it pushes up the payoffs from off-path deviations (the right-hand side of \eqref{MBC-IC0} and \eqref{MBC-IC1}). The fact that only one side of the support boundaries matters here follows from Assumption 1, a point we will elaborate in Proposition \ref{OM-pp2} in the next section. For each possible value of $\underline{\alpha}_1$, $\check{\alpha}_0(\underline{\alpha}_1)$ provides an upper bound on the implementable $\overline{\alpha}_0$. A larger $\underline{\alpha}_1$ implies a more restricted support of $\mu_1$, which allows the support of $\mu_0$ to be more spread out, leading to a larger $\check{\alpha}_0(\underline{\alpha}_1)$. 

We illustrate the upper bound $\check{\alpha}_0(\cdot)$ in Figure \ref{IMP-fig1} as the green line. The function initially takes a constant that is smaller than $\underline{c}$, then becomes convex, and eventually flattens out at $\overline{c}$. Theorem \ref{IMP-pp2} shows that that $(\mu_0,\mu_1)$ is implementable as long as $(\underline{\alpha}_1,\overline{\alpha}_0)$ lies in the region below $\check{\alpha}_0(\cdot)$. Corollary \ref{IMP-co1} establishes that the region above $\check{\alpha}_0(\cdot)$ is always non-empty. 
\begin{figure}[H]
    \centering
    \includegraphics[width=0.6\linewidth]{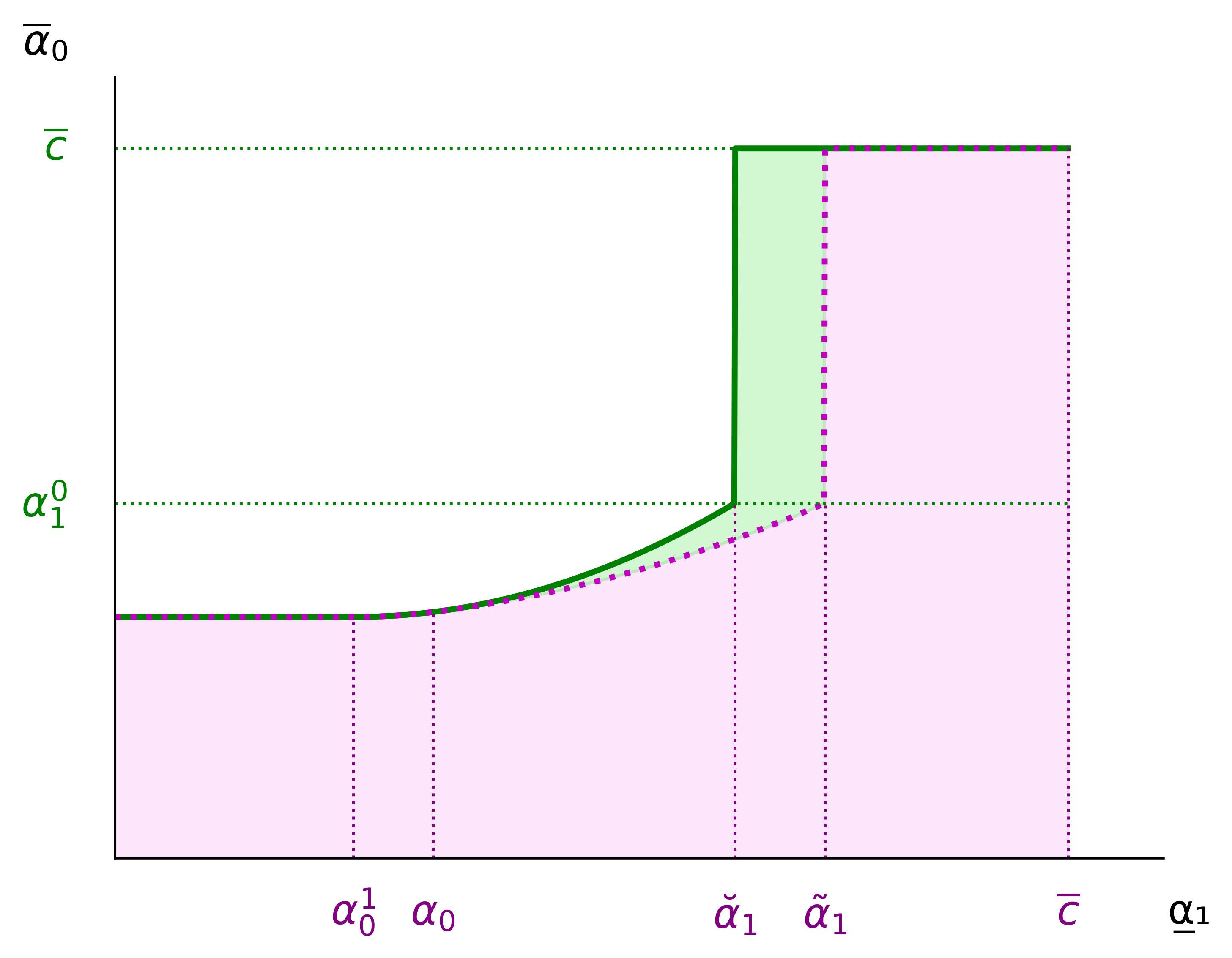}
    \caption{The implementable region of $(\underline{\alpha}_1,\overline{\alpha}_0)$ given $(\alpha_0,\alpha_1)$ satisfying $\alpha_0\leq \alpha_1$ and $K_0'(\alpha_0)>K_1'(\alpha_1)$.}
    \label{IMP-fig1}
\end{figure}

\begin{corollary}
Under Assumption \ref{OM-as1}, a pair of output distributions $(\mu_0,\mu_1)$ with corresponding aggregate efforts $(\alpha_0,\alpha_1)$ is not implementable if $K_0'(\alpha_0)>K_1'(\alpha_1)$, $\overline{\alpha}_0 = \overline{c}$, and $\underline{\alpha}_1 = \underline{c}$. 
\label{IMP-co1}
\end{corollary}

Theorem \ref{IMP-pp2} implies that if a pair of aggregate efforts satisfies condition \circled{1}, then any pair of output distributions inducing $(\alpha_0,\alpha_1)$ is implementable. If $(\alpha_0,\alpha_1)$ satisfies the first two inequalities in condition \circled{2}, the implementability of the corresponding output distributions also depends on the extent to which their supports overlap. Under Assumption \ref{OM-as1}, an increase in the contract power benefits the type-1 agent more than the type-0 agent, as it leads to a larger increase in the effort exerted. Consequently, it is more difficult to implement the case in which the type-1 agent obtains a lower-powered contract than the case in which he receives a higher-powered contract. In the former situation, it becomes necessary to restrict the supports of the output distributions to ensure implementability. 

We also examine the payment-minimizing menu that implements a given pair of output distributions $(\mu_0,\mu_1)$. As in Lemma \ref{IMP-lm1}, we restrict attention to menus in which the contract ranges coincide with the supports of the respective output distributions. When $(\mu_0,\mu_1)$ satisfies condition \circled{1}, the payment-minimizing menu features a zero constant payment to the type-1 agent, while the constant payment for the type-0 agent is determined by the binding \eqref{MBC-IC0} constraint. 

In contrast, when $(\mu_0,\mu_1)$ satisfies condition \circled{2}, the structure of the payment-minimizing constant payments depends on $\overline{\alpha}_0$ and $\underline{\alpha}_1$. In some cases, similar to Lemma \ref{IMP-lm1}, the type-0 agent receives zero constant payment, while the type-1 agent's constant payment is determined by either the binding \eqref{MBC-IC1} or \eqref{MBC-LL} constraint. In other cases, however, both types of agents receive strictly positive constant payments. In Figure \ref{IMP-fig1}, the former case corresponds to the purple region, and the latter to the green region. The boundary separating these two regions (the dotted purple line in Figure \ref{IMP-fig1}) is denoted by the function $\tilde{\alpha}_0: [\underline{c},\overline{c}] \to  [\underline{c},\overline{c}]$. The exact forms of the payment-minimizing menus and $\tilde{\alpha}_0(\cdot)$ are given in Appendix \ref{APP-IMP-RCP}.

The boundary $\tilde{\alpha}_0(\cdot)$ coincides with $\check{\alpha}_0(\cdot)$ except over an interval in the middle. When the pair $(\underline{\alpha}_1,\overline{\alpha}_0)$ associated with $(\mu_0,\mu_1)$ falls in the region between $\tilde{\alpha}_0(\cdot)$ and $\check{\alpha}_0(\cdot)$, it is necessary to provide strictly positive constant payments to both types of agents to implement $(\mu_0,\mu_1)$ (see Corollary \ref{APP-IMP-pp3}, Appendix \ref{MBCF-app-imp}). In this region, offering the type-0 agent a zero constant payment causes a violation of \eqref{MBC-IC0}. However, increasing the constant payments uniformly to both types relaxes the \eqref{MBC-IC0} constraint, since its right-hand side increases less than its left-hand side. 
\section{Optimal Menus}
\label{MBCF-om}
Before analyzing the optimal menus for problem \eqref{MBC-P}, we impose an additional assumption on $K_0(\cdot)$ and $K_1(\cdot)$.
\begin{assumption} $\Theta(\alpha)- K_t'(\alpha)\cdot \alpha$ is strictly single-peaked in $\alpha \in [\underline{c},\overline{c}]$ for $t\in \{0,1\}$. \label{OM-as3} 
\end{assumption}
In Assumption \ref{OM-as3}, the expression $\Theta(\alpha) - K_t'(\alpha)\cdot \alpha$ ($t\in \{0,1\}$) represents the maximal profit that the principal can obtain from a type-$t$ agent exerting an aggregate effort of $\alpha$ when the type is observable (Lemma \ref{PRE-lm1}). Assumption \ref{OM-as3} requires that there exists a unique optimal level of aggregate effort that solves the principal's pure moral hazard problem, with the property that the further the induced effort deviates from this level, the lower the principal's profit. This requirement is fulfilled, for example, if $K_t'(\alpha)\cdot \alpha$ is strictly convex in $\alpha$. Within Assumption \ref{OM-as3}, the pure moral hazard solution $(\alpha_0^{MH},\alpha_1^{MH})$ is always well-defined.

Under Assumption \ref{OM-as1} and \ref{OM-as3}, we show that in any optimal menu, the contract power of the type-0 contract is never smaller than that of the type-1 contract.  
\begin{proposition}
\label{OM-pp1}
Take an optimal menu $(a^{*},m^{*},R^{*},\mu^{*})\in \mathcal{F}$ that solves \eqref{MBC-P}. Under Assumption \ref{OM-as1} and \ref{OM-as3}, it holds that
\begin{equation*}
    K_0'(\alpha_0^{*}) \geq K_1'(\alpha_1^{*}).
\end{equation*}
\end{proposition}
\begin{proof} 
See Appendix \ref{MBCF-app-om}.
\end{proof}
The intuition behind Proposition \ref{OM-pp1} is as follows. Suppose instead that $K_0'(\alpha_0^{*}) < K_1'(\alpha_1^{*})$ at the optimal menu $(a^{*},m^{*},R^{*},\mu^{*})$. The principal's reluctance to raise the type-0 agent's aggregate effort beyond $\alpha_0^{*}$ indicates that $\alpha_0^{*}$ is already at a high level. In fact, one can show that in this case $\alpha_0^{*}$ is at the first-best level, which in turn implies that $\alpha_1^{*}$ lies above the first-best level. Hence, the principal can always benefit from slightly reducing the aggregate effort exerted by the type-1 agent, which is achieved by lowering the power of the type-1 contract.

Proposition \ref{OM-pp1} sheds light on the contract powers at the optimal menus. As a next step, we now turn to the contract ranges at the optimal menus. We examine to what extent the contract ranges at an optimal menu can be expanded without harming the feasibility of the menu. As discussed in Section \ref{MBCF-model}, these maximal ranges can capture the principal's trade-off between increasing production value and diminishing the agent's incentives to misreport. We formally define the maximal contract ranges as follows.

\begin{definition}
Given a feasible menu $(a,m,R,\mu)\in \mathcal{F}$, the range $R_t$ ($t\in \{0,1\}$) is \textit{maximal} if for any $E \subseteq [\underline{x},\overline{x}]$ satisfying $R_t\subsetneq E$, it holds that
\begin{equation*}
    K_{t'}'(\alpha_{t'})\cdot \alpha_{t'} -K_{t'}(\alpha_{t'}) + m_{t'} < \max_{\alpha\in [\underline{c},\overline{c}]}\big[\widehat{s_t^{E}\circ\theta}(\alpha) - K_t(\alpha)\big],
\end{equation*}
where $t'\in \{0,1\}$, $t'\neq t$, and $s_t^{E}\circ\theta(\cdot)$ is given by $s_t^{E}\circ \theta(\alpha) = \mathbb{1}_{\{\alpha\in c(E)\}}\big[K_t'(\alpha_t)\cdot \alpha+m_t\big]$ for all $\alpha\in [\underline{c},\overline{c}]$.
\label{OM-df1}
\end{definition}
According to Definition \ref{OM-df1}, a contract range is maximal at a given menu if any further expansion of the range would violate the IC constraint. When $a$ and $m$ are fixed, whether a range is maximal is independent of the contract range assigned to the other type. Therefore, when expanding the contract ranges to their maximal levels, the order of expansion is irrelevant.  

We first show that the type-0 contract range can always be extended to the leftmost point, yielding a right-truncated interval, while the type-1 contract range can always be extended to the rightmost point, yielding a left-truncated interval, as stated in Proposition \ref{OM-pp2}. Note that this result holds for any feasible menu, not only the optimal ones. 
\begin{proposition}
Take any feasible menu $(a, m, R, \mu)\in \mathcal{F}$ and take $\overline{x}_0:= \sup R_0$ and $\underline{x}_1:= \inf R_1$. Define
\begin{equation*}
{R}_0' := [\underline{x},\overline{x}_0] \text{ and }
R'_1 := [\underline{x}_1,\overline{x}].
\end{equation*}
Under Assumption \ref{OM-as1}, $(a,m,R',\mu)$ with $R' := (R_0',R_1')$ is also feasible.
\label{OM-pp2}
\end{proposition}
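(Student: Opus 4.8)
The plan is to produce, for the extended menu $(a,m,R',\mu)$, two contracts witnessing $(a,m,R',\mu)\in\mathcal F$ — the obvious candidates being $s_t'(x)=\mathbb{1}_{\{x\in R_t'\}}\bigl[K_t'(\alpha_t)c(x)+m_t\bigr]$ as in \eqref{MBC-eq2} — and to check \eqref{MBC-MH}, \eqref{MBC-LL}, \eqref{MBC-IC0}, \eqref{MBC-IC1} in turn. Since $c$ is continuous and strictly increasing, $R_0\subseteq[\underline x,\sup R_0]=R_0'$ and $R_1\subseteq[\inf R_1,\overline x]=R_1'$, and neither $a$ nor $m$ changes, so \eqref{MBC-MH} and \eqref{MBC-LL} come for free: $\text{supp}(\mu_t)\subseteq R_t\subseteq R_t'$, the moment identities $\int c(x)\,\mu_t(dx)=\alpha_t$ are untouched, and $m_0,m_1\ge 0$. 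All the content is therefore in re-establishing the two IC constraints, which I would treat by a mirror-image argument.

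The engine is the explicit form \eqref{MBC-conv}: $\widehat{s_t\circ\theta}$ depends on $R_t$ only through $\underline r_t=\inf c(R_t)$ and $\overline r_t=\sup c(R_t)$. Replacing $R_0$ by $R_0'$ keeps $\overline r_0=c(\sup R_0)$ fixed but lowers $\underline r_0$ to $\underline c=0$; replacing $R_1$ by $R_1'$ keeps $\underline r_1=c(\inf R_1)$ fixed but raises $\overline r_1$ to $\overline c$. Hence $\widehat{s_0'\circ\theta}$ coincides with $\widehat{s_0\circ\theta}$ on $[\underline r_0,\overline c]$ and on $[0,\underline r_0]$ equals the line $\alpha\mapsto K_0'(\alpha_0)\alpha+m_0$, which lies weakly above $\widehat{s_0\circ\theta}$ there; symmetrically $\widehat{s_1'\circ\theta}$ coincides with $\widehat{s_1\circ\theta}$ on $[0,\overline r_1]$ and weakly exceeds it on $[\overline r_1,\overline c]$. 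Now $s_0$ enters the feasibility conditions only through the right-hand side $\max_{\alpha}[\widehat{s_0\circ\theta}(\alpha)-K_1(\alpha)]$ of \eqref{MBC-IC1}, and I claim this maximum is unchanged if $\widehat{s_0\circ\theta}$ is replaced by $\widehat{s_0'\circ\theta}$. Over $[\underline r_0,\overline c]$ the two integrands agree, so it is enough to show $g(\alpha):=K_0'(\alpha_0)\alpha+m_0-K_1(\alpha)$ (which equals $\widehat{s_0'\circ\theta}(\alpha)-K_1(\alpha)$ on $[0,\underline r_0]$) attains its maximum over $[0,\underline r_0]$ at $\underline r_0$. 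If $\underline r_0=0$ this is vacuous; otherwise $\text{supp}(\mu_0)\subseteq R_0$ forces $\alpha_0\in[\underline r_0,\overline r_0]$, so for $\alpha\le\underline r_0\le\alpha_0$ we get $g'(\alpha)=K_0'(\alpha_0)-K_1'(\alpha)\ge K_0'(\alpha_0)-K_1'(\alpha_0)>0$ by monotonicity of $K_1'$ and Assumption \ref{OM-as1}. Since $g(\underline r_0)=\widehat{s_0\circ\theta}(\underline r_0)-K_1(\underline r_0)$, this gives $\max_{[0,\overline c]}[\widehat{s_0'\circ\theta}-K_1]=\max_{[\underline r_0,\overline c]}[\widehat{s_0\circ\theta}-K_1]\le\max_{[0,\overline c]}[\widehat{s_0\circ\theta}-K_1]$, i.e.\ \eqref{MBC-IC1} still holds. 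For \eqref{MBC-IC0}, the extra mass of $\widehat{s_1'\circ\theta}$ lives on $[\overline r_1,\overline c]$, where $\frac{\mathrm d}{\mathrm d\alpha}\bigl[\widehat{s_1'\circ\theta}(\alpha)-K_0(\alpha)\bigr]=K_1'(\alpha_1)-K_0'(\alpha)\le 0$ because $\alpha_1\le\overline r_1\le\alpha$ yields $K_0'(\alpha)\ge K_0'(\alpha_1)\ge K_1'(\alpha_1)$; hence $\max_{[0,\overline c]}[\widehat{s_1'\circ\theta}-K_0]$ is attained on $[0,\overline r_1]$, where $\widehat{s_1'\circ\theta}=\widehat{s_1\circ\theta}$, so the right-hand side of \eqref{MBC-IC0} is unchanged and the constraint persists.

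The main obstacle is conceptual rather than computational: enlarging a contract's range weakly raises its concavified value and so, at first glance, makes it more tempting to the opposite type and endangers incentive compatibility. The resolution is that the value filled into the type-$0$ contract on the left is concentrated precisely at the aggregate-effort levels $\alpha<\underline r_0$ that the type-$1$ agent would never target — at such $\alpha$ his marginal return to effort under the steeper type-$0$ contract, $K_0'(\alpha_0)-K_1'(\alpha)$, is still positive, so he strictly prefers to push effort up to $\underline r_0$ — and symmetrically for the type-$1$ contract at high effort. Turning this into the exact statement "the relevant maximum over $\alpha$ does not move", via the comparison of $K_0'$ and $K_1'$ supplied by Assumption \ref{OM-as1}, is the crux; the rest (MH, LL, reading off \eqref{MBC-conv}, and the boundary and $\mu$-a.e.\ caveats already waived in the paper) is routine.
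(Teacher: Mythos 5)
Your proof is correct and takes essentially the same approach as the paper's: both read off $\widehat{s_t\circ\theta}$ from \eqref{MBC-conv}, observe that extending $R_0$ leftward (resp.\ $R_1$ rightward) only raises the concavification on $[\underline c,\underline r_0]$ (resp.\ $[\overline r_1,\overline c]$), and then use Assumption \ref{OM-as1} together with $\underline r_0\le\alpha_0$ (resp.\ $\overline r_1\ge\alpha_1$) to show the relevant misreporting maximum is still attained where the two concavifications agree. The only cosmetic difference is that you locate the maximizer via a sign computation on $g'$, whereas the paper writes the same monotonicity as a direct chain of inequalities.
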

\begin{proof}
See Appendix \ref{MBCF-app-om}.
\end{proof} 
Proposition \ref{OM-pp2} builds on the idea that, for any feasible menu $(a,m,R,\mu)$, one can extend $R_0$ leftward and $R_1$ rightward without affecting feasibility. Suppose we extend the range $R_1$ of the type-1 contract $s_1(\cdot)$ rightward to include higher outputs. Because such high outputs are prohibitively costly for the type-0 agent, the extension does not make the type-1 contract more attractive to the type-0 agent. Similarly, if we extend the range $R_0$ of the type-0 contract $s_0(\cdot)$ leftward, the type-1 agent, who tends to produce higher outputs, would not find the extended type-0 contract more appealing. This idea also explains why, in implementing output distributions, only one side of the support boundaries is relevant (Theorem \ref{IMP-pp2}).

Proposition \ref{OM-pp2} states that, for any feasible menu, it does no harm to provide the type-0 (type-1) agent with maximal incentives to produce low-cost, low-return (high-cost, high-return) outputs. Applied to the optimal menu, this means the principal can attain the optimal expected profit by offering the type-0 agent a contract whose range includes all output levels except some higher ones, and the type-1 agent a contract that excludes certain lower levels (where the exclusions may not be strict).

Proposition \ref{OM-pp2} shows that, for an optimal menu, each contract range can be extended in one direction without affecting feasibility. We now ask whether the ranges of the optimal contracts can be extended in both directions while still preserving feasibility. 

Theorem \ref{OM-pp3} demonstrates that, given an optimal menu, the ranges of both contracts can be extended to full without affecting feasibility if and only if the optimal contracts have the same power.
\begin{theorem}
Take an optimal menu $(a^{*},m^{*},R^{*},\mu^{*})\in \mathcal{F}$ that solves \eqref{MBC-P}. Define $R_F := ([\underline{x},\overline{x}],[\underline{x},\overline{x}])$. Under Assumption \ref{OM-as1} and \ref{OM-as3}, the menu $(\alpha^{*},m^{*},R_F,\mu^{*})$ is feasible if and only if 
$$K_0'(\alpha_0^{*}) = K_1'(\alpha_1^{*}). $$
\label{OM-pp3}
\end{theorem}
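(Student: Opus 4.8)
The plan is to reduce feasibility of $(\alpha^{*},m^{*},R_{F},\mu^{*})$ to two scalar inequalities relating the constants $m_0^{*},m_1^{*}$ and the powers $k_t:=K_t'(\alpha_t^{*})$, and then to combine Fenchel duality with Proposition~\ref{OM-pp1}. Since $(a^{*},m^{*},R^{*},\mu^{*})$ is feasible, the menu $(\alpha^{*},m^{*},R_{F},\mu^{*})$ automatically satisfies \eqref{MBC-LL} (as $m_t^{*}\ge 0$) and \eqref{MBC-MH} (the inclusion $\text{supp}(\mu_t^{*})\subseteq R_t=[\underline{x},\overline{x}]$ is vacuous, and $\int c\,d\mu_t^{*}=\alpha_t^{*}$ already holds), so only \eqref{MBC-IC0}--\eqref{MBC-IC1} have to be checked. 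With a full range, \eqref{MBC-conv} gives $\widehat{s_t\circ\theta}(\alpha)=k_t\alpha+m_t^{*}$ on the whole interval $[\underline{c},\overline{c}]$. Writing $\varphi_t(k):=\max_{\alpha\in[\underline{c},\overline{c}]}\big[k\alpha-K_t(\alpha)\big]$ for the convex conjugate, the left-hand side of the type-$t$ incentive constraint then equals $\varphi_t(k_t)+m_t^{*}$ (the maximum defining $\varphi_t(k_t)$ is attained at $\alpha=\alpha_t^{*}$, since $K_t'(\alpha_t^{*})=k_t$), and its right-hand side equals $\varphi_t(k_{\bar t})+m_{\bar t}^{*}$. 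Hence $(\alpha^{*},m^{*},R_{F},\mu^{*})$ is feasible if and only if
\begin{equation*}
\varphi_0(k_0)+m_0^{*}\ \ge\ \varphi_0(k_1)+m_1^{*}\qquad\text{and}\qquad \varphi_1(k_1)+m_1^{*}\ \ge\ \varphi_1(k_0)+m_0^{*}.
\end{equation*}

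For the ``only if'' direction, add the two displayed inequalities to get $g(k_0)\ge g(k_1)$ with $g:=\varphi_0-\varphi_1$. The key analytic fact is that $g$ is strictly decreasing on $[K_1'(\underline{c}),K_0'(\overline{c})]$: one has $\varphi_t'(k)=\argmax_{\alpha}\big[k\alpha-K_t(\alpha)\big]$, and since Assumption~\ref{OM-as1} gives $K_0'\ge K_1'$ on $[\underline{c},\overline{c}]$ with strict inequality on $(\underline{c},\overline{c}]$, the objective $k\alpha-K_0(\alpha)$ has a pointwise smaller $\alpha$-derivative than $k\alpha-K_1(\alpha)$, so its maximizer is weakly smaller and is strictly smaller throughout the interior of $[K_1'(\underline{c}),K_0'(\overline{c})]$; thus $g'<0$ there. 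Both powers lie in this interval: $k_0=K_0'(\alpha_0^{*})\in[K_0'(\underline{c}),K_0'(\overline{c})]$ and $k_1=K_1'(\alpha_1^{*})\in[K_1'(\underline{c}),K_1'(\overline{c})]$, using $K_0'(\underline{c})\ge K_1'(\underline{c})$ together with $K_1'(\overline{c})<K_0'(\overline{c})$ (the latter from Assumption~\ref{OM-as1}), and Assumption~\ref{OM-as2} rules out the degenerate configuration in which the two marginal-cost ranges fail to overlap. Therefore $g(k_0)\ge g(k_1)$ forces $k_0\le k_1$, and combining this with statement~\circled{3} of Proposition~\ref{OM-pp1} ($k_0\ge k_1$) gives $k_0=k_1$, i.e.\ $K_0'(\alpha_0^{*})=K_1'(\alpha_1^{*})$.

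For the ``if'' direction, set $k_0=k_1=:k$; then the two displayed inequalities collapse to $m_0^{*}\ge m_1^{*}$ and $m_1^{*}\ge m_0^{*}$, so it suffices to prove $m_0^{*}=m_1^{*}$. First I would show that, at the optimal menu, $m_t^{*}>0$ forces the type-$t$ incentive constraint to bind: otherwise one could replace $m_t^{*}$ by $m_t^{*}-\varepsilon$ for small $\varepsilon>0$, keeping $\alpha^{*},R^{*},\mu^{*}$ and the other constant fixed, which preserves \eqref{MBC-MH} and \eqref{MBC-LL} (since $m_t^{*}>\varepsilon$), still satisfies the other IC constraint (lowering $m_t^{*}$ only lowers $\widehat{s_t\circ\theta}$ pointwise, hence the other type's misreporting payoff), and strictly raises the objective --- contradicting optimality. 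Now suppose $m_0^{*}>m_1^{*}$; then $m_0^{*}>0$, so \eqref{MBC-IC0} binds, and since $\widehat{s_1^{*}\circ\theta}(\alpha)\le k_1\alpha+m_1^{*}=k\alpha+m_1^{*}$ by \eqref{MBC-conv}, its right-hand side is at most $\varphi_0(k)+m_1^{*}$; binding \eqref{MBC-IC0} thus gives $\varphi_0(k)+m_0^{*}\le\varphi_0(k)+m_1^{*}$, i.e.\ $m_0^{*}\le m_1^{*}$, a contradiction. Symmetrically, $m_0^{*}<m_1^{*}$ contradicts the binding of \eqref{MBC-IC1}. Hence $m_0^{*}=m_1^{*}$, the full-range contracts coincide ($s_0=s_1=k\,c(\cdot)+m_0^{*}$), and $(\alpha^{*},m^{*},R_{F},\mu^{*})$ is feasible.

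The step I expect to be the main obstacle is the ``if'' direction, and within it the identity $m_0^{*}=m_1^{*}$: unlike the rest of the argument it genuinely uses optimality (not just feasibility) of the starting menu, and one must check carefully that a downward perturbation of a single constant payment disturbs neither \eqref{MBC-MH} nor the other incentive constraint. A secondary technical point, in the ``only if'' direction, is controlling the boundary behaviour of the conjugates $\varphi_0,\varphi_1$ so that $k_0$ and $k_1$ fall in the region where $g=\varphi_0-\varphi_1$ is genuinely strictly decreasing; this is where the hypotheses on the marginal costs, Assumptions~\ref{OM-as1} and~\ref{OM-as2}, together with statement~\circled{3} of Proposition~\ref{OM-pp1}, are used.
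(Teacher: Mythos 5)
Your proof is correct, and it follows the paper's basic plan (reduce feasibility of the full-range menu to the two IC inequalities, add them, and exploit a monotonicity inherited from Assumption~\ref{OM-as1}), but with enough genuine differences to be worth comparing. For the ``only if'' direction, the paper introduces the interior misreport points $\alpha_0^1,\alpha_1^0$ explicitly, derives an integral contradiction when they exist, and then handles separately the two boundary configurations $K_0'(\alpha_0^*)>K_1'(\overline c)$ and $K_1'(\alpha_1^*)<K_0'(\underline c)$ via a perturbation/optimality argument that invokes Assumption~\ref{OM-as2}. Your conjugate reformulation $g=\varphi_0-\varphi_1$ absorbs all of these cases at once: since $\varphi_t'(k)$ is the corner-constrained maximizer $\alpha_t(k)\in[\underline c,\overline c]$, the inequality $\varphi_0'(k)<\varphi_1'(k)$ on $(K_1'(\underline c),K_0'(\overline c))$ holds even when one of the maximizers is pinned at $\underline c$ or $\overline c$, and both $k_0,k_1$ lie in this interval by Assumption~\ref{OM-as1} alone. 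As a result your ``only if'' argument actually does \emph{not} use Assumption~\ref{OM-as2}; your sentence claiming that \ref{OM-as2} rules out non-overlapping marginal-cost ranges is a misstatement of what that assumption says (it is a lower bound on $\alpha_1^{MH}$), but since nothing in your derivation depends on it, this is harmless. For the ``if'' direction, the two approaches diverge: the paper simply notes that the constant-free full-range menu $(\alpha^*,(0,0),R_F,\mu^*)$ is feasible (both contracts coincide, so IC is vacuous), then uses optimality together with LL to conclude $m^*=(0,0)$, which makes $(\alpha^*,m^*,R_F,\mu^*)$ that same feasible menu. You instead prove only the weaker fact $m_0^*=m_1^*$, via a binding-IC contradiction. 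Both are valid; the paper's construction is shorter and also yields the stronger conclusion $m^*=(0,0)$, whereas yours stays entirely within the conjugate formulation. One small point worth tightening in your IC-binding lemma: after lowering $m_0$ you should note that the other type's IC RHS, $\max_\alpha[\widehat{s_0\circ\theta}(\alpha)-K_1(\alpha)]$, can only fall because $\widehat{s_0\circ\theta}$ decreases pointwise in $m_0$ --- you do say this, and it is the crucial check.
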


The sufficiency part of Theorem \ref{OM-pp3} is straightforward, since the IC constraints are automatically satisfied when only a single contract is offered. To establish the necessity part,  we demonstrate that implementing $(\mu_0^{*},\mu_1^{*})$ with two full-range contracts is impossible whenever $K_0'(\alpha_0^{*}) > K_1'(\alpha_1^{*})$. The reasoning parallels that of Corollary \ref{IMP-co1}: to implement $(\mu_0^{*},\mu_1^{*})$ in this case, the principal must offer the type-1 agent a strictly higher-powered contract. Because the type-1 agent benefits more from such a contract than the type-0 agent, implementation without restricting the contract ranges is infeasible.

Theorem \ref{OM-pp3} shows that, for an optimal menu $(a^{*},m^{*},R^{*},\mu^{*})$, if $K_0'(\alpha_0^{*}) = K_1'(\alpha_1^{*})$, then both contract ranges can be extended to full without affecting feasibility or optimality. In this case, the extended contracts coincide, and the principal's maximum profit can be achieved by a menu that only contains a single full-range contract. By contrast, if $K_0'(\alpha_0^{*}) > K_1'(\alpha_1^{*})$, then extending both contracts to full ranges is infeasible. One contract range must remain truncated: either the type-0 contract has a strictly right-truncated maximal range, or the type-1 contract has a strictly left-truncated maximal range. Any inclusion of outputs outside the maximal range renders the menu infeasible.

In our model, the principal has two instruments for designing contracts. The first is the power of a contract, which measures the contract's steepness and determines the agent's aggregate effort level. The second is the range of a contract, which specifies the set of outputs the agent is willing to produce and restricts how the aggregate effort is allocated. Theorem \ref{OM-pp3} shows that, in any optimal menu, if the principal uses the first instrument to create differences in contract powers, then the second instrument must also be employed to assist screening.
\section{Convex Effort Functions}
\label{MBCF-cef}
\subsection{Optimal Menus under Convex Effort Functions}
In this section, we focus on exploring the forms of optimal menus for \eqref{MBC-P} when the effort function $c(\cdot)$ is convex. Because $c(\cdot)$ is convex, its inverse $\theta(\cdot) = c^{-1}(\cdot)$ is therefore concave, which implies that $\Theta(\alpha) = \theta(\alpha)$ for all $\alpha \in [\underline{c},\overline{c}]$. In the baseline problem with a convex effort function, it is always optimal for the principal to recommend a point mass distribution to the agent. We show that this result continues to hold in the presence of adverse selection and formally present it in Lemma \ref{EF-lm1}. The analysis is straightforward: narrowing the ranges of the contracts does not impact the profit obtained by the principal and only reduces the agent's incentives to misreport.
\begin{lemma}
Assume that $c(\cdot)$ is convex. Take an optimal menu $(a^{*},m^{*},R^{*},\mu^{*})\in \mathcal{F}$ that solves \eqref{MBC-P}. It holds that $(a^{*},m^{*},R',\mu')$ also optimally solves \eqref{MBC-P}, where 
\begin{equation*}
R' := (\big\{\theta(\alpha_0^{*})\big\}, \big\{\theta(\alpha_1^{*})\big\})\text{ and } \mu': = (\delta_{\theta(\alpha_0^{*})},\delta_{\theta(\alpha_1^{*})}).
\end{equation*}
\label{EF-lm1} 
\end{lemma}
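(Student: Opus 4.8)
The plan is to prove the two halves separately: first that the modified menu $(a^{*},m^{*},R',\mu')$ is feasible, and then that it delivers the principal at least as much profit as $(a^{*},m^{*},R^{*},\mu^{*})$; since the latter is optimal, the two profits must coincide and the modified menu is optimal as well. Throughout I use that convexity of $c$ makes $\theta=c^{-1}$ concave, so that $\Theta=\widehat\theta=\theta$.

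For feasibility, the \eqref{MBC-MH} and \eqref{MBC-LL} constraints are immediate: $\mu'_t=\delta_{\theta(\alpha_t^{*})}$ has $\int c\,d\mu'_t=c(\theta(\alpha_t^{*}))=\alpha_t^{*}$ and $\mathrm{supp}(\mu'_t)=\{\theta(\alpha_t^{*})\}=R'_t$, while $m^{*}$ is unchanged. The substantive point is \eqref{MBC-IC0}--\eqref{MBC-IC1}. Shrinking the range of the type-$t$ contract from $R_t^{*}$ to $R'_t$ does not alter the left-hand side of either IC constraint (which depends only on $\alpha_t^{*}$ and $m_t^{*}$), so it suffices to show the right-hand sides do not rise, i.e.\ $\widehat{s'_t\circ\theta}\le\widehat{s_t^{*}\circ\theta}$ pointwise on $[\underline c,\overline c]$ for $t\in\{0,1\}$. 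I would argue this from the minimality of the concavification: (a) $s_t^{*}\circ\theta\ge 0$ everywhere, since its value at each $\alpha$ is either $0$ or $K_t'(\alpha_t^{*})\alpha+m_t^{*}\ge 0$ (using $m_t^{*}\ge 0$, $\underline c=0$, and $K_t'\ge 0$), hence $\widehat{s_t^{*}\circ\theta}\ge 0$; (b) because $\mathrm{supp}(\mu_t^{*})\subseteq R_t^{*}$ and $\alpha_t^{*}=\int c\,d\mu_t^{*}$, we have $\alpha_t^{*}\in[\inf c(R_t^{*}),\sup c(R_t^{*})]$, so the middle branch of \eqref{MBC-conv} gives $\widehat{s_t^{*}\circ\theta}(\alpha_t^{*})=K_t'(\alpha_t^{*})\alpha_t^{*}+m_t^{*}$; (c) but $s'_t\circ\theta$ equals $K_t'(\alpha_t^{*})\alpha_t^{*}+m_t^{*}$ at $\alpha_t^{*}$ and $0$ elsewhere, so (a) and (b) give $s'_t\circ\theta\le\widehat{s_t^{*}\circ\theta}$ pointwise; since $\widehat{s_t^{*}\circ\theta}$ is concave and $\widehat{s'_t\circ\theta}$ is the smallest concave majorant of $s'_t\circ\theta$, we conclude $\widehat{s'_t\circ\theta}\le\widehat{s_t^{*}\circ\theta}$. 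Hence both IC constraints survive and $(a^{*},m^{*},R',\mu')\in\mathcal F$.

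For the profit comparison, subtracting the value of \eqref{MBC-P} at $(a^{*},m^{*},R^{*},\mu^{*})$ from its value at $(a^{*},m^{*},R',\mu')$ leaves only $\sum_{t}p_t\big[\theta(\alpha_t^{*})-\int x\,\mu_t^{*}(dx)\big]$, because the $-K_t'(\alpha_t^{*})\alpha_t^{*}-m_t^{*}$ terms are common and $\int x\,\mu'_t(dx)=\theta(\alpha_t^{*})$. Writing $x=\theta(c(x))$ and applying Jensen's inequality to the concave map $\theta$ yields $\int x\,\mu_t^{*}(dx)=\int\theta(c(x))\,\mu_t^{*}(dx)\le\theta\big(\int c(x)\,\mu_t^{*}(dx)\big)=\theta(\alpha_t^{*})$, so every bracket is nonnegative. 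Thus the modified menu is feasible and weakly more profitable; optimality of $(a^{*},m^{*},R^{*},\mu^{*})$ then forces equality of the two profits, so $(a^{*},m^{*},R',\mu')$ is optimal as well.

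I expect the only delicate step to be the pointwise domination $\widehat{s'_t\circ\theta}\le\widehat{s_t^{*}\circ\theta}$, and within it the identification of the peak value $\widehat{s_t^{*}\circ\theta}(\alpha_t^{*})=K_t'(\alpha_t^{*})\alpha_t^{*}+m_t^{*}$: this is exactly where one needs $\alpha_t^{*}\in[\inf c(R_t^{*}),\sup c(R_t^{*})]$ (itself a consequence of $\mathrm{supp}(\mu_t^{*})\subseteq R_t^{*}$) together with the explicit concavification formula \eqref{MBC-conv}, plus nonnegativity of $\widehat{s_t^{*}\circ\theta}$ from limited liability. Everything else is bookkeeping, and the profit half is a one-line Jensen argument.
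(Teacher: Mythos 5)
Your proof is correct and follows essentially the same strategy as the paper's: replace the contracts by singleton-range versions, argue the concavified contracts shrink pointwise so IC survives, and invoke optimality of the original menu. The one place where your version is actually tighter than the paper's is step (b)--(c): the paper asserts the \emph{raw} pointwise inequality $s_t'\circ\theta(\alpha)\le s_t^{*}\circ\theta(\alpha)$, which requires $\alpha_t^{*}\in c(R_t^{*})$, and that inclusion is not guaranteed (only $\alpha_t^{*}\in\text{conv}(c(R_t^{*}))$ follows from $\text{supp}(\mu_t^{*})\subseteq R_t^{*}$; e.g.\ with an affine effort function and a two-point $\mu_t^{*}$, the mean $\alpha_t^{*}$ can lie strictly between the two support points). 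You sidestep this by comparing $s_t'\circ\theta$ against $\widehat{s_t^{*}\circ\theta}$ directly, using the middle branch of \eqref{MBC-conv} together with nonnegativity of the concavification, and then appealing to minimality of the concave majorant. That is exactly the right repair. You also make explicit the Jensen step showing $\int x\,\mu_t^{*}(dx)\le\theta(\alpha_t^{*})$, which the paper leaves implicit when it says ``it suffices to show feasibility''; spelling it out is worthwhile since it is what turns feasibility into optimality. No gaps.
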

\begin{proof}
See Appendix \ref{MBCF-app-cef}.
\end{proof}

Lemma \ref{EF-lm1} suggests that, to identify optimal menus, it suffices to restrict attention to cases where both the contract ranges and the supports of the recommended effort distributions are singletons. This implies that it is optimal to adopt payment-minimizing menus as described in Lemma \ref{IMP-lm1}. By Proposition \ref{OM-pp1}, the principal can further restrict attention to menus in which the type-0 contract has greater power than the type-1 contract. Consequently, the principal's problem \eqref{MBC-P} can be reduced to the simpler problem \eqref{EF-eq1}, optimized over $\alpha_0$, $\alpha_1$, and $m_1$, as formalized in the following lemma.
\begin{lemma}
Assume that $c(\cdot)$ is convex. Define problem \eqref{EF-eq1} as below: 
\begin{subequations}
\begin{align}
\max_{\alpha_0,\alpha_1\in [\underline{c},\overline{c}], m_1\in \mathbb{R}} \quad & p_0\big[\Theta(\alpha_0) - K_0'(\alpha_0)\cdot \alpha_0\big] + p_1\big[\Theta(\alpha_1) - K_1'(\alpha_1)\cdot \alpha_1 - m_1\big]  \tag{$P_1$}\label{EF-eq1}\\
s.t. \quad & \alpha_0 \leq \alpha_1 \\
& K_0'(\alpha_0)\geq K_1'(\alpha_1)\\
& m_1 = \max\Big\{0,\big[K_0'(\alpha_0)\cdot \alpha_0 - K_1(\alpha_0)\big] \atop \qquad\qquad- \big[K_1'(\alpha_1)\cdot \alpha_1 - K_1(\alpha_1)\big]\Big\}.\label{EF-eq1c} 
\end{align}
\end{subequations}
Under Assumption \ref{OM-as1}, problem \eqref{EF-eq1} has the same optimal value as problem \eqref{MBC-P}. Take an optimal menu $(a^{*},m^{*},R^{*},\mu^{*})\in \mathcal{F}$ to \eqref{MBC-P}, it holds that $m_0^{*} = 0$, and that $(\alpha_0^{*},\alpha_1^{*},m_1^{*})$ solves \eqref{EF-eq1}. 
\label{EF-lm2}
\end{lemma}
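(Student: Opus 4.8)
\textbf{Proof plan for Lemma \ref{EF-lm2}.}

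The plan is to show that \eqref{MBC-P} and \eqref{EF-eq1} have the same value by constructing, from any feasible menu, a point $(\alpha_0,\alpha_1,m_1)$ feasible for \eqref{EF-eq1} with weakly higher objective value, and conversely showing that any feasible point of \eqref{EF-eq1} corresponds to a feasible menu of \eqref{MBC-P} with the same value. I would work with the reduced description afforded by Lemma \ref{EF-lm1}: an optimal menu can be taken to have both contract ranges equal to the singletons $\{\theta(\alpha_0)\}$, $\{\theta(\alpha_1)\}$ and $\mu_t=\delta_{\theta(\alpha_t)}$, so the principal's profit from type $t$ is exactly $\Theta(\alpha_t)-K_t'(\alpha_t)\cdot\alpha_t-m_t$ (using $\theta(\alpha_t)=\Theta(\alpha_t)$ under convexity). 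This immediately matches the objective of \eqref{EF-eq1} once we argue $m_0^*=0$.

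The first substantive step is to establish $m_0^* = 0$ at any optimal menu. By Proposition \ref{OM-pp1}\,\circled{1} and \circled{3} we already have $\alpha_0^*\le\alpha_1^*$ and $K_0'(\alpha_0^*)\ge K_1'(\alpha_1^*)$, which give the first two constraints of \eqref{EF-eq1}. For $m_0^*=0$: with the singleton ranges from Lemma \ref{EF-lm1}, the concavified contracts $\widehat{s_t^*\circ\theta}$ have the three-segment kink form of \eqref{MBC-conv} with $\underline r_t=\overline r_t=\alpha_t^*$; I would write out \eqref{MBC-IC0} explicitly and observe that the type-$0$ agent's best misreport payoff from $s_1^*$ is maximized at $\alpha=\alpha_1^*$ (the kink point), because to the left of $\alpha_1^*$ the envelope is the ray through the origin with slope $(K_1'(\alpha_1^*)\alpha_1^*+m_1^*)/\alpha_1^*$ and $K_0$ is convex, while to the right the envelope decreases. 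Hence \eqref{MBC-IC0} reads $K_0'(\alpha_0^*)\alpha_0^* - K_0(\alpha_0^*) + m_0^*\ge K_1'(\alpha_1^*)\alpha_1^* + m_1^* - K_0(\alpha_1^*)$. Now suppose $m_0^*>0$. I would decrease $m_0^*$ slightly (keeping everything else fixed): this relaxes the principal's payoff upward, the LL and MH constraints are unaffected, \eqref{MBC-IC1} only gets easier (its RHS, through $\widehat{s_0^*\circ\theta}$, is weakly decreasing in $m_0^*$), and \eqref{MBC-IC0} stays satisfied as long as $m_0^*$ does not drop below the level dictated by the RHS. If that binding level is itself positive, I instead lower both $m_0^*$ and $m_1^*$ in tandem — decreasing $m_1^*$ lowers the RHS of \eqref{MBC-IC0} (through the envelope of $s_1^*$) at the same rate as the LHS, so the constraint is preserved, \eqref{MBC-IC1}'s RHS also weakly decreases, and the objective strictly improves (the coefficient on $-m_1^*$ is $p_1$ but lowering $m_0^*$ contributes $p_0$, and in fact one should lower $m_0$ alone whenever slack in \eqref{MBC-IC0} permits and otherwise invoke the characterization of $m_1$ below). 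Either way optimality forces $m_0^*=0$, and then \eqref{MBC-IC0} is automatically slack given Assumption \ref{OM-as1} (the same $\int_{k_1}^{k_0}$ comparison used in the proof of Theorem \ref{OM-pp3}).

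With $m_0^*=0$ fixed, the second step is to pin down $m_1^*$ via \eqref{MBC-IC1}. Again using the singleton range for $s_0^*$, $\widehat{s_0^*\circ\theta}$ is the ray of slope $K_0'(\alpha_0^*)$ on $[\underline c,\alpha_0^*]$ and decreasing thereafter, so the type-$1$ agent's best misreport value is $\max_\alpha[K_0'(\alpha_0^*)\alpha - K_1(\alpha)]$ attained at the interior point where $K_1'=K_0'(\alpha_0^*)$, but since $\alpha_0^*\le\alpha_1^*$ and $K_1'(\alpha_1^*)\le K_0'(\alpha_0^*)$ one checks the maximizer is $\ge\alpha_0^*$, hence on the linear part, giving value $K_0'(\alpha_0^*)\alpha_0^* - K_1(\alpha_0^*)$ (the maximum of $K_0'(\alpha_0^*)\alpha - K_1(\alpha)$ over the relevant range — I would double-check the maximizer lands in the range where the envelope is exactly $K_0'(\alpha_0^*)\alpha$, which holds precisely because the envelope of the singleton-range contract is that ray up to $\alpha_0^*$ and flat/decreasing after, combined with $K_1$ convex). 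Thus \eqref{MBC-IC1} becomes $K_1'(\alpha_1^*)\alpha_1^* - K_1(\alpha_1^*) + m_1^* \ge K_0'(\alpha_0^*)\alpha_0^* - K_1(\alpha_0^*)$; since the principal wants $m_1^*$ as small as possible (objective coefficient $-p_1<0$) and LL requires $m_1^*\ge0$, optimality forces exactly \eqref{EF-eq1c}. This gives feasibility of $(\alpha_0^*,\alpha_1^*,m_1^*)$ for \eqref{EF-eq1} and equality of objectives.

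For the converse — every feasible point of \eqref{EF-eq1} gives a feasible menu of \eqref{MBC-P} of the same value — I would take $R_t=\{\theta(\alpha_t)\}$, $\mu_t=\delta_{\theta(\alpha_t)}$, $m_0=0$, and $m_1$ as in \eqref{EF-eq1c}, build $s_t$ by \eqref{MBC-eq2}, and verify \eqref{MBC-MH}, \eqref{MBC-LL} trivially and the two IC constraints by reversing the computations above: \eqref{MBC-IC1} holds by the very definition \eqref{EF-eq1c}, and \eqref{MBC-IC0} holds because its RHS equals $K_1'(\alpha_1)\alpha_1 + m_1 - K_0(\alpha_1)$ and the inequality reduces, after substituting \eqref{EF-eq1c}, to the $\int_{k_1}^{k_0}[k_0^{-1}(k)-k_1^{-1}(k)]\,dk\le 0$ inequality guaranteed by Assumption \ref{OM-as1} together with $\alpha_0\le\alpha_1$ and $K_0'(\alpha_0)\ge K_1'(\alpha_1)$. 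Since both directions preserve the objective value exactly, \eqref{EF-eq1} and \eqref{MBC-P} share the optimal value and the optimal $(\alpha_0^*,\alpha_1^*,m_1^*)$ coincide.

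\textbf{Main obstacle.} The delicate point is the "either way" argument for $m_0^*=0$: I must handle the case where reducing $m_0$ alone is blocked by a binding \eqref{MBC-IC0} whose RHS is strictly positive, and show the joint perturbation of $(m_0,m_1)$ still strictly improves the objective while keeping both IC constraints — this requires knowing the exact slopes with which the RHS of each IC moves, i.e., that the misreport payoffs sit on the linear (ray) pieces of the concavified singleton-range contracts, which is exactly where $\alpha_0^*\le\alpha_1^*$, $K_0'(\alpha_0^*)\ge K_1'(\alpha_1^*)$, and convexity of $K_0,K_1$ all have to be combined carefully. The analogous identification of the misreport maximizers (that they land on the linear segments and not on the decreasing third segment) is the recurring technical check throughout.
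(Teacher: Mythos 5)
Your outline shares the paper's broad skeleton (Lemma~\ref{EF-lm1} reduction to singleton ranges, $m_0^*=0$, characterization of $m_1^*$, and a converse), but the argument you give for $m_0^*=0$ contains a genuine error that propagates.

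The key misstep is the claim that, with singleton ranges, the type-$0$ agent's best misreport against $s_1^*$ is attained at the kink $\alpha=\alpha_1^*$. That is false in general. On $[\underline c,\alpha_1^*]$ the envelope is the ray of slope $\lambda := (K_1'(\alpha_1^*)\alpha_1^*+m_1^*)/\alpha_1^* = K_1'(\alpha_1^*) + m_1^*/\alpha_1^*$, and $\lambda\alpha - K_0(\alpha)$ is concave, so its maximizer is where $K_0'(\alpha)=\lambda$ if that lies in $[\underline c,\alpha_1^*]$; since $K_0'(\alpha_1^*)\ge K_1'(\alpha_1^*)$, one has $\lambda<K_0'(\alpha_1^*)$ whenever $m_1^*$ is not large, in which case the maximizer is strictly interior, not at the kink. (Your "because $K_0$ is convex" does not give a boundary maximum for affine-minus-convex; it gives concavity, hence typically an interior one.) This is exactly why the paper's \eqref{MBC-IC0} for this menu is written as a $\max$ over $[\underline c,\alpha_1]$ of the ray minus $K_0$, not evaluated at $\alpha_1$, and why in the paper's Case~1 this constraint collapses to the slope inequality $\lambda\le K_0'(\alpha_0)$, equivalently $m_1\le[K_0'(\alpha_0)-K_1'(\alpha_1)]\alpha_1$, rather than to your pointwise inequality at $\alpha_1^*$.

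This error undermines the local-perturbation argument for $m_0^*=0$. When you lower $m_0^*$ and $m_1^*$ in tandem by $\epsilon$, the LHS of IC0 drops by exactly $\epsilon$, but the RHS (the concavified $s_1^*$ minus $K_0$, maximized) drops by only $\epsilon\,\alpha'/\alpha_1^*<\epsilon$ when the maximizer $\alpha'$ is interior — so a binding IC0 is \emph{violated}, not preserved. Likewise, the LHS of IC1 drops by $\epsilon$ while its RHS (through the concavified $s_0^*$) can drop by strictly less than $\epsilon$ when the relevant maximizer is not at $\alpha_0^*$, so a binding IC1 can also fail. The subsequent claim that once $m_0^*=0$ the constraint IC0 is "automatically slack" is also not right: with $m_0=0$, IC0 is precisely the upper bound $m_1\le[K_0'(\alpha_0)-K_1'(\alpha_1)]\alpha_1$, so it need not be slack. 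The paper avoids all of this by never perturbing around the optimum: assuming $m_0=0$ it computes the exact feasibility interval for $m_1$ (Cases~1 and~2, showing the interval is non-empty via the $\int_{\alpha_0}^{\alpha_1}[K_1'(\alpha)-K_1'(\alpha_1)]\,d\alpha\le0$ identity), and then proves $m_0^*=0$ by \emph{globally} replacing $(m_0^*,m_1^*)$ with $(0,\hat m_1)$ and showing the new menu is feasible with strictly higher profit via $m_1^*\ge\hat m_1$ from the LL and \eqref{MBC-IC1} constraints. To repair your plan you should follow that global comparison rather than the local perturbation, and in particular avoid identifying misreport maximizers with the contract kink.
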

\begin{proof}
See Appendix \ref{MBCF-app-cef}.
\end{proof}

Based on Lemma \ref{EF-lm2}, we establish a sufficient and necessary condition on when the optimal menu consists of contracts with the same level of power. 
\begin{theorem}
Assume that $c(\cdot)$ is convex. Take an optimal menu $(a^{*},m^{*},R^{*},\mu^{*})\in \mathcal{F}$ that solves \eqref{MBC-P}. Under Assumption \ref{OM-as1} and \ref{OM-as3}, it holds that
\begin{equation*}
    K_0'(\alpha_0^{*}) = K_1'(\alpha_1^{*})
\end{equation*}
if and only if $K_0'(\alpha_0^{MH})\leq K_1'(\alpha_1^{MH})$. 
\label{EF-th1}
\end{theorem}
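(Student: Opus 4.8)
The plan is to reduce everything to the auxiliary problem $(P_1)$ of Lemma \ref{EF-lm2}: since that lemma shows every optimal menu yields an optimal $(\alpha_0^{*},\alpha_1^{*},m_1^{*})$ of $(P_1)$ with the \emph{same} $(\alpha_0^{*},\alpha_1^{*})$, it suffices to prove the claimed equivalence for any optimal $(\alpha_0^{*},\alpha_1^{*})$ of $(P_1)$. Substituting \eqref{EF-eq1c}, rewrite the objective of $(P_1)$ as a function on $D:=\{(\alpha_0,\alpha_1):\underline c\le\alpha_0\le\alpha_1\le\overline c,\ K_0'(\alpha_0)\ge K_1'(\alpha_1)\}$:
\[
\Pi(\alpha_0,\alpha_1)\;=\;p_0 f_0(\alpha_0)+p_1 f_1(\alpha_1)-p_1\big[h_0(\alpha_0)-g(\alpha_1)\big]^{+},
\]
where $f_t(\alpha):=\Theta(\alpha)-K_t'(\alpha)\alpha$, $h_0(\alpha):=K_0'(\alpha)\alpha-K_1(\alpha)$, $g(\alpha):=K_1'(\alpha)\alpha-K_1(\alpha)$ and $[\,\cdot\,]^{+}:=\max\{0,\cdot\}$; the constraint $K_0'(\alpha_0)\ge K_1'(\alpha_1)$ in $D$ (consistent with statement \circled{3} of Proposition \ref{OM-pp1}) means the theorem amounts to deciding when this inequality is strict. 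I would first record five elementary facts. (i) $g$ is increasing, and on the ``single-contract'' locus $L:=\{K_0'(\alpha_0)=K_1'(\alpha_1)\}\cap D$ one has $h_0(\alpha_0)-g(\alpha_1)<0$ by strict convexity of $K_1$ and $\alpha_0<\alpha_1$, so $\Pi$ equals the separable $V(\alpha_0,\alpha_1):=p_0 f_0(\alpha_0)+p_1 f_1(\alpha_1)$ on $\{h_0\le g\}\cap D\supseteq L$. (ii) On $\{h_0>g\}$, $\Pi=\phi(\alpha_0)+p_1[\Theta(\alpha_1)-K_1(\alpha_1)]$ with $\phi:=p_0 f_0-p_1 h_0$. (iii) $h_0'>0$ on $(\underline c,\overline c]$ by Assumption \ref{OM-as1}, hence $\phi'<p_0 f_0'$ and $\phi$ is decreasing on $[\alpha_0^{MH},\overline c]$. (iv) $\Theta-K_1$ is strictly concave; its maximiser $\alpha_1^{FB}$ satisfies $\alpha_1^{FB}\ge\alpha_1^{MH}$. (v) $K_0'(\alpha_0^{MH})\le K_1'(\alpha_1^{MH})$ forces $\alpha_0^{MH}<\alpha_1^{MH}$, since $K_1'(\alpha_1^{MH})<K_0'(\alpha_1^{MH})$ by Assumption \ref{OM-as1}.

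For the ``only if'' direction I assume $K_0'(\alpha_0^{MH})>K_1'(\alpha_1^{MH})$ and suppose, for contradiction, that an optimal point lies on $L$. Parametrising $L$ by the common power $q$ (so $\alpha_0=(K_0')^{-1}(q)$, $\alpha_1=(K_1')^{-1}(q)$, $m_1=0$), a direct computation gives $\tfrac{d}{dq}\Pi|_{L}=p_0 f_0'(\alpha_0)/K_0''(\alpha_0)+p_1 f_1'(\alpha_1)/K_1''(\alpha_1)$, which is positive while $q<K_1'(\alpha_1^{MH})$ and negative once $q>K_0'(\alpha_0^{MH})$, because $f_t'(\alpha_t)>0\iff q<K_t'(\alpha_t^{MH})$. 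Hence the optimal $q$ along $L$ is $\le K_0'(\alpha_0^{MH})$, and by (v) it cannot simultaneously give $\alpha_0=\alpha_0^{MH}$ and $\alpha_1=\alpha_1^{MH}$; so at that point either $\alpha_0<\alpha_0^{MH}$ or $\alpha_1>\alpha_1^{MH}$. A small increase of $\alpha_0$ in the first case, or a small decrease of $\alpha_1$ in the second, keeps the point in $D$ (the power constraint and $\alpha_0\le\alpha_1$ acquire slack), keeps $m_1=0$ by (i), and strictly raises $\Pi$ — contradicting optimality. The single point of $L$ where no such perturbation exists, $\alpha_0=\alpha_1=\underline c$, is excluded as an optimum by the hypothesis, since optimality there forces $f_0'(\underline c)\le 0$ and $f_1'(\underline c)\le 0$, hence $\alpha_0^{MH}=\alpha_1^{MH}=\underline c$ and equal MH powers.

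For the ``if'' direction I assume $K_0'(\alpha_0^{MH})\le K_1'(\alpha_1^{MH})$ and suppose, for contradiction, an optimal $(\alpha_0^{*},\alpha_1^{*})$ has $K_0'(\alpha_0^{*})>K_1'(\alpha_1^{*})$, i.e.\ is strictly interior to $D$ with respect to the power constraint. If $h_0(\alpha_0^{*})\le g(\alpha_1^{*})$: since $\Pi\le V$ on $D$ with equality on $\{h_0\le g\}\cap D$, the point $(\alpha_0^{*},\alpha_1^{*})$ maximises the separable $V$ over $\{h_0\le g\}\cap D$; running through the finitely many ways the constraints $\{\alpha_0=\underline c\}$, $\{\alpha_1=\overline c\}$, $\{\alpha_0=\alpha_1\}$, $\{h_0=g\}$ can be active, optimality of $V$ forces $\alpha_0^{*}=\alpha_0^{MH}$ and $\alpha_1^{*}=\alpha_1^{MH}$, unless the configuration degenerates to the non-optimal menu $\alpha_0^{*}=\alpha_1^{*}=\underline c$ or drives $\alpha_t^{*}$ to a corner whose power already equals $K_t'(\alpha_t^{MH})$; in every surviving case $K_0'(\alpha_0^{*})\le K_1'(\alpha_1^{*})$ by the hypothesis, contradicting strictness. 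If $h_0(\alpha_0^{*})>g(\alpha_1^{*})$: then by (ii), $\alpha_1^{*}$ is a local maximiser of the strictly concave $\Theta-K_1$ in the $\alpha_1$-direction, so $\alpha_1^{*}=\alpha_1^{FB}$ or $\alpha_1^{*}=\alpha_0^{*}\ge\alpha_1^{FB}$ (it cannot be pinned by the power constraint without forcing equal powers). If $\alpha_1^{*}=\alpha_1^{FB}$, then by (iv)--(v) $K_1'(\alpha_1^{*})\ge K_1'(\alpha_1^{MH})\ge K_0'(\alpha_0^{MH})$, so the feasible interval for $\alpha_0$ lies in $[\alpha_0^{MH},\overline c]$ where $\phi$ is decreasing by (iii); hence $\alpha_0^{*}$ is its left endpoint $(K_0')^{-1}(K_1'(\alpha_1^{FB}))$, making the powers coincide — contradiction. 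If $\alpha_1^{*}=\alpha_0^{*}\ge\alpha_1^{FB}\ge\alpha_1^{MH}>\alpha_0^{MH}$, a small decrease of $\alpha_0$ keeps $(\alpha_0,\alpha_1^{*})\in D$ and $h_0>g$ while strictly raising $\phi$ by (iii) — again a contradiction.

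I expect the main obstacle to be the exhaustive boundary bookkeeping in the ``if'' direction: one must verify that every admissible combination of active constraints at the optimum either equalises the powers or collapses to a degenerate, non-optimal menu, and that each perturbation used stays inside $D$ and on the intended side of the kink $\{h_0=g\}$, so that $m_1$ remains on the right branch of the $\max$. Conceptually the argument is light — it rests only on the two separation identities (i)--(ii) and the single-crossing comparisons (iii)--(v) — but the number of cases to dispatch is substantial, and the degenerate corner $\alpha_0=\alpha_1=\underline c$ (and symmetric corners at $\overline c$) must be handled with care because $K_0'$ and $K_1'$ are allowed to agree at $\underline c$.
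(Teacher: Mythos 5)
Your route is essentially the paper's: both proofs pass to the reduced problem $(P_1)$ via Lemma \ref{EF-lm2}, then use strict single-peakedness (Assumption \ref{OM-as3}) to produce a profit-improving local perturbation of $(\alpha_0,\alpha_1)$ whenever the powers fail to have the claimed relation. What you add is a cleaner bookkeeping device: writing the objective as $V = p_0 f_0 + p_1 f_1$ on $\{h_0\le g\}$ and as $\phi(\alpha_0)+p_1(\Theta-K_1)(\alpha_1)$ on $\{h_0>g\}$, and noting $\Pi\le V$ globally with equality on $\{h_0\le g\}$, makes the paper's Cases 1--2 of the sufficiency direction fall out of the separability of $V$ and the monotonicity of $\phi$ rather than from chains of algebraic inequalities. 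That reorganization is a genuine improvement in readability, though it is a repackaging rather than a new argument.

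There is, however, one genuine gap at the degenerate corner $\alpha_0^{*}=\alpha_1^{*}=\underline{c}$. You claim that optimality there \emph{forces} $f_0'(\underline{c})\le 0$ and $f_1'(\underline{c})\le 0$ separately. That is not implied: at that corner (where being on $L$ forces $K_0'(\underline{c})=K_1'(\underline{c})$, and Assumption \ref{OM-as1} then forces $K_0''(\underline{c})\ge K_1''(\underline{c})$), the feasible tangent cone of $D$ is the narrow wedge $\{0\le d_0\le d_1\le \frac{K_0''(\underline{c})}{K_1''(\underline{c})} d_0\}$, possibly a single ray. Optimality only yields nonpositive directional derivatives of $V$ along directions in that cone, i.e.\ combined inequalities like $p_0 f_0'(\underline{c}) + \lambda p_1 f_1'(\underline{c}) \le 0$ for $\lambda$ between $1$ and $K_0''(\underline{c})/K_1''(\underline{c})$; this is consistent with $f_1'(\underline{c})>0$ as long as $f_0'(\underline{c})$ is sufficiently negative. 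The correct repair runs in the opposite direction, as in the paper's Case 3: from $K_0'(\alpha_0^{MH})>K_1'(\alpha_1^{MH})$ together with $K_0'(\underline{c})=K_1'(\underline{c})$ one first derives $\alpha_1^{MH}>\underline{c}$ (using $\Theta'(\alpha_0^{MH})\le\Theta'(\underline{c})\le K_1'(\underline{c})<K_0'(\alpha_0^{MH})+K_0''(\alpha_0^{MH})\alpha_0^{MH}$ to rule out $\alpha_1^{MH}=\underline{c}$), hence $\alpha_0^{MH}>\underline{c}$, so $f_0'(\underline{c})>0$ and $f_1'(\underline{c})>0$; then the directional derivative along $L$ is strictly positive, contradicting optimality. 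You should replace the claimed implication with this argument. The remaining boundary cases you deferred (e.g.\ $\alpha_1^{*}=\overline{c}$, the kink $\{h_0=g\}$) can indeed be dispatched as you anticipate, so that part is not a gap, only unfinished verification.
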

\begin{proof}
See Appendix \ref{MBCF-app-cef}.
\end{proof}

Theorem \ref{EF-th1} shows that, with a convex effort function, whether $K_0'(\alpha_0^{*}) = K_1'(\alpha_1^{*})$ holds at the optimal menu $(a^{*},m^{*},R^{*},\mu^{*})$ depends entirely on the pure moral hazard solution $(\alpha_0^{MH},\alpha_1^{MH})$, which is well-defined under Assumption \ref{OM-as3}. The optimal contracts exhibit the same level of power if and only if, in the pure moral hazard problem, the principal would like to assign the type-0 agent with a weakly lower-powered contract, compared to the one assigned to the type-1 agent. Theorem \ref{EF-th1} depends on the fact that the principal's profit in the baseline problem is single-peaked in the aggregate effort. If the implemented aggregate effort levels do not align with the pure moral hazard solution, we can adjust the menu to strictly increase the profit obtained. This result is intuitive but rests on the simplicity of the principal's problem when the effort function is convex. 

In the case where $K_0'(\alpha_0^{MH}) < K_1'(\alpha_1^{MH})$, to implement $(\alpha_0^{MH}, \alpha_1^{MH})$ in the problem with adverse selection, the principal must pay a positive constant payment to the type-0 agent to ensure incentive compatibility. Because this transfer is excessively costly, the principal instead prefers to offer contracts with equal power, thereby eliminating the screening cost. By contrast, when $K_0'(\alpha_0^{MH}) > K_1'(\alpha_1^{MH})$, the implemented aggregate efforts might be distorted from this pure moral hazard solution in the presence of adverse selection, but the screening feature is maintained. 

From Theorem \ref{EF-th1}, when it holds that $K_0'(\alpha_0^{MH})\leq K_1'(\alpha_1^{MH})$, all optimal menus feature contracts with equal power. In this case, the contract ranges can be extended to full, resulting in a single contract through which the principal can attain her maximal profit. By contrast, if $K_0'(\alpha_0^{MH}) >  K_1'(\alpha_1^{MH})$, the optimal contracts have distinct powers at any optimal menu, and the maximal profit cannot be attained through a single full-range contract.

Theorem \ref{EF-th1} demonstrates that, under a convex effort function, it may be optimal for the principal to screen the agent by offering contracts with different powers. We provide an example in which such screening is optimal and use it to further illustrate why optimality of screening may occur in our setting.
\begin{example}
\label{EF-ex1}
Consider the problem with a linear effort function $c(x) = x$ for all $x\in [\underline{x},\overline{x}] = [0,\frac{1}{2}]$. Then we have $[\underline{c},\overline{c}] = [0,\frac{1}{2}]$. The cost functions are given by $K_0(\alpha) = \alpha^2$ and $K_1(\alpha) = \frac{2}{3}\alpha^3$ for all $\alpha \in [\underline{c},\overline{c}]$. We can verify that Assumption \ref{OM-as1} and \ref{OM-as3} are both satisfied. Furthermore, it holds that $K_0'(\alpha_0^{MH}) > K_1'(\alpha_1^{MH})$. By Theorem \ref{EF-th1}, the optimal menu to \eqref{MBC-P} in this setting features contracts with different powers.

We take the optimal aggregate efforts $(\alpha_0^{*},\alpha_1^{*})$ from the optimal menu to \eqref{MBC-P}. Let $(\alpha_0^{S},\alpha_1^{S})$ be the optimal aggregate efforts when the principal focuses on designing a single full-range contract. We have that $\alpha_0^{S} < \alpha_0^{*} <\alpha_0^{MH}$ and $\alpha_1^{S} > \alpha_1^{*} >\alpha_1^{MH}$. 

In Figure \ref{EF-binary-eg}, we plot $\Theta(\alpha_0) - K_0'(\alpha_0)\cdot \alpha_0$, $\Theta(\alpha_1) - K_1'(\alpha_1)\cdot \alpha_1$, and $-m_1$ where $m_1 = \max\Big\{0, \big[K_0'(\alpha_0)\cdot \alpha_0 - K_1(\alpha_0)\big] - \big[K_1'(\alpha_1)\cdot \alpha_1 - K_1(\alpha_1)\big]\Big\}$ while we move $(\alpha_0,\alpha_1)$ along the direction from $(\alpha_0,\alpha_1) = (\alpha_0^{S},\alpha_1^{S})$ to $(\alpha_0,\alpha_1) = (\alpha_0^{*},\alpha_1^{*})$ (the first two terms are shifted by constants in the plot). Note that the profit obtained by the principal from implementing $(\alpha_0,\alpha_1)$ in a cost-minimizing way is a weighted sum of those three terms. 

We have that $\Theta(\alpha_0) - K_0'(\alpha_0)\cdot \alpha_0$ and $\Theta(\alpha_1) - K_1'(\alpha_1)\cdot \alpha_1$ both increase when we move $(\alpha_0,\alpha_1)$ from $(\alpha_0^{S},\alpha_1^{S})$ to $(\alpha_0^{*},\alpha_1^{*})$, while $m_1$ consistently remains 0. If we move $(\alpha_0,\alpha_1)$ further, though the first two terms would also increase, $m_1$ is increased as well, making the implementation too costly for the principal. Figure \ref{EF-binary-eg} illustrates that starting from $(\alpha_0,\alpha_1) = (\alpha_0^{S},\alpha_1^{S})$, it is possible to move $\alpha_0$ and $\alpha_1$ towards the pure moral hazard solution without increasing the constant payment $m_1$, which strictly improves the principal's profit. Hence, the principal prefers screening (implementing $(\alpha_0^{*},\alpha_1^{*})$) to pooling (implementing $(\alpha_0^{S},\alpha_1^{S})$) in this problem. $\hfill\lhd$
\begin{figure}[h]
    \centering
    \includegraphics[width=0.5\linewidth]{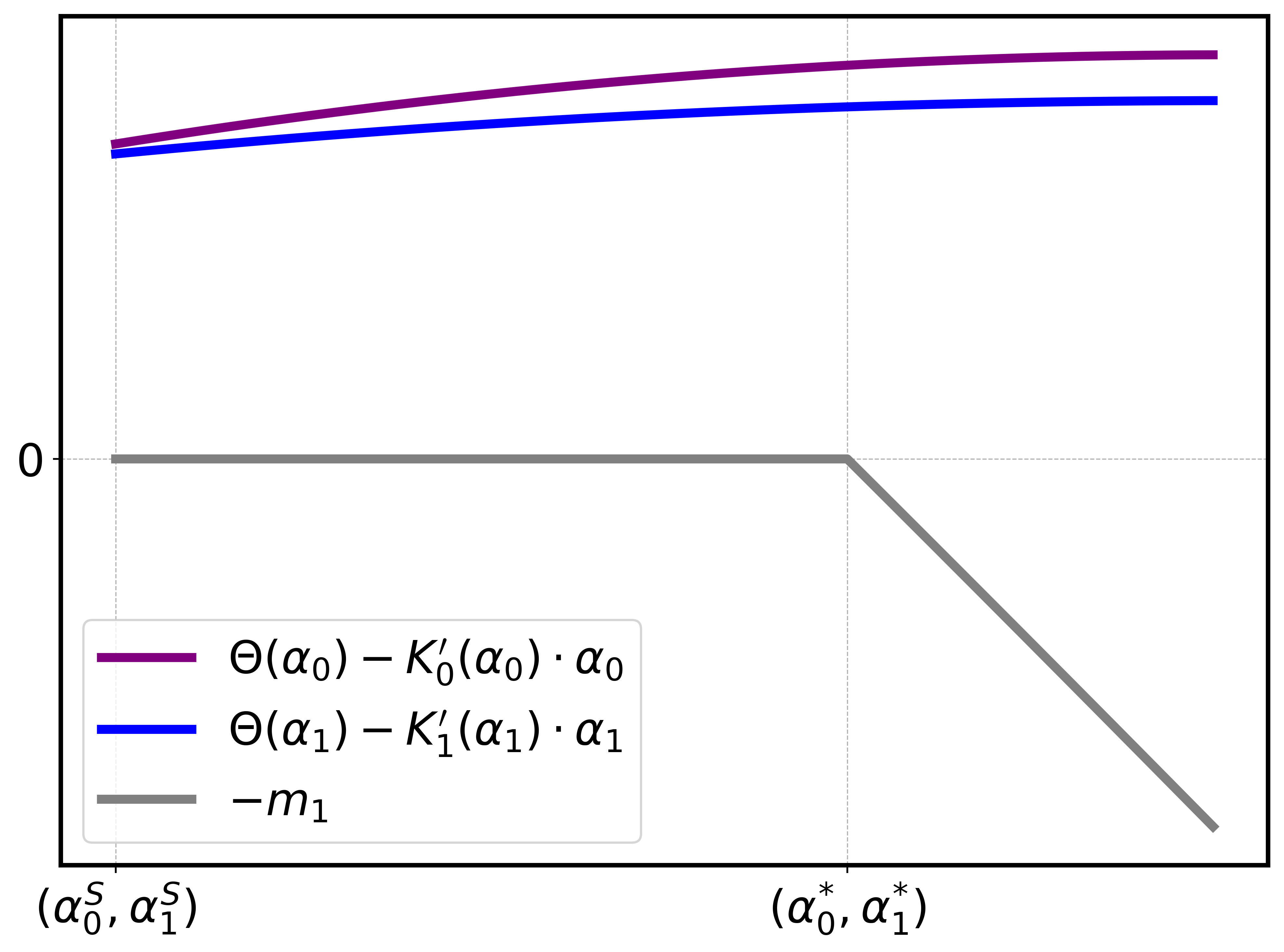}
    \caption{The plot of $\Theta(\alpha_0) - K_0'(\alpha_0)\cdot \alpha_0$, $\Theta(\alpha_1) - K_1'(\alpha_1)\cdot \alpha_1$ and $-m_1$ for Example \ref{EF-ex1}.}
    \label{EF-binary-eg}
\end{figure}
\end{example}

We argue that the optimality of screening in our model arises because the principal possesses large contracting power within the flexible moral hazard framework. Specifically, she can regulate the supports of the implemented output distributions by designing the ranges of contracts while maintaining the implemented aggregate efforts. To see that, consider a flexible moral hazard problem with a binary output space $\{\underline{x},\overline{x}\}$. The agent can choose an arbitrary output distribution from $\Delta(\{\underline{x},\overline{x}\})$. In this case, each output distribution is parameterized by the probability that the high output occurs, situating this problem within the classical moral hazard framework. According to \cite{gottlieb22}, it is optimal to provide a single contract in the adverse selection problem in this setting. This arises from the fact that if the principal intends to provide contracts with different powers, she must include a strictly positive constant term in the lower-powered contract. This payment proves too costly, leading the principal to prefer offering a single contract instead.

In our setting, the output space is an interval rather than a binary set. When the principal aims to provide a higher-powered type-0 contract and a lower-powered type-1 contract, she can limit the range of the type-0 contract to include only low outputs, while maintaining the implemented aggregate effort. Despite the higher power of the type-0 contract, it remains unattractive to the type-1 agent who prefers to produce higher outputs. This approach allows for effective screening of the agent at a lower cost, and in some cases, the screening cost can be negligible, as demonstrated in Example \ref{EF-ex1}. 
\subsection{Welfare Analysis under Convex Effort Functions}
In this section, we look into the welfare implications of the optimal menus under convex effort functions. Given $t\in \{0,1\}$ and $\alpha \in [\underline{c},\overline{c}]$, define $S_t(\alpha) := \Theta(\alpha) - K_t(\alpha)$ as the social surplus generated by the type-$t$ agent when an aggregate effort $\alpha$ is implemented (via a point mass output distribution). Let $(\alpha_0^{FB},\alpha_1^{FB})$ denote the optimal solution to the first-best problem such that $\alpha_t^{FB}: = \argmax_{\alpha\in [\underline{c},\overline{c}]}S_t(\alpha)$ for $t\in \{0,1\}$. 

We compare the implemented aggregate efforts $(\alpha_0^{*},\alpha_1^{*})$ at an optimal menu $(a^{*},m^{*},R^{*},\mu^{*})$ to \eqref{MBC-P} with the pure moral hazard solution $(\alpha_0^{MH},\alpha_1^{MH})$ and the first-best solution $(\alpha_0^{FB},\alpha_1^{FB})$. The findings are summarized in Proposition \ref{EF-pp1}. 
\begin{proposition}
Assume that $c(\cdot)$ is convex. Take an optimal menu $(a^{*},m^{*},R^{*},\mu^{*})\in \mathcal{F}$ to \eqref{MBC-P}. Under Assumption \ref{OM-as1} and \ref{OM-as3}, the following statements hold:
\begin{enumerate}[label=\circled{\arabic*}]
    \item If $K_0'(\alpha_0^{MH})\leq K_1'(\alpha_1^{MH})$, then it holds that $\alpha_0^{MH}\leq \alpha_0^{*}\leq \alpha_0^{FB}$, and $\alpha_1^{*}\leq \alpha_1^{MH}\leq \alpha_1^{FB}$.
    \item If $K_0'(\alpha_0^{MH}) > K_1'(\alpha_1^{MH})$, then it holds that $\alpha_0^{*}\leq \alpha_0^{MH}\leq \alpha_0^{FB}$, and $\alpha_1^{MH}\leq \alpha_1^{*} \leq \alpha_1^{FB}$. 
\end{enumerate}
\label{EF-pp1}
\end{proposition}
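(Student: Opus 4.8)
Since $c(\cdot)$ is convex, $\theta(\cdot)=c^{-1}(\cdot)$ is concave and differentiable, so $\Theta(\cdot)=\theta(\cdot)$ is a concave, continuously differentiable function. Two monotonicity facts hold in both cases. First, $\alpha_t^{MH}\le\alpha_t^{FB}$ for each $t$: writing $S_t(\alpha)=\Theta(\alpha)-K_t(\alpha)$ and noting $S_t'(\alpha)-\big(\Theta(\alpha)-K_t'(\alpha)\alpha\big)'=K_t''(\alpha)\alpha>0$ for $\alpha>\underline c$, at the peak $\alpha_t^{MH}$ of $\Theta-K_t'\cdot$ (where that derivative vanishes if interior, is $\ge0$ at the upper corner) we get $S_t'(\alpha_t^{MH})\ge K_t''(\alpha_t^{MH})\alpha_t^{MH}>0$ whenever $\alpha_t^{MH}>\underline c$, so the strictly concave $S_t$ is still increasing there and $\alpha_t^{FB}>\alpha_t^{MH}$ (with $\alpha_t^{MH}=\underline c$ the inequality is trivial). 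Second, $\alpha_0^{FB}\le\alpha_1^{FB}$, since $S_0'=\Theta'-K_0'<\Theta'-K_1'=S_1'$ by Assumption \ref{OM-as1}. By Lemma \ref{EF-lm2}, $(\alpha_0^*,\alpha_1^*)$ solves \eqref{EF-eq1} with $m_0^*=0$, and by Theorem \ref{EF-th1} the two cases of the proposition are exactly $K_0'(\alpha_0^*)=K_1'(\alpha_1^*)$ (case \circled{1}) versus $K_0'(\alpha_0^*)>K_1'(\alpha_1^*)$ (case \circled{2}), which I treat in turn.

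\textbf{Case \circled{1}.} Here $K_0'(\alpha_0^*)=K_1'(\alpha_1^*)$, which makes $m_1=0$ in \eqref{EF-eq1c} (the concave map $\alpha\mapsto K_0'(\alpha_0)\alpha-K_1(\alpha)$ is maximized at $\alpha_1$ when $K_0'(\alpha_0)=K_1'(\alpha_1)$). Restricting \eqref{EF-eq1} to the diagonal $K_0'(\alpha_0)=K_1'(\alpha_1)=:k$ and reparametrizing by $k$ (with $\alpha_0=(K_0')^{-1}(k)$, $\alpha_1=(K_1')^{-1}(k)$, both increasing in $k$, and $\alpha_0\le\alpha_1$ automatic by Assumption \ref{OM-as1}), the objective becomes $p_0 g_0(k)+p_1 g_1(k)$ with $g_t(k)=\Theta\big((K_t')^{-1}(k)\big)-k\cdot(K_t')^{-1}(k)$. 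By Assumption \ref{OM-as3} each $g_t$ is strictly single-peaked in $k$ with peak at $k_t:=K_t'(\alpha_t^{MH})$, and the case hypothesis is precisely $k_0\le k_1$; moreover $K_0'(\underline c)\le k_0$ and $k_1\le K_1'(\overline c)$, so both peaks lie in the feasible range of $k$. Hence the objective is strictly increasing for $k<k_0$ and strictly decreasing for $k>k_1$, so the optimal $k^*=K_0'(\alpha_0^*)$ lies in $[k_0,k_1]$, which upon applying $(K_0')^{-1}$ and $(K_1')^{-1}$ gives $\alpha_0^{MH}\le\alpha_0^*$ and $\alpha_1^*\le\alpha_1^{MH}\le\alpha_1^{FB}$. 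For $\alpha_0^*\le\alpha_0^{FB}$: if $\alpha_1^{MH}=\underline c$ then $k^*\le k_1=K_1'(\underline c)\le K_0'(\underline c)$ forces $\alpha_0^*=\underline c$ and we are done; otherwise $K_0'(\alpha_0^*)=k^*\le K_1'(\alpha_1^{MH})<K_0'(\alpha_1^{MH})$ gives $\alpha_0^*<\alpha_1^{MH}$, and $\Theta'(\alpha_1^{MH})\ge K_1'(\alpha_1^{MH})+K_1''(\alpha_1^{MH})\alpha_1^{MH}>k_1\ge k^*$, so by monotonicity of $\Theta'$ we get $S_0'(\alpha_0^*)=\Theta'(\alpha_0^*)-K_0'(\alpha_0^*)\ge\Theta'(\alpha_1^{MH})-k^*>0$, hence $\alpha_0^*<\alpha_0^{FB}$.

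\textbf{Case \circled{2}.} Since $K_0'(\alpha_0^*)>K_1'(\alpha_1^*)$, the constraint $K_0'(\alpha_0)\ge K_1'(\alpha_1)$ is slack at the optimum. To get $\alpha_0^*\le\alpha_0^{MH}$, suppose instead $\alpha_0^*>\alpha_0^{MH}$ and decrease $\alpha_0$ slightly: $\Theta(\alpha_0)-K_0'(\alpha_0)\alpha_0$ strictly increases (past its peak), while $m_1=\max\{0,B(\alpha_0)-V_1(\alpha_1^*)\}$ with $B(\alpha_0):=K_0'(\alpha_0)\alpha_0-K_1(\alpha_0)$ strictly increasing (derivative $K_0''(\alpha_0)\alpha_0+K_0'(\alpha_0)-K_1'(\alpha_0)>0$) weakly decreases, and feasibility is preserved because the slack in $K_0'(\alpha_0)\ge K_1'(\alpha_1^*)$ and $\alpha_0\le\alpha_1^*$ survives a small perturbation; this strictly raises the objective of \eqref{EF-eq1}, a contradiction, so $\alpha_0^*\le\alpha_0^{MH}\le\alpha_0^{FB}$. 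For the type-$1$ bounds, fix $\alpha_0=\alpha_0^*$ and view \eqref{EF-eq1} as a one-dimensional problem in $\alpha_1$: up to an additive constant its objective is $p_1[\Theta(\alpha_1)-K_1'(\alpha_1)\alpha_1-m_1]$, which equals $p_1[S_1(\alpha_1)-B(\alpha_0^*)]$ on $\{\alpha_1:V_1(\alpha_1)\le B(\alpha_0^*)\}$ (where $m_1>0$, $V_1(\alpha_1):=K_1'(\alpha_1)\alpha_1-K_1(\alpha_1)$ strictly increasing from $0$) and $p_1[\Theta(\alpha_1)-K_1'(\alpha_1)\alpha_1]$ on $\{\alpha_1:V_1(\alpha_1)\ge B(\alpha_0^*)\}$, the two pieces agreeing at $\bar\alpha_1:=V_1^{-1}(B(\alpha_0^*))$ (take $\bar\alpha_1=\overline c$ if $B(\alpha_0^*)>V_1(\overline c)$). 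Since $S_1$ is strictly concave with peak $\alpha_1^{FB}$ and $\Theta-K_1'\cdot$ is strictly single-peaked with peak $\alpha_1^{MH}\le\alpha_1^{FB}$, one checks case by case that the pasted function is single-peaked with peak equal to $\alpha_1^{MH}$ if $\bar\alpha_1<\alpha_1^{MH}$ and to $\min\{\bar\alpha_1,\alpha_1^{FB}\}$ otherwise, hence with peak in $[\alpha_1^{MH},\alpha_1^{FB}]$. Maximizing this single-peaked function over the feasibility interval $\{\alpha_1:\alpha_0^*\le\alpha_1,\ K_1'(\alpha_1)\le K_0'(\alpha_0^*)\}$, and noting that the upper endpoint cannot be optimal (it would give $K_1'(\alpha_1^*)=K_0'(\alpha_0^*)$, contradicting case \circled{2}) while the lower endpoint $\alpha_0^*$ already satisfies $\alpha_1^{MH}\le\alpha_0^*\le\alpha_0^{FB}\le\alpha_1^{FB}$ when it binds, we conclude $\alpha_1^{MH}\le\alpha_1^*\le\alpha_1^{FB}$; together with $\alpha_1^{MH}\le\alpha_1^{FB}$ this completes case \circled{2}.

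\textbf{Main obstacle.} The delicate point is the upper bound $\alpha_1^*\le\alpha_1^{FB}$ in case \circled{2}: the information-rent term $-m_1$ pushes $\alpha_1$ above its pure-moral-hazard level, so single-peakedness of $\Theta-K_1'\cdot$ alone does not suffice. The resolution is that exactly on the range where the rent is strictly positive the principal's $\alpha_1$-objective coincides with the \emph{social surplus} $S_1$, whose peak $\alpha_1^{FB}$ caps the upward distortion, while on the complementary range it is $\Theta-K_1'\cdot$ with peak $\alpha_1^{MH}\le\alpha_1^{FB}$. The remaining work — corner cases where some $\alpha_t^{MH}$ or $\alpha_t^{FB}$ equals $\underline c$ or $\overline c$, or where the feasibility interval for $\alpha_1$ degenerates, and the routine verification that the relevant first-order/corner conditions hold — is handled by the monotonicity facts in the Preliminaries and Assumption \ref{OM-as1}.
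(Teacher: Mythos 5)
Your proof is correct and reaches the same conclusions by a noticeably cleaner route. Both proofs start from the monotonicity facts $\alpha_t^{MH}\le\alpha_t^{FB}$ and $\alpha_0^{FB}\le\alpha_1^{FB}$ and both lean on Lemma~\ref{EF-lm2} and Theorem~\ref{EF-th1}; the divergence is in how $(\alpha_0^*,\alpha_1^*)$ is located. For case~\circled{1} the paper argues by local perturbations (``slightly increase $\alpha_0^*$'' / ``slightly decrease $\alpha_1^*$''), whereas you reparametrize the diagonal $K_0'(\alpha_0)=K_1'(\alpha_1)=k$, note $m_1=0$ there, and deduce $k^*\in[k_0,k_1]$ in one step from single-peakedness of $g_0$ and $g_1$. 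For case~\circled{2} the paper proves $\alpha_1^*\le\alpha_1^{FB}$ via a three-way case split on whether $\alpha_0^*<\alpha_1^*$ and whether $m_1^*>0$, invoking the surplus/agent-payoff decomposition only in the subcase where $m_1^*>0$; you instead fix $\alpha_0=\alpha_0^*$, observe that the $\alpha_1$-objective is a continuous paste of $S_1(\cdot)-B(\alpha_0^*)$ (where $m_1>0$) and $\Theta(\cdot)-K_1'(\cdot)\cdot$ (where $m_1=0$) with a downward kink at $\bar\alpha_1$, show the paste is single-peaked with peak in $[\alpha_1^{MH},\alpha_1^{FB}]$, and read off both bounds at once — this also yields $\alpha_1^*\ge\alpha_1^{MH}$ for free, which the paper derives separately. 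So the underlying idea (the principal's $\alpha_1$-objective coincides with social surplus exactly where information rent is paid) is the same, but you promote it into a structural single-peakedness lemma that collapses the paper's subcases. One small imprecision: your dismissal of the upper endpoint of the $\alpha_1$-feasibility interval on the grounds that ``it would give $K_1'(\alpha_1^*)=K_0'(\alpha_0^*)$'' is valid only when that endpoint $b=(K_1')^{-1}(K_0'(\alpha_0^*))$ lies strictly below $\overline c$; if $K_1'(\overline c)\le K_0'(\alpha_0^*)$ so that $b=\overline c$, the correct observation is that the peak $p$ cannot exceed $\overline c$, so the endpoint can only be the optimizer if $p=\overline c$, in which case $\alpha_1^{FB}=\overline c$ and the bound still holds. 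This is an expository gap, not a logical one.
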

\begin{proof} 
See Appendix \ref{MBCF-app-cef}.
\end{proof}
Intuitively, compared to the pure moral hazard problem, the principal intends to align the powers of the provided contracts more closely to reduce the screening cost in the presence of adverse selection. In the case of $K_0'(\alpha_0^{MH}) \leq K_1'(\alpha_1^{MH})$, adverse selection leads to an increase in the power of the type-0 contract and a decrease in the power of the type-1 contract. Conversely, when $K_0'(\alpha_0^{MH}) > K_1'(\alpha_1^{MH})$, these adjustments occur in the opposite direction. The implemented aggregate efforts are adjusted accordingly, yet they would not exceed those of the first-best solution $(\alpha_0^{FB},\alpha_1^{FB})$.

From Proposition \ref{EF-pp1}, we can conduct the comparison between the agent's payoffs and the resulting social surpluses relative to the pure moral hazard benchmark.
\begin{corollary}
Assume that $c(\cdot)$ is convex. Take an optimal menu $(a^{*},m^{*},R^{*},\mu^{*})\in \mathcal{F}$ that solves \eqref{MBC-P}. Under Assumption \ref{OM-as1} and \ref{OM-as3}, compared to the pure moral hazard benchmark, the following statements hold:
\begin{enumerate}[label=\circled{\arabic*}]
\item If $K_0'(\alpha_0^{MH})\leq K_1'(\alpha_1^{MH})$, the type-0 agent's payoff and generated social surplus are weakly higher (i.e., $S_0(\alpha_0^{*})\geq S_0(\alpha_0^{MH})$), while the type-1 agent's payoff and generated social surplus are weakly lower (i.e., $S_1(\alpha_1^{*}) \leq S_1(\alpha_1^{MH})$).
\item If $K_0'(\alpha_0^{MH})>K_1'(\alpha_1^{MH})$, the type-0 agent's payoff and generated social surplus are weakly lower (i.e., $S_0(\alpha_0^{*})\geq S_0(\alpha_0^{MH})$), while the type-1 agent's payoff and generated social surplus are weakly higher (i.e., $S_1(\alpha_1^{*}) \leq S_1(\alpha_1^{MH})$). 
\end{enumerate}
\label{EF-co1}
\end{corollary}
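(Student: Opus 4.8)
The plan is to reduce the four comparisons to two one-dimensional monotonicity facts and then read off everything from Proposition \ref{EF-pp1}. Recall from the baseline analysis that a type-$t$ agent facing a contract of power $K_t'(\alpha)$ with zero constant payment and a range containing effort $\alpha$ obtains indirect utility $v_t(\alpha):=K_t'(\alpha)\cdot\alpha-K_t(\alpha)$; in the pure moral hazard benchmark the optimal type-$t$ contract has zero constant payment, so the agent's payoff there is $v_t(\alpha_t^{MH})$. At the optimal menu, Lemma \ref{EF-lm2} gives $m_0^{*}=0$, so the type-0 agent's payoff is $v_0(\alpha_0^{*})$ and the type-1 agent's payoff is $v_1(\alpha_1^{*})+m_1^{*}$ with $m_1^{*}\geq 0$. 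Differentiating, $v_t'(\alpha)=K_t''(\alpha)\cdot\alpha>0$ on $(\underline{c},\overline{c}]$ by strict convexity of $K_t$, so each $v_t$ is (weakly, in fact strictly on the interior) increasing. For social surplus, since $c(\cdot)$ is convex we have $\Theta=\theta$ with $\theta$ concave, so $S_t=\Theta-K_t=\theta-K_t$ is strictly concave; hence $S_t$ has a unique maximizer, which is exactly $\alpha_t^{FB}$, and $S_t$ is increasing on $[\underline{c},\alpha_t^{FB}]$. By Lemma \ref{EF-lm1} we may take the optimal menu to recommend the point mass $\delta_{\theta(\alpha_t^{*})}$, so the social surplus generated by the type-$t$ agent at the optimal menu is precisely $S_t(\alpha_t^{*})$.

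With these facts, each claim follows by substituting the orderings from Proposition \ref{EF-pp1}. If $K_0'(\alpha_0^{MH})\leq K_1'(\alpha_1^{MH})$, Proposition \ref{EF-pp1} gives $\alpha_0^{MH}\leq\alpha_0^{*}\leq\alpha_0^{FB}$, so monotonicity of $v_0$ yields $v_0(\alpha_0^{*})\geq v_0(\alpha_0^{MH})$, and since both $\alpha_0^{MH}$ and $\alpha_0^{*}$ lie in $[\underline{c},\alpha_0^{FB}]$, monotonicity of $S_0$ there yields $S_0(\alpha_0^{*})\geq S_0(\alpha_0^{MH})$; the same proposition gives $\alpha_1^{*}\leq\alpha_1^{MH}\leq\alpha_1^{FB}$, hence $S_1(\alpha_1^{*})\leq S_1(\alpha_1^{MH})$ and $v_1(\alpha_1^{*})\leq v_1(\alpha_1^{MH})$. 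Symmetrically, if $K_0'(\alpha_0^{MH})>K_1'(\alpha_1^{MH})$, then $\alpha_0^{*}\leq\alpha_0^{MH}\leq\alpha_0^{FB}$ and $\alpha_1^{MH}\leq\alpha_1^{*}\leq\alpha_1^{FB}$, giving $v_0(\alpha_0^{*})\leq v_0(\alpha_0^{MH})$, $S_0(\alpha_0^{*})\leq S_0(\alpha_0^{MH})$, and $S_1(\alpha_1^{*})\geq S_1(\alpha_1^{MH})$; for the type-1 payoff here, $v_1(\alpha_1^{*})+m_1^{*}\geq v_1(\alpha_1^{*})\geq v_1(\alpha_1^{MH})$, using $m_1^{*}\geq 0$ and $\alpha_1^{*}\geq\alpha_1^{MH}$.

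The only step that needs extra care is the type-1 agent's payoff when $K_0'(\alpha_0^{MH})\leq K_1'(\alpha_1^{MH})$: there $\alpha_1^{*}\leq\alpha_1^{MH}$ pushes $v_1(\alpha_1^{*})\leq v_1(\alpha_1^{MH})$ the right way, but the extra term $m_1^{*}$ works against us, so I must argue $m_1^{*}=0$. By Theorem \ref{EF-th1}, this case is exactly the one where $K_0'(\alpha_0^{*})=K_1'(\alpha_1^{*})$ at the optimal menu; then the sufficiency argument of Theorem \ref{OM-pp3} (the single full-range contract with $m=(0,0)$ is feasible, hence optimal) forces $m^{*}=(0,0)$, so $m_1^{*}=0$ and $v_1(\alpha_1^{*})+m_1^{*}=v_1(\alpha_1^{*})\leq v_1(\alpha_1^{MH})$. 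Alternatively, once $k:=K_0'(\alpha_0^{*})=K_1'(\alpha_1^{*})$ is known, one argues directly from the closed form \eqref{EF-eq1c}: using $\alpha_0^{*}\leq\alpha_1^{*}$ and the supporting-line inequality $K_1(\alpha_0^{*})\geq K_1(\alpha_1^{*})+k(\alpha_0^{*}-\alpha_1^{*})$ for the convex function $K_1$ at $\alpha_1^{*}$, one gets $[K_0'(\alpha_0^{*})\alpha_0^{*}-K_1(\alpha_0^{*})]-[K_1'(\alpha_1^{*})\alpha_1^{*}-K_1(\alpha_1^{*})]\leq 0$, so the maximum in \eqref{EF-eq1c} is attained at $0$. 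I expect this $m_1^{*}=0$ verification to be the only non-mechanical part; everything else is bookkeeping with the inequalities already supplied by Proposition \ref{EF-pp1}.
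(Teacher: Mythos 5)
Your proof is correct and is the natural route the paper implies: the corollary is stated without a separate proof, so the intended argument is exactly to combine Proposition \ref{EF-pp1} with the facts that $v_t(\alpha)=K_t'(\alpha)\alpha-K_t(\alpha)$ is increasing (since $v_t'(\alpha)=K_t''(\alpha)\alpha>0$), that $S_t=\theta-K_t$ is strictly concave hence increasing up to $\alpha_t^{FB}$, and that $m_0^{*}=0$ always while $m_1^{*}=0$ in the pooling case by Theorem \ref{EF-th1} together with the sufficiency half of Theorem \ref{OM-pp3}. You correctly flag the one non-mechanical step, the $m_1^{*}=0$ verification, and supply two valid arguments for it; both work. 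A minor side note: the parenthetical inequalities displayed in statement \circled{2} of the corollary as written ($S_0(\alpha_0^{*})\geq S_0(\alpha_0^{MH})$ and $S_1(\alpha_1^{*})\leq S_1(\alpha_1^{MH})$) contradict the accompanying text (``weakly lower'' / ``weakly higher''); they appear to be copied from statement \circled{1} by mistake, and what you actually prove, $S_0(\alpha_0^{*})\leq S_0(\alpha_0^{MH})$ and $S_1(\alpha_1^{*})\geq S_1(\alpha_1^{MH})$, is the correct content.
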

\section{Optimality of Single Full-range Contracts}
\label{MBCF-os}
In this section, we explore when the optimum of the principal's problem \eqref{MBC-P} can be reached by providing a single full-range contract with a general effort function. We begin by formally defining what it means for a menu to consist of such a contract. 
\begin{definition}
A menu $(a,m,R,\mu)\in \mathcal{F}$ takes the form of \textit{a single full-range contract} if 
\begin{equation*}
K_0'(\alpha_0) = K_1'(\alpha_1) \text{ and }R=([\underline{x},\overline{x}],[\underline{x},\overline{x}]).
\end{equation*}
\end{definition}

The specialty of the single full-range contracts lies in that, when the optimal menu takes this form, it does not distinguish between different types of agents, and no output level needs to be excluded from the ranges of the contracts. Technically, when a single full-range contract is optimal, the principal's problem becomes trivial, as it reduces to solving for the optimal slope coefficient of a single contract. Moreover, it follows immediately that the recommended output distributions must be those that attain the concavification value $\Theta(\cdot)$. 

Theorem \ref{EF-th1} established a sufficient and necessary condition under which it is optimal for the principal to provide a single full-range contract when the effort function $c(\cdot)$ is convex. We can show that the condition proposed in Theorem \ref{EF-th1} is also sufficient to ensure that both contracts in the optimal menu have the same level of power when the effort function takes a general form. 

Theorem \ref{OSC-th1} shows that a menu consisting of a single full-range contract is optimal with an effort function $c(\cdot)$ if providing this single full-range contract is also optimal to the principal's problem where the effort function is given by $\Theta^{-1}(\cdot)$, which is the largest convex function that is everywhere weakly lower than $c(\cdot)$.   
\begin{theorem}
Take any $c:[\underline{x},\overline{x}]\to \mathbb{R}_{+}$ that is differentiable and strictly increasing. If a single full-range contract is optimal to \eqref{MBC-P} with effort function $\Theta^{-1}(\cdot)$, it is also optimal to \eqref{MBC-P} when the effort function is given by $c(\cdot)$.  
\label{OSC-th1}
\end{theorem}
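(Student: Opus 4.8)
The plan is to sandwich the optimal value of \eqref{MBC-P} with effort function $c(\cdot)$ --- write it $\mathrm{val}(P_c)$ --- between a lower bound coming from a feasible single full-range contract and an upper bound coming from the optimal value $\mathrm{val}(P_{\tilde c})$ of \eqref{MBC-P} with the effort function replaced by $\tilde c:=\Theta^{-1}(\cdot)$, and then invoke the hypothesis that $\mathrm{val}(P_{\tilde c})$ is attained by a single full-range contract. The linking observation is that $\tilde c$ is convex and strictly increasing, with inverse $\tilde\theta:=\tilde c^{-1}=\Theta$, so that its concavification $\widehat{\tilde\theta}$ is $\Theta$ itself; consequently the reduced problem \eqref{EF-eq1} of Lemma \ref{EF-lm2}, which depends only on $\Theta,K_0,K_1,p_0,p_1$, is the \emph{same} problem whether it is built from $c(\cdot)$ or from $\tilde c(\cdot)$, and by Lemma \ref{EF-lm2} (valid since $\tilde c$ is convex, under Assumption \ref{OM-as1}) its optimal value equals $\mathrm{val}(P_{\tilde c})$.

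For the upper bound I would take an optimal menu $(a^{*},m^{*},R^{*},\mu^{*})$ for $(P_c)$ (one exists). Since $\Theta\geq\theta$ and $\Theta$ is concave, Jensen gives $\int x\,\mu_t^{*}(dx)=\int\theta(c(x))\,\mu_t^{*}(dx)\leq\Theta(\alpha_t^{*})$, so the menu's value is at most $\sum_{t}p_t[\Theta(\alpha_t^{*})-K_t'(\alpha_t^{*})\alpha_t^{*}-m_t^{*}]$. Because $\mathrm{supp}(\mu_0^{*})\subseteq R_0^{*}$, the effort $\alpha_0^{*}=\int c\,d\mu_0^{*}$ lies in $[\inf c(R_0^{*}),\sup c(R_0^{*})]$, the interval on which \eqref{MBC-conv} makes $\widehat{s_0^{*}\circ\theta}$ affine, so $\widehat{s_0^{*}\circ\theta}(\alpha_0^{*})=K_0'(\alpha_0^{*})\alpha_0^{*}+m_0^{*}$; evaluating \eqref{MBC-IC1} at $\alpha=\alpha_0^{*}$ and using $m_0^{*}\geq 0$ forces $m_1^{*}$ to be at least the quantity in \eqref{EF-eq1c}. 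Combining this with $m_0^{*}\geq 0$, $\alpha_0^{*}\leq\alpha_1^{*}$ and $K_0'(\alpha_0^{*})\geq K_1'(\alpha_1^{*})$ (Proposition \ref{OM-pp1}, under Assumptions \ref{OM-as1} and \ref{OM-as3}), the pair $(\alpha_0^{*},\alpha_1^{*})$ with $m_1$ set as in \eqref{EF-eq1c} is feasible for \eqref{EF-eq1}, and its \eqref{EF-eq1}-objective is at least the above bound; hence $\mathrm{val}(P_c)\leq\mathrm{val}\eqref{EF-eq1}=\mathrm{val}(P_{\tilde c})$.

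For the lower bound, let $k$ be the power of the single full-range contract that is optimal for $(P_{\tilde c})$, with zero constant payments (optimality forces this, since a positive constant can be removed without affecting feasibility or the implemented efforts), and let $\alpha_0,\alpha_1$ be the induced aggregate efforts, so $K_t'(\alpha_t)=k$; since $\widehat{\tilde\theta}=\Theta$, the recommended distributions realize $\Theta(\alpha_t)$, so this contract's value is $V:=p_0[\Theta(\alpha_0)-k\alpha_0]+p_1[\Theta(\alpha_1)-k\alpha_1]=\mathrm{val}(P_{\tilde c})$. I would then check that the single full-range contract of power $k$ with zero constant payments is feasible for $(P_c)$: written in effort units it is already linear, so under it the type-$t$ agent again solves $\max_\alpha[k\alpha-K_t(\alpha)]$ and picks $\alpha_t$ (Lemma \ref{PRE-lm2}), and recommending any distribution realizing $\Theta(\alpha_t)$ --- which exists --- satisfies \eqref{MBC-MH}; \eqref{MBC-LL} holds since the payments are zero; and \eqref{MBC-IC0}--\eqref{MBC-IC1} hold trivially because both types receive the same contract. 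Its value in $(P_c)$ is exactly $V$. Combining with the previous paragraph, $V=\mathrm{val}(P_{\tilde c})\geq\mathrm{val}(P_c)\geq V$, so $\mathrm{val}(P_c)=V$ and the single full-range contract is optimal for $(P_c)$.

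The main obstacle is the upper bound: a priori the principal in $(P_c)$ could exploit \emph{truncated} ranges --- exactly the screening device of Section \ref{MBCF-om} --- to implement some profitable $(\alpha_0^{*},\alpha_1^{*})$ at a smaller rent than \eqref{EF-eq1} permits. The resolution is the observation that truncating the range of $s_0$ depresses $\widehat{s_0\circ\theta}$ only at efforts \emph{away} from $\alpha_0^{*}$, never at $\alpha_0^{*}$ itself (which always lies inside $c(R_0^{*})$ because $\mathrm{supp}(\mu_0^{*})\subseteq R_0^{*}$); hence \eqref{MBC-IC1} still drives $m_1^{*}$ up to the minimal screening payment in \eqref{EF-eq1c}, so truncation buys nothing relative to \eqref{EF-eq1}. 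A minor ancillary point is verifying that $\Theta^{-1}$ is a bona fide effort function --- convex, strictly increasing, and $C^1$ (the last because the concavification of the differentiable $\theta$ is $C^1$) --- so that Lemma \ref{EF-lm2} genuinely applies to $(P_{\tilde c})$.
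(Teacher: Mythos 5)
Your overall sandwich strategy is the same as the paper's---bound $\mathrm{val}(P_c)$ from above by $\mathrm{val}(P_{\tilde c})$ using $\theta\leq\Theta$ plus Jensen, and from below by the value of the single full-range contract, which is the same in both problems---and both halves of the sandwich are sound ideas. The problem is that your route to the upper bound smuggles in hypotheses that the theorem does not grant. You pass through the reduced problem \eqref{EF-eq1} via Lemma \ref{EF-lm2}, and you invoke Proposition \ref{OM-pp1} to conclude $\alpha_0^{*}\leq\alpha_1^{*}$ and $K_0'(\alpha_0^{*})\geq K_1'(\alpha_1^{*})$ so that $(\alpha_0^{*},\alpha_1^{*})$ is feasible for \eqref{EF-eq1}. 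Both of these require Assumptions \ref{OM-as1} and \ref{OM-as3}. But Theorem \ref{OSC-th1} is stated without them---it only assumes $c$ differentiable and strictly increasing, plus the standing hypotheses on $K_t$ from Section \ref{MBCF-model}, which do \emph{not} imply $K_0'>K_1'$ or single-peakedness. (The paper is careful about this: Corollary \ref{OSC-co1}, which does need those assumptions because it additionally invokes Theorem \ref{EF-th1}, states them explicitly; Theorem \ref{OSC-th1} deliberately does not.) So as written your argument proves a strictly weaker statement.

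The fix is to drop the detour through \eqref{EF-eq1} entirely: you already have everything you need once you observe that, for an arbitrary feasible $c$-menu $M=(a,m,R,\mu)$ with $\gamma_t:=c\#\mu_t$, the expected profit satisfies $\int\theta(c)\gamma_t(dc)\leq\int\Theta(c)\gamma_t(dc)\leq\Theta(\alpha_t)$ (first because $\theta\leq\Theta$, second by Jensen), so that $P(M)\leq\sum_t p_t[\Theta(\alpha_t)-K_t'(\alpha_t)\alpha_t-m_t]$; and this right-hand side is precisely the profit the principal in the $\tilde c$-problem obtains from the menu with the same $(a,m,R)$-data and with $\mu$ replaced by the pushforward $\tilde\theta\#\gamma_t$---a menu that is $\tilde c$-feasible because IC and LL are purely effort-space constraints. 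Hence $\mathrm{val}(P_c)\leq\mathrm{val}(P_{\tilde c})$ with no appeal to Lemma \ref{EF-lm2} or Proposition \ref{OM-pp1}, and no cost-structure assumptions. Your lower-bound paragraph needs no modification and your ancillary check that $\Theta^{-1}$ is a valid effort function is correct and worth keeping.
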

\begin{proof} 
With the effort function $c(\cdot)$, the principal's problem is given by
\begin{equation}
\begin{aligned}
\max_{(a,m,R,\mu)\in \mathcal{F}} \quad&p_0 \big[\int x\mu_0(dx) - K_0'(\alpha_0)\cdot \alpha_0 - m_0\big] \\
  \quad+ &p_1\big[\int x\mu_1(dx) - K_1'(\alpha_1)\cdot \alpha_1 - m_1\big].
\end{aligned}
\label{OSC-eq1}
\end{equation}

When the effort function is given by $\Theta^{-1}(\cdot)$, the principal's problem is correspondingly given by
\begin{equation}
\begin{aligned}
\max_{(a,m,R,\mu)\in \mathcal{F}} \quad & p_0 \big[\Theta(\alpha_0) - K_0'(\alpha_0)\cdot \alpha_0 - m_0 \big] + p_1 \big[\Theta(\alpha_1) - K_1'(\alpha_1)\cdot \alpha_1 - m_1\big]. 
\end{aligned}
\label{OSC-eq2}
\end{equation}

Take a feasible menu $M = (a,m,R,\mu)\in \mathcal{F}$. Let $P(M)$ denote the value of the objective function of \eqref{OSC-eq1} given menu $M$ and let $\tilde{P}(M)$ denote the corresponding value of the objective function from \eqref{OSC-eq2}. 

For any feasible menu $M = (a,m,R,\mu) \in \mathcal{F}$, take $\gamma_t:= c\#\mu_t$ for $t\in \{0,1\}$. It holds that
\begin{equation}
\begin{aligned}
P(M) &= p_0 \big[\int \theta(c)\gamma_0(dc) - K_0'(\alpha_0)\cdot \alpha_0 - m_0\big] + p_1\big[\int \theta(c)\gamma_1(dc) - K_1'(\alpha_1)\cdot \alpha_1 - m_1\big] \\
& \leq p_0  \big[\Theta(\alpha_0) - K_0'(\alpha_0)\cdot \alpha_0 - m_0\big] + p_1\big[\Theta(\alpha_1)- K_1'(\alpha_1)\cdot \alpha_1 - m_1\big] \\
& = \tilde{P}(M),
\end{aligned}
\label{OSC-eq3}
\end{equation}
where the first inequality follows from that $\int c\gamma_1(dc) = \alpha_1$. From \eqref{OSC-eq3}, we have that the optimal value of \eqref{OSC-eq1} is no larger than that of \eqref{OSC-eq2}. 

Suppose there exists $a^{*}\in [\underline{c},\overline{c}]^2$, $m^{*}\in \mathbb{R}^2$ and $\mu^{\tilde{P}}\in \Delta([\underline{x},\overline{x}])^2$ such that the menu $M_{\tilde{P}}:=(a^{*},m^{*},R_F,\mu^{\tilde{P}})\in \mathcal{F}$ takes the form of a single full-range contract (i.e., $K_0'(\alpha_0^{*}) = K_1'(\alpha_1^{*})$), and optimally solves \eqref{OSC-eq2}. For $t\in \{0,1\}$, let $\mu_t^{P}$ denote the distribution that attains the concavification value $\Theta(\alpha_t^{*})$, i.e., it holds that $\int c(x) \mu_t^{P}(dx) = \alpha_t^{*}$ and $\int x\mu_t^{P}(dx) = \Theta(\alpha_t^{*})$. Take $\mu^{P}: = (\mu_0^{P},\mu_1^{P})$. Define $M_P := (a^{*},m^{*},R_F,\mu^P)$. We have that $M_P$ is feasible since it is a menu involving full-range contracts. Furthermore, the value of the objective function of \eqref{OSC-eq1} given menu $M_P$ is given by
\begin{equation*}
\begin{aligned}
P(M_P) &= p_0 \big[\int x\mu_0^{P}(dx) - K_0'(\alpha_0^{*})\cdot \alpha_0^{*} - m_0^{*}\big] + p_1\big[\int x\mu_1^{P}(dc) - K_1'(\alpha_1^{*})\cdot \alpha_1^{*} - m_1^{*}\big] \\
&= p_0  \big[\Theta(\alpha_0^{*}) - K_0'(\alpha_0^{*})\cdot \alpha_0^{*} - m_0^{*}\big] + p_1\big[\Theta(\alpha_1^{*})- K_1'(\alpha_1^{*})\cdot \alpha_1^{*} - m_1^{*}\big] \\
&= \tilde{P}(M_{\tilde{P}}),
\end{aligned}
\end{equation*}
which is equal to the optimal value of problem \eqref{OSC-eq2}. Hence, $M_P$ optimally solves \eqref{OSC-eq1}. 
\end{proof}

Theorem \ref{OSC-th1} follows from the observation that, under the effort function $\Theta^{-1}(\cdot)$, the effort required for production is point-wise lower than under $c(\cdot)$. Intuitively, the principal's optimal profit in the former problem is at least as large as in the latter. Moreover, the principal can achieve the same amount of profit by providing the same single full-range contract in both problems. Therefore, if a single full-range contract is optimal in the former problem, it must be optimal in the latter. 

Alternatively, Theorem \ref{OSC-th1} can be viewed as reflecting the idea that truncated ranges are less beneficial for the principal when the effort function is non-convex. With a menu that takes the form of a single full-range contract and a potentially non-convex effort function $c(\cdot)$, the principal can reach the same profit level as would be obtainable under the convexified effort function $\Theta^{-1}(\cdot)$, since full-range contracts allow the principal to recommend any output distribution to the agent. By contrast, given a menu that involves contracts with truncated ranges, the profit that can be reached under $c(\cdot)$ may be smaller than under $\Theta^{-1}(\cdot)$. For instance, if $c(\cdot)$ is non-convex, the truncations of contract ranges limit the output distributions that can be implemented, thereby reducing the principal's achievable profit. 

According to Theorem \ref{EF-th1}, if it holds that $K_0'(\alpha_0^{MH})\leq K_1'(\alpha_1^{MH})$, then under Assumption \ref{OM-as1} and \ref{OM-as3}, we have $K_0'(\alpha_0^{*}) = K_1'(\alpha_1^{*})$ at the optimal menu $(a^{*},m^{*},R^{*},\mu^{*})$ that solves problem \eqref{MBC-P} with the effort function $\Theta^{-1}(\cdot)$. According to Theorem \ref{OM-pp3}, the menu $(a^{*},m^{*},R_F,\mu^{*})$ is also optimal. From Theorem \ref{OSC-th1}, there exists $\mu^P\in \Delta([\underline{x},\overline{x}])^2$ such that $(a^{*},m^{*},R_F,\mu^P)\in \mathcal{F}$ is optimal to problem \eqref{MBC-P} with the effort function $c(\cdot)$. Thereby, we've established Corollary \ref{OSC-co1}, which states that $K_0'(\alpha_0^{MH}) \leq K_1'(\alpha_1^{MH})$ is a sufficient condition for when a single full-range contract is optimal under a general effort function.
\begin{corollary}
Take any $c:[\underline{x},\overline{x}]\to \mathbb{R}_{+}$ that is differentiable and strictly increasing. Under Assumption \ref{OM-as1} and \ref{OM-as3}, if $K_0'(\alpha_0^{MH})\leq K_1'(\alpha_1^{MH})$, then there exists a menu $(a^{*},m^{*},R_F,\mu^P)\in \mathcal{F}$ that optimally solves \eqref{MBC-P} with effort function $c(\cdot)$ and satisfies that 
\begin{equation*}
    K_0'(\alpha_0^{*}) = K_1'(\alpha_1^{*}). 
\end{equation*}
\label{OSC-co1}
\end{corollary}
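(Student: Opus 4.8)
The plan is to chain together Theorem~\ref{EF-th1}, Theorem~\ref{OM-pp3}, and Theorem~\ref{OSC-th1}, applying the first two not to the original problem but to the auxiliary instance of \eqref{MBC-P} in which the effort function is replaced by $\Theta^{-1}(\cdot)$. Since $\Theta = \widehat{\theta}$ is concave, $\Theta^{-1}$ is well defined, strictly increasing, and convex, so this auxiliary problem is exactly the kind of convex-effort-function instance studied in Section~\ref{MBCF-cef}. The key observation is that passing from effort function $c$ to effort function $\Theta^{-1}$ leaves $K_0,K_1$ untouched and leaves the map $\Theta$ itself fixed (the concavification of the new effort-inverse $\Theta$ is $\Theta$, since it is already concave); hence Assumptions~\ref{OM-as1} and~\ref{OM-as3} still hold for the auxiliary problem, and the pure moral hazard solution $(\alpha_0^{MH},\alpha_1^{MH})$ of Definition~\ref{OM-df2} is the same object as for the original problem.

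With this setup, I would argue as follows. Under the hypothesis $K_0'(\alpha_0^{MH}) \le K_1'(\alpha_1^{MH})$, Theorem~\ref{EF-th1} applied to the auxiliary problem gives an optimal menu $(a^{*},m^{*},R^{*},\mu^{*})$ for \eqref{MBC-P} with effort function $\Theta^{-1}(\cdot)$ satisfying $K_0'(\alpha_0^{*}) = K_1'(\alpha_1^{*})$. By the sufficiency direction of Theorem~\ref{OM-pp3} (including the optimality step there, which forces $m^{*}=(0,0)$), the extended menu $(a^{*},m^{*},R_F,\mu^{*})$ is feasible; and because the objective of \eqref{MBC-P} depends on $\mu$ only through $\int x\,\mu_t(dx)$ while $\mu^{*}$ is unchanged, this extended menu is again optimal for the auxiliary problem. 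Thus a single full-range contract is optimal to \eqref{MBC-P} with effort function $\Theta^{-1}(\cdot)$.

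Finally, I would invoke Theorem~\ref{OSC-th1} with the same $c(\cdot)$: since a single full-range contract is optimal to \eqref{MBC-P} with effort function $\Theta^{-1}(\cdot)$, the construction in its proof yields $\mu^{P}=(\mu_0^{P},\mu_1^{P})$, with $\mu_t^{P}$ attaining the concavification value $\Theta(\alpha_t^{*})$, such that $(a^{*},m^{*},R_F,\mu^{P})\in\mathcal{F}$ optimally solves \eqref{MBC-P} with effort function $c(\cdot)$. This menu still satisfies $K_0'(\alpha_0^{*}) = K_1'(\alpha_1^{*})$, which is precisely the assertion of Corollary~\ref{OSC-co1}.

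There is no genuinely new argument here: the corollary is a bookkeeping consequence of three already-established results. The only point requiring care — and the closest thing to an obstacle — is verifying that the hypotheses of Theorems~\ref{EF-th1} and~\ref{OM-pp3} legitimately transfer to the auxiliary problem, in particular that $\Theta^{-1}$ qualifies as an admissible effort function (differentiable and strictly increasing). Strict monotonicity and convexity are immediate from concavity and strict monotonicity of $\Theta$; differentiability of $\Theta^{-1}$ on the interior follows from the tangency property of concavification at the junctions between its linear pieces and the graph of $\theta$. I would state this verification explicitly but keep it brief, since it is routine.
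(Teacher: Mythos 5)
Your proposal is correct and follows essentially the same route as the paper: apply Theorem~\ref{EF-th1} (and the sufficiency direction of Theorem~\ref{OM-pp3}) to the auxiliary problem with effort function $\Theta^{-1}(\cdot)$ to conclude that a single full-range contract is optimal there, and then invoke Theorem~\ref{OSC-th1} to transfer that conclusion back to the original problem with effort function $c(\cdot)$. The extra care you take in checking that $\Theta^{-1}$ is an admissible (strictly increasing, differentiable, convex) effort function, that Assumptions~\ref{OM-as1} and~\ref{OM-as3} transfer unchanged, and that $(\alpha_0^{MH},\alpha_1^{MH})$ is the same object in both problems because $\Theta$ is its own concavification, is left implicit in the paper but is exactly the right thing to verify.
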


In Example \ref{OSC-ex1}, we apply Corollary \ref{OSC-co1} to a scenario where the cost functions of both types are quadratic in the aggregate effort, and show that it is always optimal to provide a single full-range contract as long as the effort function is concave. 
\begin{example}
Consider the case where $K_0(\alpha) = \beta_0\cdot\alpha^2$ and $K_1(\alpha) = \beta_1\cdot \alpha^2$ for all $\alpha \in [\underline{c},\overline{c}]$ with $\beta_0 > \beta_1 > 0$. Assumption \ref{OM-as1} is satisfied in this case. When the effort function $c(\cdot)$ is concave, we have that $\Theta(\cdot)$ is linear and $\Theta'(\alpha) = \xi$ for all $\alpha\in [\underline{c},\overline{c}]$, where $\xi>0$ is constant. 

It holds that
\begin{equation*}
\xi = K_0'(\alpha_0^{MH}) + K_0''(\alpha_0^{MH}) \cdot \alpha_0^{MH} = K_1'(\alpha_1^{MH}) + K_1''(\alpha_1^{MH})\cdot \alpha_1^{MH},
\end{equation*}
which indicates that $K_0'(\alpha_0^{MH}) = K_1'(\alpha_1^{MH})$. Applying Corollary \ref{OSC-co1}, we have that it is optimal to provide a single full-range contract in this case. $\hfill\lhd$
\label{OSC-ex1}
\end{example}

\subsection{Welfare Analysis}

In this section, we study the welfare implications of the optimal menus under general effort functions by comparing the optimal aggregate efforts at the optimal menus with the pure moral hazard solution $(\alpha_0^{MH},\alpha_1^{MH})$ and the first-best solution $(\alpha_0^{FB},\alpha_1^{FB})$. 

When the optimal menu takes the form of a single full-range contract, we can establish the following result on the optimal aggregate efforts.
\begin{proposition}
If $(a^{*},m^{*},R^{*},\mu^{*})$ optimally solves \eqref{MBC-P} and takes the form of a single full-range contract, under Assumption \ref{OM-as1} and \ref{OM-as3}, the following statements hold:
\begin{enumerate}[label=\circled{\arabic*}]
    \item With $K_0'(\alpha_0^{MH}) \leq K_1'(\alpha_1^{MH})$, it holds that $\alpha_0^{MH}\leq \alpha_0^{*} \leq \alpha_0^{FB}$, and $\alpha_1^{*}\leq \alpha_1^{MH}\leq \alpha_1^{FB}$. 
    \item With $K_0'(\alpha_0^{MH}) > K_1'(\alpha_1^{MH})$, it holds that $\alpha_0^{*}\leq \alpha_0^{MH}$, and $\alpha_1^{*}\geq \alpha_1^{MH}$.
\end{enumerate}
\label{WA-pp1}
\end{proposition}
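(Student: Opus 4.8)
The plan is to exploit that, when the optimal menu is a single full-range contract, the principal's problem collapses to a one-dimensional optimization over the common power of the contract. First I would record the standard consequences of this form: since both types face the same affine (full-range, equal-power) contract, the two IC constraints reduce to $m_0^{*}\ge m_1^{*}$ and $m_1^{*}\ge m_0^{*}$, hence $m_0^{*}=m_1^{*}$, and optimality drives this common constant to $0$; since the ranges are full, the supports of $\mu_0^{*},\mu_1^{*}$ are unconstrained, so optimality makes them attain the concavification, $\int x\,\mu_t^{*}(dx)=\Theta(\alpha_t^{*})$. Thus the principal's profit equals $\pi(\alpha_0^{*},\alpha_1^{*})$, where $\pi(\alpha_0,\alpha_1):=p_0[\Theta(\alpha_0)-K_0'(\alpha_0)\alpha_0]+p_1[\Theta(\alpha_1)-K_1'(\alpha_1)\alpha_1]$, and $K_0'(\alpha_0^{*})=K_1'(\alpha_1^{*})$. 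Every single full-range contract is itself a feasible menu, so by global optimality of the given menu, $(\alpha_0^{*},\alpha_1^{*})$ maximizes $\pi$ on the curve $\{K_0'(\alpha_0)=K_1'(\alpha_1)\}$. I would reparameterize this curve by the common power $k$, setting $\alpha_t(k):=(K_t')^{-1}(k)$ (strictly increasing) on the common range $\mathcal{K}:=[K_0'(\underline c),K_1'(\overline c)]$, and define $\phi(k):=\sum_{t\in\{0,1\}}p_t\,g_t(\alpha_t(k))$ with $g_t(\alpha):=\Theta(\alpha)-K_t'(\alpha)\alpha$. Write $k^{*}$ for the optimal power and $k_t^{MH}:=K_t'(\alpha_t^{MH})$.

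The heart of the argument is a monotone-perturbation step. By Assumption \ref{OM-as3} each $g_t$ is strictly single-peaked with peak $\alpha_t^{MH}$, so $k\mapsto g_t(\alpha_t(k))$ is strictly single-peaked with peak $k_t^{MH}$. In case \circled{1}, where $k_0^{MH}\le k_1^{MH}$: if $k^{*}<k_0^{MH}$ then $k^{*}$ lies below both peaks, so both summands of $\phi$ are strictly increasing at $k^{*}$, and since $k^{*}<k_1^{MH}\le K_1'(\overline c)=\max\mathcal{K}$ we may raise $k$ and strictly improve $\phi$, a contradiction; symmetrically $k^{*}>k_1^{MH}$ is impossible (lower $k$), so $k_0^{MH}\le k^{*}\le k_1^{MH}$, i.e.\ $\alpha_0^{MH}\le\alpha_0^{*}$ and $\alpha_1^{*}\le\alpha_1^{MH}$. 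The mirror image handles case \circled{2}, where $k_1^{MH}<k_0^{MH}$: $k^{*}>k_0^{MH}$ puts $k^{*}$ above both peaks (allowing a profitable decrease of $k$) and $k^{*}<k_1^{MH}$ puts it below both (allowing a profitable increase), so $k_1^{MH}\le k^{*}\le k_0^{MH}$, i.e.\ $\alpha_0^{*}\le\alpha_0^{MH}$ and $\alpha_1^{*}\ge\alpha_1^{MH}$. This settles case \circled{2} entirely and the moral-hazard inequalities of case \circled{1}.

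For the first-best comparisons in case \circled{1}, I would note $S_t(\alpha):=\Theta(\alpha)-K_t(\alpha)$ is strictly concave (concave minus strictly convex), so $\alpha_t^{FB}$ is its unique maximizer. Writing $S_t=g_t+r_t$ with $r_t(\alpha):=K_t'(\alpha)\alpha-K_t(\alpha)$, which is non-decreasing since $r_t'(\alpha)=K_t''(\alpha)\alpha\ge0$, the strict single-peakedness of $g_t$ at $\alpha_t^{MH}$ forces $\alpha_t^{FB}\ge\alpha_t^{MH}$; combined with case \circled{1} this gives $\alpha_1^{*}\le\alpha_1^{MH}\le\alpha_1^{FB}$. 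For $\alpha_0^{*}\le\alpha_0^{FB}$ I would chain $\alpha_0^{*}\le\alpha_1^{*}$ (Proposition \ref{OM-pp1}) $\le\alpha_1^{MH}\le\alpha_1^{FB}$, then combine $K_0'(\alpha_0^{*})=k^{*}\le k_1^{MH}=K_1'(\alpha_1^{MH})\le K_1'(\alpha_1^{FB})$ with the optimality condition at $\alpha_1^{FB}$ (a supergradient of $\Theta$ there is $\ge K_1'(\alpha_1^{FB})$) and the concavity of $\Theta$ (supergradients decrease) to conclude that every supergradient of $\Theta$ at $\alpha_0^{*}$ is $\ge K_0'(\alpha_0^{*})$; hence $S_0$ is non-decreasing at $\alpha_0^{*}$, and strict concavity of $S_0$ yields $\alpha_0^{*}\le\alpha_0^{FB}$.

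The main difficulty I anticipate is bookkeeping rather than anything conceptual. Since $\Theta$ is only concave, every derivative statement must be read through one-sided derivatives or superdifferentials, and each perturbation step must verify that $k^{*}$ is interior to $\mathcal{K}$ in the direction being perturbed; this rests on knowing where $k_t^{MH}$ sits relative to $\mathcal{K}$ (if $k_t^{MH}\notin\mathcal{K}$ the corresponding slice is monotone on $\mathcal{K}$ and the inequality one wants holds trivially, otherwise one uses $\alpha_t^{MH}\in[\underline c,\overline c]$ and $K_0'\ge K_1'$). Degenerate cases — some $\alpha_t^{MH}$ or $\alpha_t^{FB}$ at an endpoint, or $\mathcal{K}$ a singleton — must be checked separately; a useful fact here is that $\alpha_0^{*}<\overline c$, since $\alpha_0^{*}=\overline c$ would force $\alpha_1^{*}=\overline c$ and violate $K_0'(\alpha_0^{*})=K_1'(\alpha_1^{*})$. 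The delicate link is the last one, $\alpha_0^{*}\le\alpha_0^{FB}$, because it couples the type-$1$ perturbation bound with a supergradient comparison for $\Theta$ across two different points.
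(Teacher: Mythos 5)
Your proof is correct and the overall strategy is the same as the paper's: reduce to a one-dimensional optimization over the common contract power (problem \eqref{WA-eq1} in the paper's notation), then use strict single-peakedness of each type's slice (Assumption \ref{OM-as3}) to pin the optimal power $k^{*}$ between the pure-moral-hazard benchmark powers $k_0^{MH}$ and $k_1^{MH}$, which delivers the moral-hazard inequalities in both cases. You diverge in two subsidiary steps for the first-best comparisons. For $\alpha_t^{FB}\geq\alpha_t^{MH}$ the paper argues by contradiction via a chain of derivative inequalities (inside the proof of Proposition \ref{EF-pp1}), while your decomposition $S_t=g_t+r_t$ with $r_t(\alpha)=K_t'(\alpha)\alpha-K_t(\alpha)$ nondecreasing (since $r_t'(\alpha)=K_t''(\alpha)\alpha\ge 0$) reads as a clean monotone-comparative-statics observation and requires less smoothness bookkeeping. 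For $\alpha_0^{*}\le\alpha_0^{FB}$ the paper supposes the contrary, chains $\Theta'(\alpha_1^{*})\le\Theta'(\alpha_0^{*})<K_1'(\alpha_1^{*})$ and derives $\alpha_1^{MH}<\alpha_1^{*}$, contradicting the already-established $\alpha_1^{*}\le\alpha_1^{MH}$; you instead run the direct supergradient chain $\Theta'(\alpha_0^{*})\ge\Theta'(\alpha_1^{FB})\ge K_1'(\alpha_1^{FB})\ge K_1'(\alpha_1^{MH})\ge K_0'(\alpha_0^{*})$ and conclude that $S_0$ is nondecreasing at $\alpha_0^{*}$. One caveat on that chain: the middle link $\Theta'(\alpha_1^{FB})\ge K_1'(\alpha_1^{FB})$ is the ``wrong-direction'' first-order condition and fails when $\alpha_1^{FB}=\underline{c}$ (there only the right derivative is relevant and it satisfies the reverse inequality). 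You flag endpoint cases as needing separate treatment, and indeed $\alpha_1^{FB}=\underline{c}$ forces $\alpha_1^{*}=\alpha_0^{*}=\underline{c}$ so the inequality is immediate; the paper's contradiction route sidesteps this issue entirely by invoking the first-order condition at $\alpha_0^{FB}$ in the $\le$ direction, which always holds once $\alpha_0^{FB}<\overline{c}$.
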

\begin{proof}
    See Appendix \ref{MBCF-app-os}.
\end{proof}
Proposition \ref{WA-pp1} demonstrates that, if $K_0'(\alpha_0^{MH})\leq K_1'(\alpha_1^{MH})$, the introduction of adverse selection increases the type-0 agent’s aggregate effort and payoff while decreasing those of the type-1 agent, relative to the pure moral hazard benchmark. Conversely, if $K_0'(\alpha_0^{MH})>K_1'(\alpha_1^{MH})$, the effects are reversed. Additionally, in the former case, both types exert aggregate efforts that fall short of the first-best level, implying that the social surplus generated by the type-0 agent weakly increases, while that of the type-1 agent weakly decreases under adverse selection.

On the contrary, when it is not possible for the principal to attain the maximal profit by providing a single full-range contract, we obtain the following result regarding the optimal aggregate efforts. 
\begin{proposition}
If the optimum of \eqref{MBC-P} can not be reached by a single full-range contract, under Assumption \ref{OM-as1} and \ref{OM-as3}, for any $(a^{*},m^{*},R^{*},\mu^{*}) \in \mathcal{F}$ that solves \eqref{MBC-P}, it holds that \circled{1} $\alpha_0^{*}\leq \alpha_0^{FB}$, and \circled{2} $\alpha_1^{*}\geq \alpha_1^{MH}$. 
\label{WA-pp2}
\end{proposition}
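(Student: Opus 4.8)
The plan is to argue by contradiction via a one‑parameter deformation of an optimal menu, after first pinning down the ordering of the two contract powers. First I would note that, since the optimum cannot be attained by a single full‑range contract, Theorem~\ref{OM-pp3} together with Proposition~\ref{OM-pp1}\,\circled{3} forces a strict power gap $K_0'(\alpha_0^{*})>K_1'(\alpha_1^{*})$ at \emph{every} optimal menu: if equality held, Theorem~\ref{OM-pp3} would make $(\alpha^{*},m^{*},R_F,\mu^{*})$ feasible, it would have the same objective value and hence be optimal, and it would be a single full‑range contract. By Proposition~\ref{OM-pp2} I would also assume $R_0^{*}=[\underline{x},\overline{x}_0^{*}]$ is right‑truncated and $R_1^{*}=[\underline{x}_1^{*},\overline{x}]$ is left‑truncated, so that $\mu_t^{*}$ attains the concavification of $\theta$ restricted to $c(R_t^{*})$. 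Two elementary facts I would record: writing $k_t:=K_t'(\alpha_t)$ for the contract power, the agent's on‑path surplus is the Legendre transform $K_t'(\alpha_t)\alpha_t-K_t(\alpha_t)=K_t^{\star}(k_t):=\max_{\alpha}[k_t\alpha-K_t(\alpha)]$ with $\frac{d}{dk}K_t^{\star}(k)=(K_t')^{-1}(k)$, and Assumption~\ref{OM-as1} says exactly $(K_0')^{-1}<(K_1')^{-1}$ pointwise; moreover $\alpha_t^{MH}\le\alpha_t^{FB}$ because $S_t=\Theta-K_t$ is strictly concave with one‑sided derivative $\Theta'(\alpha_t^{MH})-K_t'(\alpha_t^{MH})=K_t''(\alpha_t^{MH})\alpha_t^{MH}\ge 0$ at $\alpha_t^{MH}$.

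For \circled{2} I would suppose $\alpha_1^{*}<\alpha_1^{MH}$ and raise the type‑1 power from $k_1^{*}$ to $k_1':=K_1'(\alpha_1')$ for some $\alpha_1'\in(\alpha_1^{*},\alpha_1^{MH}]$ close to $\alpha_1^{*}$ (or only up to the level at which $k_1'$ reaches $K_0'(\alpha_0^{*})$), keeping the left endpoint of $R_1^{*}$, recommending the distribution that attains $\widehat{\theta|_{c(R_1^{*})}}(\alpha_1')$, keeping the type‑0 contract untouched, and setting $m_1':=\max\{0,\,m_1^{*}-(K_1^{\star}(k_1')-K_1^{\star}(k_1^{*}))\}$. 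Then \eqref{MBC-MH} and \eqref{MBC-LL} are immediate, \eqref{MBC-IC1} holds because its left side $K_1^{\star}(k_1')+m_1'\ge K_1^{\star}(k_1^{*})+m_1^{*}$ is at least its original value while its right side is unchanged, and \eqref{MBC-IC0} holds because its left side is unchanged while its right side moves by at most $(K_0^{\star}(k_1')-K_0^{\star}(k_1^{*}))+(m_1'-m_1^{*})\le\int_{k_1^{*}}^{k_1'}\big[(K_0')^{-1}(k)-(K_1')^{-1}(k)\big]\,dk<0$ by Assumption~\ref{OM-as1}. With $m_1$ pinned as above, the type‑1 contribution to the principal's objective collapses to the strictly concave map $\alpha_1\mapsto\widehat{\theta|_{c(R_1^{*})}}(\alpha_1)-K_1(\alpha_1)$, whose maximizer is at least $\alpha_1^{FB}\ge\alpha_1^{MH}>\alpha_1^{*}$; hence the deformation strictly raises profit, contradicting optimality, so $\alpha_1^{*}\ge\alpha_1^{MH}$. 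For \circled{1} I would run the mirror deformation: assuming $\alpha_0^{*}>\alpha_0^{FB}\,(\ge\alpha_0^{MH})$, lower the type‑0 power toward $\max\{\alpha_0^{FB},(K_0')^{-1}(K_1'(\alpha_1^{*}))\}$, keep $R_0^{*}$ right‑truncated, and lower $m_1$ by the slack that the (now less attractive) type‑0 contract opens up in \eqref{MBC-IC1}; \eqref{MBC-IC0} again moves favourably via $(K_0')^{-1}<(K_1')^{-1}$, and the profit strictly increases because $S_0$ is strictly decreasing past its peak $\alpha_0^{FB}$ and the type‑0 per‑capita profit is bounded above by $S_0(\alpha_0^{*})$.

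The step I expect to be the main obstacle is verifying \eqref{MBC-IC0} throughout the deformation. Two sub‑points need care. First, the computation of how \eqref{MBC-IC0}'s right side moves assumes the deviating type's best response lies on the middle linear segment of the concavified other‑type contract; the corner cases (best response on a truncated piece) have to be dispatched using the monotonicity of $\widehat{s_t\circ\theta}$ in $k_t$ and in $m_t$ read off \eqref{MBC-conv}. Second, and more subtly, the apparently dangerous case is $m_0^{*}=m_1^{*}=0$ with \eqref{MBC-IC0} binding, where there is no constant payment to trade off; here I would observe that when $m_1^{*}=0$ and $R_1^{*}$ is left‑truncated with $\overline{x}\in R_1^{*}$, formula \eqref{MBC-conv} gives $\widehat{s_1^{*}\circ\theta}(\alpha)=k_1^{*}\alpha$ on the whole of $[\underline{c},\overline{c}]$, so the truncation is irrelevant to \eqref{MBC-IC0}, which then reads $K_0^{\star}(k_0^{*})\ge K_0^{\star}(k_1^{*})$ and is in fact strict once $\alpha_0^{*}>0$; hence \eqref{MBC-IC0} is slack, and one may moreover take $R_1^{*}$ full so that $\mu_1^{*}$ attains $\Theta(\alpha_1^{*})$ and the clean single‑peakedness argument applies. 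The residual bookkeeping---showing the restricted surplus $\widehat{\theta|_{c(R_t^{*})}}-K_t$ has its maximizer on the correct side of $\alpha_t^{MH}$---is where I would lean on Proposition~\ref{OM-pp2}: since $R_0^{*}$ is truncated only on the right and $R_1^{*}$ only on the left, the restricted concavification agrees with $\Theta$ on a neighbourhood of the relevant endpoint, and a monotone‑comparative‑statics comparison against $\Theta-K_t$ then yields the needed inequality.
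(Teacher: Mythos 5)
The proposal's deformation for statement \circled{2} is essentially the paper's argument: raise $\alpha_1$ and lower $m_1$ so that the type-1 agent's payoff is held (or not decreased), then use Assumption~\ref{OM-as1} in Legendre form to show \eqref{MBC-IC0}'s right side falls, and use Lemma~\ref{WA-lm1} together with strict single-peakedness for the profit improvement. For statement \circled{1} you propose a genuinely different deformation. The paper decreases $\alpha_0$ and \emph{raises} $m_0$ so the type-0 agent's payoff is held constant --- which makes \eqref{MBC-IC0} trivially preserved and leaves only \eqref{MBC-IC1} to check --- whereas you fix $m_0$ and \emph{lower} $m_1$, re-investing the slack opened in \eqref{MBC-IC1}. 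That alternative works but buys extra bookkeeping: you must now verify \emph{both} IC constraints move the right way, whereas the paper's choice collapses one of them automatically. Your route also forces a separate treatment of $m_1^{*}=0$, which the paper's statement-\circled{1} deformation avoids entirely.

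Two concrete spots need repair. First, your displayed estimate for the change in \eqref{MBC-IC0}'s right side, $\big(K_0^{\star}(k_1')-K_0^{\star}(k_1^{*})\big)+(m_1'-m_1^{*})\le\int_{k_1^{*}}^{k_1'}\big[(K_0')^{-1}-(K_1')^{-1}\big]\,dk$, is only valid when $m_1'$ does not truncate at $0$; when $m_1^{*}<K_1^{\star}(k_1')-K_1^{\star}(k_1^{*})$ the inequality reverses. You acknowledge this by routing $m_1^{*}=0$ through a slackness argument, but that argument has a hole: you write ``IC0 \ldots is in fact strict once $\alpha_0^{*}>0$; hence IC0 is slack,'' which leaves the endpoint $\alpha_0^{*}=\underline{c}=0$ untreated. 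That case is not vacuous for \circled{2} (nothing forces $\alpha_0^{*}>0$ there), and at that endpoint \eqref{MBC-IC0} is binding with both sides equal to zero; the correct observation --- which the paper makes explicitly --- is that because $K_1'(\alpha_1^{*})<K_0'(\underline{c})$ strictly (from the power gap), one can choose $\hat\alpha_1$ close enough to $\alpha_1^{*}$ that $K_1'(\hat\alpha_1)<K_0'(\underline{c})$ as well, so the right side of \eqref{MBC-IC0} stays at zero rather than becoming slack. Second, your profit-improvement sentence for \circled{1} appeals to $S_0$ decreasing past $\alpha_0^{FB}$ together with a bound by $S_0(\alpha_0^{*})$, but with $m_0$ fixed the relevant object is the restricted pure-moral-hazard map $\alpha\mapsto\widehat{\theta|_{c(R_0^{*})}}(\alpha)-K_0'(\alpha)\alpha$, not $S_0$; the correct justification is that Lemma~\ref{WA-lm1} plus Assumption~\ref{OM-as3} make this map strictly decreasing on $(\alpha_0^{MH},\overline{c}]$, and $\alpha_0^{*}>\alpha_0^{FB}\ge\alpha_0^{MH}$. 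Finally, the corner cases in the IC verification (best response on a truncated segment of $\widehat{s_t\circ\theta}$) that you flag and defer are substantive: the paper resolves them by an explicit split on $K_0'(\alpha_0^{*})$ versus $K_1'(\overline{r}_0^{*})$ (and symmetrically for type 1), and without that split the Legendre-transform estimate does not by itself close the argument.
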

\begin{proof}
See Appendix \ref{MBCF-app-os}.
\end{proof}
Proposition \ref{WA-pp2} shows that when single full-range contracts are not optimal, the optimal type-0 aggregate effort is distorted downward relative to the first-best benchmark, while the optimal type-1 aggregate effort is distorted upward relative to the pure moral hazard benchmark. The first statement follows from the fact that if $\alpha_0^{*} > \alpha_0^{FB}$, the principal could slightly reduce the power of the type-0 contract and adjust the constant payment accordingly to keep the type-0 agent's payoff unchanged. This adjustment would raise the social surplus generated by the type-0 agent and, in turn, increase the principal's profit from this agent. The second statement follows from the idea that if $\alpha_1^{*} < \alpha_1^{MH}$, the principal could increase the profit from the type-1 agent by slightly increasing the power of the type-1 contract. 
\section{Generalizations}
\label{MBCF-general}
In this section, we generalize Theorem \ref{OM-pp3} to settings with either a finite type space or mean-based cost functions.
\subsection{Adverse Selection with Finite Types}
We consider a setting where the agent has a private type $t\in \{1,2,...,N\}$ ($N\in \mathbb{N}$ and $N>2$). It is common knowledge that type $t$ occurs with probability $p_t>0$ ($\sum_{t=1}^{N}p_t = 1$). The agent's type only affects his cost function. Specifically, the cost function of a type-$t$ agent is given by 
\begin{equation*}
    C_t(\mu) = K_t\big(\int c(y)\mu(dy)\big),\quad \forall \mu \in \Delta([\underline{x},\overline{x}]).
\end{equation*}

The principal designs a menu $(a,m,R,\mu)$, where $a \in [\underline{c},\overline{c}]^{N}$ is the vector of aggregate efforts, $ m\in \mathbb{R}^{N}$ denotes the constant payments in the contracts, $R\in (2^{[\underline{x},\overline{x}]})^{N}$ is the vector of contract ranges, and $\mu \in \Delta([\underline{c},\overline{c}])^N$ denotes the output distributions recommended to all types of agents. A menu is feasible if it satisfies moral hazard constraints, incentive compatibility constraints, and limited liability constraints. Let $\mathcal{F}_N$ denote the set of feasible menus for this N-types setting. The principal's problem is given by
\begin{equation}
    \max_{(a,m,R,\mu)\in \mathcal{F}_N}\quad \sum_{t=1}^{N} p_t\big[\int x\mu_t(dx) - K_t'(\alpha_t)\cdot \alpha_t - m_t\big].
\tag{P-N}
\label{NT-P-N}
\end{equation}

Analogous to Assumption \ref{OM-as1}, we impose the following assumption in this setting. 
\begin{assumption}
$K_1'(\alpha) > K_2'(\alpha)>...>K_N'(\alpha)$ for all $\alpha \in [\underline{c},\overline{c}]$.  
\label{NT-as1}
\end{assumption}

In Proposition \ref{NT-pp2}, we generalize Proposition \ref{OM-pp2} in this setting and show that at any feasible menu, it is possible to extend the contract range of the lowest-type (type-1) contract to the leftmost point and extend that of the highest-type (type-$N$) contract to the rightmost point without harming the feasibility of the menu. This implies that the principal can always provide the lowest-type (highest-type) agent with sufficient incentives to produce low-cost, low-return (high-cost, high-return) outputs while still ensuring the implementability of the menu. The proof of Proposition \ref{NT-pp2} is similar to that of Proposition \ref{OM-pp2} and is hence omitted.
\begin{proposition}
\label{NT-pp2}
Take any feasible menu $(a,m,R,\mu)\in \mathcal{F}_N$, and take $\overline{x}_1:= \sup R_1$ and $\underline{x}_N:= \inf R_N$. Define
\begin{equation*}
    R_1':= [\underline{x},\overline{x}_1]\text{ and }R_N':=[\underline{x}_N, \overline{x}].
\end{equation*}
Under Assumption \ref{NT-as1}, $(a,m,R',\mu)$ with $R'= (R_1',R_2,...,R_{N-1},R_N')$ is also feasible.
\end{proposition}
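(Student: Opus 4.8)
The plan is to check that the three families of constraints defining $\mathcal{F}_N$ — moral hazard, limited liability, and incentive compatibility — continue to hold after replacing $R_1,R_N$ by $R_1',R_N'$. Moral hazard and limited liability are immediate: the aggregate efforts $a$, the constant payments $m$, and the recommended distributions $\mu$ are all left unchanged, and $\text{supp}(\mu_1)\subseteq R_1\subseteq[\underline{x},\overline{x}_1]=R_1'$, $\text{supp}(\mu_N)\subseteq R_N\subseteq[\underline{x}_N,\overline{x}]=R_N'$. The whole content of the proof is therefore in the incentive-compatibility constraints, and the only simplification relative to Proposition \ref{OM-pp2} is the bookkeeping across $N$ types rather than two. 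Enlarging $R_1$ only alters the concavification $\widehat{s_1\circ\theta}$, and enlarging $R_N$ only alters $\widehat{s_N\circ\theta}$; since $a$ and $m$ stay fixed, these two modifications do not interact, so I would carry them out one at a time.

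Consider the leftward extension of $R_1$. Write $\underline{r}_1:=\inf c(R_1)$, $\overline{r}_1:=\sup c(R_1)=c(\overline{x}_1)$, and recall $\alpha_1\in[\underline{r}_1,\overline{r}_1]$ since $\mu_1$ is supported on $R_1$. By \eqref{MBC-conv}, after the extension $\widehat{s_1\circ\theta}(\alpha)=K_1'(\alpha_1)\cdot\alpha+m_1$ on $[\underline{c},\overline{r}_1]$ while it is unchanged (the same declining third segment) on $[\overline{r}_1,\overline{c}]$; in particular the new concavification weakly dominates the old one pointwise and the two coincide on $[\underline{r}_1,\overline{c}]$. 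The only IC constraints affected are those preventing a type $j\in\{2,\dots,N\}$ from reporting type $1$, and their left-hand sides (the truthful payoffs) are unchanged, so it suffices to show $\max_{\alpha\in[\underline{c},\overline{c}]}\big[\widehat{s_1\circ\theta}(\alpha)-K_j(\alpha)\big]$ does not increase for each such $j$. On $[\underline{c},\underline{r}_1)$ the new objective $K_1'(\alpha_1)\cdot\alpha+m_1-K_j(\alpha)$ has derivative $K_1'(\alpha_1)-K_j'(\alpha)>0$, because $\alpha<\underline{r}_1\le\alpha_1$ and Assumption \ref{NT-as1} together with strict convexity of $K_1$ give $K_j'(\alpha)\le K_1'(\alpha)<K_1'(\alpha_1)$; hence the maximum over $[\underline{c},\overline{c}]$ is attained on $[\underline{r}_1,\overline{c}]$, where the new and old objectives agree, and the two maxima are equal. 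The rightward extension of $R_N$ is symmetric: on $(\overline{r}_N,\overline{c}]$ the new objective $K_N'(\alpha_N)\cdot\alpha+m_N-K_i(\alpha)$ for a type $i\in\{1,\dots,N-1\}$ has derivative $K_N'(\alpha_N)-K_i'(\alpha)<0$, since $\alpha>\overline{r}_N\ge\alpha_N$ forces $K_i'(\alpha)>K_i'(\alpha_N)\ge K_N'(\alpha_N)$, so the corresponding maximum is again unchanged. Every remaining IC constraint — those between two types neither of which is $1$ nor $N$, and those in which type $1$ or type $N$ is the one contemplating a deviation — involves only unchanged concavifications and unchanged truthful payoffs, hence is preserved.

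The routine portion is the manipulation of \eqref{MBC-conv} together with a handful of degenerate cases: when $\alpha_1=\underline{c}$ (or $\alpha_N=\overline{c}$), or when $R_1$ (resp.\ $R_N$) already has $\inf c(R_1)=\underline{c}$ (resp.\ $\sup c(R_N)=\overline{c}$), the relevant segment of the concavification collapses and the monotonicity step is vacuous, but the normalization $K_t(\underline{c})=0$ still yields equality of the two maxima. The one genuinely load-bearing step is the claim that the maximizer of the misreport objective does not move into the newly added part of the range; this is precisely where the single-crossing order of marginal costs in Assumption \ref{NT-as1}, combined with $\alpha_t\in[\underline{r}_t,\overline{r}_t]$, is used, and it is the direct multi-type analogue of the corresponding step in the proof of Proposition \ref{OM-pp2}. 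No argument beyond that one is needed.
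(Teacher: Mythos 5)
Your proof is correct and mirrors the argument the paper refers to for this proposition, namely the proof of Proposition \ref{OM-pp2}: it isolates the IC constraints affected by each one-sided range extension and shows, via the single-crossing ordering of marginal costs and the fact that $\alpha_t$ lies inside $c(R_t)$, that the misreport objective is monotone on the newly added segment, so its maximum stays on the part of the domain where the concavification is unchanged. The only cosmetic difference is that for the leftward extension you order the derivatives as $K_j'(\alpha)\le K_1'(\alpha)<K_1'(\alpha_1)$ (using convexity of the assigned type's cost $K_1$), whereas the paper's two-type version orders them as $K_1'(\alpha_1)\ge K_j'(\alpha_1)\ge K_j'(\alpha)$ (using convexity of the deviating type's cost); both chains rest on exactly the same hypotheses and deliver the same conclusion.
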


Next, we extend Theorem \ref{OM-pp3} to the current setting in the following proposition.
\begin{proposition}
Take an optimal menu $(a^{*},m^{*},R^{*},\mu^{*})\in \mathcal{F}_N$ that solves \eqref{NT-P-N}. Define $R_F:= [\underline{x},\overline{x}]^{N}$. Under Assumption \ref{NT-as1}, the menu $(a^{*},m^{*},R_F,\mu^{*})$ is feasible if and only if 
\begin{equation*}
    K_i'(\alpha_i^{*}) = K_j'(\alpha_j^{*}),\quad \forall i,j\in \{1,2,...,N\}. 
\end{equation*}
\label{NT-pp1}
\end{proposition}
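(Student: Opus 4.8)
The plan is to mirror the two‑type argument of Theorem \ref{OM-pp3}, with Lemma \ref{NT-lm1} and the $N$‑type counterpart of Proposition \ref{OM-pp1} replacing the single‑crossing inputs used there. For sufficiency, suppose all powers $K_i'(\alpha_i^{*})$ share a common value $\kappa$. I would consider the menu $(a^{*},\mathbf 0,R_F,\mu^{*})$ obtained by setting every constant payment to zero and every range to $[\underline x,\overline x]$. Since $\kappa=K_i'(\alpha_i^{*})$, each $\alpha_i^{*}$ is the type‑$i$ agent's best response to the full‑range affine contract $\alpha\mapsto\kappa\alpha$, and $\mu_i^{*}$ is implemented by it because $\mathrm{supp}(\mu_i^{*})\subseteq R_i^{*}\subseteq[\underline x,\overline x]$; all the induced contracts coincide, so every incentive‑compatibility constraint holds trivially and the menu lies in $\mathcal F_N$. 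Its value in \eqref{NT-P-N} exceeds that of $(a^{*},m^{*},R^{*},\mu^{*})$ by $\sum_t p_t m_t^{*}\ge 0$, and optimality of the latter forces $\sum_t p_t m_t^{*}\le 0$; hence $m_t^{*}=0$ for all $t$, so $(a^{*},m^{*},R_F,\mu^{*})=(a^{*},\mathbf 0,R_F,\mu^{*})\in\mathcal F_N$. Assumptions \ref{NT-as1}--\ref{NT-as2} are not needed for this direction.

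For necessity, assume $(a^{*},m^{*},R_F,\mu^{*})\in\mathcal F_N$ and write $\kappa_t:=K_t'(\alpha_t^{*})$. Since the objective of \eqref{NT-P-N} does not involve the ranges, this menu attains the same value as the optimal one and is therefore itself optimal. Because each $K_t'$ is increasing with $K_t'(\underline c)=\underline\kappa$ by Assumption \ref{NT-as1}, we have $\kappa_t\ge\underline\kappa$ for every $t$, and $\kappa_t=K_t'(\alpha_t^{*})\le K_t'(\overline c)$ since $\alpha_t^{*}\le\overline c$. With full ranges each induced contract is the affine map $\alpha\mapsto\kappa_t\alpha+m_t^{*}$, which equals its own concavification, so the incentive constraints read $G(\kappa_i;i)+m_i^{*}\ge G(\kappa_j;i)+m_j^{*}$ for all $i,j$. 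Fixing $i<j$ and supposing $\kappa_i>\kappa_j$, I would apply Lemma \ref{NT-lm1} with $(\kappa,m,\kappa',m')=(\kappa_i,m_i^{*},\kappa_j,m_j^{*})$ — legitimate since $\kappa_i>\kappa_j\ge\underline\kappa$ — to conclude from type $i$'s constraint that $G(\kappa_i;j)+m_i^{*}>G(\kappa_j;j)+m_j^{*}$, contradicting type $j$'s constraint $G(\kappa_j;j)+m_j^{*}\ge G(\kappa_i;j)+m_i^{*}$. Hence $\kappa_1\le\kappa_2\le\cdots\le\kappa_N$. On the other hand, the $N$‑type counterpart of Proposition \ref{OM-pp1}, which I expect to hold by the same reasoning under Assumptions \ref{NT-as1}--\ref{NT-as2} (in particular the analog of statement \circled{3}), gives $\kappa_1\ge\kappa_2\ge\cdots\ge\kappa_N$ at any optimal menu. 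Combining the two chains forces $\kappa_1=\cdots=\kappa_N$, which is the claim.

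The main obstacle is the inequality $\kappa_1\ge\cdots\ge\kappa_N$ at the optimum. If one prefers to establish it inline rather than quote a Proposition \ref{OM-pp1} analog, one argues by contradiction: if $\kappa_j^{*}<\kappa_{j+1}^{*}$ for some adjacent pair, then, because the profit‑minimizing constant payments are pinned down by the chain of binding ``upward'' incentive constraints (lower‑index types carrying strictly positive constants), the principal can flatten the steeper (type‑$(j+1)$) contract slightly — equivalently, she can invoke the range tool to truncate one contract range — and re‑balance the constants so as to raise profit strictly; the delicate point is verifying that the induced change in the payment chain is more than offset by the gain, which is where the single‑peakedness of $\Theta(\cdot)-K_t'(\cdot)\cdot$ in Assumption \ref{NT-as2} enters. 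Two boundary subtleties must also be dispatched, as in the two‑type proof: a deviating lower‑index type might choose the minimal aggregate effort, which is excluded because $\kappa_{j+1}^{*}\ge\underline\kappa=K_j'(\underline c)$ by Assumption \ref{NT-as1}; and a deviating higher‑index type might choose the maximal aggregate effort, which is controlled because $\kappa_j^{*}=K_j'(\alpha_j^{*})\le K_j'(\overline c)$ automatically. These checks are routine once the ordering of the powers is in hand, and the rest of the argument is bookkeeping with the functions $G(\,\cdot\,;i)$.
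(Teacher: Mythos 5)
Your sufficiency argument is correct and matches what the paper omits: if all powers coincide, the full-range menu with zero constant payments is trivially IC, and optimality of the original menu forces every $m_t^{*}$ to vanish. The first half of your necessity argument is also correct and is exactly the paper's first step: applying Lemma \ref{NT-lm1} to the full-range IC constraints gives the ascending chain $\kappa_1^{*}\le\cdots\le\kappa_N^{*}$.

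The gap is in how you close the argument. Your main route invokes an ``$N$-type counterpart of Proposition \ref{OM-pp1}, statement \circled{3}'' to obtain the opposite chain $\kappa_1^{*}\ge\cdots\ge\kappa_N^{*}$ at any optimal menu, but this counterpart is neither stated nor proved anywhere, and it does not follow ``by the same reasoning'': the two-type proof of \circled{3} runs through three cases on the signs of $m_0^{*},m_1^{*}$, leans on \circled{2} ($m_0^{*}>0\Rightarrow m_1^{*}>0$), and exploits the binding structure of exactly two IC constraints tied to two arbitrary ranges $R_0^{*},R_1^{*}$. For $N$ types there are $O(N^2)$ IC constraints and $N$ arbitrary ranges, and identifying which constraints bind at an optimal menu is itself a substantive task; the paper deliberately avoids proving such a proposition. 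Your fallback inline sketch (``flatten the steeper type-$(j{+}1)$ contract and re-balance the constants'') points in the right direction but omits the step that makes the flattening strictly profitable. The paper's actual argument has two distinct perturbation stages: first it perturbs $\kappa_i^{*}$ \emph{upward} within $[\kappa_i^{*},\kappa_{i+1}^{*}]$ (adjusting $m_i$ along the binding chain) and uses optimality to extract the first-order condition $\Theta'(\alpha_i^{*})\le\kappa_i^{*}$; only then does the chain $\Theta'(\alpha_j^{*})\le\Theta'(\alpha_i^{*})\le\kappa_i^{*}<\kappa_j^{*}\le K_j'(\alpha_j^{*})+K_j''(\alpha_j^{*})\alpha_j^{*}$ establish $\alpha_j^{*}>\alpha_j^{MH}$ for every $j>i$, which is what guarantees (via Assumption \ref{NT-as2}) that the subsequent uniform downward perturbation of $\kappa_{i+1}^{*},\dots,\kappa_N^{*}$ raises each type-$j$ profit while simultaneously lowering the constant payments. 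Without the first perturbation you have no control on the sign of $\Theta(\alpha_j^{*})-K_j'(\alpha_j^{*})\alpha_j^{*}$'s derivative and the flattening is not obviously an improvement. You flag this as ``the delicate point,'' but it is precisely the content of the proof, so as written the necessity direction is incomplete.
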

\begin{proof}
See Appendix \ref{MBCF-app-general}.
\end{proof}
Proposition \ref{NT-pp1} shows that, at an optimal menu $(a^{*},m^{*},R^{*},\mu^{*})$, all contract ranges can be extended to full without violating feasibility if and only if all contracts in the menu have the same power. We establish this by proving that if an optimal menu only involves full-range contracts, then their powers must be identical. The proof proceeds by contradiction: suppose an optimal menu involves only full-range contracts but features different powers. Then there must exist a set of higher-type contracts whose powers are strictly greater than those of some lower-type contracts. In that case, the principal can strictly improve her profit by lowering the powers of the higher-type contracts, contradicting the optimality of the menu. 
\subsection{Mean-based Cost Functions}
In this section, we explore flexible moral hazard problems with adverse selection, focusing on scenarios where the effort functions are affine. While imposing restrictions on the shape of effort functions, we allow for a more general adverse selection setting than the model described in section \ref{MBCF-model}. In this case, the cost of the agent to choose an output distribution is determined by the mean of this distribution.
As in section \ref{MBCF-model}, the agent is able to choose an arbitrary output distribution over $[\underline{x},\overline{x}]\subset \mathbb{R}_{+}$. The effort function $c:[\underline{x},\overline{x}]\to \mathbb{R}_{+}$ is affine and strictly increasing, i.e., there exist $c>0$ and $d\in \mathbb{R}$ such that $c(x) = cx+d$ for $x\in [\underline{x},\overline{x}]$. Similarly, we use $\underline{c}$ and $\overline{c}$ to denote the lowest and the highest effort. 

The agent has a private type $t\in T\subset \mathbb{R}$, which follows a probability distribution $\phi\in \Delta(T)$ and impacts the agent's cost of choosing output distributions. For a type-$t$ ($t\in T$) agent, the cost of choosing $\mu \in \Delta([\underline{x},\overline{x}])$ is given by $C_t(\mu) = K_t\big(\int c(y)\mu(dy)\big)$. We assume that for all $t\in T$, $K_t(\cdot)$ is strictly increasing, strictly convex, and twice differentiable. Apart from this, we do not impose any further restriction on the type set $T$ and the cost function $\{K_t(\cdot)\}_{t\in T}$. 
The timeline of the problem would be the same as the previous one. The principal optimally designs a menu $\{s_t(\cdot),\mu_t\}_{t\in T}$ to maximize her expected profit, where $s_t(\cdot)$ is the contract provided to the agent if he reports his type as $t$, and $\mu_t$ is the output distribution recommended to him. The principal's problem is given by:
\begin{equation}
\begin{aligned}
    \max_{\{s_t(\cdot),\mu_t\}_{t\in T}} \quad& \int_T \int \big(x-s_t(x)\big)\mu_t(dx)\phi(dt) \\
    s.t.\quad & (t,\mu_t) \in \argmax_{t'\in T,\atop\mu \in \Delta([\underline{x},\overline{x}])} \int s_{t'}(x)\mu(dx) - K_t\big(\int c(y)\mu(dy)\big), \quad \forall t \in T \\
    & s_t(x)\geq 0, \quad \forall x\in [\underline{x},\overline{x}],\quad\forall t \in T.
\end{aligned}
\label{OSG-GP}
\end{equation}

Similarly, we say a contract has a full range if it provides maximal incentives on the whole output space for the agent to produce. We formally define full-range contracts in this setting as follows.  
\begin{definition}
Take a contract $s:[\underline{x},\overline{x}]\to \mathbb{R}$. We say $s(\cdot)$ has \textit{full range} with regard to type $t$ if there exists $m\in \mathbb{R}$, such that
\begin{equation*}
 s(x)= K_t'(\alpha_t^{*})\cdot c(x) + m,\quad\forall x\in [\underline{x},\overline{x}],
\end{equation*}
where $\alpha_t^{*} =\argmax_{\alpha\in [\underline{c},\overline{c}]}\big[\widehat{s\circ \theta}(\alpha) - K_t(\alpha)\big]$. 
\end{definition}

In Proposition \ref{OSG-pp1}, we establish a result akin to Theorem \ref{OM-pp3}, demonstrating that if the principal screens the agent by varying the powers of contracts in the optimal menu, she must simultaneously limit the ranges of some contracts to assist the screening. Consequently, either the optimal menu takes the form of a single full-range contract or includes some contracts with truncated ranges.
\begin{assumption}
For all $t\in T$, it holds that
\begin{equation*}
K_t'(\underline{c}) < \frac{1}{c}\text{ and } K_t'(\overline{c}) > \frac{1}{c}.
\end{equation*}
\label{OSG-as1}
\end{assumption}
\begin{proposition}
Under Assumption \ref{OSG-as1}, take an optimal menu $\{s_t^{*}(\cdot),\mu_t^{*}\}_{t\in T}$ that solves problem \eqref{OSG-GP}, one of the following two statements holds: 
\begin{enumerate}[label=\circled{\arabic*}]
   \item There exist $T_0 \subset T$ with $\phi(T_0) = 0$ and a contract $\tilde{s}:[\underline{x},\overline{x}]\to \mathbb{R}$ such that $s_t^{*}(\cdot) = \tilde{s}(\cdot)$, and $\tilde{s}(\cdot)$ has full range with regard to type $t$, for all $t\in T\backslash T_0$. 
   \item There exists $T_1 \subset T$ with $\phi(T_1) >0$ such that $s_t^{*}(\cdot)$ does not have full range with regard to type $t$, for all $t\in T_1$.
\end{enumerate}
\label{OSG-pp1}
\end{proposition}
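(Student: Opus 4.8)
The two statements are mutually exclusive, and statement \circled{2} fails for a given optimal menu $\{s_t^*(\cdot),\mu_t^*\}_{t\in T}$ exactly when $s_t^*(\cdot)$ has full range with regard to $t$ for $\phi$-almost every $t$. So the plan is to assume the latter and deduce statement \circled{1}, namely that $s_t^*(\cdot)$ is $\phi$-a.e.\ equal to one common contract $\tilde s(\cdot)$, which is then automatically full range with regard to a.e.\ $t$. By the taxation principle the menu is equivalently a collection of pairs $(\kappa,m)$, each type $t$ choosing the pair maximising $G(\kappa;t)+m$, where $G(\kappa;t):=\max_{\alpha\in[\underline c,\overline c]}\big[\kappa\alpha-K_t(\alpha)\big]$ is the convex conjugate of $K_t$; discarding any pair chosen only on a $\phi$-null set (this changes neither feasibility nor the objective, since those types can be reassigned to their next-best pair) leaves an optimal menu in which every offered pair is chosen by a positive-$\phi$-measure set, and there are then at most countably many such pairs. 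Since the incentive constraints between two types offered the same power force equal constants, distinct offered pairs carry distinct powers, so it suffices to derive a contradiction from the existence of two offered pairs $(\kappa,m)$ and $(\kappa',m')$ with $\kappa>\kappa'$, chosen by positive-measure sets $A$ and $B$. I use throughout the envelope identity $\partial_\kappa G(\kappa;t)=\alpha_t(\kappa):=(K_t')^{-1}(\kappa)$ on $(K_t'(\underline c),K_t'(\overline c))$, and the fact that Assumption \ref{OSG-as1} makes each type's first-best effort $\alpha_t(1/c)$ interior.

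The first move is a costless range reduction. Given any full-range contract $(\kappa,m)$ used by a set $A$ of types, replacing its range by the right-truncated interval $[\underline x,\theta(\sup_{t\in A}\alpha_t(\kappa))]$ leaves the behaviour, the payment and the realised output of every type in $A$ unchanged --- hence leaves the principal's profit unchanged --- while it can only lower the concavification $\widehat{s\circ\theta}(\cdot)$ pointwise (see \eqref{MBC-conv}) and therefore can only relax the incentive constraint of every type outside $A$; this is the one-sided continuum analogue of Proposition \ref{OM-pp2}. If the constraint ``type $u$ does not want $(\kappa,m)$'' were binding with $m_u>0$ on a positive-measure set of types $u$, this reduction would let me strictly lower those constants and strictly raise profit, a contradiction. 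So I may assume the high-power contract $(\kappa,m)$ lies on no other type's binding incentive margin with a positive constant; and an a-priori argument in the spirit of Proposition \ref{WA-pp2}\,\circled{1} --- a power that induces overexertion relative to first best can be lowered, together with a compensating constant change and a suitable one-sided range truncation, to raise the principal's profit from the types using it while only relaxing the other constraints --- lets me assume in addition $\kappa<K_t'(\overline c)$ for $t\in A$, so the efforts $\alpha_t(\kappa)$ are interior and bounded away from $\overline c$.

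The decisive step is a local perturbation of the high-power contract, now constrained only by the incentive compatibility of the types in $A$. Partition $A$ into positive-measure pieces on which $t\mapsto\alpha_t(\kappa)$ oscillates by at most $\varepsilon$ and fix one such piece $A_0$. If $\kappa>1/c$, replace $A_0$'s contract by one of power $\kappa-\eta$, with range re-truncated to the leftward interval covering $\{\alpha_t(\kappa-\eta):t\in A_0\}$ and constant raised by just enough (of order $\eta\sup_{t\in A_0}\alpha_t(\kappa)$) that every $t\in A_0$ still weakly prefers it to all other offered contracts. The truncation keeps the new concavification capped, so --- modulo the inequality check against the three-segment form \eqref{MBC-conv} --- no type outside $A_0$ is newly tempted. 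The surplus generated by each $t\in A_0$ then rises at first-order rate $(\kappa-1/c)/K_t''(\alpha_t(\kappa))$ (the power moves toward first best), while the rent paid rises only at a rate bounded by the oscillation $\varepsilon$ on $A_0$; choosing $\varepsilon$ small relative to $(\kappa-1/c)\int_{A_0}1/K_t''\,d\phi$ makes the net change strictly positive, a contradiction. If instead $\kappa\le 1/c$ --- so every offered power is at most $1/c$ and every type underexerts --- the mirror perturbation that raises the lowest offered power $\kappa'$ toward $1/c$ on a near-constant-effort subpopulation of $B$, again protected by a one-sided range truncation and accompanied by a second-order constant adjustment, gives the same contradiction. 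Either way, two distinct full-range contracts cannot coexist in an optimal menu, which is statement \circled{1}.

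I expect the main obstacle to be exactly the incentive bookkeeping of this perturbation: checking that the reshaped, re-truncated contract is weakly less attractive than the original to every type outside the perturbed group --- this is what the one-sided truncation is designed to buy, but it must be verified carefully against \eqref{MBC-conv}, including the knife-edge case of a type originally indifferent to $(\kappa,m)$ --- and checking that the constant adjustment needed to restore the perturbed group's own incentive compatibility is genuinely second order relative to the first-order surplus gain, which is why the perturbation is applied on a subpopulation with almost-constant induced effort rather than on all of $A$ (or $B$). Ruling out corner solutions, where a type is induced to exert $\underline c$ or $\overline c$, also has to be done first, and this is exactly where Assumption \ref{OSG-as1} enters, since it makes the first-best efforts interior and hence ensures that nudging a contract's power toward $1/c$ strictly increases the surplus it generates.
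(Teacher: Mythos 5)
Your proof takes a genuinely different route from the paper, and the difference matters: the paper does not perturb locally at all, it makes a \emph{single global replacement}. It shows that if almost every type gets a full-range (hence affine) contract, then replacing the entire menu with one affine contract --- slope $1$ if some positive-$\phi$-measure of slopes exceed $1$, otherwise the supremum of the offered slopes --- is weakly profit-improving, and strictly so unless the original contracts already agreed $\phi$-a.e. The crucial fact that makes the global replacement painless is that with slope $\le 1$ the agent's induced upward shift in output can only help the principal (and with slope exactly $1$ the principal is indifferent to the agent's action), and the new payment is weakly lower by incentive compatibility. No range truncation and no near-constant-effort partition are needed, which is why the paper's argument closes cleanly even in the unstructured type space of Section \ref{MBCF-general}, where the ordered-type machinery behind Propositions \ref{OM-pp2}, \ref{NT-pp2} and \ref{WA-pp2} is unavailable.

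Your local-perturbation step has a real gap, not just bookkeeping to be tidied. When you lower the power to $\kappa-\eta$ on a subgroup $A_0$ and raise the constant by roughly $\eta\sup_{t\in A_0}\alpha_t(\kappa)$ so that $A_0$ still weakly prefers the new contract to the still-offered $(\kappa,m)$, the new contract becomes \emph{strictly more attractive} to every type $u$ with $\alpha_u(\kappa)<\sup_{t\in A_0}\alpha_t(\kappa)$: to first order in $\eta$ its value to $u$ goes up by $\eta\bigl(\sup_{t\in A_0}\alpha_t(\kappa)-\alpha_u(\kappa)\bigr)>0$ relative to $(\kappa,m)$. Those are precisely the high-marginal-cost types who choose low efforts, and the right-truncation of the range to $[\underline{x},\theta(\sup_{t\in A_0}\alpha_t(\kappa-\eta))]$ does nothing to deter them --- they are already choosing efforts below the cap, so the truncation never binds on their problem. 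The truncation only protects against types that want efforts \emph{above} the cap. So unless you can show a uniform first-order slack in the incentive constraints of all lower-effort types against $(\kappa,m)$ --- which is not established and which, in a general type space, need not hold --- the perturbed menu can be infeasible and the contradiction never materialises. The same objection hits the mirror perturbation that raises the lowest power, and applying the perturbation to all of $A$ rather than a near-constant-effort piece kills the ``rent change is second order'' step, because then $m_{\text{new}}-m$ is of the same first order as the surplus gain. The step that actually closes the argument in the paper --- replace the whole menu, not a piece of it --- is exactly what your construction is missing.

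Two smaller remarks. First, your reduction to countably many offered pairs is fine (disjointify ties, discard pairs chosen only on null sets), and observing that equal powers force equal constants by mutual IC is correct. Second, you invoke Assumption \ref{OSG-as1} for the right thing --- interior first-best efforts --- but the step ``assume $\kappa<K_t'(\overline c)$ for $t\in A$ via a Proposition \ref{WA-pp2}-style argument'' imports a result proved under Assumptions \ref{OM-as1} and \ref{OM-as3}, neither of which is in force in Section \ref{MBCF-general}; that corner case needs an argument of its own in this setting.
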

\begin{proof}
See Appendix \ref{MBCF-app-general}.
\end{proof}

Proposition \ref{OSG-pp1} demonstrates that, given an optimal menu to \eqref{OSG-GP}, either it consists of a single contract that has full range with regard to each type of agent, or for a positive probability measure of types of agents, they are assigned with contracts that do not possess full range. The result can be established by showing that if we restrict ourselves to menus including full-range contracts only, then it is necessary that the optimal menu takes the form of a single full-range contract. Assumption \ref{OSG-as1} requires that each type of agent would optimally choose an interior aggregate effort if he is provided with a selling-the-firm contract (i.e., a contract that takes the form of $s(x) = x - \rho$, where $\rho$ is a constant). 

In the proof of Proposition \ref{OSG-pp1}, we first establish that starting from an arbitrary feasible menu that consists of full-range contracts only, it is possible to weakly improve the principal's profit by replacing the menu with a single full-range contract. The argument is analogous to that in \cite{gottlieb22} by which they establish the optimality of a single contract. While their reasoning relies on the assumption imposed on the set of possible output distributions, ours is based on the linearity of full-range contracts. We then show that if the original menu does not take the form of a single full-range contract, the principal can strictly benefit from the replacement. This step is enabled by Assumption \ref{OSG-as1}.

\section{Conclusion}
\label{MBCF-con}
This paper studies the optimal menu for a moral hazard problem with adverse selection, where each type of agent directly controls the output distribution under a moment-based cost function. We characterize the set of implementable output distributions and analyze the structure of optimal menus. When optimal contracts feature distinct powers, either some high outputs must be excluded from the low-type contract, or some low outputs must be excluded from the high-type contract. Otherwise, it is optimal for the principal to provide a single full-range contract. We establish a sufficient and necessary condition for the optimality of a single full-range contract under convex effort functions and show that this condition is also sufficient with general effort functions.

In our model, the contract range is defined as the set of outputs that gives the agent the strongest production incentives. Any output outside this range yields zero payment to the agent. This reflects real-world situations in which producing such outputs is not specified in the contract, and therefore, the agent receives no payment for doing so. Our results provide guidance on which contractual terms should be deliberately omitted from contracts when there are heterogeneous agents. At the same time, we show that the principal can sometimes benefit from providing menus with different contract ranges. This captures real-world settings where employers extract more private information by segmenting tasks into distinct sets and allowing employees to select the set they prefer. 

Although monotonicity is often considered a desirable property of contracts in various applications, our model does not require contracts to be monotonically increasing. Nevertheless, this restriction can be imposed without altering our results by extending the tail end of each contract to the highest contract value within its range. By doing so, both contracts are flat in the tail end, thereby failing to provide sufficient production incentives for the agents. Consequently, the agent, regardless of his type, would neglect the tail ends of both contracts, and his optimal responses to those contracts are unaffected. 

\bibliography{main.bib}

\appendix
\section*{Appendix}
\setcounter{section}{0}
\section{Supplementary Results for Section \ref{MBCF-model}}
\label{MBCF-app-model}
Before stating Proposition \ref{MBCF-app-pp1}, we first define the MH, LL, and IC constraints for a menu that takes the original form $(s_0(\cdot),s_1(\cdot),\mu_0,\mu_1)$.
\begin{definition}
A menu $(s_0(\cdot),s_1(\cdot),\mu_0,\mu_1)$ is \textit{feasible} if \begin{enumerate}
\item \textit{MH constraints}: $s_t(\cdot)$ implements $\mu_t$ with regard to the type-$t$ agent for $t\in \{0,1\}$.
\item \textit{LL constraints}: $s_t(x)\geq 0$ for all $x\in [\underline{x},\overline{x}]$ and $t\in \{0,1\}$.
\item \textit{IC constraints}: It is optimal for the agent to truthfully report his type:
\begin{subequations}
\begin{align}
\int s_0(x)\mu_0(dx) - C(\mu_0)\geq \max_{\alpha\in[\underline{c},\overline{c}]}\big[\widehat{s_1\circ\theta}(\alpha)- K_0(\alpha)\big], \label{MBCF-app-eq7a}\\
\int s_1(x)\mu_1(dx) - C(\mu_1)\geq \max_{\alpha\in[\underline{c},\overline{c}]}\big[\widehat{s_0\circ\theta}(\alpha)- K_1(\alpha)\big]. \label{MBCF-app-eq7b}
\end{align}
\end{subequations}
\end{enumerate} 
\end{definition}
\begin{proposition}
Given a feasible menu $(s_0(\cdot),s_1(\cdot),\mu_0,\mu_1)$ that satisfies MH, LL, IC constraints. There exist contracts $\widetilde{s_0}(\cdot)$ and $\widetilde{s_1}(\cdot)$ that take the following form
\begin{equation}
\widetilde{s_t}(x) = \mathbb{1}_{\{x\in R_t\}}\big[K_t'(\alpha_t)\cdot c(x) + m_t\big], \quad \forall t\in \{0,1\},
\label{MBCF-app-eq8}
\end{equation}
where $\alpha_0, \alpha_1 \in [\underline{c},\overline{c}]$, $m_0,m_1\in \mathbb{R}_{+}$, and $R_0, R_1$ are measurable subsets of $[\underline{x},\overline{x}]$ such that the menu $(\widetilde{s_0}(\cdot),\widetilde{s_1}(\cdot),\mu_0,\mu_1)$ is feasible, and it leads to the same payoffs to the principal and both types of agents as $(s_0(\cdot),s_1(\cdot),\mu_0,\mu_1)$.   
\label{MBCF-app-pp1}
\end{proposition}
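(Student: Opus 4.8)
The plan is to build, for each type $t\in\{0,1\}$, the replacement contract $\widetilde{s_t}$ directly from the first-order characterization of implementation applied to the original $s_t$, by truncating $s_t$ to the set of outputs where that condition binds and zeroing it out elsewhere; the remaining work is to check that this truncation changes nothing on path and can only slacken the incentive constraints.

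First, since the MH constraint holds, $s_t$ implements $\mu_t$ with respect to type $t$, so by \eqref{PRE-FOC} of \cite{georgia22}, with $\alpha_t:=\int c(y)\mu_t(dy)\in[\underline{c},\overline{c}]$, there is a constant $m_t$ with $s_t(x)=K_t'(\alpha_t)c(x)+m_t$ on $\text{supp}(\mu_t)$ and $s_t(x)\le K_t'(\alpha_t)c(x)+m_t$ off it. Evaluating the off-support inequality at $x=\underline{x}$, where $c(\underline{x})=\underline{c}=0$, and using $s_t(\underline{x})\ge 0$ from limited liability yields $m_t\ge 0$ (or $s_t(\underline{x})=m_t\ge 0$ directly if $\underline{x}\in\text{supp}(\mu_t)$). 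Put $R_t:=\{x\in[\underline{x},\overline{x}]:s_t(x)=K_t'(\alpha_t)c(x)+m_t\}$, which is measurable as the coincidence set of two measurable functions and contains $\text{supp}(\mu_t)$, and define $\widetilde{s_t}(x):=\mathbb{1}_{\{x\in R_t\}}[K_t'(\alpha_t)c(x)+m_t]$. Then $\widetilde{s_t}$ has the form \eqref{MBCF-app-eq8}, and pointwise $0\le\widetilde{s_t}(x)\le s_t(x)$ (on $R_t$ they coincide and are nonnegative by LL; off $R_t$ the left side is $0$) and $\widetilde{s_t}(x)\le K_t'(\alpha_t)c(x)+m_t$ everywhere (equality on $R_t$; off $R_t$ the right side is nonnegative since $K_t'>0$, $c\ge 0$, $m_t\ge 0$).

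Second, I would verify feasibility and payoff-invariance of $(\widetilde{s_0},\widetilde{s_1},\mu_0,\mu_1)$. Limited liability is the bound $\widetilde{s_t}\ge 0$ just noted. For MH, take any $\mu'\in\Delta([\underline{x},\overline{x}])$ with $\beta:=\int c\,d\mu'$; using $\widetilde{s_t}\le K_t'(\alpha_t)c+m_t$ everywhere and convexity of $K_t$,
\[
\int\widetilde{s_t}\,d\mu'-C_t(\mu')\le K_t'(\alpha_t)\beta+m_t-K_t(\beta)\le K_t'(\alpha_t)\alpha_t+m_t-K_t(\alpha_t)=\int\widetilde{s_t}\,d\mu_t-C_t(\mu_t),
\]
the last equality because $\widetilde{s_t}=K_t'(\alpha_t)c+m_t$ $\mu_t$-almost everywhere (as $\text{supp}(\mu_t)\subseteq R_t$); hence $\mu_t$ is optimal for type $t$ under $\widetilde{s_t}$, so $\widetilde{s_t}$ implements $\mu_t$. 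Since $\widetilde{s_t}=s_t$ on $R_t\supseteq\text{supp}(\mu_t)$, we have $\int\widetilde{s_t}\,d\mu_t=\int s_t\,d\mu_t$, so the principal's profit and the on-path payoff of each type are unchanged.

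Third, the IC constraints: from $\widetilde{s_t}\le s_t$ pointwise we get $\widetilde{s_t}\circ\theta\le s_t\circ\theta$ on $[\underline{c},\overline{c}]$, and concavification is monotone with respect to the pointwise order, so $\widehat{\widetilde{s_t}\circ\theta}\le\widehat{s_t\circ\theta}$; hence $\max_{\alpha}[\widehat{\widetilde{s_1}\circ\theta}(\alpha)-K_0(\alpha)]\le\max_{\alpha}[\widehat{s_1\circ\theta}(\alpha)-K_0(\alpha)]$, i.e.\ the value of misreporting weakly falls, and combined with the unchanged on-path payoff on the left of \eqref{MBCF-app-eq7a} that constraint still holds; \eqref{MBCF-app-eq7b} follows symmetrically. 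The one point that needs care is the MH verification: one must confirm that the first-order condition, together with convexity of $K_t$, is sufficient (not merely necessary) for $\mu_t$ to remain optimal after the downward truncation, and must handle the ``$\mu$-almost everywhere'' qualifier in \eqref{PRE-FOC} so that $\text{supp}(\mu_t)\subseteq R_t$ (equivalently $\mu_t(R_t)=1$) may be used freely; the remaining steps are routine.
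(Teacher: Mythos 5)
Your proof is correct and follows essentially the same route as the paper's: take $\alpha_t$ from $\mu_t$, invoke the first-order characterization to get $m_t\ge 0$, set $R_t$ to the coincidence set, and use $\widetilde{s_t}\le s_t$ pointwise to slacken the IC constraints while preserving on-path payoffs. The only substantive difference is that you spell out the sufficiency of the first-order condition for MH via the convexity inequality on $K_t$, whereas the paper asserts implementability directly; both rely on the same underlying fact, and both share the same $\mu_t$-a.e.\ technicality that you rightly flag.
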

\begin{proof}
Take $\alpha_t = \int c(y)\mu_t(dy)$ for $t\in \{0,1\}$. Since $s_t(\cdot)$ implements distribution $\mu_t$ with regard to the type-$t$ agent for $t\in \{0,1\}$, according to the first-order condition, there exists constant $m_t\in \mathbb{R}_{+}$ such that,
\begin{equation*}
 s_t(x) \left\{
 \begin{array}{ll}
  = K_t'\big(\alpha_t\big)\cdot c(x) + m_t    &\text{ if }x\in \text{supp}(\mu_t), \\
  \leq K_t'\big(\alpha_t\big)\cdot c(x) + m_t     &\text{ otherwise}.
 \end{array}
 \right.
 \label{PRE-FOCA}
\end{equation*}

Define $R_t:= \{x\in [\underline{x},\overline{x}]:s_t(x) = K_t'(\alpha_t)\cdot c(x)+m_t\}$. Then $\widetilde{s_t}(\cdot)$, pinned down by \eqref{MBCF-app-eq8}, satisfies that $\widetilde{s_t}(x) = s_t(x) = K_t'(\alpha_t)\cdot c(x)+m_t$ for all $x\in \text{supp}(\mu_t)$ and that $\widetilde{s_t}(x) \leq s_t(x)\leq K_t'(\alpha_t)\cdot c(x)+m_t$ otherwise. Hence, $\mu_t$ can be implemented by $\widetilde{s_t}(\cdot)$. And by $m_t\in \mathbb{R}_{+}$, $\widetilde{s_t}(x)\geq 0$ for all $x\in [\underline{x},\overline{x}]$. Hence, $(\widetilde{s_0}(\cdot),\widetilde{s_1}(\cdot),\mu_0,\mu_1)$ satisfies the MH and LL constraints. It suffices to show that it satisfies the IC constraints. It holds that 
\begin{equation*}
\begin{aligned}
\int \widetilde{s_0}(x)\mu_0(dx) - C(\mu_0) &= \int s_0(x)\mu_0(dx) - C(\mu_0) \\
&\geq \max_{\alpha\in[\underline{c},\overline{c}]}\big[\widehat{s_1\circ\theta}(\alpha)- K_0(\alpha)\big]\\
&\geq \max_{\alpha\in[\underline{c},\overline{c}]}\big[\widehat{\widetilde{s_1}\circ\theta}(\alpha)- K_0(\alpha)\big],
\end{aligned}
\end{equation*}
where the first line follows from that $\widetilde{s_0}(x) = s_0(x)$ for all $x\in \text{supp}(\mu_0)$, the second line follows from \eqref{MBCF-app-eq7a}, and the last line follows from $\widetilde{s_1}(x)\leq s_1(x)$ for all $x\in [\underline{x},\overline{x}]$. Hence, the type-0 agent would truthfully report his type under $(\widetilde{s_0}(\cdot),\widetilde{s_1}(\cdot),\mu_0,\mu_1)$. Similarly, we can establish that the type-1 agent also prefers truthful reporting in this case. 

Hence, the menu $(\widetilde{s_0}(\cdot),\widetilde{s_1}(\cdot),\mu_0,\mu_1)$ is feasible. And it leads to the same payoff to all parties as $(s_0(\cdot),s_1(\cdot),\mu_0,\mu_1)$ since it implements the same output distributions $\mu_0,\mu_1$ and contract $\widetilde{s_t}(\cdot)$ coincides with $s_t(\cdot)$ on the support of $\mu_t$ for $t\in \{0,1\}$. 
\end{proof}
\begin{proposition}
There exists an optimal solution to \eqref{MBC-P}.
\label{MBCF-app-pp2}
\end{proposition}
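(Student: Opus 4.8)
The plan is to reduce the principal's problem \eqref{MBC-P} to maximizing a continuous function over a nonempty compact subset of a finite-dimensional Euclidean space, and then invoke the Weierstrass extreme value theorem.

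\textbf{Step 1 (reduction to a simple subclass of menus).} I would first show that the supremum in \eqref{MBC-P} is unchanged if the domain $\mathcal{F}$ is replaced by the subclass $\mathcal{F}'$ of feasible menus in which, for each $t$, $\text{supp}(\mu_t)$ has at most two points and $R_t=\text{supp}(\mu_t)$. Take any $(a,m,R,\mu)\in\mathcal{F}$. The moral hazard constraint \eqref{MBC-MH} already forces the witnessing contract into the form $s_t(x)=\mathbb{1}_{\{x\in R_t\}}\big[K_t'(\alpha_t)c(x)+m_t\big]$ with $K_t'(\alpha_t)c(x)+m_t\ge 0$; hence shrinking $R_t$ pointwise only decreases $s_t$, thus only decreases $\widehat{s_t\circ\theta}$, thus only relaxes the other type's IC constraint, while leaving the objective and the LL constraint untouched. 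So I may replace $R_t$ by the closed set $\text{supp}(\mu_t)\subseteq R_t$; then, holding $\alpha_t$ fixed, replace $\mu_t$ by a maximizer of $\nu\mapsto\int x\,d\nu$ over $\{\nu\in\Delta([\underline x,\overline x]):\text{supp}(\nu)\subseteq R_t,\ \int c\,d\nu=\alpha_t\}$, a nonempty, convex, weak$^*$-compact set on which this linear functional attains its maximum at an extreme point, supported on at most two points by \cite{winkler} (cf.\ footnote~\ref{MBCF-EP}); and finally re-shrink $R_t$ to the support of the new $\mu_t$. Carrying this out for $t=0$ and then $t=1$ produces a menu in $\mathcal{F}'$ with weakly larger objective, so $\sup\eqref{MBC-P}$ equals its restriction to $\mathcal{F}'$.

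\textbf{Step 2 (parametrization, compactness, Weierstrass).} A candidate menu is encoded by $\pi=(\alpha_0,\alpha_1,m_0,m_1,x_0^1,x_0^2,\lambda_0,x_1^1,x_1^2,\lambda_1)$ via $\mu_t(\pi)=\lambda_t\delta_{x_t^1}+(1-\lambda_t)\delta_{x_t^2}$ and $R_t(\pi)=\{x_t^1,x_t^2\}$; every menu in $\mathcal{F}'$ arises from some such $\pi$. Since the all-zero menu ($\alpha_t=\underline c$, $m_t=0$, $\mu_t=\delta_{\underline x}$, so $s_t\equiv 0$) is feasible with objective $\underline x\ge 0$, and any feasible menu with nonnegative objective satisfies $p_0m_0+p_1m_1\le\overline x$, I may let $\pi$ range over the compact box $\Pi:=[\underline c,\overline c]^2\times[0,\overline x/\min\{p_0,p_1\}]^2\times\big([\underline x,\overline x]^2\times[0,1]\big)^2$ without losing the supremum. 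The objective $P(\pi)$ is continuous on $\Pi$; the MH moment equalities $\lambda_tc(x_t^1)+(1-\lambda_t)c(x_t^2)=\alpha_t$ and the LL inequalities $m_t\ge 0$ define closed sets; and, using the explicit three-segment formula \eqref{MBC-conv} for $\widehat{s_t\circ\theta}$ in terms of $\alpha_t$, $m_t$, $\underline r_t:=\min\{c(x_t^1),c(x_t^2)\}$ and $\overline r_t:=\max\{c(x_t^1),c(x_t^2)\}$, I would show each IC constraint cuts out a closed set. The feasible parameter set $\Pi^{\mathrm{feas}}\subseteq\Pi$ is then closed and nonempty, hence compact, and $P$ attains its maximum there; the maximizing $\pi^\star$ encodes an optimal menu. (One also notes that $R_t(\pi)=\{x_t^1,x_t^2\}$ may strictly contain $\text{supp}(\mu_t(\pi))$ when $\lambda_t\in\{0,1\}$, but enlarging a range only tightens IC, so every $\pi\in\Pi^{\mathrm{feas}}$ still encodes a genuine member of $\mathcal{F}$ with the same objective.)

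\textbf{Main obstacle.} The delicate step is the closedness of the IC-defined sets, i.e.\ the lower semicontinuity in $\pi$ of the right-hand sides $\max_{\alpha\in[\underline c,\overline c]}\big[\widehat{s_{\overline t}\circ\theta}^{\,\pi}(\alpha)-K_t(\alpha)\big]$. For fixed $\pi$ the concavification $\widehat{s_{\overline t}\circ\theta}^{\,\pi}$ is continuous in $\alpha$, but it is not jointly continuous in $(\pi,\alpha)$ at parameters where a range endpoint hits the boundary of $[\underline x,\overline x]$: as $\underline r_{\overline t}\downarrow\underline c=0$ the slope $\big(K_{\overline t}'(\alpha_{\overline t})\underline r_{\overline t}+m_{\overline t}\big)/\underline r_{\overline t}$ of the first segment in \eqref{MBC-conv} blows up. The resolution is that this pathology affects only the value at the single point $\alpha=\underline c$ (and symmetrically $\alpha=\overline c$), whereas the maximand is bounded below by its restriction to $[\underline r_{\overline t},\overline r_{\overline t}]$, on which $\widehat{s_{\overline t}\circ\theta}^{\,\pi}$ equals the genuinely continuous affine map $\alpha\mapsto K_{\overline t}'(\alpha_{\overline t})\alpha+m_{\overline t}$; a short case split on whether the maximizer of the limiting problem is interior to $[\underline c,\overline c]$ or at an endpoint then yields the required lower semicontinuity. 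The remaining ingredients — the two-point reduction via Winkler's theorem, the bound on $m_t$, and the continuity of $P$ — are routine.
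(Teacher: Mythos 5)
Your proposal is correct and follows the same overall architecture as the paper's proof: restrict to menus whose output distributions have at most two support points, with contract ranges equal to those supports (both via the Winkler extreme-point argument); bound the constant payments by comparison with the zero menu; and apply Weierstrass on the resulting finite-dimensional compact box. The one place where the two proofs genuinely differ is the treatment of the closedness of the feasible set, which hinges on the behavior of the RHS of the IC constraints as a range endpoint approaches $\underline{c}$ (equivalently $c(\underline{x}_1)\downarrow 0$, where the slope of the first piece of $\widehat{s_1\circ\theta}$ in \eqref{MBC-conv} blows up). The paper establishes \emph{full} joint continuity of $\max_\alpha\bigl[\widehat{s_1\circ\theta}(\alpha)-K_0(\alpha)\bigr]$ there, via Berge's theorem on each piece plus an explicit two-sided squeeze (inequalities \eqref{MBCF-app-eq5}--\eqref{MBCF-app-eq6}) showing the first-segment maximum converges exactly to $m_1$. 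You instead note that only \emph{lower} semicontinuity is required for $\{\mathrm{RHS}\le\mathrm{LHS}\}$ to be closed, and get it for free by bounding the full maximum below by its restriction to the middle interval $[\underline{r}_{\overline t},\overline{r}_{\overline t}]$, on which the concavification is the jointly continuous affine map $\alpha\mapsto K_{\overline t}'(\alpha_{\overline t})\alpha+m_{\overline t}$; the point that makes this close is that at a limiting parameter with $\underline{r}_{\overline t}=0$ the first segment disappears, so the restricted maximum coincides with the full maximum (and the tail segment never carries the max since both $\widehat{s\circ\theta}$ and $-K_t$ are decreasing there). This is a cleaner way to get the needed closedness: it sidesteps the squeeze argument entirely and works uniformly for both IC constraints. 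The remaining ingredients in your Step~1 and Step~2 match the paper's, with your Step~1 giving a slightly more explicit justification (shrinking ranges only relaxes the other type's IC) for the reduction that the paper takes as read.
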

\begin{proof}
To optimally solve problem \eqref{MBC-P}, we can restrict ourselves to menus where the output distributions are supported on at most two singletons and the contract ranges are equal to the supports of the output distributions. For $t\in \{0,1\}$, let $\{\underline{x}_t,\overline{x}_t\}$ denote the support of the type-$t$ distribution, and let $q_t$ denote the probability of $\overline{x}_t$ occurs in the corresponding distribution. With this restriction, the principal optimally designs
\begin{equation}
\alpha_t\in [\underline{c},\overline{c}],\quad m_t\in \mathbb{R}_{+},\quad \underline{x}_t,\overline{x}_t\in [\underline{x},\overline{x}],\quad q_t\in [0,1],\quad \forall t \in \{0,1\},
\label{MBCF-app-eq1}
\end{equation}
to maximize 
\begin{equation}
 p_0\big[(1-q_0)\underline{x}_0 + q_0\overline{x}_0 - K_0'(\alpha_0)\cdot \alpha_0 - m_0\big] + p_1\big[(1-q_1)\underline{x}_1 + q_1\overline{x}_1 - K_1'(\alpha_1)\cdot \alpha_1 - m_1\big],
\label{MBCF-app-eq2}
\end{equation}
subject to
\begin{subequations}
\begin{align}
    (1-q_t)c(\underline{x}_t) + q_t c(\overline{x}_t) = \alpha_t,\quad \forall t\in \{0,1\}, \label{MBCF-app-eq3a} \\
    K_0'(\alpha_0)\cdot \alpha_0 - K_0(\alpha_0) + m_0 \geq \max_{\alpha\in [\underline{c},\overline{c}]}\big[\widehat{s_1\circ \theta}(\alpha) - K_0(\alpha)\big],\label{MBCF-app-eq3b}\\
    K_1'(\alpha_1)\cdot \alpha_1 - K_1(\alpha_1) + m_1 \geq \max_{\alpha\in [\underline{c},\overline{c}]}\big[\widehat{s_0\circ \theta}(\alpha) - K_1(\alpha)\big],\label{MBCF-app-eq3c}
\end{align}
\end{subequations}
where $s_t\circ \theta(\alpha) = \mathbb{1}_{\{\alpha \in \{c(\underline{x}_t),c(\overline{x}_t)\}\}}\big[K_t'(\alpha_t)\cdot \alpha + m_t\big]$ for $t\in \{0,1\}$. Suppose we have an optimal solution to problem described through \eqref{MBCF-app-eq1} -- \eqref{MBCF-app-eq3c}, given by $\{\alpha_t^{*}, m_t^{*},\underline{x}_t^{*},\overline{x}_t^{*},q_t^{*}\}_{t\in\{0,1\}}$. Then the menu $(a^{*},m^{*},R^{*},\mu^{*})$, with $a^{*} = (\alpha_0^{*},\alpha_1^{*})$, $m^{*} = (m_0^{*},m_1^{*})$, $R^{*} = (\{\underline{x}_0^{*},\overline{x}_0^{*}\},\{\underline{x}_1^{*},\overline{x}_1^{*}\})$ and $\mu^{*} = (\mu_0^{*},\mu_1^{*})$ where $\mu_t^{*}$ is a distribution that takes $\underline{x}_t^{*}$ with probability $1-q_t^{*}$ and takes $\overline{x}_t^{*}$ with probability $q_t^{*}$, is an optimal solution to \eqref{MBC-P}. Hence, it suffices to show that the problem \eqref{MBCF-app-eq1} -- \eqref{MBCF-app-eq3c} employs an optimal solution.

We first show that while solving \eqref{MBCF-app-eq1} -- \eqref{MBCF-app-eq3c}, we can restrict ourselves to $m_0 \in [0,\frac{1}{p_0}\overline{x}]$ and $m_1 \in [0,\frac{1}{p_1}\overline{x}]$. Suppose $m_0 > \frac{1}{p_0}\overline{x}$ at a candidate solution. The principal's profit from the type-1 agent is bounded above by $\overline{x}$, while the profit from the type-0 agent is bounded above by $\overline{x} - m_0$. Hence, the principal's expected profit from this candidate solution is no larger than $p_0(\overline{x}-m_0) + p_1\overline{x} = \overline{x} - p_0\cdot m_0 < 0$. In this case, the principal would strictly prefer to offer zero contracts to both types of agents and get an expected profit of $\underline{x}\geq 0$. Hence, any candidate solution with $m_0 > \frac{1}{p_0}\overline{x}$ can not be optimal. Similarly, any solution with $m_1 > \frac{1}{p_1}\overline{x}$ is also not optimal. We can replace \eqref{MBCF-app-eq1} with:
\begin{equation}
\alpha_t\in [\underline{c},\overline{c}],\quad m_t\in [0,\frac{1}{p_t}\overline{x}],\quad \underline{x}_t,\overline{x}_t\in [\underline{x},\overline{x}],\quad q_t\in [0,1],\quad \forall t \in \{0,1\},
\label{MBCF-app-eq4}
\end{equation}
and establish the existence of an optimal solution to the problem described by \eqref{MBCF-app-eq2} -- \eqref{MBCF-app-eq4}. Since the objective function \eqref{MBCF-app-eq2} is continuous in the variables considered. It suffices to show that the set outlined in \eqref{MBCF-app-eq3a} -- \eqref{MBCF-app-eq4} is closed, and thereby compact.

The RHS of \eqref{MBCF-app-eq3c} is given by
\begin{equation*}
\max_{\alpha\in[\underline{c},\overline{c}]}\big[\widehat{s_0\circ \theta}(\alpha) - K_1(\alpha)\big] = \max_{\alpha \in [c(\underline{x}_0),c(\overline{x}_0)]}\big[K_0'(\alpha_0)\cdot \alpha - K_1(\alpha) + m_0\big]. 
\end{equation*}
Since $K_0'(\alpha_0)\cdot \alpha - K_1(\alpha)+m_0$ is continuous in $\alpha_0$, $m_0$ and $\alpha$, and the correspondence $(\underline{x}_0,\overline{x}_0)\Rightarrow [c(\underline{x}_0),c(\overline{x}_0)]$ is continuous and has non-empty compact values. By Berge's Maximum Theorem (Th 17.31, \cite{guide2006}), the RHS of \eqref{MBCF-app-eq3c} is continuous in $\alpha_0$, $m_0$, $\underline{x}_0$ and $\overline{x}_0$.

Consider the RHS of \eqref{MBCF-app-eq3b}, for $\underline{x}_1 > \underline{x}$ ($\Longleftrightarrow c(\underline{x}_1) > \underline{c}$), it is given by
\begin{equation*}
\begin{aligned}
    \max_{\alpha\in [\underline{c},\overline{c}]}\big[\widehat{s_1\circ\theta}(\alpha) - K_0(\alpha)\big] = \max\Big\{&\max_{\alpha\in[\underline{c},c(\underline{x}_1)]}\big[\big(K_1'(\alpha_1)+ \frac{m_1}{c(\underline{x}_1)}\big)\cdot \alpha - K_0(\alpha)\big],\\
    &\max_{\alpha\in [c(\underline{x}_1),c(\overline{x}_1)]}\big[K_1'(\alpha_1)\cdot \alpha - K_0(\alpha)+m_1\big]\Big\},
\end{aligned}
\end{equation*}
and for $\underline{x}_1 = \underline{x}$ ($\Longleftrightarrow c(\underline{x}_1) = \underline{c}$), it is given by 
\begin{equation*}
 \max_{\alpha\in [\underline{c},\overline{c}]}\big[\widehat{s_1\circ\theta}(\alpha) - K_0(\alpha)\big] = \max\Big\{m_1,\max_{\alpha\in [c(\underline{x}_1),c(\overline{x}_1)]}\big[K_1'(\alpha_1)\cdot \alpha - K_0(\alpha)+m_1\big] \Big\}.
\end{equation*}
Similarly to the previous arguments, we can establish that $\max_{\alpha\in [c(\underline{x}_1),c(\overline{x}_1)]}\big[K_1'(\alpha_1)\cdot \alpha - K_0(\alpha)+m_1\big]$ is continuous in $\alpha_1$, $m_1$, $\underline{x}_1$ and $\overline{x}_1$, and that $\max_{\alpha\in[\underline{c},c(\underline{x}_1)]}\big[\big(K_1'(\alpha_1)+ \frac{m_1}{c(\underline{x}_1)}\big)\cdot \alpha - K_0(\alpha)\big]$ is continuous in $\alpha_1$, $m_1$, $\underline{x}_1$ with $\underline{x}_1 > \underline{x}$. To establish the continuity of the RHS of \eqref{MBCF-app-eq3b}, we still need to show that $\lim_{n\to \infty}
\Big\{\max_{\alpha\in[\underline{c},c(\underline{x}_1^n)]}\big[\big(K_1'(\alpha_1^n)+ \frac{m_1^n}{c(\underline{x}_1^n)}\big)\cdot \alpha - K_0(\alpha)\big]\Big\} = m
_1$ for any sequence of $\{(\alpha_1^n, m_1^n,\underline{x}_1^n)\}$ ($\alpha_1^n \in [\underline{c},\overline{c}]$, $m_1^n \in [0,\frac{1}{p_1}\overline{x}]$, $\underline{x}_1^n \in (\underline{x},\overline{x}]$ for all $n\in \mathbb{N}$) that converges to $(\alpha_1,m_1,\underline{x}_1)$ with $\underline{x}_1 = \underline{x}$. 

Take a sequence $\{(\alpha_1^n, m_1^n,\underline{x}_1^n)\}$ as described above. For all $n\in \mathbb{N}$,
\begin{equation*}
\begin{aligned}
    &\big(K_1'(\alpha_1^n)+ \frac{m_1^n}{c(\underline{x}_1^n)}\big)\cdot \alpha - K_0(\alpha) \leq m_1^n + K_1'(\alpha_1^n)\cdot \alpha - K_0(\alpha),\quad \forall \alpha \in [\underline{c}, c(\underline{x}_1^n)],\\
   \Longrightarrow &\max_{\alpha\in [\underline{c},c(\underline{x}_1^n)]}\big[\big(K_1'(\alpha_1^n)+ \frac{m_1^n}{c(\underline{x}_1^n)}\big)\cdot \alpha - K_0(\alpha)\big]\leq m_1^n + \max_{\alpha\in [\underline{c},c(\underline{x}_1^n)]}\big[ K_1'(\alpha_1^n)\cdot \alpha - K_0(\alpha)\big],
\end{aligned}
\end{equation*}
and consequently, we have that
\begin{equation}
\begin{aligned}
    \lim_{n\to \infty}\Big\{\max_{\alpha\in [\underline{c},c(\underline{x}_1^n)]}\big[\big(K_1'(\alpha_1^n)+ \frac{m_1^n}{c(\underline{x}_1^n)}\big)\cdot \alpha - K_0(\alpha)\big]\Big\} &\leq \lim_{n\to\infty} m_1^n + \lim_{n\to \infty} \Big\{\max_{\alpha\in [\underline{c},c(\underline{x}_1^n)]}\big[ K_1'(\alpha_1^n)\cdot \alpha - K_0(\alpha)\big]\Big\}\\
    &= m_1 + K_1'(\alpha_1)\cdot \underline{c} - K_0(\underline{c})
    = m_1,
\end{aligned}
\label{MBCF-app-eq5}
\end{equation}

On the other hand, for all $n\in \mathbb{N}$,
\begin{equation*}
\max_{\alpha\in [\underline{c},c(\underline{x}_1^n)]}\big[\big(K_1'(\alpha_1^n)+ \frac{m_1^n}{c(\underline{x}_1^n)}\big)\cdot \alpha - K_0(\alpha)\big] \geq m_1^n + K_1'(\alpha_1^{n})\cdot c(\underline{x}_1^n) - K_0(c(\underline{x}_1^n)).
\end{equation*}

Hence, we have that
\begin{equation}
\begin{aligned}
\lim_{n\to\infty}\Big\{\max_{\alpha\in [\underline{c},c(\underline{x}_1^n)]}\big[\big(K_1'(\alpha_1^n)+ \frac{m_1^n}{c(\underline{x}_1^n)}\big)\cdot \alpha - K_0(\alpha)\big]\Big\}&\geq \lim_{n\to\infty}m_1^n + \lim_{n\to\infty}\Big\{ K_1'(\alpha_1^{n})\cdot c(\underline{x}_1^n) - K_0(c(\underline{x}_1^n))\Big\}\\
&= m_1.
\end{aligned}
\label{MBCF-app-eq6}    
\end{equation}

Combining \eqref{MBCF-app-eq5} and \eqref{MBCF-app-eq6}, we have that  $\lim_{n\to \infty}
\Big\{\max_{\alpha\in[\underline{c},c(\underline{x}_1^n)]}\big[\big(K_1'(\alpha_1^n)+ \frac{m_1^n}{c(\underline{x}_1^n)}\big)\cdot \alpha - K_0(\alpha)\big]\Big\} = m
_1$. Therefore, the RHS of \eqref{MBCF-app-eq3b} is continuous in $\alpha_1$, $m_1$, $\underline{x}_1$, $\overline{x}_1$. And the set outlined in \eqref{MBCF-app-eq3a} -- \eqref{MBCF-app-eq4} is a closed set. 
\end{proof}

\section{Supplementary Results and Proofs for Section \ref{MBCF-imp}}
\label{MBCF-app-imp}
\textbf{Proof of Lemma \ref{IMP-lm1}:} Given a pair of aggregate efforts $a =(\alpha_0,\alpha_1)$ satisfying that $\alpha_0\leq \alpha_1$, the payment-minimizing menu that implements $(\alpha_0,\alpha_1)$ solves the following problem:
\begin{equation}
\begin{aligned}
\min_{m,R,\mu}\quad&p_0\big[K_0'(\alpha_0)\cdot \alpha_0 + m_0\big] + p_1\big[K_1'(\alpha_1)\cdot \alpha_1 + m_1\big] \\
s.t.\quad&(a,m,R,\mu)\in \mathcal{F}.
\end{aligned}
\label{IMP-eq1}
\end{equation}

We first show that, if a menu $(a,m,R,\mu)$ is feasible, the menu $(a,m,R',\mu')$ is also feasible with $R' = (\{\theta(\alpha_0)\},\{\theta(\alpha_1)\})$ and $\mu' = (\delta_{\theta(\alpha_0)},\delta_{\theta(\alpha_1)})$. 

It straightforwardly holds that $\text{supp}(\delta_{\theta(\alpha_0)}) \subseteq \{\theta(\alpha_0)\}$, $\text{supp}(\delta_{\theta(\alpha_1)})\subseteq \{\theta(\alpha_1)\}$, $\int c(x)\delta_{\theta(\alpha_0)}(dx) = \alpha_0$, and $\int c(x)\delta_{\theta(\alpha_1)}(dx) = \alpha_1$. Take contracts $s_0'(\cdot)$, $s_1'(\cdot)$ such that $s_0'\circ \theta(\alpha) = \mathbb{1}_{\{\alpha = \alpha_0\}}\big[K_0'(\alpha_0)\cdot \alpha + m_0\big]$ and $s_1'\circ \theta(\alpha) = \mathbb{1}_{\{\alpha = \alpha_1\}}\big[K_1'(\alpha_1)\cdot \alpha + m_1\big]$. We only need to check whether it satisfies incentive compatibility. 

Take $s_0(\cdot)$ and $s_1(\cdot)$ as the contracts pinned down by the menu $(a,m,R,\mu)$, i.e., $s_0\circ \theta(\alpha)=\mathbb{1}_{\{\alpha \in c(R_0)\}}\big[K_0'(\alpha_0)\cdot \alpha + m_0\big]$, $s_1\circ \theta(\alpha)=\mathbb{1}_{\{\alpha \in c(R_1)\}}\big[K_1'(\alpha_1)\cdot \alpha + m_1\big]$. It holds that $s_0'\circ \theta(\alpha)\leq s_0\circ \theta(\alpha)$ and $s_1'\circ \theta(\alpha)\leq s_1\circ \theta(\alpha)$ for all $\alpha \in [\underline{c},\overline{c}]$. Since the menu $(a,m,R,\mu)$ satisfies incentive compatibility, we have that 
\begin{equation*}
\begin{aligned}
\max_{\alpha \in [\underline{c},\overline{c}]} \big[ \widehat{s_1'\circ \theta}(\alpha) - K_0(\alpha))\big] &\leq \max_{\alpha \in [\underline{c},\overline{c}]} \big[ \widehat{s_1\circ \theta}(\alpha) - K_0(\alpha))\big]\leq K_0'(\alpha_0)\cdot \alpha_0 - K_0(\alpha_0)+m_0,\\
\max_{\alpha \in [\underline{c},\overline{c}]} \big[ \widehat{s_0'\circ \theta}(\alpha) - K_1(\alpha))\big] &\leq \max_{\alpha \in [\underline{c},\overline{c}]} \big[ \widehat{s_0\circ \theta}(\alpha) - K_1(\alpha))\big]\leq K_1'(\alpha_1)\cdot \alpha_1 - K_1(\alpha_1) + m_1.
\end{aligned}
\end{equation*}

Hence, the menu $(a,m,R',\mu')$ also satisfies incentive compatibility and is thus feasible to problem \eqref{IMP-eq1}. Since $(a,m,R',\mu')$ leads to the same expected payment as $(a,m,R,\mu)$, we can w.l.o.g. focus on menus where the contract ranges and output distributions are fixed by $R'$ and $\mu'$ when solving \eqref{IMP-eq1}. We reformulate \eqref{IMP-eq1} as the following problem\footnote{The RHS of \eqref{IMP-eq2b} is equal to $m_1$ when $\alpha_1 = \underline{c}$. Similarly, the RHS of \eqref{IMP-eq2c} is equal to $m_0$ if $\alpha_0 = \underline{c}$.}:
\begin{subequations}
\begin{align}
  \min_{m_0,m_1}\quad&p_0\big[K_0'(\alpha_0)\cdot \alpha_0 + m_0\big] + p_1\big[K_1'(\alpha_1)\cdot \alpha_1 + m_1\big]\label{IMP-eq2a}\\
  s.t. \quad& K_0'(\alpha_0)\cdot \alpha_0 -K_0(\alpha_0)+m_0\geq \max_{\alpha\in[\underline{c},\alpha_1]}\big[\big(\frac{K_1'(\alpha_1)\cdot \alpha_1+m_1}{\alpha_1}\big)\cdot \alpha- K_0(\alpha)\big]\label{IMP-eq2b}\\
  &K_1'(\alpha_1)\cdot\alpha_1 - K_1(\alpha_1)+m_1 \geq \max_{\alpha\in [\underline{c},\alpha_0]}\big[\big(\frac{K_0'(\alpha_0)\cdot\alpha_0 +m_0}{\alpha_0}\big)\cdot \alpha - K_1(\alpha)\big]\label{IMP-eq2c}\\
  &m_0\geq 0,\quad m_1\geq 0.
\end{align}
\end{subequations}

Next, we prove the result by splitting into two cases based on the comparison between $K_0'(\alpha_0)$ and $K_1'(\alpha_1)$.\\
\\
\textbf{Case 1:} $K_0'(\alpha_0)\leq K_1'(\alpha_1)$.

We take $m_0'$ and $m_1'$ as stated and verify that $(m_0',m_1')$ is feasible to problem \eqref{IMP-eq2a}. We directly have that $m_1' \geq 0$ and $m_0' = \big[K_1'(\alpha_1)\cdot k_0^{-1}\big(K_1'(\alpha_1)\big) - K_0\big(k_0^{-1}\big(K_1'(\alpha_1)\big)\big)\big]-\big[K_0'(\alpha_0)\cdot \alpha_0 - K_0(\alpha_0)\big]\geq 0$, following from $K_0'(\alpha_0)\leq K_1'(\alpha_1)$. Furthermore, we have that $(m_0',m_1')$ satisfies \eqref{IMP-eq2b} if and only if 
\begin{equation}
\begin{aligned}
    m_0' &\geq \max_{\alpha\in[\underline{c},\alpha_1]}\big[K_1'(\alpha_1)\cdot \alpha - K_0(\alpha)\big] -\big[K_0'(\alpha_0)\cdot \alpha_0 - K_0(\alpha_0)\big]\\
    &= \big[K_1'(\alpha_1)\cdot k_0^{-1}\big(K_1'(\alpha_1)\big) - K_0\big(k_0^{-1}\big(K_1'(\alpha_1)\big)\big)\big]-\big[K_0'(\alpha_0)\cdot \alpha_0 - K_0(\alpha_0)\big],
\label{IMP-eq3}
\end{aligned}
\end{equation}
where the equality follows from that $\alpha_0\leq k_0^{-1}\big(K_1'(\alpha_1)\big)\leq \alpha_1$ (Assumption \ref{OM-as1} implies $k_0^{-1}\big(K_1'(\alpha_1)\big)\leq \alpha_1$). By the definition of $m_0'$, we have that \eqref{IMP-eq3} holds naturally. 

Next, we consider the \eqref{IMP-eq2c} constraint. We have that $(m_0',m_1')$ satisfies \eqref{IMP-eq2c} if and only if 
\begin{equation}
\max_{\alpha\in [\underline{c},\alpha_0]}\big[\big(\frac{K_0'(\alpha_0)\cdot\alpha_0 +m_0'}{\alpha_0}\big)\cdot \alpha - K_1(\alpha)\big]\leq K_1'(\alpha_1)\cdot\alpha_1 - K_1(\alpha_1). 
\label{IMP-eq4}
\end{equation}

Define $\alpha_0^1:= k_0^{-1}\big(K_1'(\alpha_1)\big)$. By the definition of $m_0'$, we have that
\begin{equation}
\begin{aligned}
m_0' &= \big[K_1'(\alpha_1)\cdot \alpha_0^1 - K_0(\alpha_0^1)]-\big[K_0'(\alpha_0)\cdot \alpha_0 - K_0(\alpha_0)\big] \\
&= \big[K_0'(\alpha_0^1)\cdot \alpha_0^1 - K_0(\alpha_0^1)]-\big[K_0'(\alpha_0)\cdot \alpha_0 - K_0(\alpha_0)\big] \\
&= \int_{\alpha_0}^{\alpha_0^1}\alpha d K_0'(\alpha) \geq \int_{\alpha_0}^{\alpha_0^1}\alpha_0dK_0'(\alpha)= \big[K_1'(\alpha_1)- K_0'(\alpha_0)\big]\cdot \alpha_0,
\end{aligned}
\label{IMP-eq5}
\end{equation}
where the third equality follows from integration by parts. 

When $\alpha_0 > \underline{c}$, \eqref{IMP-eq5} implies that $\frac{K_0'(\alpha_0)\cdot \alpha_0 + m_0'}{\alpha_0} \geq K_1'(\alpha_1)\geq K_1'(\alpha)$ for all $\alpha\in[\underline{c},\alpha_0]$. It indicates that the objective on the LHS of \eqref{IMP-eq4} is maximized by taking $\alpha = \alpha_0$. When $\alpha_0 = \underline{c}$, the LHS of \eqref{IMP-eq4} is equal to $m_0'$. In conclusion, regardless of the value of $\alpha_0$, the LHS of \eqref{IMP-eq4} is equal to $K_0'(\alpha_0)\cdot\alpha_0-K_1(\alpha_0)+m_0'$. Hence, \eqref{IMP-eq4} is equivalent to
\begin{equation}
    m_0' \leq \big[K_1'(\alpha_1)\cdot\alpha_1 - K_1(\alpha_1) \big] - \big[K_0'(\alpha_0)\cdot\alpha_0-K_1(\alpha_0)\big]
\label{IMP-eq6}
\end{equation}

Following the same reasoning as in \eqref{IMP-eq5}, the LHS of \eqref{IMP-eq6} satisfies that
\begin{equation}
    m_0' = \int_{\alpha_0}^{\alpha_0^1}\alpha dK_0'(\alpha) = \int_{K_0'(\alpha_0)}^{K_1'(\alpha_1)}k_0^{-1}(k)dk. 
\label{IMP-eq7}
\end{equation}
Define $\alpha_1^0:= k_1^{-1}(K_0'(\alpha_0))\in [\alpha_0,\alpha_1]$. The RHS of \eqref{IMP-eq6} satisfies that
\begin{equation}
\begin{aligned}
    &\big[K_1'(\alpha_1)\cdot\alpha_1 - K_1(\alpha_1) \big] - \big[K_0'(\alpha_0)\cdot\alpha_0-K_1(\alpha_0)\big] \\
    \geq & \big[K_1'(\alpha_1)\cdot\alpha_1 - K_1(\alpha_1) \big] - \big[ K_0'(\alpha_0)\cdot \alpha_1^{0} - K_1(\alpha_1^0) \big] \\
    =& \int_{\alpha_1^0}^{\alpha_1}\alpha d K_1'(\alpha) = \int_{K_0'(\alpha_0)}^{K_1'(\alpha_1)}k_1^{-1}(k)dk,
\end{aligned}
\label{IMP-eq8}
\end{equation}
where the inequality follows from that $K_0'(\alpha_0)\cdot \alpha_0 - K_1(\alpha_0)\leq K_0'(\alpha_0)\cdot \alpha_1^0 - K_1(\alpha_1^0)$. From Assumption \ref{OM-as1}, it holds that $k_1^{-1}(k)\geq k_0^{-1}(k)$ for all $k\in [K_0'(\alpha_0),K_1'(\alpha_1)]$. Combining \eqref{IMP-eq7} and \eqref{IMP-eq8}, we can show that \eqref{IMP-eq6} is satisfied.

So far, we have shown that $(m_0',m_1')$ is feasible to problem \eqref{IMP-eq2a}. Take any $(m_0,m_1)$ that is feasible to \eqref{IMP-eq2a}. We have that $m_1\geq 0 = m_1'$, and $m_0 \geq \big[K_1'(\alpha_1)\cdot k_0^{-1}\big(K_1'(\alpha_1)\big) - K_0\big(k_0^{-1}\big(K_1'(\alpha_1)\big)\big)\big]-\big[K_0'(\alpha_0)\cdot \alpha_0 - K_0(\alpha_0)\big] = m_0'$, following from \eqref{IMP-eq2b}. Hence, $(m_0',m_1')$ optimally solves the problem. \\
\\
\textbf{Case 2:} $K_0'(\alpha_0)> K_1'(\alpha_1)$. 

We take $m_0'$ and $m_1'$ as stated and verify that $(m_0',m_1')$ is feasible to problem \eqref{IMP-eq2a}. It is straightforward to see that $m_0'\geq 0$ and $m_1'\geq 0$. We can show that $(m_0',m_1')$ satisfies \eqref{IMP-eq2c} if and only if  
\begin{equation}
\begin{aligned}
    m_1'&\geq \max_{\alpha\in [\underline{c},\alpha_0]}\big[K_0'(\alpha_0)\cdot \alpha - K_1(\alpha)\big] - \big[K_1'(\alpha_1)\cdot\alpha_1 - K_1(\alpha_1)\big]\\
    &= \big[K_0'(\alpha_0)\cdot \alpha_0 - K_1(\alpha_0)\big] - \big[K_1'(\alpha_1)\cdot\alpha_1 - K_1(\alpha_1)\big],
\end{aligned}
\label{IMP-eq9}
\end{equation}
where the equality follows from that $K_0'(\alpha_0)>K_1'(\alpha_1)\geq K_1'(\alpha_0)$. By the definition of $m_1'$, \eqref{IMP-eq9} is naturally satisfied. 

Next, we consider the \eqref{IMP-eq2b} constraint. We have that $(m_0',m_1')$ satisfies \eqref{IMP-eq2b} if and only if
\begin{equation}
\max_{\alpha\in[\underline{c},\alpha_1]}\big[\big(\frac{K_1'(\alpha_1)\cdot \alpha_1+m_1'}{\alpha_1}\big)\cdot \alpha- K_0(\alpha)\big] \leq K_0'(\alpha_0)\cdot \alpha_0 -K_0(\alpha_0).
\label{IMP-eq10}
\end{equation}

We prove that \eqref{IMP-eq10} is satisfied for the following two cases based on the value of $\alpha_0$\\
\\
\textbf{Case 2a:} $\alpha_0 > \underline{c}$.

In this case, we have that $\alpha_1 \geq  \alpha_0> \underline{c}$. We can show that \eqref{IMP-eq10} holds if and only if
\begin{equation}
\frac{K_1'(\alpha_1)\cdot\alpha_1 + m_1'}{\alpha_1} \leq K_0'(\alpha_0) \Longleftrightarrow m_1'\leq \big[K_0'(\alpha_0) - K_1'(\alpha_1)\big]\cdot \alpha_1. 
\label{IMP-eq11}
\end{equation}

The sufficiency is straightforward. The necessity follows from that if $\frac{K_1'(\alpha_1)\cdot\alpha_1 + m_1'}{\alpha_1} > K_0'(\alpha_0)$, then we have that $\max_{\alpha\in[\underline{c},\alpha_1]}\big[\big(\frac{K_1'(\alpha_1)\cdot \alpha_1+m_1'}{\alpha_1}\big)\cdot \alpha- K_0(\alpha)\big] \geq \big(\frac{K_1'(\alpha_1)\cdot \alpha_1+m_1'}{\alpha_1}\big)\cdot \alpha_0- K_0(\alpha_0) > K_0'(\alpha_0)\cdot \alpha_0 - K_0(\alpha_0)$, thus violating \eqref{IMP-eq10}. 

From the assumption that $K_0'(\alpha_0) > K_1'(\alpha_1)$, it holds that $\big[K_0'(\alpha_0)-K_1'(\alpha_1)\big]\cdot \alpha_1\geq 0$. We then compare $\big[K_0'(\alpha_0)-K_1'(\alpha_1)\big]\cdot \alpha_1\geq 0$ with $\big[K_0'(\alpha_0)\cdot \alpha_0 - K_1(\alpha_0)\big] - \big[K_1'(\alpha_1)\cdot\alpha_1 - K_1(\alpha_1)\big]$. We have that
\begin{equation*}
\begin{aligned}
&\big[K_0'(\alpha_0)\cdot \alpha_0 - K_1(\alpha_0)\big] - \big[K_1'(\alpha_1)\cdot\alpha_1 - K_1(\alpha_1)\big]\\
=&\big[K_0'(\alpha_0) - K_1'(\alpha_1)\big]\cdot \alpha_0 + \big[K_1(\alpha_1) - K_1(\alpha_0)\big]-K_1'(\alpha_1)\cdot\big[\alpha_1 - \alpha_0\big]\\
=&\big[K_0'(\alpha_0) - K_1'(\alpha_1)\big]\cdot \alpha_0 + \int_{\alpha_0}^{\alpha_1}\big[K_1'(\alpha)- K_1'(\alpha_1)\big]d\alpha\\
\leq& \big[K_0'(\alpha_0) - K_1'(\alpha_1)\big]\cdot \alpha_0 \leq  \big[K_0'(\alpha_0) - K_1'(\alpha_1)\big]\cdot \alpha_1,
\end{aligned}
\end{equation*}
where the last line follows from $K_1'(\alpha)\leq K_1'(\alpha_1)$ for all $\alpha\in[\alpha_0,\alpha_1]$ and from $\alpha_0\leq \alpha_1$. It implies that $m_1' = \max\Big\{0, \big[K_0'(\alpha_0)\cdot \alpha_0 - K_1(\alpha_0)\big] - \big[K_1'(\alpha_1)\cdot \alpha_1 - K_1(\alpha_1)\big]\Big\} \leq [K_0'(\alpha_0) - K_1'(\alpha_1)\big]\cdot \alpha_1$. Hence, we've shown that \eqref{IMP-eq11}, as well as \eqref{IMP-eq10}, is satisfied. \\
\\
\textbf{Case 2b:} $\alpha_0 = \underline{c}$. 

In this case, we have that $m_1'= 0$. If $\alpha_1 > \underline{c}$, the LHS of \eqref{IMP-eq10} is maximized by taking $\alpha = \underline{c}$, which follows from that $K_1'(\alpha_1)<K_0'(\alpha_0) = K_0'(\underline{c})$, and is thus equal to 0. If $\alpha_1 = \underline{c}$, the LHS of \eqref{IMP-eq10} is also equal to 0. Since $K_0'(\alpha_0)\cdot \alpha_0 - K_0(\alpha_0)=0$, \eqref{IMP-eq10} is satisfied. \\
\\
In conclusion, $(m_0',m_1')$ is feasible to problem \eqref{IMP-eq2a}. Using the same reasoning as in Case 1, we can show that $(m_0',m_1')$ provides a minimized expected payment among all $(m_0,m_1)$ feasible to \eqref{IMP-eq2a}. $\hfill\square$
\subsection{Implementation of Output Distributions}
\label{APP-IMP-OD}
We first provide the exact expression of the function $\check{\alpha}_0(\cdot)$ that appears in Theorem \ref{IMP-pp2} and then present the proof of the theorem. Given a pair of aggregate efforts $(\alpha_0,\alpha_1)$ that satisfies $\alpha_0\leq \alpha_1$ and $K_0'(\alpha_0) > K_1'(\alpha_1)$. Define
\begin{equation}
\alpha_0^1: =\left\{
\begin{array}{cl}
  k_0^{-1}\big(K_1'(\alpha_1)\big)   &\text{ if }K_1'(\alpha_1)\geq K_0'(\underline{c}), \\
  \underline{c}   &\text{ if } K_1'(\alpha_1)< K_0'(\underline{c}),
\end{array}
\right.
\label{IMP-eq12}
\end{equation}
and
\begin{equation}
\alpha_1^0: =\left\{
\begin{array}{cl}
  k_1^{-1}\big(K_0'(\alpha_0)\big)   &\text{ if }K_0'(\alpha_0)\leq K_1'(\overline{c}), \\
  \overline{c}   &\text{ if } K_0'(\alpha_0)> K_1'(\overline{c}).
\end{array}
\right.
\label{IMP-eq13}
\end{equation}

Next, we define a cutoff $\check{\alpha}_1 > \alpha_0^1$ as follows
\begin{equation}
\begin{aligned}
    \check{\alpha}_1 := \inf\{\underline{\alpha}_1\in[\alpha_0^1,\overline{c}]\mid \big[K_0'(\alpha_0)\cdot \alpha_0 - K_0(\alpha_0)\big]-\big[K_1'(\alpha_1)\cdot \underline{\alpha}_1 - K_0(\underline{\alpha}_1)\big] \atop
    \qquad\qquad\qquad\qquad\qquad\quad\quad\geq\big[K_0'(\alpha_0)\cdot \alpha_1^0 - K_1(\alpha_1^0)\big]- \big[K_1'(\alpha_1)\cdot \alpha_1-K_1(\alpha_1)\big] \}.
\end{aligned}
\label{IMP-eq14}
\end{equation}

If the set on the RHS of \eqref{IMP-eq14} is empty, then $\check{\alpha}_1$ is defined as $+\infty$. Otherwise, we can show that $\check{\alpha}_1 > \alpha_0^1$ based on Lemma \ref{APP-IMP-lm1}.
\begin{lemma}
Take a pair of aggregate efforts $(\alpha_0,\alpha_1)$ satisfying  $\alpha_0\leq \alpha_1$ and $K_0'(\alpha_0) > K_1'(\alpha_1)$. Take $\alpha_0^1$, $\alpha_1^0$ as stated in \eqref{IMP-eq12} and \eqref{IMP-eq13}. It holds that
\begin{equation}
\begin{aligned}
&\big[K_0'(\alpha_0)\cdot \alpha_0 - K_0(\alpha_0)\big]-\big[K_1'(\alpha_1)\cdot \alpha_0^1 - K_0(\alpha_0^1)\big]\\
<&\big[K_0'(\alpha_0)\cdot \alpha_1^0 - K_1(\alpha_1^0)\big]- \big[K_1'(\alpha_1)\cdot \alpha_1-K_1(\alpha_1)\big]. 
\end{aligned}
\label{IMP-eq15}
\end{equation}
\label{APP-IMP-lm1}
\end{lemma}
\begin{proof}
According to definition \eqref{IMP-eq12}, we have that $\alpha_0^1 \leq\alpha_0$ and $K_0'(\alpha)\geq K_1'(\alpha_1)$ for all $\alpha \in [\alpha_0^1,\alpha_0]$. Similarly, according to definition \eqref{IMP-eq13}, we have that $\alpha_1^0 \geq \alpha_1$ and $K_1'(\alpha)\leq K_0'(\alpha_0)$ for all $\alpha \in [\alpha_1, \alpha_1^0]$. The LHS of \eqref{IMP-eq15} satisfies that
\begin{equation}
\begin{aligned}
&\big[K_0'(\alpha_0)\cdot \alpha_0 - K_0(\alpha_0)\big]-\big[K_1'(\alpha_1)\cdot \alpha_0^1 - K_0(\alpha_0^1)\big]\\ \leq &\big[K_0'(\alpha_0)\cdot \alpha_0 - K_0(\alpha_0)\big] - \big[K_1'(\alpha_1)\cdot \alpha_0 - K_0(\alpha_0)\big] = \big[K_0'(\alpha_0)-K_1'(\alpha_1)\big]\cdot \alpha_0,
\end{aligned}
\label{IMP-eq16}
\end{equation}
where the inequality follows from that $K_1'(\alpha_1)\cdot \alpha_0^1 - K_0(\alpha_0^1) \geq K_1'(\alpha_1)\cdot \alpha_0 - K_0(\alpha_0)$ and is strict if $\alpha_0^1 < \alpha_0$. 

Consider the RHS of \eqref{IMP-eq15}, we have 
\begin{equation}
\begin{aligned}
&\big[K_0'(\alpha_0)\cdot \alpha_1^0 - K_1(\alpha_1^0)\big]- \big[K_1'(\alpha_1)\cdot \alpha_1-K_1(\alpha_1)\big] \\
\geq & \big[ K_0'(\alpha_0)\cdot \alpha_1 - K_1(\alpha_1)\big] - \big[K_1'(\alpha_1)\cdot \alpha_1-K_1(\alpha_1)\big]= \big[K_0'(\alpha_0) - K_1'(\alpha_1)\big]\cdot \alpha_1,
\end{aligned}
\label{IMP-eq17}
\end{equation}
where the inequality follows from that $K_0'(\alpha_0)\cdot \alpha_1^0 - K_1(\alpha_1^0)\geq  K_0'(\alpha_0)\cdot \alpha_1 - K_1(\alpha_1)$ and is strict if $\alpha_1^0 > \alpha_1$. 

Since $\big[K_0'(\alpha_0) - K_1'(\alpha_1)\big]\cdot \alpha_1 \geq \big[K_0'(\alpha_0)-K_1'(\alpha_1)\big]\cdot \alpha_0$, combining \eqref{IMP-eq16} and \eqref{IMP-eq17}, we have that 
\begin{equation*}
\begin{aligned}
&\big[K_0'(\alpha_0)\cdot \alpha_0 - K_0(\alpha_0)\big]-\big[K_1'(\alpha_1)\cdot \alpha_0^1 - K_0(\alpha_0^1)\big]\leq \big[K_0'(\alpha_0)- K_1'(\alpha_1)\big]\cdot \alpha_0\\
\leq& \big[K_0'(\alpha_0) - K_1'(\alpha_1)\big]\cdot \alpha_1 \leq \big[K_0'(\alpha_0)\cdot \alpha_1^0 - K_1(\alpha_1^0)\big]- \big[K_1'(\alpha_1)\cdot \alpha_1-K_1(\alpha_1)\big].
\end{aligned}
\end{equation*}

Suppose we have $\big[K_0'(\alpha_0)\cdot \alpha_0 - K_0(\alpha_0)\big]-\big[K_1'(\alpha_1)\cdot \alpha_0^1 - K_0(\alpha_0^1)\big] = \big[K_0'(\alpha_0)\cdot \alpha_1^0 - K_1(\alpha_1^0)\big]- \big[K_1'(\alpha_1)\cdot \alpha_1-K_1(\alpha_1)\big]$, it can only happen if all the inequalities in the above expressions hold with equality. It implies that $\alpha_0^1 = \alpha_0$, which only appears when $\alpha_0 = \underline{c}$. Similarly, it also implies that $\alpha_1 = \overline{c}$. However, then it holds that $\big[K_0'(\alpha_0) - K_1'(\alpha_1)]\cdot \alpha_1 > \big[K_0'(\alpha_0) - K_1'(\alpha_1)\big]\cdot \alpha_0$, which leads to a contradiction. Hence, we have shown \eqref{IMP-eq15}. 
\end{proof}
\noindent\textbf{Definition of $\check{\alpha}_0(\cdot)$:} The function $\check{\alpha}_0:[\underline{c},\overline{c}]\to [\underline{c},\overline{c}]$ takes the following form:
\begin{enumerate}
    \item For $\underline{\alpha}_1 < \alpha_0^1$, define $\check{\alpha}_0(\underline{\alpha}_1):= \check{\alpha}_0$ where $\check{\alpha}_0\in (\underline{c}, \alpha_1^0)$ is pinned down by\footnote{The existence of $\check{\alpha}_0\in (\underline{c},\alpha_1^0)$ in the first case and the existence of $\check{\alpha}_0(\underline{\alpha}_1)\in [\check{\alpha}_0,\alpha_1^0)$ in the second case are both guaranteed by Lemma \ref{APP-IMP-lm1}.}
    \begin{equation*}
    \begin{aligned}
        &\big[K_0'(\alpha_0)\cdot \check{\alpha}_0 - K_1(\check{\alpha}_0)\big] - \big[K_1'(\alpha_1)\cdot \alpha_1 - K_1(\alpha_1)\big]\\
        =& \big[K_0'(\alpha_0)\cdot \alpha_0 - K_0(\alpha_0)\big]-\big[K_1'(\alpha_1)\cdot \alpha_0^1 - K_0(\alpha_0^1)\big].
    \end{aligned}
    \end{equation*}
    \item For $ \alpha_0^1 \leq \underline{\alpha}_1 < \check{\alpha}_1$, $\check{\alpha}_0(\underline{\alpha}_1)\in [\check{\alpha}_0,\alpha_1^0)$ is pinned down by
    \begin{equation*}
    \begin{aligned}
        &\big[K_0'(\alpha_0)\cdot \check{\alpha}_0(\underline{\alpha}_1) - K_1\big(\check{\alpha}_0(\underline{\alpha}_1)\big)\big] - \big[K_1'(\alpha_1)\cdot \alpha_1 - K_1(\alpha_1)\big]\\
        =& \big[K_0'(\alpha_0)\cdot \alpha_0 - K_0(\alpha_0)\big]-\big[K_1'(\alpha_1)\cdot \underline{\alpha}_1 - K_0(\underline{\alpha}_1)\big].
    \end{aligned}
    \end{equation*}
    \item For $\underline{\alpha}_1\geq \check{\alpha}_1$, define $\check{\alpha}_0(\underline{\alpha}_1):= \overline{c}$.
\end{enumerate}

It is easy to verify that $\check{\alpha}_0(\cdot)$ is increasing on $[\underline{c},\overline{c}]$. For $\underline{\alpha}_1\in [\alpha_0^1,\check{\alpha}_1)$, it holds that $\frac{d\check{\alpha}_0(\underline{\alpha}_1)}{d\underline{\alpha}_1} = \frac{K_0'(\underline{\alpha}_1)-K_1'(\alpha_1)}{K_0'(\alpha_0)- K_1'\big(\check{\alpha}_0(\underline{\alpha}_1)\big)}$, from which we have that $\check{\alpha}_0(\cdot)$ is convex on $[\alpha_0^1,\check{\alpha}_1)$. \\
\\
\textbf{Proof of Theorem \ref{IMP-pp2}:} Following the same reasoning as in Lemma \ref{IMP-lm1}, a pair of output distributions $\mu = (\mu_0,\mu_1)\in \Delta([\underline{x},\overline{x}])^2$ with aggregate efforts $a = (\alpha_0,\alpha_1)$ is implementable if and only if there exists a feasible menu $(a,m, R',\mu)\in \mathcal{F}$ where $R': = (\text{supp}(\mu_0),\text{supp}(\mu_1))$. It is equivalent to examining whether there exist constant payments $m_0,m_1 \geq 0$ such that the menu $(a,m,R',\mu)$ satisfies the IC constraints \eqref{MBC-IC0} and \eqref{MBC-IC1}. Define $\underline{\alpha}_t := c\big(\min\text{supp}(\mu_t)\big)$ and $\overline{\alpha}_t:= c\big(\max\text{supp}(\mu_t)\big)$. By using \eqref{MBC-conv}, $(a,m,R',\mu)$ satisfies \eqref{MBC-IC0} and \eqref{MBC-IC1} if and only if $(m_0,m_1)$ satisfies that\footnote{The first argument of the max operator on the RHS of \eqref{IMP-eq18a} is equal to $m_1$ if $\underline{\alpha}_1 = \underline{c}$. Similarly, the first argument of the max operator on the RHS of \eqref{IMP-eq18b} is equal to $m_0$ if $\underline{\alpha}_0 = \underline{c}$.} :
\begin{subequations}
\begin{align}
    K_0'(\alpha_0)\cdot \alpha_0 - K_0(\alpha_0) + m_0 \geq \max\Big\{
    \max_{\alpha\in [\underline{c},\underline{\alpha}_1]}\big[\big(\frac{K_1'(\alpha_1)\cdot \underline{\alpha}_1+m_1}{\underline{\alpha}_1}\cdot \alpha - K_0(\alpha)\big)\big],\atop
    \qquad\qquad \qquad \qquad \qquad \qquad\qquad\qquad \max_{\alpha\in [\underline{\alpha}_1,\overline{\alpha}_1]}\big[K_1'(\alpha_1)\cdot \alpha - K_0(\alpha)+m_1\big]
    \Big\},\label{IMP-eq18a}\\
     K_1'(\alpha_1)\cdot \alpha_1 - K_1(\alpha_1) + m_1 \geq \max\Big\{
    \max_{\alpha\in [\underline{c},\underline{\alpha}_0]}\big[\big(\frac{K_0'(\alpha_0)\cdot \underline{\alpha}_0+m_0}{\underline{\alpha}_0}\cdot \alpha - K_1(\alpha)\big)\big],\atop
    \qquad\qquad \qquad \qquad \qquad \qquad\qquad\qquad \max_{\alpha\in [\underline{\alpha}_0,\overline{\alpha}_0]}\big[K_0'(\alpha_0)\cdot \alpha - K_1(\alpha)+m_0\big]
    \Big\}. \label{IMP-eq18b}
\end{align}
\end{subequations}

Under Assumption \ref{OM-as1}, we have that $K_0'(\alpha_0)\geq K_1'(\alpha_0)\geq K_1'(\underline{\alpha}_0)$, which implies that the RHS of \eqref{IMP-eq18b} is equal to $\max_{\alpha\in [\underline{\alpha}_0,\overline{\alpha}_0]}\big[K_0'(\alpha_0)\cdot \alpha - K_1(\alpha)+m_0\big]$. Hence, \eqref{IMP-eq18a} can be written \eqref{IMP-eq19a} and \eqref{IMP-eq19b}, and \eqref{IMP-eq18b} as \eqref{IMP-eq19c}. 
\begin{subequations}
\begin{align}
K_0'(\alpha_0)\cdot \alpha_0 - K_0(\alpha_0) + m_0 &\geq \max_{\alpha\in [\underline{c},\underline{\alpha}_1]}\big[\big(\frac{K_1'(\alpha_1)\cdot \underline{\alpha}_1+m_1}{\underline{\alpha}_1}\big)\cdot \alpha - K_0(\alpha)\big],\label{IMP-eq19a}\\
K_0'(\alpha_0)\cdot \alpha_0 - K_0(\alpha_0) + m_0&\geq \max_{\alpha\in [\underline{\alpha}_1,\overline{\alpha}_1]}\big[K_1'(\alpha_1)\cdot \alpha - K_0(\alpha)+m_1\big], \label{IMP-eq19b}\\
K_1'(\alpha_1)\cdot \alpha_1 - K_1(\alpha_1) + m_1 &\geq \max_{\alpha\in [\underline{\alpha}_0,\overline{\alpha}_0]}\big[K_0'(\alpha_0)\cdot \alpha - K_1(\alpha)+m_0\big]. \label{IMP-eq19c}
\end{align}
\end{subequations}

We prove the statement for the following two cases. \\
\\
\textbf{Case 1: } $\alpha_0\leq \alpha_1$ and  $K_0'(\alpha_0)\leq K_1'(\alpha_1)$. 

In this case, we show that $(\mu_0,\mu_1)$ is always implementable. Specifically, we can show that $(\mu_0,\mu_1)$ can be implemented by the following constant payments $(m_0',m_1')$:
\begin{equation*}
\begin{aligned}
    m_0' :&= \big[K_1'(\alpha_1)\cdot k_0^{-1}\big(K_1'(\alpha_1)\big)-K_0\big(k_0^{-1}\big(K_1'(\alpha_1)\big)\big)\big]- \big[K_0'(\alpha_0)\cdot \alpha_0 - K_0(\alpha_0)\big].\\
    m_1' :&= 0
\end{aligned}
\end{equation*}

It suffices to verify that $(m_0',m_1')$ satisfies \eqref{IMP-eq19a} -- \eqref{IMP-eq19c}. Define $\alpha_0^1$ and $\alpha_1^0$ as in \eqref{IMP-eq12} and \eqref{IMP-eq13}. On one hand, we have that 
\begin{equation*}
\begin{aligned}
K_0'(\alpha_0)\cdot \alpha_0 - K_0(\alpha_0) + m_0' &=   \big[K_1'(\alpha_1)\cdot \alpha_0^1-K_0(\alpha_0^1)\big] \\
&= \max_{\alpha\in [\underline{c},\overline{c}]}\big[K_1'(\alpha_1)\cdot \alpha - K_0(\alpha)\big].
\end{aligned}
\end{equation*}

It directly implies that $(m_0',m_1')$ satisfies \eqref{IMP-eq19a} and \eqref{IMP-eq19b}. 

On the other hand, we have that
\begin{equation*}
\begin{aligned}
m_0' &= \big[K_1'(\alpha_1)\cdot \alpha_0^1-K_0(\alpha_0^1)\big]- \big[K_0'(\alpha_0)\cdot \alpha_0 - K_0(\alpha_0)\big] \\
&= \int_{K_0'(\alpha_0)}^{K_1'(\alpha_1)}k_0^{-1}(k)dk \leq \int_{K_0'(\alpha_0)}^{K_1'(\alpha_1)}k_1^{-1}(k)dk\\
&= \big[K_1'(\alpha_1)\cdot \alpha_1 - K_1(\alpha_1)\big]- \big[K_0'(\alpha_0)\cdot \alpha_1^0 - K_1(\alpha_1^0)\big] \\
&\leq \big[K_1'(\alpha_1)\cdot \alpha_1 - K_1(\alpha_1)\big] -\max_{\alpha\in [\underline{\alpha}_0,\overline{\alpha}_0]}\big[K_0'(\alpha_0)\cdot\alpha - K_1(\alpha)\big],
\end{aligned}
\end{equation*}
where the inequality follows from that $K_0'(\alpha_0)\cdot \alpha_1^0 - K_1(\alpha_1^0) = \max_{\alpha\in[\underline{c},\overline{c}]}\big[K_0'(\alpha_0)\cdot\alpha - K_1(\alpha)\big] \geq \max_{\alpha\in [\underline{\alpha}_0,\overline{\alpha}_0]}\big[K_0'(\alpha_0)\cdot\alpha - K_1(\alpha)\big]$. Hence, $(m_0',m_1')$ also satisfies \eqref{IMP-eq19c}. 

In conclusion, $(\mu_0,\mu_1)$ can be implemented. Take $m':=(m_0',m_1')$. Following the similar idea as in Lemma \ref{IMP-lm1}, $(a,m',R',\mu)$ is the payment-minimizing menu among all the feasible menus that implement $(\mu_0,\mu_1)$. \\
\\
\textbf{Case 2:} $\alpha_0\leq \alpha_1$ and $K_0'(\alpha_0) > K_1'(\alpha_1)$. 

First, we show that, in this case, a necessary condition for $(\mu_0,\mu_1)$ to be implementable is $\overline{\alpha}_0\leq \check{\alpha}_0(\underline{\alpha}_1)$. \\
\\
\textit{Necessity:} Suppose $(\mu_0,\mu_1)$ is implementable, there exists $(m_0,m_1)\in \mathbb{R}_{+}^2$ satisfying \eqref{IMP-eq19b} and \eqref{IMP-eq19c}. Define $\alpha_0^1$ and $\alpha_1^0$ as in \eqref{IMP-eq12} and \eqref{IMP-eq13}.  

When $\underline{\alpha}_1 \geq \alpha_0^1$, by $K_1'(\alpha_1) \leq K_0'(\alpha_0^1)\leq K_0'(\alpha)$ for all $\alpha \in [\underline{\alpha}_1,\overline{\alpha}_1]$, we have that \eqref{IMP-eq19b} is equivalent to 
\begin{equation}
    m_1- m_0 \leq \big[K_0'(\alpha_0)\cdot \alpha_0 - K_0(\alpha_0)\big] - \big[K_1'(\alpha_1)\cdot \underline{\alpha}_1-K_0(\underline{\alpha}_1)\big].
\label{IMP-eq20}
\end{equation}

When $\underline{\alpha}_1 < \alpha_0^1$, it implies that $\alpha_0^1 > \underline{c}$ and $K_0'(\alpha_0^1) = K_1'(\alpha_1)$. From $\underline{\alpha}_1 < \alpha_0^1 \leq \alpha_0\leq \alpha_1$, we have that \eqref{IMP-eq19b} is equivalent to 
\begin{equation}
m_1- m_0 \leq \big[K_0'(\alpha_0)\cdot \alpha_0 - K_0(\alpha_0)\big] - \big[K_1'(\alpha_1)\cdot \alpha_0^1-K_0(\alpha_0^1)\big].
\label{IMP-eq21}
\end{equation}

When $ \overline{\alpha}_0 \leq \alpha_1^0$, by $K_1'(\alpha)\leq K_1'(\alpha_1^0) \leq K_0'(\alpha_0)$ for all $\alpha \in [\underline{\alpha}_0,\overline{\alpha}_0]$, we have that \eqref{IMP-eq19c} is equivalent to 
\begin{equation}
    m_1 - m_0 \geq \big[K_0'(\alpha_0)\cdot \overline{\alpha}_0 - K_1(\overline{\alpha}_0)\big] -\big[K_1'(\alpha_1)\cdot \alpha_1 - K_1(\alpha_1)\big].
\label{IMP-eq22}
\end{equation}

When $\overline{\alpha}_0 > \alpha_1^0$, it implies that $\alpha_1^0 < \overline{c}$ and $K_1'(\alpha_1^0) = K_0'(\alpha_0)$. From $\alpha_0\leq \alpha_1 \leq \alpha_1^0 < \overline{\alpha}_0$, we have that \eqref{IMP-eq19c} is equivalent to 
\begin{equation}
    m_1 - m_0 \geq \big[K_0'(\alpha_0)\cdot \alpha_1^0 - K_1(\alpha_1^0)\big] -\big[K_1'(\alpha_1)\cdot \alpha_1 - K_1(\alpha_1)\big]. 
\label{IMP-eq23}
\end{equation}

We consider the following four cases based on the values of $\overline{\alpha}_0$ and $\underline{\alpha}_1$.\\
\\
\textbf{Case 2a:} $\overline{\alpha}_0 > \alpha_1^0$ and $\underline{\alpha}_1 < \alpha_0^1$.

In this case, $(\mu_0,\mu_1)$ is implementable only if $\big[K_0'(\alpha_0)\cdot \alpha_1^0 - K_1(\alpha_1^0)\big] -\big[K_1'(\alpha_1)\cdot \alpha_1 - K_1(\alpha_1)\big] \leq m_1-m_0 \leq \big[K_0'(\alpha_0)\cdot \alpha_0 - K_0(\alpha_0)\big] - \big[K_1'(\alpha_1)\cdot \alpha_0^1-K_0(\alpha_0^1)\big]$. According to Lemma \ref{APP-IMP-lm1}, this is impossible. \\
\\
\textbf{Case 2b:} $\overline{\alpha}_0 > \alpha_1^0$ and $\underline{\alpha}_1 \geq \alpha_0^1$.

In this case, $(\mu_0,\mu_1)$ is implementable only if $\big[K_0'(\alpha_0)\cdot \alpha_1^0 - K_1(\alpha_1^0)\big] -\big[K_1'(\alpha_1)\cdot \alpha_1 - K_1(\alpha_1)\big]\leq m_1-m_0 \leq \big[K_0'(\alpha_0)\cdot \alpha_0 - K_0(\alpha_0)\big] - \big[K_1'(\alpha_1)\cdot \underline{\alpha}_1-K_0(\underline{\alpha}_1)\big]$. According to the definition \eqref{IMP-eq14}, this is possible only when $\underline{\alpha}_1 \geq \check{\alpha}_1$. \\
\\
\textbf{Case 2c:} $\overline{\alpha}_0 \leq \alpha_1^0$ and $\underline{\alpha}_1 <  \alpha_0^1$.

In this case, $(\mu_0,\mu_1)$ is implementable only if  $\big[K_0'(\alpha_0)\cdot \overline{\alpha}_0 - K_1(\overline{\alpha}_0)\big] -\big[K_1'(\alpha_1)\cdot \alpha_1 - K_1(\alpha_1)\big]\leq m_1-m_0\leq \big[K_0'(\alpha_0)\cdot \alpha_0 - K_0(\alpha_0)\big] - \big[K_1'(\alpha_1)\cdot \alpha_0^1-K_0(\alpha_0^1)\big]$. According to the definition of $\check{\alpha}_0$, it is possible only when $\underline{\alpha}_0 \leq \check{\alpha}_0$. \\
\\
\textbf{Case 2d:} $\overline{\alpha}_0 \leq \alpha_1^0$ and $\underline{\alpha}_1 \geq \alpha_0^1$. 

In this case, $(\mu_0,\mu_1)$ is implementable only if $\big[K_0'(\alpha_0)\cdot \overline{\alpha}_0 - K_1(\overline{\alpha}_0)\big] -\big[K_1'(\alpha_1)\cdot \alpha_1 - K_1(\alpha_1)\big]\leq m_1-m_0\leq \big[K_0'(\alpha_0)\cdot \alpha_0 - K_0(\alpha_0)\big] - \big[K_1'(\alpha_1)\cdot \underline{\alpha}_1-K_0(\underline{\alpha}_1)\big]$. If $\underline{\alpha}_1 \geq \check{\alpha}_1$, this is possible for any $\overline{\alpha}_0 \leq \alpha_1^0$. If $\alpha_0^1\leq \underline{\alpha}_1 < \check{\alpha}_1$, according to the definition of $\check{\alpha}_0$, $(\mu_0,\mu_1)$ is implementable only when $\overline{\alpha}_0\leq \check{\alpha}_0(\underline{\alpha}_1)$.\\
\\
In conclusion, within Case 2, $(\mu_0,\mu_1)$ is implementable only if $(\underline{\alpha}_1,\overline{\alpha}_0)$ satisfies one of the following: \circled{1} $\underline{\alpha}_1 < \alpha_0^1$ and $\overline{\alpha}_0\leq \check{\alpha}_0$; \circled{2} $\underline{\alpha}_1 \in [\alpha_0^1, \check{\alpha}_1)$ and $\overline{\alpha}_0 \leq \check{\alpha}_0(\underline{\alpha}_1)$; \circled{3} $\underline{\alpha}_1\geq \check{\alpha}_1$ and $\overline{\alpha}_0\leq \overline{c}$. Hence, we have shown the necessity of $\overline{\alpha}_0 \leq \check{\alpha}_0(\underline{\alpha}_1)$. \\
\\
\textit{Sufficiency:} We aim to show that, within Case 2, $(\mu_0,\mu_1)$ is implementable as long as $(\underline{\alpha}_1,\overline{\alpha}_0)$ satisfies one of the following: \circled{1} $\underline{\alpha}_1 < \alpha_0^1$ and $\overline{\alpha}_0\leq \check{\alpha}_0$; \circled{2} $\underline{\alpha}_1 \in [\alpha_0^1, \check{\alpha}_1)$ and $\overline{\alpha}_0 \leq \check{\alpha}_0(\underline{\alpha}_1)$; \circled{3} $\underline{\alpha}_1\geq \check{\alpha}_1$ and $\overline{\alpha}_0\leq \overline{c}$. Given any $(\underline{\alpha}_1,\overline{\alpha}_0)$ in cases \circled{1}, \circled{2} or \circled{3}, we construct constant payments $(m_0,m_1)$ meeting \eqref{IMP-eq19a} -- \eqref{IMP-eq19c}. 

In case of \circled{1}, take
\begin{equation*}
    \tilde{m}_1 := \big[K_0'(\alpha_0)\cdot \alpha_0 - K_0(\alpha_0)\big] - \big[K_1'(\alpha_1)\cdot \alpha_0^1-K_0(\alpha_0^1)\big].
\end{equation*}

In case of \circled{2} and \circled{3}, take
\begin{equation*}
    \tilde{m}_1:=\big[K_0'(\alpha_0)\cdot \alpha_0 - K_0(\alpha_0)\big] - \big[K_1'(\alpha_1)\cdot \underline{\alpha}_1-K_0(\underline{\alpha}_1)\big].
\end{equation*}

We have that $\tilde{m}_1 \geq \big[K_0'(\alpha_0)\cdot \alpha_0 - K_0(\alpha_0)\big] - \big[K_1'(\alpha_1)\cdot \alpha_0^1-K_0(\alpha_0^1)\big] \geq 0$ and 
\begin{equation}
\tilde{m}_1  \leq \big[K_0'(\alpha_0)\cdot \alpha_0 - K_0(\alpha_0)\big] - \big[K_1'(\alpha_1)\cdot \underline{\alpha}_1-K_0(\underline{\alpha}_1)\big],
\label{IMP-eq24}
\end{equation}
which follows from that $K_1'(\alpha_1)\cdot \alpha_0^1 - K_0(\alpha_0^1) \geq  K_1'(\alpha_1)\cdot \alpha - K_0(\alpha)$ for all $\alpha \in [\underline{c},\overline{c}]$. 

If $\tilde{m}_1$ satisfies that $K_0'(\alpha_0)\cdot \alpha_0- K_0(\alpha_0)\geq\max_{\alpha \in [\underline{c},\underline{\alpha}_1]}\big[\big(\frac{K_1'(\alpha_1)\cdot \underline{\alpha}_1 + \tilde{m}_1}{\underline{\alpha}_1}\big)\cdot \alpha - K_0(\alpha)\big]$, let $m_0 := 0$ and $m_1:= \tilde{m}_1$. Then, $(m_0,m_1)$ satisfies \eqref{IMP-eq19a} -- \eqref{IMP-eq19c}, and implements $(\mu_0,\mu_1)$. 

Suppose $K_0'(\alpha_0)\cdot \alpha_0- K_0(\alpha_0)<\max_{\alpha \in [\underline{c},\underline{\alpha}_1]}\big[\big(\frac{K_1'(\alpha_1)\cdot \underline{\alpha}_1 + \tilde{m}_1}{\underline{\alpha}_1}\big)\cdot \alpha - K_0(\alpha)\big]$. It must hold that $\underline{\alpha}_1 > \underline{c}$. Furthermore, we can show that $\frac{K_1'(\alpha_1)\cdot \underline{\alpha}_1 + \tilde{m}_1}{\underline{\alpha}_1}< K_0'(\underline{\alpha}_1)$. This is because if $\frac{K_1'(\alpha_1)\cdot \underline{\alpha}_1 + \tilde{m}_1}{\underline{\alpha}_1} \geq K_0'(\underline{\alpha}_1)$, according to our assumption and \eqref{IMP-eq24}, we have that
\begin{equation*}
\begin{aligned}
K_0'(\alpha_0)\cdot \alpha_0- K_0(\alpha_0)&<\max_{\alpha \in [\underline{c},\underline{\alpha}_1]}\big[\big(\frac{K_1'(\alpha_1)\cdot \underline{\alpha}_1 + \tilde{m}_1}{\underline{\alpha}_1}\big)\cdot \alpha - K_0(\alpha)\big]\\
&= K_1'(\alpha_1)\cdot \underline{\alpha}_1 - K_0(\underline{\alpha}_1)+ \tilde{m}_1 \\
&\leq K_0'(\alpha_0)\cdot \alpha_0 - K_0(\alpha_0),
\end{aligned}
\end{equation*}
which is impossible. It implies that $K_1'(\alpha_1)<K_0'(\underline{\alpha}_1)$, and we are in either case \circled{2} or case \circled{3}. We have that $\tilde{m}_1 =\big[K_0'(\alpha_0)\cdot \alpha_0 - K_0(\alpha_0)\big] - \big[K_1'(\alpha_1)\cdot \underline{\alpha}_1-K_0(\underline{\alpha}_1)\big]$ as defined. 

Define $\Delta m_1$ as follows:
\begin{equation*}
\Delta m_1 := \big[K_0'(\underline{\alpha}_1) - K_1'(\alpha_1)\big]\cdot \underline{\alpha}_1 - \tilde{m}_1 \Longrightarrow \frac{K_1'(\alpha_1)\cdot \underline{\alpha}_1+\tilde{m}_1+\Delta m_1}{\underline{\alpha}_1} = K_0'(\underline{\alpha}_1). 
\end{equation*}

We have that $\Delta m_1 >0$ by $\frac{K_1'(\alpha_1)\cdot \underline{\alpha}_1 + \tilde{m}_1}{\underline{\alpha}_1}< K_0'(\underline{\alpha}_1)$. By the definition of $\tilde{m}_1$, we have that
\begin{equation}
\begin{aligned}
\Delta m_1  &= \big[K_0'(\underline{\alpha}_1) - K_1'(\alpha_1)\big]\cdot \underline{\alpha}_1 - \big[K_0'(\alpha_0)\cdot \alpha_0 - K_0(\alpha_0)\big] + \big[K_1'(\alpha_1)\cdot \underline{\alpha}_1-K_0(\underline{\alpha}_1)\big]  \\
&= \big[ K_0'(\underline{\alpha}_1)\cdot \underline{\alpha}_1- K_0(\underline{\alpha}_1)\big] - \big[K_0'(\alpha_0)\cdot \alpha_0 - K_0(\alpha_0)\big]\\
&= \max_{\alpha\in [\underline{c}, \underline{\alpha}_1]}\big[\big(\frac{K_1'(\alpha_1)\cdot \underline{\alpha}_1 + \tilde{m}_1+\Delta m_1}{\underline{\alpha}_1}\big)\cdot \alpha - K_0(\alpha)\big] - \big[K_0'(\alpha_0)\cdot \alpha_0 - K_0(\alpha_0)\big].
\end{aligned}
\label{IMP-eq25}
\end{equation}

Take $m_0:= \Delta m_1$ and $m_1:= \tilde{m}_1 + \Delta m_1$. Since $m_1 - m_0 = \tilde{m}_1$, $(m_0,m_1)$ satisfies \eqref{IMP-eq19b} and \eqref{IMP-eq19c}. Furthermore, by \eqref{IMP-eq25}, $(m_0,m_1)$ satisfies \eqref{IMP-eq19a}. In conclusion, $(\mu_0,\mu_1)$ can be implemented by $(m_0,m_1)$.$\hfill\square$ \\
\\
\textbf{Proof of Corollary \ref{IMP-co1}:} The statement directly follows from Theorem \ref{IMP-pp2} and the fact that $\check{\alpha}_0 < \alpha_1^0 \leq \overline{c}$.$\hfill\square$

\subsection{Implementation by Regular Constant Payments}
\label{APP-IMP-RCP}
We first provide the exact expression of the function $\tilde{\alpha}_0(\cdot)$. Given a pair of aggregate efforts $(\alpha_0,\alpha_1)$ that satisfies $\alpha_0\leq \alpha_1$ and $K_0'(\alpha_0) > K_1'(\alpha_1)$, define the following threshold:
\begin{equation}
\tilde{\alpha}_1 := \frac{1}{K_0'(\alpha_0) - K_1'(\alpha_1)}\cdot \Big\{\big[K_0'(\alpha_0)\cdot \alpha_1^0 - K_1(\alpha_1^0)\big] - \big[K_1'(\alpha_1)\cdot \alpha_1 - K_1(\alpha_1)\big]\Big\}. 
\label{IMP-eq26}
\end{equation}

We can verify that $\tilde{\alpha}_1 \geq \check{\alpha}_1$\footnote{It follows from that $\big[K_0'(\alpha_0)\cdot \alpha_0 - K_0(\alpha_0)\big] - \big[K_1'(\alpha_1)\cdot \tilde{\alpha}_1 - K_0(\tilde{\alpha}_1)\big]= \big[K_0'(\alpha_0)-K_1'(\alpha_1)\big]\cdot \tilde{\alpha}_1 + \int_{\alpha_0}^{\tilde{\alpha}_1}\big[K_0'(\alpha)-K_0'(\alpha_0)\big]d\alpha\geq \big[K_0'(\alpha_0)\cdot \alpha_1^0 - K_1(\alpha_1^0)\big] - \big[K_1'(\alpha_1)\cdot \alpha_1 - K_1(\alpha_1)\big]$, where the inequality follows from that $\tilde{\alpha}_1 \geq \alpha_0$}.\\
\\
\textbf{Definition of $\tilde{\alpha}_0(\cdot)$:} The function $\tilde{\alpha}_0:[\underline{c},\overline{c}]\to [\underline{c},\overline{c}]$ takes the following form:
\begin{enumerate}
\item For $\underline{\alpha}_1 < \alpha_0^1$, define $\tilde{\alpha}_0(\underline{\alpha}_1) := \check{\alpha}_0 \in (\underline{c},\alpha_1^0)$.
\item For $\alpha_0^1\leq \underline{\alpha}_1 \leq \alpha_0$, define $\tilde{\alpha}_0(\underline{\alpha}_1):= \check{\alpha}_0(\underline{\alpha}_1) \in [\check{\alpha}_0,\alpha_1^0)$.
\item For $\alpha_0 < \underline{\alpha}_1 <\tilde{\alpha}_1$, $\tilde{\alpha}_0(\underline{\alpha}_1)\in [\check{\alpha}_0,\alpha_1^0)$ is pinned down by
\begin{equation*}
\big[K_0'(\alpha_0)\cdot \tilde{\alpha}_0(\underline{\alpha}_1)-K_1\big(\tilde{\alpha}_0(\underline{\alpha}_1)\big)\big]-\big[K_1'(\alpha_1)\cdot \alpha_1-K_1(\alpha_1)\big] = \big[K_0'(\alpha_0)-K_1'(\alpha_1)\big]\cdot \underline{\alpha}_1. 
\end{equation*}
\item For $\underline{\alpha}_1\geq \tilde{\alpha}_1$, define $\tilde{\alpha}_0(\underline{\alpha}_1):= \overline{c}$. 
\end{enumerate}
We say $(\mu_0,\mu_1)$ is implementable by regular constant payments if it can be implemented through constant payments analogous to those in Lemma \ref{IMP-lm1}. When $(\mu_0,\mu_1)$ induces $(\alpha_0,\alpha_1)$ that satisfies $\alpha_0\leq \alpha_1$ and $K_0'(\alpha_0)>K_1'(\alpha_1)$, regular constant payments mean that the type-0 agent receives zero constant payment. The following proposition establishes that, in the current case, $(\mu_0,\mu_1)$ can be implemented by regular constant payments if and only if $(\underline{\alpha}_1,\overline{\alpha}_0)$ lies in the region below $\tilde{\alpha}_0(\cdot)$. 
\begin{proposition}
Take a pair of output distributions $(\mu_0,\mu_1)\in \Delta([\underline{x},\overline{x}])^2$ with corresponding aggregate efforts $(\alpha_0,\alpha_1)$ satisfying $\alpha_0 \leq \alpha_1$ and $K_0'(\alpha_0) > K_1'(\alpha_1)$. Under Assumption \ref{OM-as1}, $(\mu_0,\mu_1)$ can be implemented by regular constant payments if and only if
\begin{equation*}
    \overline{\alpha}_0 \leq \tilde{\alpha}_0(\underline{\alpha}_1).
\end{equation*}
\label{APP-IMP-pp1}
\end{proposition}
\begin{proof}
Take a pair of output distributions $(\mu_0,\mu_1)\in \Delta([\underline{x},\overline{x}])^2$ with corresponding aggregate efforts $(\alpha_0,\alpha_1)$ satisfying $\alpha_0 \leq \alpha_1$ and $K_0'(\alpha_0) > K_1'(\alpha_1)$. Following the same reasoning as in Lemma \ref{IMP-lm1}, $\mu = (\mu_0,\mu_1)$ is implementable by regulart constant payments if and only if there exists a feasible menu $(a,m, R',\mu)\in \mathcal{F}$ where $R': = (\text{supp}(\mu_0),\text{supp}(\mu_1))$ and $m_0 = 0$. Similar to the proof of Theorem \ref{IMP-pp2}, it is equivalent to examining whether there exists $m_1\geq 0$ that satisfies the following constraints:
\begin{subequations}
\begin{align}
K_0'(\alpha_0)\cdot \alpha_0 - K_0(\alpha_0) &\geq \max_{\alpha\in [\underline{c},\underline{\alpha}_1]}\big[\big(\frac{K_1'(\alpha_1)\cdot \underline{\alpha}_1+m_1}{\underline{\alpha}_1}\big)\cdot \alpha - K_0(\alpha)\big],\label{IMP-eq27a}\\
K_0'(\alpha_0)\cdot \alpha_0 - K_0(\alpha_0) &\geq \max_{\alpha\in [\underline{\alpha}_1,\overline{\alpha}_1]}\big[K_1'(\alpha_1)\cdot \alpha - K_0(\alpha)+m_1\big], \label{IMP-eq27b}\\
K_1'(\alpha_1)\cdot \alpha_1 - K_1(\alpha_1) + m_1 &\geq \max_{\alpha\in [\underline{\alpha}_0,\overline{\alpha}_0]}\big[K_0'(\alpha_0)\cdot \alpha - K_1(\alpha)\big]. \label{IMP-eq27c}
\end{align}
\end{subequations}
We first consider constraint \eqref{IMP-eq27a}. Following the similar idea as in Lemma \ref{IMP-lm1}, when $\underline{\alpha}_1>\alpha_0$, \eqref{IMP-eq27a} is equivalent to $m_1 \leq \big[K_0'(\alpha_0) - K_1'(\alpha_1)\big]\cdot \underline{\alpha}_1$. When $\underline{\alpha}_1 \leq \alpha_0$, we claim that \eqref{IMP-eq27a} is equivalent to $m_1 \leq \big[K_0'(\alpha_0)\cdot \alpha_0 - K_0(\alpha_0)\big]-\big[K_1'(\alpha_1)\cdot\underline{\alpha}_1 -K_0(\underline{\alpha}_1)\big]$.
\begin{claim}
    When $\underline{\alpha}_1 \leq \alpha_0$, \eqref{IMP-eq27a} holds if and only if
    \begin{equation}
         m_1 \leq \big[K_0'(\alpha_0)\cdot \alpha_0 - K_0(\alpha_0)\big]-\big[K_1'(\alpha_1)\cdot\underline{\alpha}_1 -K_0(\underline{\alpha}_1)\big]. 
        \label{IMP-eq28}
    \end{equation}
\label{APP-IMP-cl1}
\end{claim}

\begin{proof}[Proof of Claim \ref{APP-IMP-cl1}.]
We first show that \eqref{IMP-eq28} implies \eqref{IMP-eq27a}. Suppose $m_1$ satisfies $m_1 \leq \big[K_0'(\alpha_0)\cdot \alpha_0 - K_0(\alpha_0)\big]-\big[K_1'(\alpha_1)\cdot\underline{\alpha}_1 -K_0(\underline{\alpha}_1)\big]$. By $\big[K_0'(\alpha_0)-K_1'(\alpha_1)\big]\cdot \underline{\alpha}_1 \leq \big[K_0'(\alpha_0)\cdot \alpha_0 -K_0(\alpha_0)\big]-\big[K_1'(\alpha_1)\cdot\underline{\alpha}_1 - K_0(\underline{\alpha}_1)\big]$, we divide this situation into two cases.

Suppose $m_1 \leq \big[K_0'(\alpha_0) - K_1'(\alpha_1)\big]\cdot \underline{\alpha}_1$. If $\underline{\alpha}_1 = \underline{c}$, it implies that $m_1 = 0$, and \eqref{IMP-eq27a} naturally holds. If $\underline{\alpha}_1 > \underline{c}$, it implies that $\frac{K_1'(\alpha_1)\cdot \underline{\alpha}_1+m_1}{\underline{\alpha}_1}\leq K_0'(\alpha_0)$. In this case, we have that $\max_{\alpha\in [\underline{c},\underline{\alpha}_1]}\big[\big(\frac{K_1'(\alpha_1)\cdot \underline{\alpha}_1+m_1}{\underline{\alpha}_1}\big)\cdot \alpha - K_0(\alpha)\big] \leq \max_{\alpha\in [\underline{c},\underline{\alpha}_1]}\big[K_0'(\alpha_0)\cdot \alpha - K_0(\alpha)\big] \leq K_0'(\alpha_0)\cdot \alpha_0 - K_0(\alpha_0)$. Hence, \eqref{IMP-eq27a} is satisfied. 

Suppose $\big[K_0'(\alpha_0) - K_1'(\alpha_1)\big]\cdot \underline{\alpha}_1 < m_1 \leq \big[K_0'(\alpha_0)\cdot \alpha_0 -K_0(\alpha_0)\big]-\big[K_1'(\alpha)\cdot\underline{\alpha}_1 - K_0(\underline{\alpha}_1)\big]$. If $\underline{\alpha}_1 = \underline{c}$, we have that $m_1 \leq K_0'(\alpha_0)\cdot \alpha_0 - K_0(\alpha_0)$, which further implies \eqref{IMP-eq27a}. If $\underline{\alpha}_1 > \underline{c}$, we have that $\frac{K_1'(\alpha_1)\cdot \underline{\alpha}_1+m_1}{\underline{\alpha}_1} > K_0'(\alpha_0)\geq K_0'(\underline{\alpha}_1)$. The RHS of \eqref{IMP-eq27a} satisfies that $\max_{\alpha\in [\underline{c},\underline{\alpha}_1]}\big[\big(\frac{K_1'(\alpha_1)\cdot \underline{\alpha}_1+m_1}{\underline{\alpha}_1}\big)\cdot \alpha - K_0(\alpha)\big] = K_1'(\alpha_1)\cdot \underline{\alpha}_1 - K_0(\underline{\alpha}_1) + m_1 \leq K_0'(\alpha_0)\cdot \alpha_0 - K_0(\alpha_0)$. Hence, \eqref{IMP-eq27a} is satisfied.

Next, we show that if $m_1 > \big[K_0'(\alpha_0)\cdot \alpha_0 - K_0(\alpha_0)\big]-\big[K_1'(\alpha_1)\cdot\underline{\alpha}_1 -K_0(\underline{\alpha}_1)\big]$, \eqref{IMP-eq28} does not hold. It is because, in this case, the RHS of \eqref{IMP-eq27a} satisfies that $\max_{\alpha\in [\underline{c},\underline{\alpha}_1]}\big[\big(\frac{K_1'(\alpha_1)\cdot \underline{\alpha}_1+m_1}{\underline{\alpha}_1}\big)\cdot \alpha - K_0(\alpha)\big]  \geq K_1'(\alpha_1)\cdot \underline{\alpha}_1 - K_0(\underline{\alpha}_1) + m_1 > K_0'(\alpha_0)\cdot \alpha_0 - K_0(\alpha_0)$.
\end{proof}
$\quad$\\
\textit{Proof of Proposition \ref{APP-IMP-pp1} continued.} For the next step, we consider constraint \eqref{IMP-eq27b}. When $\underline{\alpha}_1 \geq \alpha_0^1$, we have that $K_0'(\underline{\alpha}_1)\geq K_1'(\alpha_1)$, and the RHS of \eqref{IMP-eq27b} is equal to $K_1'(\alpha_1)\cdot \underline{\alpha}_1 - K_0(\underline{\alpha}_1)+ m_1$. Hence, \eqref{IMP-eq27b} is equivalent to $m_1 \leq \big[K_0'(\alpha_0)\cdot \alpha_0 - K_0(\alpha_0)\big] - \big[K_1'(\alpha_1)\cdot \underline{\alpha}_1 - K_0(\underline{\alpha}_1)\big]$. When $\underline{\alpha}_1 < \alpha_0^1$, we have that $K_0'(\underline{\alpha}_1) < K_1'(\alpha_1)$, and the RHS of \eqref{IMP-eq27b} is equal to $K_1'(\alpha_1)\cdot \alpha_0^1 - K_0(\alpha_0^1) +m_1$. Consequently, \eqref{IMP-eq27b} is equivalent to $m_1 \leq \big[K_0'(\alpha_0)\cdot \alpha_0 - K_0(\alpha_0)\big] - \big[K_1'(\alpha_1)\cdot \alpha_0^1 - K_0(\alpha_0^1)\big]$. 

Finally, we consider constraint \eqref{IMP-eq27c}. When $\overline{\alpha}_0 \leq \alpha_1^0$, we have that $K_1'(\overline{\alpha}_0) \leq K_0'(\alpha_0)$, and the RHS of \eqref{IMP-eq27c} is equal to $K_0'(\alpha_0)\cdot \overline{\alpha}_0 - K_1(\overline{\alpha}_0)$. Hence, \eqref{IMP-eq27c} is equivalent to $m_1 \geq \big[K_0'(\alpha_0)\cdot \overline{\alpha}_0 - K_1(\overline{\alpha}_0)\big]-\big[K_1'(\alpha_1)\cdot \alpha_1 - K_1(\alpha_1)\big]$. When $\overline{\alpha}_0 > \alpha_1^0$, we have that $K_1'(\overline{\alpha}_0) > K_0'(\alpha_0)$, and the RHS of \eqref{IMP-eq27c} is equal to $K_0'(\alpha_0)\cdot \alpha_1^0 - K_1(\alpha_1^0)$. Consequently, \eqref{IMP-eq27c} is equivalent to $m_1 \geq \big[K_0'(\alpha_0)\cdot \alpha_1^0 - K_1(\alpha_1^0)\big] - \big[K_1'(\alpha_1)\cdot \alpha_1 - K_1(\alpha_1)\big]$. 

We examine the implementation of $(\mu_0,\mu_1)$ in the following cases based on the values of $\underline{\alpha}_1$ and $\overline{\alpha}_0$. \\
\\
\textbf{Case a:} $\overline{\alpha}_0 > \alpha_1^0 $ and $\underline{\alpha}_1 < \alpha_0^1$.

In this case, $(\mu_0,\mu_1)$ is implementable by regular constant payments if and only if there exists $m_1\geq 0$ such that $\big[K_0'(\alpha_0)\cdot \alpha_1^0 - K_1(\alpha_1^0)\big] - \big[K_1'(\alpha_1)\cdot \alpha_1 - K_1(\alpha_1)\big]\leq m_1 \leq \big[K_0'(\alpha_0)\cdot \alpha_0 - K_0(\alpha_0)\big] - \big[K_1'(\alpha_1)\cdot \alpha_0^1 - K_0(\alpha_0^1)\big]$. According to Lemma \ref{APP-IMP-lm1}, this is impossible. \\
\\
\textbf{Case b:} $\overline{\alpha}_0 > \alpha_1^0$ and $\alpha_0^1\leq\underline{\alpha}_1 \leq \alpha_0$.

In this case, $(\mu_0,\mu_1)$ is implementable by regular constant payments if and only if there exists $m_1\geq 0$ such that $\big[K_0'(\alpha_0)\cdot \alpha_1^0 - K_1(\alpha_1^0)\big] - \big[K_1'(\alpha_1)\cdot \alpha_1 - K_1(\alpha_1)\big] \leq \big[K_0'(\alpha_0)\cdot \alpha_0 - K_0(\alpha_0)\big] - \big[K_1'(\alpha_1)\cdot \underline{\alpha}_1 - K_0(\underline{\alpha}_1)\big]$. This is impossible because $\big[K_0'(\alpha_0)\cdot \alpha_0 - K_0(\alpha_0)\big] - \big[K_1'(\alpha_1)\cdot \underline{\alpha}_1 - K_0(\underline{\alpha}_1)\big] \leq \big[K_0'(\alpha_0) - K_1'(\alpha_1)\big]\cdot \alpha_0 \leq \big[K_0'(\alpha_0) - K_1'(\alpha_1)\big]\cdot \alpha_1 < \big[K_0'(\alpha_0)\cdot \alpha_1^0 - K_1(\alpha_1^0)\big] - \big[K_1'(\alpha_1)\cdot \alpha_1 - K_1(\alpha_1)\big]$. \\
\\
\textbf{Case c:} $\overline{\alpha}_0 > \alpha_1^0$ and $\underline{\alpha}_1 > \alpha_0$. 

In this case, $(\mu_0,\mu_1)$ is implementable by regular constant payments if and only if there exists $m_1\geq 0$ such that $\big[K_0'(\alpha_0)\cdot \alpha_1^0 - K_1(\alpha_1^0)\big] - \big[K_1'(\alpha_1)\cdot \alpha_1 - K_1(\alpha_1)\big] \leq m_1 \leq  \big[K_0'(\alpha_0) - K_1'(\alpha_1)\big]\cdot \underline{\alpha}_1$. When $\underline{\alpha}_1 \geq \tilde{\alpha}_1$, this interval in non-empty, and $(\mu_0,\mu_1)$ can be implemented. When $\underline{\alpha}_1 < \tilde{\alpha}_1$, this interval is empty, implying that it is impossible to implement $(\mu_0,\mu_1)$ by regular constant payments.\\
\\
\textbf{Case d:} $\overline{\alpha}_0\leq \alpha_1^0$ and $\underline{\alpha}_1 <\alpha_0^1$. 

In this case, $(\mu_0,\mu_1)$ is implementable by regular constant payments if and only if there exists $m_1\geq 0$ such that $\big[K_0'(\alpha_0)\cdot \overline{\alpha}_0 - K_1(\overline{\alpha}_0)\big]-\big[K_1'(\alpha_1)\cdot \alpha_1 - K_1(\alpha_1)\big]\leq m_1 \leq \big[K_0'(\alpha_0)\cdot \alpha_0 - K_0(\alpha_0)\big] - \big[K_1'(\alpha_1)\cdot \alpha_0^1 - K_0(\alpha_0^1)\big]$. Such $m_1$ exists only when $\overline{\alpha}_0 \leq \check{\alpha}_0$. When $\overline{\alpha}_0 > \check{\alpha}_0$, $(\mu_0,\mu_1)$ can not be implemented by regular constant payments.\\
\\
\textbf{Case e:} $\overline{\alpha}_0 \leq \alpha_1^0$ and $\alpha_0^1 \leq \underline{\alpha}_1 \leq \alpha_0$. 

In this case, $(\mu_0,\mu_1)$ is implementable by regular constant payments if and only if there exists $m_1\geq 0$ such that $\big[K_0'(\alpha_0)\cdot \overline{\alpha}_0 - K_1(\overline{\alpha}_0)\big]-\big[K_1'(\alpha_1)\cdot \alpha_1 - K_1(\alpha_1)\big]\leq m_1 \leq \big[K_0'(\alpha_0)\cdot \alpha_0 - K_0(\alpha_0)\big] - \big[K_1'(\alpha_1)\cdot \underline{\alpha}_1 - K_0(\underline{\alpha}_1)\big]$. Such $m_1$ exists only when $\overline{\alpha}_0 \leq \check{\alpha}_0(\underline{\alpha}_1)$.\\
\\
\textbf{Case f:} $\overline{\alpha}_0 \leq \alpha_1^0$ and $\underline{\alpha}_1 > \alpha_0$.

In this case, $(\mu_0,\mu_1)$ is implementable by regular constant payments if and only if there exists $m_1\geq 0$ such that $\big[K_0'(\alpha_0)\cdot \overline{\alpha}_0 - K_1(\overline{\alpha}_0)\big]-\big[K_1'(\alpha_1)\cdot \alpha_1 - K_1(\alpha_1)\big]\leq m_1 \leq \big[K_0'(\alpha_0) - K_1'(\alpha_1)\big] \cdot \underline{\alpha}_1$. If $\underline{\alpha}_1 < \tilde{\alpha}_1$, such $m_1$ exists only when $\overline{\alpha}_0 \leq \tilde{\alpha}_0(\underline{\alpha}_1)$. If $\underline{\alpha}_1 \geq \tilde{\alpha}_1$, it is always possible to implement $(\mu_0,\mu_1)$ by regular constant payments. \\
\\
To sum, $(\mu_0,\mu_1)$ is implementable by regular constant payments if and only if $(\underline{\alpha}_1,\overline{\alpha}_0)$ falls in one of the following four cases: \circled{1} $\underline{\alpha}_1 < \alpha_0^1$ and $\overline{\alpha}_0\leq \check{\alpha}_0$; \circled{2} $\alpha_0^1 \leq \underline{\alpha}_1 \leq \alpha_0$ and $\overline{\alpha}_0 \leq \check{\alpha}_0(\underline{\alpha}_1)$; \circled{3} $\alpha_0 < \underline{\alpha}_1 < \tilde{\alpha}_1$ and $\overline{\alpha}_0 \leq  \tilde{\alpha}_0(\underline{\alpha}_1)$; \circled{4} $\underline{\alpha}_1 \geq \tilde{\alpha}_1$ and $\overline{\alpha}_0\leq \overline{c}$.  
\end{proof}
In Corollary \ref{APP-IMP-pp2}, we present the payment-minimizing menus that implement $(\mu_0,\mu_1)$ when $(\mu_0,\mu_1)$ is implementable by regular constant payments. The proof is similar to that of Lemma \ref{IMP-lm1} and is thus omitted. 
\begin{corollary}
Take a pair of output distributions $\mu = (\mu_0,\mu_1)\in \Delta([\underline{x},\overline{x}])^2$ with aggregate efforts $a = (\alpha_0,\alpha_1)$ satisfying $\alpha_0\leq \alpha_1$. Take $R':= (\text{supp}(\mu_0),\text{supp}(\mu_1))$. If $K_0'(\alpha_0)\leq K_1'(\alpha_1)$, define
\begin{equation*}
\begin{aligned}
    m_0' &:= \max_{\alpha\in [\underline{c},\overline{\alpha}_1]}\big[K_1'(\alpha_1)\cdot \alpha - K_0(\alpha)\big] - \big[K_0'(\alpha_0)\cdot \alpha_0 - K_0(\alpha_0)\big],\\
    m_1'&:= 0. 
\end{aligned}
\end{equation*}
If $K_0'(\alpha_0) > K_1'(\alpha_1)$ and $\overline{\alpha}_0\leq \tilde{\alpha}_0(\underline{\alpha}_1)$, define
\begin{equation*}
\begin{aligned}
    m_0'&:= 0,\\
    m_1'&:= \max\Big\{0, \max_{\alpha\in[\underline{\alpha}_0,\overline{\alpha}_0]}\big[K_0'(\alpha_0)\cdot \alpha - K_1(\alpha)\big] - \big[K_1'(\alpha_1)\cdot \alpha_1 - K_1(\alpha_1)\big]\Big\}.
\end{aligned}
\end{equation*}
Take $m':= (m_0',m_1')$. Under Assumption \ref{OM-as1}, the menu $(a,m',R',\mu)$ minimizes the principal's expected payment among all feasible menus that implement $(\mu_0,\mu_1)$. 
\label{APP-IMP-pp2}
\end{corollary}
\begin{corollary}
Take a pair of output distributions $\mu = (\mu_0,\mu_1)\in \Delta([\underline{x},\overline{x}])^2$ with aggregate efforts $a=(\alpha_0,\alpha_1)$ satisfying $\alpha_0\leq \alpha_1$ and $K_0'(\alpha_0) > K_1'(\alpha_1)$. Under Assumption \ref{OM-as1}, if $\tilde{\alpha}_0(\underline{\alpha}_1) < \overline{\alpha}_0 \leq \check{\alpha}_0(\underline{\alpha}_1)$, then for any feasible menu $(a,m,R,\mu)\in \mathcal{F}$ that implements $\mu$, it holds that $m_0>0$, $m_1>0$. 
\label{APP-IMP-pp3}
\end{corollary}
\begin{proof}
Take a pair of output distributions $\mu = (\mu_0,\mu_1)\in \Delta([\underline{x},\overline{x}])^2$ with aggregate efforts $a=(\alpha_0,\alpha_1)$ satisfying $\alpha_0\leq \alpha_1$ and $K_0'(\alpha_0) > K_1'(\alpha_1)$. According to Theorem \ref{IMP-pp2} and Proposition \ref{APP-IMP-pp1}, we have that for any feasible menu $(a,m,R,\mu)\in \mathcal{F}$ that implements $\mu$, it holds that $m_0>0$. It suffices to show that we also have $m_1 >0$. 

Suppose there exists a feasible menu $(a,m,R,\mu)\in \mathcal{F}$ such that $m_0 > 0$ and $m_1=0$. Following the same reasoning as in Lemma \ref{IMP-lm1}, the menu $(a,m,R',\mu)$ is also feasible, with $R':= (\text{supp}(\mu_0), \text{supp}(\mu_1))$. From \eqref{MBC-IC1}, we have that
\begin{equation}
-m_0 \geq \max_{\alpha\in [\underline{\alpha}_0,\overline{\alpha}_0]}\big[K_0'(\alpha_0)\cdot \alpha - K_1(\alpha)\big] - \big[K_1'(\alpha_1)\cdot \alpha_1 - K_1(\alpha_1)\big].
\label{IMP-eq29}
\end{equation}

The assumption that $\tilde{\alpha}_0(\underline{\alpha}_1) < \overline{\alpha}_0 \leq \check{\alpha}_0(\underline{\alpha}_1)$ implies that $\overline{\alpha}_0 > \check{\alpha}_0$. If $\overline{\alpha}_0 > \alpha_1^0$, the RHS of \eqref{IMP-eq29} is equal to $\big[K_0'(\alpha_0)\cdot \alpha_1^0 - K_1(\alpha_1^0)\big] - \big[K_1'(\alpha_1)\cdot \alpha_1 - K_1(\alpha_1)\big] \geq 0$. If $\check{\alpha}_0<\overline{\alpha}_0 \leq \alpha_1^0$, the RHS of \eqref{IMP-eq29} is equal to $\big[K_0'(\alpha_0)\cdot \overline{\alpha}_0 - K_1(\overline{\alpha}_0)\big] - \big[K_1'(\alpha_1)\cdot \alpha_1 - K_1(\alpha_1)\big] \geq \big[K_0'(\alpha_0)\cdot \check{\alpha}_0 - K_1(\check{\alpha}_0)\big] - \big[K_1'(\alpha_1)\cdot \alpha_1 - K_1(\alpha_1)\big] = \big[K_0'(\alpha_0)\cdot \alpha_0 - K_0(\alpha_0)\big] - \big[K_1'(\alpha_1)\cdot \alpha_0^1 - K_0(\alpha_0^1)\big]\geq 0$. To sum, the RHS of \eqref{IMP-eq29} is always non-negative. It contradicts $m_0>0$. Hence, it is necessary that $m_1 > 0$.  
\end{proof}

\section{Proofs in Section \ref{MBCF-om}}
\label{MBCF-app-om}
\textbf{Proof of Proposition \ref{OM-pp1}:} Take any optimal menu $(a^{*},m^{*},R^{*},\mu^{*})$ that solves problem \eqref{MBC-P}. Suppose that $K_0'(\alpha_0^{*}) < K_1'(\alpha_1^{*})$. Define $\overline{\alpha}_1^{*}:= \max \text{supp}(\mu_1^{*})$. According to Corollary \ref{APP-IMP-pp2}, the menu $(a^{*},m',R',\mu^{*})$, with $R': = (\text{supp}(\mu_0^{*}),\text{supp}(\mu_1^{*}))$, $m_0': = \max_{\alpha\in [\underline{c},\overline{\alpha}_1^{*}]}\big[K_1'(\alpha_1^{*})\cdot \alpha - K_0(\alpha)\big] - \big[K_0'(\alpha_0^{*})\cdot \alpha_0^{*} - K_0(\alpha_0^{*})\big]$, and $m_1': = 0$, is the payment-minimizing menu among all feasible menus that implement $\mu^{*}$. Hence, the menu $(a^{*},m',R',\mu^{*})$ is also optimal to \eqref{MBC-P}. 

Fix $\mu_1^{*}$ and consider any output distribution $\mu_0$ with aggregate effort $\alpha_0$ such that $K_0'(\alpha_0) < K_1'(\alpha_1^{*})$ (which implies $\alpha_0\leq \alpha_1^{*}$). Then $(a^{*},m',R',\mu^{*})$ is optimal among all the payment-minimizing menus, as described in Corollary \ref{APP-IMP-pp2}, that implement such $(\mu_0,\mu_1^{*})$. Formally, we have that $\alpha_0^{*}$ and $\mu_0^{*}$ optimally solve the following problem:
\begin{equation}
\begin{aligned}
\max_{\alpha_0\in [\underline{c},\overline{c}]:\atop K_0'(\alpha_0)< K_1'(\alpha_1^{*})} \max_{\mu_0\in \Delta([\underline{x},\overline{x}]), \atop m_0\in \mathbb{R}}  \quad & \int x\mu_0(dx) - K_0'(\alpha_0)\cdot \alpha_0 - m_0 \\
s.t. \quad & \int c(x)\mu_0(dx) = \alpha_0 \\
& m_0 = \max_{\alpha\in [\underline{c},\overline{\alpha}_1^{*}]}\big[K_1'(\alpha_1^{*})\cdot \alpha - K_0(\alpha)\big] - \big[K_0'(\alpha_0)\cdot \alpha_0 - K_0(\alpha_0)\big].
\end{aligned}
\label{OM-eq5}
\end{equation}

We can simply plug $m_0 = \max_{\alpha\in [\underline{c},\overline{r}_1]}\big[K_1'(\alpha_1^{*})\cdot \alpha - K_0(\alpha)\big] - \big[K_0'(\alpha_0)\cdot \alpha_0 - K_0(\alpha_0)\big]$ into the objective function, and then we have that $\alpha_0^{*}$ is the solution to the following problem
\begin{equation*}
    \max_{\alpha_0\in [\underline{c},\overline{c}]:\atop K_0'(\alpha_0)<K_1'(\alpha_1^{*})} \Theta(\alpha_0) -K_0(\alpha_0), 
\end{equation*}
whose first-order condition gives that (the differentiability of $\Theta(\cdot)$ follows from the differentiability of $\theta(\cdot)$)
\begin{equation}
\Theta'(\alpha_0^{*})\leq K_0'(\alpha_0^{*}). 
\label{OM-eq8}
\end{equation}

At the menu $(a^{*},m',R',\mu^{*})$, it is possible to extend the contract range of the type-1 agent to $[\underline{x},\overline{x}]$, and replace $\mu_1^{*}$ with any output distribution $\mu_1$ such that $\int c(x)\mu_1(dx)= \alpha_1^{*}$ without harming the feasibility of the menu. It is because those modifications would leave the RHS of \eqref{MBC-IC0} unchanged: $\max_{\alpha\in [\underline{c},\overline{\alpha}_1^{*}]}\big[K_1'(\alpha_1^{*})\cdot \alpha - K_0(\alpha)\big] = \max_{\alpha\in [\underline{c},\overline{c}]}\big[K_1'(\alpha_1^{*})\cdot \alpha - K_0(\alpha)\big]$. Among all such $\mu_1$, the principal would optimally recommend an output distribution to the type-1 agent such that the concavification value $\Theta(\alpha_1^{*})$ is attained, which indicates that $\int x\mu_1^{*}(dx) = \Theta(\alpha_1^{*})$ from the optimality of $\mu_1^{*}$. Consequently, the profit obtained by the principal from the type-1 agent is $\Theta(\alpha_1^{*}) - K_1'(\alpha_1^{*})\cdot \alpha_1^{*}$, whose first derivative at $\alpha_1^{*}$ is given by $\Theta'(\alpha_1^{*}) - K_1'(\alpha_1^{*}) - K_1''(\alpha_1^{*})\cdot \alpha_1^{*}$ and satisfies that
\begin{equation*}
    \Theta'(\alpha_1^{*}) - K_1'(\alpha_1^{*}) - K_1''(\alpha_1^{*})\cdot \alpha_1^{*} \leq \Theta'(\alpha_0^{*}) - K_1'(\alpha_1^{*}) <\Theta'(\alpha_0^{*}) - K_0'(\alpha_0^{*})\leq 0,
\end{equation*}
where the first inequality follows from $\Theta'(\alpha_0^{*})\geq \Theta'(\alpha_1^{*})$, the second inequality follows from $K_0'(\alpha_0^{*})<K_1'(\alpha_1^{*})$, and the third inequality follows from \eqref{OM-eq8}. This implies that $\alpha_1^{*}> \alpha_1^{MH}$. From Assumption \ref{OM-as3}, we have that the profit $\Theta(\alpha_1)- K_1'(\alpha_1)\cdot \alpha_1$ strictly increases if $\alpha_1$ decreases at $\alpha_1 = \alpha_1^{*}$. 

We can fix $\alpha_0^{*}$ and slightly decrease $\alpha_1^{*}$ to $\hat{\alpha}_1$ such that $K_1'(\alpha_1^{*}) > K_1'(\hat{\alpha}_1) > K_0'(\alpha_0^{*})$. Take any $\hat{\mu}_1$ satisfying that $\int c(x)\hat{\mu}_1(dx) =\hat{\alpha}_1$ and $\int x\hat{\mu}_1(dx) = \Theta(\hat{\alpha}_1)$. Take the payment-minimizing menu that implements $(\mu_0^{*},\hat{\mu}_1)$. Under the new menu, we have that the profit from the type-1 agent strictly increases while the constant payment to the type-0 agent strictly decreases. Overall, the principal's expected profit is strictly improved, which is contradictory to the optimality of $(a^{*},m',R',\mu^{*})$. 

In conclusion, we have that $K_0'(\alpha_0^{*}) \geq K_1'(\alpha_1^{*})$ at the optimal menu. $\hfill\square$\\
\\
\textbf{Proof of Proposition \ref{OM-pp2}:} Take $s_0(\cdot)$, $s_1(\cdot)$ as the contracts implemented by the menu $(\alpha,m,R,\mu)$ and let $\underline{r}_t:= \inf c(R_t)$, $\overline{r}_t := \sup c(R_t)$ for $t\in\{0,1\}$.

We first show that the IC constraint for type-0 agent still holds under the menu $(\alpha,m, R',\mu)$. Consider contract $s'_1(\cdot)$ pinned down by $\alpha_1$, $m_1$ and $R'_1$. Then $\widehat{s_1\circ\theta}$ coincides with $\widehat{s'_1\circ\theta}$ on interval $[\underline{c},\overline{r}_1]$. On interval $\alpha\in [\overline{r}_1,\overline{c}]$,
\begin{equation*}
\begin{aligned}
\widehat{s'_1\circ\theta}(\alpha) - K_0(\alpha)  &= K_1'(\alpha_1)\cdot \alpha - K_0(\alpha) + m_1
\leq K_1'(\alpha_1)\cdot \overline{r}_1 - K_0(\overline{r}_1) + m_1\\
&= \widehat{s_1\circ \theta}(\overline{r}_1) - K_0(\overline{r}_1),
\end{aligned}
\end{equation*}
where the inequality follows from that, under Assumption \ref{OM-as1}, $K_1'(\alpha_1) \leq K_0'(\alpha_1) \leq K_0'(\alpha)$ for $\alpha \in [\overline{r}_1,\overline{c}]$. Hence, we have that
\begin{equation*}
\max_{\alpha \in [\underline{c},\overline{c}]}\big[\widehat{s'_1\circ\theta}(\alpha) - K_0(\alpha)\big] \leq \max_{\alpha\in[\underline{c},\overline{c}]}\big[ \widehat{s_1\circ\theta}(\alpha) - K_0(\alpha)\big] \leq K_0'(\alpha_0)\cdot \alpha_0 - K_0(\alpha_0). 
\end{equation*}

Next, we show that the IC constraint for type-1 agent still holds under the menu $(\alpha,m,R',\Gamma)$. Consider contract $s_0'(\cdot)$ pinned down by $\alpha_0$, $m_0$ and $R_0'$. Then $\widehat{s_0\circ\theta}$ coincides with $\widehat{s_0'\circ\theta}$ on interval $[\underline{r}_0,\overline{c}]$. On interval $\alpha \in [\underline{c},\underline{r}_0]$,
\begin{equation*}
\begin{aligned}
\widehat{s_0'\circ\theta}(\alpha) - K_1(\alpha) &= K_0'(\alpha_0)\cdot \alpha - K_1(\alpha) + m_0 \leq K_0'(\alpha_0)\cdot \underline{r}_0 - K_1(\underline{r}_0) + m_0\\
&= \widehat{s_0\circ \theta}(\underline{r}_0) - K_1(\underline{r}_0),
\end{aligned}
\end{equation*}
where the inequality follows from that, under Assumption \ref{OM-as1}, $K_0'(\alpha_0)\geq K_1'(\alpha_0)\geq K_1'(\alpha)$ for all $\alpha \in [\underline{c},\underline{r}_0]$.

Thereby, from \eqref{MBC-IC1}, we have that
\begin{equation*}
\max_{\alpha \in [\underline{c},\overline{c}]}\big[ \widehat{s'_0\circ \theta}(\alpha) - K_1(\alpha)\big] \leq\max_{\alpha \in [\underline{c},\overline{c}]}\big[ \widehat{s_0^{*}\circ \theta}(\alpha) - K_1(\alpha)\big] \leq K'_1(\alpha_1^{*})\cdot \alpha_1^{*} - K_1(\alpha_1^{*}). 
\end{equation*}

Thus, the menu $(\alpha,m,R',\Gamma)$ satisfies the IC constraints and is thereby feasible. $\hfill\square$\\
\\
\textbf{Proof of Theorem \ref{OM-pp3}:} Take any optimal menu $(a^{*},m^{*},R^{*},\mu^{*})$ that solves \eqref{MBC-P}. We show the sufficiency and necessity of $K_0'(\alpha_0^{*}) = K_1'(\alpha_1^{*})$.\\
\\
\textit{Sufficiency:} We first show that given $K_0'(\alpha_0^{*}) = K_1'(\alpha_1^{*})$, the menu $(\alpha^{*},m^{*},R_F,\mu^{*})$ satisfies the IC constraints, and is thereby feasible.

Consider the menu $(\alpha^{*},m,R_F,\mu^{*})$ with $m = (m_0,m_1) = (0,0)$. Take the contract $s_t(\cdot)$ ($t\in \{0,1\}$) pinned down by $\alpha_t^{*}$, $m_t = 0$ and the full range. Apparently the IC constraints are satisfied for $(\alpha^{*},m,R_F,\mu^{*})$, due to that $\widehat{s_0\circ\theta}(\cdot) = \widehat{s_1\circ\theta}(\cdot)$. Hence, $(\alpha^{*},m,R_F,\mu^{*})$ is feasible. Because of the optimality of $(a^{*},m^{*},R^{*},\mu^{*})$, we have that $p_0m_0^{*} + p_1m_1^{*}\leq p_0m_0 + p_1m_1 = 0$. By $m_0^{*},m_1^{*}\geq 0$, it holds that $m^{*} = (0,0) = m$. Hence, $(\alpha^{*},m^{*},R_F,\mu^{*})$ is feasible. \\
\\
\textit{Necessity: }We aim to show that if there is $K_0'(\alpha_0^{*}) \neq K_1'(\alpha_1^{*})$, then $(\alpha^{*},m^{*},R_F,\mu^{*})$ cannot be feasible. Suppose, contrary to this, that $K_0'(\alpha_0^{*}) \neq K_1'(\alpha_1^{*})$ and yet $(\alpha^{*},m^{*},R_F,\mu^{*})$ remains feasible.

By Proposition \ref{OM-pp1}, we have that $K_0'(\alpha_0^{*}) >  K_1'(\alpha_1^{*})$. The menu $(\alpha^{*},m^{*},R_F,\mu^{*})$ satisfies the IC constraints 
\begin{equation}
\begin{aligned}
    K_0'(\alpha_0^{*})\cdot \alpha_0^{*} - K_0(\alpha_0^{*})+m_0^{*}&\geq \max_{\alpha\in [\underline{c},\overline{c}]}\big[K_1'(\alpha_1^{*})\cdot \alpha - K_0(\alpha)\big] + m_1^{*},\\
    K_1'(\alpha_1^{*})\cdot \alpha_1^{*} - K_1(\alpha_1^{*}) +m_1^{*}&\geq \max_{\alpha\in [\underline{c},\overline{c}]}\big[K_0'(\alpha_0^{*})\cdot \alpha - K_1(\alpha)\big] + m_0^{*}.
\end{aligned} 
\label{OM-eq10}
\end{equation}

Define $\alpha_0^{1*}$ and $\alpha_1^{0*}$ as in \eqref{IMP-eq12} and \eqref{IMP-eq13}:
\begin{equation*}
\alpha_0^{1*}: =\left\{
\begin{array}{cl}
  k_0^{-1}\big(K_1'(\alpha_1^{*})\big)   &\text{ if }K_1'(\alpha_1^{*})\geq K_0'(\underline{c}), \\
  \underline{c}   &\text{ if } K_1'(\alpha_1^{*})< K_0'(\underline{c}),
\end{array}
\right.
\end{equation*}
\begin{equation*}
\alpha_1^{0*}: =\left\{
\begin{array}{cl}
  k_1^{-1}\big(K_0'(\alpha_0^{*})\big)   &\text{ if }K_0'(\alpha_0^{*})\leq K_1'(\overline{c}), \\
  \overline{c}   &\text{ if } K_0'(\alpha_0^{*})> K_1'(\overline{c}).
\end{array}
\right.
\end{equation*}

Then the IC constraints \eqref{OM-eq10} implies that
\begin{equation*}
\begin{aligned}
&\big[K_0'(\alpha_0^{*})\cdot \alpha_0^{*} - K_0(\alpha_0^{*})\big] - \big[K_1'(\alpha_1^{*})\cdot \alpha_0^{1*} - K_0(\alpha_0^{1*}) \big] \\
\geq&\big[ K_0'(\alpha_0^{*})\cdot \alpha_1^{0*} - K_1(\alpha_1^{0*})\big] - \big[K_1'(\alpha_1^{*})\cdot \alpha_1^{*} - K_1(\alpha_1^{*})\big].
\end{aligned}
\label{OM-eq11}
\end{equation*}

However, following Lemma \ref{APP-IMP-lm1} and the fact that $K_0'(\alpha_0^{*}) > K_1'(\alpha_1^{*})$, we have $\big[K_0'(\alpha_0^{*})\cdot \alpha_0^{*} - K_0(\alpha_0^{*})\big] - \big[K_1'(\alpha_1^{*})\cdot \alpha_0^{1*} - K_0(\alpha_0^{1*}) \big] < \big[ K_0'(\alpha_0^{*})\cdot \alpha_1^{0*} - K_1(\alpha_1^{0*})\big] - \big[K_1'(\alpha_1^{*})\cdot \alpha_1^{*} - K_1(\alpha_1^{*})\big]$, which is contradictory.  

In conclusion, if there is $K_0'(\alpha_0^{*})\neq K_1'(\alpha_1^{*})$, then $(a^{*},m^{*},R_F,\mu^{*})$ must be infeasible. $\hfill\square$

\section{Proofs in Section \ref{MBCF-cef}}
\label{MBCF-app-cef}
\textbf{Proof of Lemma \ref{EF-lm1}:} It suffices to show that the menu $(a^{*},m^{*},R',\mu')$ is feasible. It straightforwardly holds that $\text{supp}(\delta_{\theta(\alpha_0^{*})}) \subseteq \{\theta(\alpha_0^{*})\}$, $\text{supp}(\delta_{\theta(\alpha_1^{*})})\subseteq \{\theta(\alpha_1^{*})\}$, $\int c(x)\delta_{\theta(\alpha_0^{*})}(dx) = \alpha_0^{*}$, and $\int c(x)\delta_{\theta(\alpha_1^{*})}(dx) = \alpha_1^{*}$. Take contracts $s_0'(\cdot)$, $s_1'(\cdot)$ such that $s_0'\circ \theta(\alpha) = \mathbb{1}_{\{\alpha = \alpha_0^{*}\}}\big[K_0'(\alpha_0^{*})\cdot \alpha + m_0^{*}\big]$ and $s_1'\circ \theta(\alpha) = \mathbb{1}_{\{\alpha = \alpha_1^{*}\}}\big[K_1'(\alpha_1^{*})\cdot \alpha + m_1^{*}\big]$. Then \eqref{MBC-MH} and \eqref{MBC-LL} hold for the menu $(a^{*},m^{*},R',\mu')$ given $s_0'(\cdot)$ and $s_1'(\cdot)$. We only need to check whether it satisfies incentive compatibility. 

Take $s_0^{*}(\cdot)$ and $s_1^{*}(\cdot)$ as the contracts implemented by the menu $(a^{*},m^{*},R^{*},\mu^{*})$, i.e., $s_0^{*}\circ \theta(\alpha)=\mathbb{1}_{\{\alpha \in c(R_0^{*})\}}\big[K_0'(\alpha_0^{*})\cdot \alpha + m_0^{*}\big]$, $s_1^{*}\circ \theta(\alpha)=\mathbb{1}_{\{\alpha \in c(R_1^{*})\}}\big[K_1'(\alpha_1^{*})\cdot \alpha + m_1^{*}\big]$. It holds that $s_0'\circ \theta(\alpha)\leq s_0^{*}\circ \theta(\alpha)$ and $s_1'\circ \theta(\alpha)\leq s_1^{*}\circ \theta(\alpha)$ for all $\alpha \in [\underline{c},\overline{c}]$. Since the menu $(a^{*},m^{*},R^{*},\mu^{*})$ satisfies incentive compatibility, we have that 
\begin{equation*}
\begin{aligned}
\max_{\alpha \in [\underline{c},\overline{c}]} \big[ \widehat{s_1'\circ \theta}(\alpha) - K_0(\alpha))\big] &\leq \max_{\alpha \in [\underline{c},\overline{c}]} \big[ \widehat{s_1^{*}\circ \theta}(\alpha) - K_0(\alpha))\big]\leq K_0'(\alpha_0^{*})\cdot \alpha_0^{*} - K_0(\alpha_0^{*})+m_0^{*},\\
\max_{\alpha \in [\underline{c},\overline{c}]} \big[ \widehat{s_0'\circ \theta}(\alpha) - K_1(\alpha))\big] &\leq \max_{\alpha \in [\underline{c},\overline{c}]} \big[ \widehat{s_0^{*}\circ \theta}(\alpha) - K_1(\alpha))\big]\leq K_1'(\alpha_1^{*})\cdot \alpha_1^{*} - K_1(\alpha_1^{*}) + m_1^{*}.
\end{aligned}
\end{equation*}

Hence, the menu $(a^{*},m^{*},R',\mu')$ also satisfies incentive compatibility and is thus feasible to problem \eqref{MBC-P}. $\hfill\square$\\
\\
\textbf{Proof of Lemma \ref{EF-lm2}:} Take an optimal menu $(a^{*},m^{*},R^{*},\mu^{*})$ to \eqref{MBC-P}. Take $R':= (\big\{\theta(\alpha_0^{*})\big\},\big\{\theta(\alpha_1^{*})\big\})$, and $\mu':= (\delta_{\theta(\alpha_0^{*})},\delta_{\theta(\alpha_1^{*})})$. According to Lemma \ref{EF-lm1}, the menu $(a^{*},m^{*},R',\mu')$ is also optimal to \eqref{MBC-P}. From Proposition \ref{OM-pp1}, under Assumption \ref{OM-as1}, we have that $\alpha_0^{*} \leq \alpha_1^{*}$ and $K_0'(\alpha_0^{*})\geq K_1'(\alpha_1^{*})$. Take $m_0':= 0$ and $m_1':= \max\Big\{0, \big[K_0'(\alpha_0^{*})\cdot \alpha_0^{*} - K_1(\alpha_0^{*})\big] - \big[K_1'(\alpha_1^{*})\cdot \alpha_1^{*} - K_1(\alpha_1^{*})\big]\Big\}$, and $m':= (m_0',m_1')$. By Lemma \ref{IMP-lm1}, the menu $(a^{*},m',R',\mu')$ is the payment-minimizing menu that implements $a^{*}$. Consequently, $(a^{*},m',R',\mu')$ also optimally solves \eqref{MBC-P}. It implies that $m^{*} = m'$ by the optimality of $(a^{*},m^{*},R',\mu')$.

Hence, the optimal value of \eqref{MBC-P} is unchanged if we focus on designing aggregate efforts $a=(\alpha_0,\alpha_1)$ satisfying $\alpha_0\leq \alpha_1$ and $K_0'(\alpha_0)\geq K_1'(\alpha_1)$ and adopting the payment-minimizing menu, as described in Lemma \ref{IMP-lm1}, to implement $a$. It directly leads to \eqref{EF-eq1}.$\hfill\square$\\
\\
\textbf{Proof of Theorem \ref{EF-th1}:} \textit{Sufficiency:} Suppose we have that $K_0'(\alpha_0^{MH})\leq K_1'(\alpha_1^{MH})$. Given the optimal menu $(a^{*},m^{*},R^{*},\mu^{*})$, according to Lemma \ref{EF-lm2}, $(\alpha_0^{*},\alpha_1^{*},m_1^{*})$ optimally solves problem \eqref{EF-eq1}. Suppose the statement is untrue and there is $K_0'(\alpha_0^{*})>K_1'(\alpha_1^{*})$. Consider the following two cases.\\
\\
\textbf{Case 1: }$K_0'(\alpha_0^{*})\leq K_0'(\alpha_0^{MH})$.

In this case, there is $K_1'(\alpha_1^{*}) < K_0'(\alpha_0^{*})\leq K_0'(\alpha_0^{MH})\leq K_1'(\alpha_1^{MH})$. Since $(\alpha_0^{*},\alpha_1^{*},m_1^{*})$ optimally solves \eqref{EF-eq1}, we have $m_1^{*} = \max\Big\{0,\big[K_0'(\alpha_0^{*})\cdot \alpha_0^{*} - K_1(\alpha_0^{*})\big] - \big[K_1'(\alpha_1^{*})\cdot \alpha_1^{*} - K_1(\alpha_1^{*})\big]\Big\}$. Increase $\alpha_1^{*}$ to $\hat{\alpha}_1 > \alpha_1^{*}$ such that $\alpha_1^{*} <\hat{\alpha}_1< \alpha_1^{MH}$,  $\alpha_0^{*} \leq \hat{\alpha}_1$ and $K_0'(\alpha_0^{*})\geq K_1'(\hat{\alpha}_1)$ are satisfied. Take $\hat{m}_1: =  \max\Big\{0,\big[K_0'(\alpha_0^{*})\cdot \alpha_0^{*} - K_1(\alpha_0^{*})\big] - \big[K_1'(\hat{\alpha}_1)\cdot \hat{\alpha}_1 - K_1(\hat{\alpha}_1)\big]\Big\}$. We have that $\hat{m}_1\leq m_1^{*}$ since $K_1'(\hat{\alpha}_1)\cdot \hat{\alpha}_1 - K_1(\hat{\alpha}_1)> K_1'(\alpha_1^{*})\cdot \alpha_1^{*} - K_1(\alpha_1^{*})$. $(\alpha_0^{*},\hat{\alpha}_1, \hat{m}_1)$ is also feasible to problem \eqref{EF-eq1}. The profit from $(\alpha_0^{*},\hat{\alpha}_1, \hat{m}_1)$ is strictly higher than that from $(\alpha_0^{*},\alpha_1^{*},m_1^{*})$, due to that $\Theta(\hat{\alpha}_1) - K_1'(\hat{\alpha}_1)\cdot \hat{\alpha}_1 - \hat{m}_1 > \Theta(\alpha_1^{*}) - K_1'(\alpha_1^{*})\cdot \alpha_1^{*} - m_1^{*}$. It is contradictory to the optimality of $(\alpha_0^{*},\alpha_1^{*},m_1^{*})$.\\
\\
\textbf{Case 2: }$K_0'(\alpha_0^{*})>K_0'(\alpha_0^{MH})$.

In this case, decrease $\alpha_0^{*}$ to $\hat{\alpha}_0<\alpha_0^{*}$ such that $\alpha_0^{*} > \hat{\alpha}_0>\alpha_0^{MH}$, $\hat{\alpha}_0 \leq \alpha_1^{*}$ and $K_0'(\hat{\alpha}_0)\geq K_1'(\alpha_1^{*})$ are satisfied. Take $\hat{m}_1 := \max\Big\{0,\big[K_0'(\hat{\alpha}_0)\cdot \hat{\alpha}_0 - K_1(\hat{\alpha}_0)\big] - \big[K_1'(\alpha_1^{*})\cdot \alpha_1^{*} - K_1(\alpha_1^{*})\big]\Big\}$. We have that $\hat{m_1}\leq m_1^{*}$ since $K_0'(\hat{\alpha}_0)\cdot \hat{\alpha}_0-K_1(\hat{\alpha}_0) < K_0'(\alpha_0^{*})\cdot \alpha_0^{*}-K_1(\alpha_0^{*})$. $(\hat{\alpha}_0,\alpha_1^{*},\hat{m}_1)$ is also feasible to problem \eqref{EF-eq1} and provides a strictly higher profit than $(\alpha_0^{*},\alpha_1^{*},m_1^{*})$, due to that $\Theta(\hat{\alpha}_0) - K_0'(\hat{\alpha}_0)\cdot \hat{\alpha}_0 > \Theta(\alpha_0^{*})-K_0'(\alpha_0^{*})\cdot \alpha_0^{*}$ and that $\hat{m_1}\leq m_1^{*}$. Again, it is contradictory to the optimality of $(\alpha_0^{*},\alpha_1^{*},m_1^{*})$. \\

In conclusion, both cases are not possible under our assumption. Hence, it must hold that $K_0'(\alpha_0^{*}) = K_1'(\alpha_1^{*})$. \\
\\
\textit{Necessity: }Suppose we have that $K_0'(\alpha_0^{MH}) > K_1'(\alpha_1^{MH})$ and yet there is $K_0'(\alpha_0^{*}) = K_1'(\alpha_1^{*})$, which indicates that either $\alpha_0^{*} < \alpha_1^{*}$ or $\alpha_0^{*} = \alpha_1^{*} = \underline{c}$. If $\alpha_0^{*} < \alpha_1^{*}$, then we have
\begin{equation*}
\begin{aligned}
&\big[K_0'(\alpha_0^{*})\cdot \alpha_0^{*} - K_1(\alpha_0^{*})\big] - \big[K_1'(\alpha_1^{*})\cdot \alpha_1^{*} -K_1(\alpha_1^{*})\big] \\
= & K_1'(\alpha_1^{*})\cdot (\alpha_0^{*}-\alpha_1^{*}) + \int_{\alpha_0^{*}}^{\alpha_1^{*}}K_1'(t)dt \\
= & \int_{\alpha_0^{*}}^{\alpha_1^{*}}\big[K_1'(t)-K_1'(\alpha_1^{*})\big]dt < 0.  
\end{aligned}
\end{equation*}

According to Lemma \ref{EF-lm2}, $(\alpha_0^{*},\alpha_1^{*},m_1^{*})$ optimally solves \eqref{EF-eq1}. Hence, $m_1^{*} = \max\Big\{0,\big[K_0'(\alpha_0^{*})\cdot \alpha_0^{*} - K_1(\alpha_0^{*})\big] - \big[K_1'(\alpha_1^{*})\cdot \alpha_1^{*} - K_1(\alpha_1^{*})\big]\Big\} = 0$.  While if $\alpha_0^{*} = \alpha_1^{*} = \underline{c}$, the corresponding $m_1^{*}$ is also equal to 0. Consider the following three cases.\\
\\
\textbf{Case 1: }$\alpha_0^{*}< \alpha_1^{*}$ and $K_1'(\alpha_1^{*}) \leq K_1'(\alpha_1^{MH})$.

In this case, we have that $K_0'(\alpha_0^{*}) = K_1'(\alpha_1^{*})\leq K_1'(\alpha_1^{MH}) < K_0'(\alpha_0^{MH})$. We can increase $\alpha_0^{*}$ to $\hat{\alpha}_0>\alpha_0^{*}$ such that $\alpha_0^{*}< \hat{\alpha}_0 < \alpha_0^{MH}$, $\hat{\alpha}_0 \leq \alpha_1^{*}$ and $K_0'(\hat{\alpha}_0)\geq K_1'(\alpha_1^{*})$ are satisfied, and we require that $\hat{\alpha}_0$ is sufficiently close to $\alpha_0^{*}$ such that $\big[K_0'(\hat{\alpha}_0)\cdot \hat{\alpha}_0 - K_1(\hat{\alpha}_0)\big] - \big[K_1'(\alpha_1^{*})\cdot \alpha_1^{*} - K_1(\alpha_1^{*})\big] < 0$. Take $\hat{m}_1: = \max\Big\{0,\big[K_0'(\hat{\alpha}_0)\cdot \hat{\alpha}_0 - K_1(\hat{\alpha}_0)\big] - \big[K_1'(\alpha_1^{*})\cdot \alpha_1^{*} - K_1(\alpha_1^{*})\big]\Big\} = 0$. Then $(\hat{\alpha}_0,\alpha_1^{*},\hat{m}_1)$ is also feasible to problem \eqref{EF-eq1}. It yields a strictly higher profit than $(\alpha_0^{*},\alpha_1^{*},m_1^{*})$ since $\Theta(\hat{\alpha}_0) - K_0'(\hat{\alpha}_0)\cdot \hat{\alpha}_0 > \Theta(\alpha_0^{*})-K_0'(\alpha_0^{*})\cdot \alpha_0^{*}$ and that $\hat{m_1}=m_1^{*}$. It is contradictory to the optimality of $(\alpha_0^{*},\alpha_1^{*},m_1^{*})$. \\
\\
\textbf{Case 2: }$\alpha_0^{*}<\alpha_1^{*}$ and $K_1'(\alpha_1^{*}) > K_1'(\alpha_1^{MH})$.

In this case, we can decrease $\alpha_1^{*}$ to $\hat{\alpha}_1 < \alpha_1^{*}$ such that $\alpha_1^{*}>\hat{\alpha}_1>\alpha_1^{MH}$, $\alpha_0^{*} \leq \hat{\alpha}_1$, and $K_0'(\alpha_0^{*})\geq K_1'(\hat{\alpha}_1)$ are satisfied. We further require that $\hat{\alpha}_1$ is sufficiently close to $\alpha_1^{*}$ such that $\big[K_0'(\alpha_0^{*})\cdot \alpha_0^{*} - K_1(\alpha_0^{*})\big] - \big[K_1'(\hat{\alpha}_1)\cdot \hat{\alpha}_1 - K_1(\hat{\alpha}_1)\big]<0$. Take $\hat{m}_1: =  \max\Big\{0,\big[K_0'(\alpha_0^{*})\cdot \alpha_0^{*} - K_1(\alpha_0^{*})\big] - \big[K_1'(\hat{\alpha}_1)\cdot \hat{\alpha}_1 - K_1(\hat{\alpha}_1)\big]\Big\} = 0$. Then $(\alpha_0^{*},\hat{\alpha}_1,\hat{m}_1)$ is feasible to \eqref{EF-eq1} and yields a strictly higher profit than $(\alpha_0^{*},\alpha_1^{*},m_1^{*})$, due to that $\Theta(\hat{\alpha}_1) - K_1'(\hat{\alpha}_1)\cdot \hat{\alpha}_1 > \Theta(\alpha_1^{*}) - K_1'(\alpha_1^{*})\cdot \alpha_1^{*}$ and $\hat{m}_1 = m_1^{*}$. It is contradictory to the optimality of $(\alpha_0^{*},\alpha_1^{*},m_1^{*})$. \\
\\
\textbf{Case 3: }$\alpha_0^{*} = \alpha_1^{*} = \underline{c}$.

We first show that it must hold that $\alpha_1^{MH} > \underline{c}$. Suppose this is not the case and there is $\alpha_1^{MH} = \underline{c}$. From $\alpha_0^{*} = \alpha_1^{*} = \underline{c}$ we have that $K_0'(\underline{c}) = K_1'(\underline{c})$. Then there is $\alpha_0^{MH} > \underline{c} = \alpha_1^{MH}$ given $K_0'(\alpha_0^{MH})> K_1'(\alpha_1^{MH})$. From the definition of $\alpha_1^{MH}$, it holds that $\Theta'(\alpha_1^{MH})\leq  K_1'(\alpha_1^{MH})+K_1''(\alpha_1^{MH})\cdot \alpha_1^{MH} = K_1'(\alpha_1^{MH})$. As a result, we have that
\begin{equation*}
 \Theta'(\alpha_0^{MH}) \leq \Theta'(\alpha_1^{MH}) \leq K_1'(\alpha_1^{MH})< K_0'(\alpha_0^{MH})+ K_0''(\alpha_0^{MH})\cdot \alpha_0^{MH}.
\end{equation*}
It indicates that $\frac{d}{d\alpha}\big[\Theta(\alpha) - K_0'(\alpha)\cdot \alpha\big]\Big|_{\alpha=\alpha_0^{MH}} < 0$, which contradicts the definition of $\alpha_0^{MH}$. Therefore, we have that $\alpha_1^{MH}>\underline{c}$.

Since $K_0'(\alpha_0^{MH}) > K_1'(\alpha_1^{MH}) > K_1'(\underline{c}) = K_0'(\underline{c})$, we also have that $\alpha_0^{MH} > \underline{c}$. In this case we can increase $\alpha_0^{*}$ to $\hat{\alpha}_0 > \alpha_0^{*} = \underline{c}$ and $\alpha_1^{*}$ to $\hat{\alpha}_1 > \alpha_1^{*} = \underline{c}$ such that $\alpha_0^{*} < \hat{\alpha}_0 < \alpha_0^{MH}$, $\alpha_1^{*} < \hat{\alpha}_1 < \alpha_1^{MH}$ and $K_0'(\hat{\alpha}_0) = K_1'(\hat{\alpha}_1)$. $(\hat{\alpha}_0,\hat{\alpha}_1,0)$ is feasible to \eqref{EF-eq1}, and yields strictly higher profit than $(\alpha_0^{*},\alpha_1^{*},m_1^{*})$. It is contradictory to the optimality of $(\alpha_0^{*},\alpha_1^{*},m_1^{*})$.  \\

In conclusion, given that $K_0'(\alpha_0^{MH})> K_1'(\alpha_1^{MH})$, it must hold that $K_0'(\alpha_0^{*}) > K_1'(\alpha_1^{*})$. Hence, the necessity is established. $\hfill\square$ \\
\\
\textbf{Proof of Proposition \ref{EF-pp1}:} We first show that $\alpha_0^{MH}\leq \alpha_0^{FB}$ and $\alpha_1^{MH}\leq \alpha_1^{FB}$. For any $t\in \{0,1\}$, $\alpha_t^{MH}\leq \alpha_t^{FB}$ trivially holds if $\alpha_t^{FB} = \overline{c}$. When $\alpha_t^{FB} \in [\underline{c},\overline{c})$, suppose that $\alpha_t^{MH} > \alpha_t^{FB}$. Since $\alpha_t^{FB} = \argmax_{\alpha\in[\underline{c},\overline{c}]}S_t(\alpha) = \argmax_{\alpha\in [\underline{c},\overline{c}]}\big[\Theta(\alpha) - K_t(\alpha)\big]$, we have that $\Theta'(\alpha_t^{FB})\leq K_t'(\alpha_t^{FB})$. It follows that
\begin{equation}
    \Theta'(\alpha_t^{MH})\leq \Theta'(\alpha_t^{FB}) \leq K_t'(\alpha_t^{FB})< K_t'(\alpha_t^{MH}) + K_t''(\alpha_t^{MH})\cdot \alpha_t^{MH},
\label{EF-eq4}
\end{equation}
where the first and the last inequality both follow from the assumption that $\alpha_t^{MH}>\alpha_t^{FB}$. \eqref{EF-eq4} implies that $\frac{d}{d\alpha}\big[\Theta(\alpha) - K_t'(\alpha)\cdot \alpha\big]\Big|_{\alpha = \alpha_t^{MH}} = \Theta'(\alpha_t^{MH}) - K_t'(\alpha_t^{MH}) - K_t''(\alpha_t^{MH})\cdot \alpha_t^{MH} < 0$, which contradicts to $\alpha_t^{MH} = \argmax_{\alpha\in[\underline{c},\overline{c}]}\big[\Theta(\alpha) - K_t'(\alpha)\cdot \alpha\big]$. Hence, we have that $\alpha_t^{MH}\leq \alpha_t^{FB}$ for all $t\in \{0,1\}$. 

Furthermore, we can show that $\alpha_0^{FB}\leq \alpha_1^{FB}$. Suppose that this is not the case and we have $\alpha_0^{FB} > \alpha_1^{FB}$, then we have
\begin{equation*}
\Theta'(\alpha_0^{FB}) \leq \Theta'(\alpha_1^{FB})\leq K_1'(\alpha_1^{FB}) < K_1'(\alpha_0^{FB}) < K_0'(\alpha_0^{FB}),
\end{equation*}
where the first and the second-to-last inequality follow from our assumption $\alpha_0^{FB} > \alpha_1^{FB}$, and the last inequality follows from Assumption \ref{OM-as1}. Consequently, $\frac{d}{d\alpha}\big[\Theta(\alpha) - K_0(\alpha)\big]\Big|_{\alpha = \alpha_0^{FB}}<0$, which contradicts to $\alpha_0^{FB} = \argmax_{\alpha\in [\underline{c},\overline{c}]}\big[\Theta(\alpha)- K_0(\alpha)\big]$.\\
\\
\textit{Statement} \circled{1}: If $K_0'(\alpha_0^{MH})\leq K_1'(\alpha_1^{MH})$, according to Theorem \ref{EF-th1}, we have $K_0'(\alpha_0^{*}) = K_1'(\alpha_1^{*})$ at the optimal menu. 

First, we can establish that $K_0'(\alpha_0^{MH})\leq K_0'(\alpha_0^{*}) = K_1'(\alpha_1^{*})\leq K_1'(\alpha_1^{MH})$. Otherwise, if $K_0'(\alpha_0^{*}) = K_1'(\alpha_1^{*}) < K_0'(\alpha_0^{MH})$, then the principal can strictly improve her profit by slightly increasing $\alpha_0^{*}$, following similar arguments as in the proof of Theorem \ref{EF-th1}. While if $K_1'(\alpha_1^{MH}) < K_0'(\alpha_0^{*}) = K_1'(\alpha_1^{*})$, the principal can similarly get strictly better-off by slightly decreasing $\alpha_1$. Consequently, we have that $\alpha_0^{*}\geq \alpha_0^{MH}$ and $\alpha_1^{*}\leq \alpha_1^{MH}$. 

It suffices to show that $\alpha_0^{*}\leq \alpha_0^{FB}$. It trivially holds if $\alpha_0^{FB} = \overline{c}$. Suppose $\alpha_0^{FB} \in [\underline{c},\overline{c})$ and there is $\alpha_0^{*} > \alpha_0^{FB}$. From $\Theta'(\alpha_0^{FB})\leq K_0'(\alpha_0^{FB})$ and $\alpha_0^{*} > \alpha_0^{FB}$, we have that $\Theta'(\alpha_0^{*})< K_0'(\alpha_0^{*})$. Since $\alpha_0^{*} < \alpha_1^{*} \leq \alpha_1^{MH}$, we have
\begin{equation*}
   \Theta'(\alpha_1^{MH})\leq \Theta'(\alpha_0^{*})<K_0'(\alpha_0^{*})=K_1'(\alpha_1^{*})<K_1'(\alpha_1^{MH}) + K_1''(\alpha_1^{MH})\cdot \alpha_1^{MH}, 
\end{equation*}
which implies that $\frac{d}{d\alpha}\big[\Theta(\alpha) - K_1'(\alpha)\cdot \alpha\big]\Big|_{\alpha = \alpha_1^{MH}} = \Theta'(\alpha_1^{MH}) - K_1'(\alpha_1^{MH}) - K_1''(\alpha_1^{MH})\cdot \alpha_1^{MH} < 0$, which contradicts to $\alpha_1^{MH} = \argmax_{\alpha\in [\underline{c},\overline{c}]}\big[\Theta(\alpha) - K_1'(\alpha)\cdot \alpha\big]$. \\
\\
\textit{Statement} \circled{2}: If $K_0'(\alpha_0^{MH}) > K_1'(\alpha_1^{MH})$, according to Theorem \ref{EF-th1}, we have $K_0'(\alpha_0^{*})>K_1'(\alpha_1^{*})$ at the optimal menu. 

We can first establish that $K_1'(\alpha_1^{MH})\leq K_1'(\alpha_1^{*})<K_0'(\alpha_0^{*}) \leq K_0'(\alpha_0^{MH})$. Otherwise, if $K_0'(\alpha_0^{*}) > K_0'(\alpha_0^{MH})$, the principal can get strictly better-off by slightly decreasing $\alpha_0^{*}$. While if $K_1'(\alpha_1^{*}) < K_1'(\alpha_1^{MH})$, the principal can strictly benefit from slightly increasing $\alpha_1^{*}$. As a result, we have $\alpha_0^{*}\leq \alpha_0^{MH}$ and $\alpha_1^{*}\geq \alpha_1^{MH}$. 

It suffices to show that $\alpha_1^{*} \leq \alpha_1^{FB}$. It trivially holds if $\alpha_1^{FB} = \overline{c}$. Suppose $\alpha_1^{FB}\in [\underline{c},\overline{c})$ and $\alpha_1^{*} > \alpha_1^{FB}$. Then there is $\alpha_1^{*} >\alpha_1^{FB}\geq \alpha_1^{MH}$. We consider the following three cases based on the relationship between $\alpha_0^{*}$ and $\alpha_1^{*}$.  \\
\\
\textbf{Case 1: }$\alpha_0^{*}<\alpha_1^{*}$ and $\big[K_0'(\alpha_0^{*})\cdot \alpha_0^{*} - K_1(\alpha_0^{*})\big] -\big[K_1'(\alpha_1^{*})\cdot \alpha_1^{*} - K_1(\alpha_1^{*})\big] < 0$.

In this case, $m_1^{*} = \max\Big\{0, \big[K_0'(\alpha_0^{*})\cdot \alpha_0^{*} - K_1(\alpha_0^{*})\big] -\big[K_1'(\alpha_1^{*})\cdot \alpha_1^{*} - K_1(\alpha_1^{*})\big]\Big\} = 0$. Suppose the principal decreases $\alpha_1^{*}$ to $\hat{\alpha}_1$ such that $\alpha_1^{*} > \hat{\alpha}_1 > \alpha_1^{MH}$ and $\hat{\alpha}_1 > \alpha_0^{*}$. We also require that $\hat{\alpha}_1$ is sufficiently close to $\alpha_1^{*}$ such that $\hat{m}_1 : = \max\Big\{0, \big[K_0'(\alpha_0^{*})\cdot \alpha_0^{*} - K_1(\alpha_0^{*})\big] -\big[K_1'(\hat{\alpha}_1)\cdot \hat{\alpha}_1 - K_1(\hat{\alpha}_1)\big]\Big\}=0$. Then $(\alpha_0^{*},\hat{\alpha}_1,\hat{m}_1)$ is feasible to problem \eqref{EF-eq1} and $\Theta(\hat{\alpha}_1) - K_1'(\hat{\alpha}_1)\cdot \hat{\alpha}_1- \hat{m}_1> \Theta(\alpha_1^{*}) - K_1'(\alpha_1^{*})\cdot \alpha_1^{*}- m_1^{*}$. It is a contradiction to the optimality of $(\alpha_0^{*},\alpha_1^{*},m_1^{*})$ to \eqref{EF-eq1}. \\
\\
\textbf{Case 2: }$\alpha_0^{*}<\alpha_1^{*}$ and $\big[K_0'(\alpha_0^{*})\cdot \alpha_0^{*} - K_1(\alpha_0^{*})\big] -\big[K_1'(\alpha_1^{*})\cdot \alpha_1^{*} - K_1(\alpha_1^{*})\big] \geq 0$.

In this case, $m_1^{*} = \big[K_0'(\alpha_0^{*})\cdot \alpha_0^{*} - K_1(\alpha_0^{*})\big] -\big[K_1'(\alpha_1^{*})\cdot \alpha_1^{*} - K_1(\alpha_1^{*})\big] \geq 0$. Suppose the principal decreases $\alpha_1^{*}$ to $\hat{\alpha}_1$ such that $\alpha_1^{*} > \hat{\alpha}_1 > \alpha_1^{FB}$ and $\hat{\alpha}_1 > \alpha_0^{*}$. Define $\hat{m}_1:=\big[K_0'(\alpha_0^{*})\cdot \alpha_0^{*} - K_1(\alpha_0^{*})\big] -\big[K_1'(\hat{\alpha}_1)\cdot \hat{\alpha}_1 - K_1(\hat{\alpha}_1)\big]>0$. Then $(\alpha_0^{*},\hat{\alpha}_1,\hat{m}_1)$ is feasible to problem \eqref{EF-eq1}. The profit obtained from the type-1 agent under $(\alpha_0^{*},\alpha_1^{*},m_1^{*})$ is given by
\begin{equation}
\Theta(\alpha_1^{*}) - K_1'(\alpha_1^{*})\cdot \alpha_1^{*} - m_1^{*} = \big[\Theta(\alpha_1^{*}) - K_1(\alpha_1^{*})\big]- \big[K_1(\alpha_1^{*})\cdot \alpha_1^{*} - K_1(\alpha_1^{*})+m_1^{*}\big],   
\label{EF-eq5}
\end{equation}
and the profit obtained from the type-0 agent under $(\alpha_0^{*},\hat{\alpha}_1,\hat{m}_1)$ is given by
\begin{equation}
\Theta(\hat{\alpha}_1) - K_1'(\hat{\alpha}_1)\cdot \hat{\alpha}_1 - \hat{m}_1 = \big[\Theta(\hat{\alpha}_1) - K_1(\hat{\alpha}_1)\big]- \big[K_1(\hat{\alpha}_1)\cdot \hat{\alpha}_1 - K_1(\hat{\alpha}_1)+\hat{m}_1\big]. 
\label{EF-eq6}
\end{equation}

From $\alpha_1^{*} > \hat{\alpha}_1 \geq \alpha_1^{FB}$, the surplus from $\hat{\alpha}_1$ would be strictly higher than that from $\alpha_1^{*}$, i.e., $\Theta(\hat{\alpha}_1) - K_1(\hat{\alpha}_1) > \Theta(\alpha_1^{*}) - K_1(\alpha_1^{*})$. However, the payoff obtained by the type-1 agent would be the same, since $K_1(\alpha_1^{*})\cdot \alpha_1^{*} - K_1(\alpha_1^{*})+m_1^{*} = K_1(\hat{\alpha}_1)\cdot \hat{\alpha}_1 - K_1(\hat{\alpha}_1)+\hat{m}_1$. From \eqref{EF-eq5} and \eqref{EF-eq6}, we have that $\Theta(\hat{\alpha}_1) - K_1'(\hat{\alpha}_1)\cdot \hat{\alpha}_1 - \hat{m}_1 > \Theta(\alpha_1^{*}) - K_1'(\alpha_1^{*})\cdot \alpha_1^{*} - m_1^{*}$, which contradicts to the optimality of $(\alpha_0^{*},\alpha_1^{*},m^{*}_1)$ to \eqref{EF-eq1}.\\
\\
\textbf{Case 3: }$\alpha_0^{*} = \alpha_1^{*}$.

In this case, we have $\alpha_0^{*} = \alpha_1^{*}> \alpha_1^{FB}\geq \alpha_0^{FB}\geq \alpha_0^{MH}$. The principal can decrease $\alpha_0^{*}$ to $\hat{\alpha}_0$ such that $\alpha_0^{*} > \hat{\alpha}_0 > \alpha_0^{MH}$ and $K_0'(\hat{\alpha}_0) > K_1'(\alpha_1^{*})$. Define $\hat{m}_1:=\max\Big\{ 0,\big[K_0'(\hat{\alpha}_0)\cdot \hat{\alpha}_0 - K_1(\hat{\alpha}_0)\big] -\big[K_1'(\alpha_1^{*})\cdot \alpha_1^{*} - K_1(\alpha_1^{*})\big]\Big\}$. Then $(\hat{\alpha}_0,\alpha_1^{*},\hat{m}_1)$ is feasible to problem \eqref{EF-eq1}. Furthermore, it holds that $\hat{m}_1 \leq m_1^{*} = \max\Big\{ 0,\big[K_0'(\alpha_0^{*})\cdot \alpha_0^{*} - K_1(\alpha_0^{*})\big] -\big[K_1'(\alpha_1^{*})\cdot \alpha_1^{*} - K_1(\alpha_1^{*})\big]\Big\}$. We have that $(\hat{\alpha_0},\alpha_1^{*},\hat{m}_1)$ is feasible to \eqref{EF-eq1}, with $\Theta(\hat{\alpha}_0) - K_0'(\hat{\alpha}_0)\cdot\hat{\alpha}_0 > \Theta(\alpha_0^{*}) - K_0'(\alpha_0^{*})\cdot \alpha_0^{*}$ and $\Theta(a^{*}) - K_1'(\alpha_1^{*})\cdot \alpha_1^{*} - \hat{m}_1\geq \Theta(a^{*}) - K_1'(\alpha_1^{*})\cdot \alpha_1^{*} -m_1^{*}$. It contradicts the optimality of $(\alpha_0^{*},\alpha_1^{*},m_1^{*})$ to \eqref{EF-eq1}.\\

In conclusion, all three cases are impossible. We have shown that $\alpha_1^{*}\leq \alpha_1^{FB}$.$\hfill\square$
\section{Proofs in Section \ref{MBCF-os}}
\label{MBCF-app-os}
\textbf{Proof of Proposition \ref{WA-pp1}: }Take the optimal menu $(a^{*},m^{*},R^{*},\mu^{*})$ that consists of a single full-range contract. Then, $(\alpha_0^{*},\alpha_1^{*})$ solves the following problem:
\begin{equation}
\begin{aligned}
\max_{(\alpha_0,\alpha_1)} \quad& p_0\big[\Theta(\alpha_0)-K_0'(\alpha_0)\cdot \alpha_0\big] + p_1\big[\Theta(\alpha_1) - K_1'(\alpha_1)\cdot \alpha_1\big]\\
s.t. \quad& K_0'(\alpha_0) = K_1'(\alpha_1). 
\end{aligned}
\label{WA-eq1}
\end{equation}
\textit{Statement \circled{1}:} We can show that if $K_0'(\alpha_0^{MH}) \leq K_1'(\alpha_1^{MH})$, then it must hold that $K_0'(\alpha_0^{MH})\leq K_0'(\alpha_0^{*}) = K_1'(\alpha_1^{*}) \leq K_1'(\alpha_1^{MH})$. Otherwise, if $K_0'(\alpha_0^{*}) = K_1'(\alpha_1^{*}) < K_0'(\alpha_0^{MH})\leq K_1'(\alpha_1^{MH})$, we can increase $\alpha_0^{*}$ to $\hat{\alpha}_0$ and $\alpha_1^{*}$ to $\hat{\alpha}_1$ such that $\alpha_0^{*} < \hat{\alpha}_0 < \alpha_0^{MH}$, $\alpha_1^{*} < \hat{\alpha}_1 < \alpha_1^{MH}$, and $K_0'(\hat{\alpha}_0) = K_1'(\hat{\alpha}_1)$. Then $(\hat{\alpha}_0,\hat{\alpha}_1)$ is feasible to \eqref{WA-eq1} and yields a strictly higher profit than $(\alpha_0^{*},\alpha_1^{*})$, contradictionary to the optimality of $(\alpha_0^{*},\alpha_1^{*})$. Similarly, it is also impossible to have $K_0'(\alpha_0^{MH}) \leq K_1'(\alpha_1^{MH})< K_0'(\alpha_0^{*}) = K_1'(\alpha_1^{*})$. Hence, we've established that $K_0'(\alpha_0^{MH})\leq K_0'(\alpha_0^{*}) = K_1'(\alpha_1^{*}) \leq K_1'(\alpha_1^{MH})$, which implies that $\alpha_0^{*}\geq \alpha_0^{MH}$ and $\alpha_1^{*}\leq \alpha_1^{MH}$. 

It suffices to show that $\alpha_0^{*}\leq \alpha_0^{FB}$. Suppose that this is not the case and we have that $\alpha_0^{*} > \alpha_0^{FB}$. It implies that $\alpha_0^{FB} < \overline{c}$, $\alpha_1^{*} \geq \alpha_0^{*} > \underline{c}$, and 
\begin{equation*}
\begin{aligned}
    \Theta'(\alpha_0^{*}) &\leq \Theta'(\alpha_0^{FB}) \leq K_0'(\alpha_0^{FB}) < K_0'(\alpha_0^{*}),\\
\Longrightarrow \Theta'(\alpha_1^{*}) &\leq \Theta'(\alpha_0^{*}) < K_0'(\alpha_0^{*}) = K_1'(\alpha_1^{*}) < K_1'(\alpha_1^{*}) + K_1''(\alpha_1^{*})\cdot \alpha_1^{*},  
\end{aligned}
\end{equation*}
which implies that $\alpha_1^{MH} < \alpha_1^{*}$, contradicting $\alpha_1^{*}\leq \alpha_1^{MH}$. \\
\\
\textit{Statement \circled{2}:} If $K_0'(\alpha_0^{MH}) > K_1'(\alpha_1^{MH})$, we can similarly show that $K_0'(\alpha_0^{MH})\geq K_0'(\alpha_0^{*}) = K_1'(\alpha_1^{*})\geq K_1'(\alpha_1^{MH})$, which implies that $\alpha_0^{*}\leq \alpha_0^{MH}$, and $\alpha_1^{*}\geq \alpha_1^{MH}$. $\hfill\square$\\
\\
Before presenting the proof of Proposition \ref{WA-pp2}, we first prove the following lemma.
\begin{lemma}
Take $R_0, R_1 \subseteq [\underline{x},\overline{x}]$ such that $R_0$ is a right-truncated interval, and $R_1$ is a left-truncated interval, i.e., $\exists x_0,x_1 \in [\underline{x},\overline{x}]$ such that $R_0 = [\underline{x},x_0]$, $R_1 = [x_1,\overline{x}]$. Take $\Theta_{R_0}:c(R_0)\to \mathbb{R}$, $\Theta_{R_1}:c(R_1)\to \mathbb{R}$ as the concavification of $\theta(\cdot)$ over $c(R_0)$ and $c(R_1)$. It holds that
\begin{equation*}
\begin{aligned}
    \Theta_{R_0}^{'}(\alpha) &\leq \Theta'(\alpha),\quad \forall \alpha\in c(R_0)\\
    \Theta_{R_1}^{'}(\alpha) &\geq \Theta'(\alpha),\quad \forall \alpha \in c(R_0).
\end{aligned}
\end{equation*}
\label{WA-lm1}
\end{lemma}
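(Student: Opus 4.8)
The plan is to reduce both inequalities to a single monotonicity statement about the ``gap'' between the two concavifications, then differentiate. Put $\phi:=\Theta-\Theta_{R_0}$ on $c(R_0)=[\underline c,c(x_0)]$ and $\psi:=\Theta-\Theta_{R_1}$ on $c(R_1)=[c(x_1),\overline c]$ (the domain in the second displayed line of the statement should of course read $c(R_1)$). I will show $\phi$ is nondecreasing and $\psi$ is nonincreasing; this gives $\Theta_{R_0}'\le\Theta'$ on $c(R_0)$ and $\Theta_{R_1}'\ge\Theta'$ on $c(R_1)$ at every point where the envelopes are differentiable, which is all but countably many points, and at the remaining points one still gets $D^+\Theta_{R_0}\le D^+\Theta$ and $D^-\Theta_{R_1}\ge D^-\Theta$, which is what Proposition~\ref{WA-pp2} actually uses. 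First I would record the standard facts about one-dimensional concavifications of the continuous function $\theta=c^{-1}$: $\Theta,\Theta_{R_0},\Theta_{R_1}$ are concave and continuous on their closed domains; a concave majorant of $\theta$ over $[\underline c,\overline c]$ restricts to a concave majorant over any subinterval, so $\Theta_{R_0}\le\Theta$ on $c(R_0)$ and $\Theta_{R_1}\le\Theta$ on $c(R_1)$; the concavification of $\theta$ coincides with $\theta$ at the endpoints of its domain (an endpoint has only the trivial convex representation), so $\Theta_{R_0}(\underline c)=\theta(\underline c)=\Theta(\underline c)$ and $\Theta_{R_1}(\overline c)=\theta(\overline c)=\Theta(\overline c)$; and whenever $\widehat\theta(\alpha)>\theta(\alpha)$ at an interior $\alpha$, there is a maximal interval $[p,q]\ni\alpha$ with $p<\alpha<q$ on which $\widehat\theta$ is the affine chord joining $(p,\theta(p))$ and $(q,\theta(q))$, and this chord lies weakly above $\theta$ on $[p,q]$.

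For $\phi$: it is continuous with $\phi\ge0$ and $\phi(\underline c)=0$, and as a difference of concave functions it is right-differentiable; so it suffices to show $D^+\phi(\alpha)\ge0$ for all $\alpha$, since a continuous function with nonnegative right-hand derivative everywhere is nondecreasing. If $\Theta(\alpha)=\Theta_{R_0}(\alpha)$, then $\alpha$ is a global minimizer of the nonnegative function $\Theta-\Theta_{R_0}$ on $c(R_0)$, so its right-hand derivative there is $\ge0$, i.e.\ $D^+\Theta_{R_0}(\alpha)\le D^+\Theta(\alpha)$. If $\Theta(\alpha)>\Theta_{R_0}(\alpha)$, let $[p,q]$ be the maximal affine piece of $\Theta$ through $\alpha$, with slope $s^\ast=\Theta'(\alpha)$ and $\Theta(p)=\theta(p)$, $\Theta(q)=\theta(q)$. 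Then $q>c(x_0)$: otherwise $[p,q]\subseteq c(R_0)$, and since $\Theta_{R_0}$ is concave with $\Theta_{R_0}(p)\ge\theta(p)$ and $\Theta_{R_0}(q)\ge\theta(q)$, evaluating concavity at the convex combination $\alpha$ of $p$ and $q$ gives $\Theta_{R_0}(\alpha)\ge\Theta(\alpha)$, a contradiction. Hence $[p,c(x_0)]\subseteq[p,q]$, on which the chord equals $\Theta$ and so dominates $\Theta_{R_0}$, while at $p$ it equals $\theta(p)\le\Theta_{R_0}(p)$; this forces $\Theta_{R_0}(p)=\theta(p)$, so the chord minus $\Theta_{R_0}$ is nonnegative and vanishes at $p$, whence $D^+\Theta_{R_0}(p)\le s^\ast$. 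Since $\Theta_{R_0}$ is concave and $p\le\alpha$, $D^+\Theta_{R_0}(\alpha)\le D^+\Theta_{R_0}(p)\le s^\ast=D^+\Theta(\alpha)$. This establishes $D^+\phi\ge0$ throughout, hence $\phi$ is nondecreasing.

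The statement for $\Theta_{R_1}$ is the left--right mirror: I would show $D^-\psi(\alpha)\le0$ for all $\alpha\in c(R_1)$. When $\Theta(\alpha)=\Theta_{R_1}(\alpha)$, $\alpha$ minimizes $\Theta-\Theta_{R_1}\ge0$, so its left-hand derivative there is $\le0$. When $\Theta(\alpha)>\Theta_{R_1}(\alpha)$, the maximal affine piece $[p,q]$ of $\Theta$ through $\alpha$ now has $p<c(x_1)$ (same convex-combination argument), the chord equals $\Theta$ on $[c(x_1),q]$ and dominates $\Theta_{R_1}$ there with equality to $\theta(q)$ at $q$, giving $D^-\Theta_{R_1}(q)\ge s^\ast$, and concavity with $\alpha\le q$ yields $D^-\Theta_{R_1}(\alpha)\ge s^\ast=D^-\Theta(\alpha)$. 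So $\psi$ is nonincreasing, and differentiating gives $\Theta_{R_1}'\ge\Theta'$.

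The one genuinely delicate step is the case $\Theta(\alpha)>\Theta_{R_i}(\alpha)$: one cannot simply recycle a supporting line of $\Theta$ at $\alpha$, since lowering its intercept to the smaller envelope can push it below $\theta$ inside the truncated range; the key is that the optimal $\Theta$-chord through such an $\alpha$ is forced to leave the truncated range through its far endpoint, which lets one transport a slope bound from the near touch point $p$ (where both envelopes must meet $\theta$) to $\alpha$ by monotonicity of the concave derivative. Everything else — continuity of $\Theta$, the chord/touch-point structure of one-dimensional concavifications, and the Dini-derivative criterion for monotonicity — is standard, and the handling of kinks (where only the one-sided inequalities survive, but which coincide with the stated inequalities wherever the envelopes are differentiable) is routine bookkeeping.
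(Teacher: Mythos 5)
Your proof is correct and uses essentially the same approach as the paper's: in the nontrivial case, locate the left touch-point of the affine piece of $\Theta$ through $\alpha$, observe it lies inside the truncated range so that $\Theta_{R_0}$ must also touch $\theta$ there, and transport the resulting slope bound to $\alpha$ via concavity of $\Theta_{R_0}$. Your reframing as monotonicity of the gap $\Theta-\Theta_{R_0}$, the Dini-derivative bookkeeping, and the slightly different case split ($\Theta=\Theta_{R_0}$ vs.\ the paper's $\Theta=\theta$) are just more careful packaging of the same underlying argument, with the advantage of handling one-sided derivatives at boundary points more cleanly.
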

\noindent\textbf{Proof of Lemma \ref{WA-lm1}: }Take any $\alpha \in c(R_0)$, we have that $\Theta_{R_0}(\alpha) \leq \Theta(\alpha)$. If $\Theta(\alpha) = \theta(\alpha)$, it implies that $\Theta(\alpha) = \Theta_{R_0}(\alpha) = \theta(\alpha)$, which leads to that $\Theta'(\alpha) = \Theta_{R_0}'(\alpha) = \theta'(\alpha)$. 

If $\Theta(\alpha) > \theta(\alpha)$, then there exist $u,u'\in [\underline{c},\overline{c}]$ such that $u < \alpha < u'$, $\Theta(u) = \theta(u)$, $\Theta(u') = \theta(u')$, and that $\Theta(\cdot)$ is affine on $[u,u']$. We have that $\Theta'(\alpha) = \theta'(u)$. Since $u<\alpha$ and that $R_0$ is a right-truncated interval, we have that $u\in c(R_0)$. At $u$ there is $\Theta(u) = \theta(u)$, it implies that $\Theta(u) = \Theta_{R_0}(u) = \theta(u)$ and $\Theta_{R_0}'(u) = \theta'(u) = \Theta'(\alpha)$. From the concavity of $\Theta_{R_0}(\cdot)$, we have that $\Theta_{R_0}'(\alpha) \leq \Theta_{R_0}'(u) = \Theta'(\alpha)$. 

Similarly we can establish that $\Theta'_{R_1}(\alpha)\geq \Theta'(\alpha)$ for all $\alpha \in c(R_1)$. $\hfill\square$\\
\\
\textbf{Proof of Proposition \ref{WA-pp2}: }Take an arbitrary optimal menu $(a^{*},m^{*},R^{*},\mu^{*})$ that solves \eqref{MBC-P}. It holds that $K_0'(\alpha_0^{*}) > K_1'(\alpha_1^{*})$. From Proposition \ref{OM-pp2}, it is without loss of generality to assume that $R_0^{*}$ is a right-truncated interval and that $R_1^{*}$ is a left-truncated interval. Take $\Theta_{R_0^{*}}:c(R_0^{*})\to \mathbb{R}$, $\Theta_{R_1^{*}}:c(R_1^{*})\to \mathbb{R}$ as the concavification of $\theta(\cdot)$ over $c(R_0^{*})$ and $c(R_1^{*})$. Take $\overline{r}_0^{*} := \sup c(R_0^{*})$ and $\underline{r}_1^{*} := \inf c(R_1^{*})$. The principal's profit from $(a^{*},m^{*},R^{*},\mu^{*})$ is given by
\begin{equation*}
    p_0\big[\Theta_{R_0^{*}}(\alpha_0^{*})-K_0'(\alpha_0^{*})\cdot \alpha_0^{*} - m_0^{*}\big] + p_1 \big[\Theta_{R_1^{*}}(\alpha_1^{*}) - K_1'(\alpha_1^{*})\cdot \alpha_1^{*} - m_1^{*}\big].
\end{equation*}
\textit{Statement \circled{1}:} We first show that $\alpha_0^{*}\leq \alpha_0^{FB}$. Suppose that this is not the case and we have $\alpha_0^{*} > \alpha_0^{FB}$. Take $s_0^{*}(\cdot)$, $s_1^{*}(\cdot)$ as the contracts pinned down by $a^{*}$, $m^{*}$ and $R^{*}$. $(a^{*},m^{*},R^{*},\mu^{*})$ satisfies the IC constraints:
\begin{subequations}
\begin{align}
K_0'(\alpha_0^{*})\cdot \alpha_0^{*} - K_0(\alpha_0^{*}) + m_0^{*} &\geq \max_{\alpha \in [\underline{c},\overline{c}]}\big[\widehat{s_1^{*}\circ \theta}(\alpha) - K_0(\alpha)\big], \label{WA-eq2a} \\
K_1'(\alpha_1^{*})\cdot \alpha_1^{*} - K_1(\alpha_1^{*}) + m_1^{*} &\geq \max_{\alpha\in [\underline{c},\overline{c}]}\big[\widehat{s_0^{*}\circ \theta}(\alpha) - K_1(\alpha)\big]. \label{WA-eq2b}
\end{align}
\end{subequations}

Fix $R_0^{*}$, $R_1^{*}$, $\alpha_1^{*}$ and $m_1^{*}$. Decrease $\alpha_0^{*}$ to $\hat{\alpha}_0$ (and adjust the output distribution accordingly such that $\Theta_{R_0^{*}}(\hat{\alpha}_0)$ is attained) such that $\alpha_0^{*} > \hat{\alpha}_0 > \alpha_0^{FB}$ and $K_0'(\alpha_0^{*})>K_0'(\hat{\alpha}_0) > K_1'(\alpha_1^{*})\geq K_1'(\alpha_0^{*})$. Take $\hat{m}_0$ such that 
\begin{equation}
    K_0'(\hat{\alpha}_0)\cdot \hat{\alpha}_0 - K_0(\hat{\alpha}_0) + \hat{m}_0 = K_0'(\alpha_0^{*})\cdot \alpha_0^{*} - K_0(\alpha_0^{*}) + m_0^{*}.
\label{WA-eq3}
\end{equation}

Increase $m_0^{*}$ to $\hat{m}_0$. Let $\hat{s}_0$ denote the contract that is pinned down by $\hat{\alpha}_0$, $\hat{m}_0$ and $R_0^{*}$. Under the new menu, \eqref{WA-eq2a} still holds since the LHS of \eqref{WA-eq2a} is unchanged. We aim to establish that the new menu also satisfies \eqref{WA-eq2b}. The RHS of \eqref{WA-eq2b} is decreased by $\max_{\alpha\in [\underline{c},\overline{c}]}\big[\widehat{s_0^{*}\circ \theta}(\alpha) - K_1(\alpha)\big] - \max_{\alpha\in [\underline{c},\overline{c}]}\big[\widehat{\hat{s}_0\circ \theta}(\alpha) - K_1(\alpha)\big] = \max_{[\underline{c},\overline{r}_0^{*}]}\big[K_0'(\alpha_0)^{*}\cdot \alpha - K_1(\alpha)+ m_0^{*}\big] -\max_{[\underline{c},\overline{r}_0^{*}]}\big[K_0'(\hat{\alpha}_0)\cdot \alpha - K_1(\alpha) + \hat{m}_0\big] = \Delta - (\hat{m}_0 - m_0^{*})$, where
\begin{equation}
    \Delta := \max_{[\underline{c},\overline{r}_0^{*}]}\big[K_0'(\alpha_0^{*})\cdot \alpha - K_1(\alpha) \big] -\max_{[\underline{c},\overline{r}_0^{*}]}\big[K_0'(\hat{\alpha}_0)\cdot \alpha - K_1(\alpha)\big].
\label{WA-eq4}
\end{equation}

Consider the following two cases based on the value of $\alpha_0^{*}$.\\
\\
\textbf{Case 1:} $K_0'(\alpha_0^{*}) \leq K_1'(\overline{r}_0^{*})$.

In this case, we have that $K_1'(\underline{c}) < K_0'(\alpha_0^{*}) \leq K_1'(\overline{r}_0^{*})$, and $K_1'(\underline{c}) < K_0'(\hat{\alpha}_0) <K_1'(\overline{r}_0^{*})$. There exists $\beta_0^{*}$ and $\hat{\beta}_0 \in [\underline{c},\overline{r}_0^{*}]$ such that $K_1'(\beta_0^{*}) = K_0'(\alpha_0^{*})$, $K_1'(\hat{\beta}_0) = K_0'(\hat{\alpha}_0)$ ($\beta_0^{*} > \hat{\beta}_0$). From \eqref{WA-eq4} we have that
\begin{equation}
\begin{aligned}
    \Delta &= \big[K_1'(\beta_0^{*})\cdot \beta_0^{*} - K_1(\beta_0^{*})\big] - \big[K_1'(\hat{\beta}_0)\cdot \hat{\beta}_0 - K_1(\hat{\beta}_0)\big]  \\
    &= \int_{\hat{\beta}_0}^{\beta_0^{*}}t dk_1(t) = \int_{k_0(\hat{\alpha}_0)}^{k_0(\alpha_0^{*})}k_1^{-1}(k)dk. 
\end{aligned}
\label{WA-eq5}
\end{equation}

On the other hand, from \eqref{WA-eq3} we have
\begin{equation}
\begin{aligned}
    \hat{m}_0 - m_0^{*} &= \big[K_0'(\alpha_0^{*})\cdot \alpha_0^{*} - K_0(\alpha_0^{*})\big] - \big[K_0'(\hat{\alpha}_0)\cdot \hat{\alpha}_0 - K_0(\hat{\alpha}_0)\big] \\
    &= \int_{k_0(\hat{\alpha}_0)}^{k_0(\alpha_0^{*})} k_0^{-1}(k)dk < \int_{k_0(\hat{\alpha}_0)}^{k_0(\alpha_0^{*})}k_1^{-1}(k)dk = \Delta. 
\end{aligned}
\end{equation}

Consequently, the RHS of \eqref{WA-eq2b} is reduced under the new menu. Hence, the new menu is feasible. \\
\\
\textbf{Case 2:} $K_0'(\alpha_0^{*}) > K_1'(\overline{r}_0^{*})$.

In this case, we further require that $K_0'(\hat{\alpha}_0) > K_1'(\overline{r}_0^{*})$ when choosing $\hat{\alpha}_0$. Then from \eqref{WA-eq4} we have that
\begin{equation*}
\begin{aligned}
    \Delta &= \big[K_0'(\alpha_0^{*})\cdot \overline{r}_0^{*} - K_1(\overline{r}_0^{*})\big] - \big[K_0'(\hat{\alpha}_0)\cdot \overline{r}_0^{*} - K_1(\overline{r}_0^{*})\big] \\
    &= \big[K_0'(\alpha_0^{*})-K_0'(\hat{\alpha}_0)\big]\cdot \overline{r}_0^{*} = \int_{\hat{\alpha}_0}^{\alpha_0^{*}}\overline{r}_0^{*}dk_0(t) \\
    &\geq \int_{\hat{\alpha}_0}^{\alpha_0^{*}}tdk_0(t) = \hat{m}_0 - m_0^{*},
\end{aligned}
\end{equation*}
where the inequality follows from that $\alpha_0^{*} \leq \overline{r}_0^{*}$. Hence, the RHS of \eqref{WA-eq2b} is reduced under the new menu, and the new menu is also feasible. \\
\\
We've shown that the new menu is always feasible. Under the new menu, the profit from the type-1 agent is unchanged while the profit from the type-0 agent changes from
\begin{equation}
    \Theta_{R_0^{*}}(\alpha_0^{*}) - K_0'(\alpha_0^{*})\cdot \alpha_0^{*} - m_0^{*} = \big[ \Theta_{R_0^{*}}(\alpha_0^{*}) - K_0(\alpha_0^{*})\big] - \big[K_0'(\alpha_0^{*})\cdot \alpha_0^{*} - K_0(\alpha_0^{*})+m_0^{*}\big],
\label{WA-eq7}
\end{equation}
to
\begin{equation}
 \Theta_{R_0^{*}}(\hat{\alpha}_0) - K_0'(\hat{\alpha}_0)\cdot \hat{\alpha}_0 - \hat{m}_0 = \big[ \Theta_{R_0^{*}}(\hat{\alpha}_0) - K_0(\hat{\alpha}_0)\big] - \big[K_0'(\hat{\alpha}_0)\cdot \hat{\alpha}_0 - K_0(\hat{\alpha}_0)+\hat{m}_0\big],
\label{WA-eq8}
\end{equation}

Consider the terms in the first pair of brackets on the RHS of \eqref{WA-eq7} and \eqref{WA-eq8}. We can establish that $\Theta_{R_0^{*}}(\alpha_0^{*}) - K_0(\alpha_0^{*}) < \Theta_{R_0^{*}}(\hat{\alpha}_0) - K_0(\hat{\alpha}_0)$ as follows. First, from $\alpha_0^{*}>\hat{\alpha}_0 > \alpha_0^{FB}$, we have that $\Theta(\hat{\alpha}_0) - K_0(\hat{\alpha}_0) > \Theta(\alpha_0^{*}) - K_0(\alpha_0^{*})$. Furthermore, from Lemma \ref{WA-lm1}, there is $\Theta_{R_0^{*}}'(\alpha) \leq \Theta'(\alpha)$ for all $\alpha\in c(R_0^{*})$. It implies that $\Theta_{R_0^{*}}(\hat{\alpha}_0) - \Theta(\hat{\alpha}_0) \geq \Theta_{R_0^{*}}(\alpha_0^{*}) - \Theta(\alpha_0^{*})$. As a result, we have that $\Theta_{R_0^{*}}(\alpha_0^{*}) - K_0(\alpha_0^{*}) < \Theta_{R_0^{*}}(\hat{\alpha}_0) - K_0(\hat{\alpha}_0)$. 

By \eqref{WA-eq3}, the terms in the second pair of brackets on the RHS of \eqref{WA-eq7} and \eqref{WA-eq8} are equal. In conclusion, under the new menu, the profit obtained from the type-0 agent is strictly improved, leading to a contradiction. Hence, it must hold that $\alpha_0^{*}\leq \alpha_0^{FB}$. \\
\\
\textit{Statement \circled{2}:} Next we show that $\alpha_1^{*}\geq \alpha_1^{MH}$. Suppose this is not the case and we have $\alpha_1^{*}<\alpha_1^{MH}$. Consider the following two cases based on the $m_1^{*}$ value. \\
\\
\textbf{Case 1:} $m_1^{*} >0$.

In this case, we can increase $\alpha_1^{*}$ to $\hat{\alpha}_1$ (and adjust the output distribution accordingly such that $\Theta_{R_1^{*}}(\hat{\alpha}_1)$ is attained) such that $\alpha_1^{*} <\hat{\alpha}_1 < \alpha_1^{MH}$. Decrease $m_1^{*}$ to $\hat{m}_1$ such that 
\begin{equation}
K_1'(\alpha_1^{*})\cdot \alpha_1^{*} - K_1(\alpha_1^{*}) + m_1^{*} = K_1'(\hat{\alpha}_1)\cdot \hat{\alpha}_1 - K_1(\hat{\alpha}_1) + \hat{m}_1.
\label{WA-eq9}
\end{equation}

We require that $\hat{\alpha}_1$ is sufficiently close to $\alpha_1^{*}$ such that $\hat{m}_1 \geq 0$. 

The new menu satisfies \eqref{WA-eq2b} since the LHS of \eqref{WA-eq2b} is unchanged. We divide Case 1 into two subcases and would like to show that \eqref{WA-eq2a} is also satisfied in both subcases.\\
\\
\textbf{Case 1a: }$K_1'(\alpha_1^{*}) \geq K_0'(\underline{r}_1^{*})$. 

In this case, we have that $K_1'(\hat{\alpha}_1) > K_1'(\alpha_1^{*}) \geq K_0'(\underline{r}_1^{*})$. Furthermore, it holds that $K_1'(\alpha_1^{*}) < K_1'(\hat{\alpha}_1) < K_0'(\hat{\alpha}_1) \leq K_0'(\overline{c})$. 

Let $\hat{s}_1$ denote the contract pinned down by $\hat{m}_1$, $\hat{m}_1$ and $R_1^{*}$. Under the new menu, the RHS of \eqref{WA-eq2a} is reduced by $\max_{\alpha \in [\underline{c},\overline{c}]}\big[\widehat{s_1^{*}\circ \theta}(\alpha) - K_0(\alpha)\big]-\max_{\alpha \in [\underline{c},\overline{c}]}\big[\widehat{\hat{s}_1\circ \theta}(\alpha) - K_0(\alpha)\big] = \max_{\alpha\in[\underline{r}_1^{*},\overline{c}]}\big[K_1'(\alpha_1^{*})\cdot \alpha - K_0(\alpha)+m_1^{*}\big]-\max_{\alpha\in[\underline{r}_1^{*},\overline{c}]}\big[K_1'(\hat{\alpha}_1)\cdot \alpha - K_0(\alpha)+\hat{m}_1\big]= m_1^{*} - \hat{m}_1 - \Delta'$, where
\begin{equation}
    \Delta' := \max_{\alpha\in[\underline{r}_1^{*},\overline{c}]}\big[K_1'(\hat{\alpha}_1)\cdot \alpha - K_0(\alpha)\big]-\max_{\alpha\in[\underline{r}_1^{*},\overline{c}]}\big[K_1'(\alpha_1^{*})\cdot \alpha - K_0(\alpha)\big]. 
\label{WA-eq10}
\end{equation}

By $K_0'(\underline{r}_1^{*})\leq K_1'(\alpha_1^{*}) < K_1'(\hat{\alpha}_1)<K_0'(\overline{c})$, there exist $\beta_1^{*},\hat{\beta}_1\in [\underline{r}_1^{*},\overline{c}]$ such that $K_0'(\beta_1^{*}) = K_1'(\alpha_1^{*})$ and $K_0'(\hat{\beta}_1) = K_1'(\hat{\alpha}_1)$ ($\beta_1^{*} < \hat{\beta}_1$). We have that
\begin{equation*}
\begin{aligned}
    \Delta' &= \big[K_0'(\hat{\beta}_1)\cdot \hat{\beta}_1 - K_0(\hat{\beta}_1)\big] -\big[K_0'(\beta_1^{*})\cdot \beta_1^{*} - K_0(\beta_1^{*})\big]\\
    &=\int_{\beta_1^{*}}^{\hat{\beta}_1}tdk_0(t) = \int_{k_1(\alpha_1^{*})}^{k_1(\hat{\alpha}_1)}k_0^{-1}(k)dk,
\end{aligned}
\end{equation*}
and 
\begin{equation*}
\begin{aligned}
    m_1^{*} - \hat{m}_1 &= \big[K_1'(\hat{\alpha}_1)\cdot \hat{\alpha}_1 - K_1(\hat{\alpha}_1)\big] - \big[K_1'(\alpha_1^{*})\cdot \alpha_1^{*} - K_1(\alpha_1^{*})\big] \\
    &=\int_{\alpha_1^{*}}^{\hat{\alpha}_1}tdk_1(t) = \int_{k_1(\alpha_1^{*})}^{k_1(\hat{\alpha}_1)}k_1^{-1}(k)dk \\
    &> \int_{k_1(\alpha_1^{*})}^{k_1(\hat{\alpha}_1)}k_0^{-1}(k)dk = \Delta'. 
\end{aligned}
\end{equation*}

Hence, we have that $m_1^{*} - \hat{m}_1 - \Delta' > 0$. It implies that the RHS of \eqref{WA-eq2b} is reduced under the new menu. Consequently, \eqref{WA-eq2b} is still satisfied. \\
\\
\textbf{Case 1b:} $K_1'(\alpha_1^{*}) <K_0'(\underline{r}_1^{*})$. 

If $\underline{r}_1^{*} = \underline{c}$, we can establish that \eqref{WA-eq2b} is satisfied at the new menu by using arguments similar to those in Case 1a. Hence, we restrict ourselves to the situation where $\underline{r}_1^{*} > \underline{c}$. 

In this case, we further require $\hat{\alpha}_1$ to be sufficiently close to $\alpha_1^{*}$, when choosing $\hat{\alpha}_1$, such that $K_1'(\alpha_1^{*})<K_1'(\hat{\alpha}_1)<K_0'(\underline{r}_1^{*})$. Under the new menu, the RHS of \eqref{WA-eq2b} is reduced by $\max_{\alpha \in [\underline{c},\overline{c}]}\big[\widehat{s_1^{*}\circ \theta}(\alpha) - K_0(\alpha)\big]-\max_{\alpha \in [\underline{c},\overline{c}]}\big[\widehat{\hat{s}_1\circ \theta}(\alpha) - K_0(\alpha)\big] = \max_{\alpha\in[\underline{c},\underline{r}_1^{*}]}\big[\big(K_1'(\alpha_1^{*})+\frac{m_1^{*}}{\underline{r}_1^{*}}\big)\cdot \alpha - K_0(\alpha)\big]-\max_{\alpha\in[\underline{c},\underline{r}_1^{*}]}\big[\big(K_1'(\hat{\alpha}_1)+\frac{\hat{m}_1}{\underline{r}_1^{*}}\big)\cdot \alpha - K_0(\alpha)\big]$. 

We can show that this term is non-negative. From $m_1^{*} - \hat{m}_1 = \int_{\alpha_1^{*}}^{\hat{\alpha}_1}tdk_1(t)\geq \int_{\alpha_1^{*}}^{\hat{\alpha}_1}\underline{r}_1^{*}dk_1(t)= \underline{r}_1^{*}\cdot \big[K_1'(\hat{\alpha}_1) - K_1'(\alpha_1^{*})\big]$, we have that $K_1'(\alpha_1^{*})+\frac{m_1^{*}}{\underline{r}_1^{*}}\geq K_1'(\hat{\alpha}_1)+\frac{\hat{m}_1}{\underline{r}_1^{*}}$, leading to that $\max_{\alpha \in [\underline{c},\overline{c}]}\big[\widehat{s_1^{*}\circ \theta}(\alpha) - K_0(\alpha)\big] \geq \max_{\alpha \in [\underline{c},\overline{c}]}\big[\widehat{\hat{s}_1\circ \theta}(\alpha) - K_0(\alpha)\big]$. Hence, the RHS of \eqref{WA-eq2b} is still satisfied under the new menu.\\
\\
In conclusion, the new menu is also feasible. Under the new menu, the profit from the type-0 agent is unchanged, while the profit from the type-1 agent changes from
\begin{equation}
\Theta_{R_1^{*}}(\alpha_1^{*}) - K_1'(\alpha_1^{*})\cdot \alpha_1^{*} - m_1^{*} = \big[\Theta_{R_1^{*}}(\alpha_1^{*}) - \Theta(\alpha_1^{*})\big] + \big[\Theta(\alpha_1^{*})-K_1'(\alpha_1^{*})\cdot \alpha_1^{*} - m_1^{*}\big],
\label{WA-eq11}
\end{equation}
and
\begin{equation}
\Theta_{R_1^{*}}(\hat{\alpha}_1) - K_1'(\hat{\alpha}_1)\cdot \hat{\alpha}_1 - \hat{m}_1 = \big[\Theta_{R_1^{*}}(\hat{\alpha}_1) - \Theta(\hat{\alpha}_1)\big] + \big[\Theta(\hat{\alpha}_1)-K_1'(\hat{\alpha}_1)\cdot \hat{\alpha}_1 - \hat{m}_1\big].
\label{WA-eq12}
\end{equation}

Consider the terms in the first pair of brackets on the RHS of \eqref{WA-eq11} and \eqref{WA-eq12}. From Lemma \ref{WA-lm1}, we have that $\Theta_{R_1^{*}}'(\alpha)\geq \Theta'(\alpha)$ for all $\alpha \in c(R_1^{*})$. It implies that $\Theta_{R_1^{*}}(\hat{\alpha}_1) - \Theta(\hat{\alpha}_1) \geq \Theta_{R_1^{*}}(\alpha_1^{*})-\Theta(\alpha_1^{*}).$

Next, we consider the terms in the second pair of brackets on the RHS of \eqref{WA-eq11} and \eqref{WA-eq12}. From $\alpha_1^{*}<\hat{\alpha}_1 < \alpha_1^{MH}$, we have that $\Theta(\hat{\alpha}_1) - K_1'(\hat{\alpha}_1)\cdot \hat{\alpha}_1 >\Theta(\alpha_1^{*}) - K_1'(\alpha_1^{*})\cdot \alpha_1^{*}$. From \eqref{WA-eq9}, we have $\hat{m}_1 < m_1^{*}$. In conclusion, the profit from the type-1 agent is strictly increased under the new menu. It is contradictory to the optimality of $(a^{*},m^{*},R^{*},\mu^{*})$. Hence, we must have $\alpha_1^{*} \geq \alpha_1^{MH}$ in Case 1. \\
\\
\textbf{Case 2:} $m_1^{*} = 0$.

According to Corollary \ref{APP-IMP-pp2} and \ref{APP-IMP-pp3}, $m_1^{*} = 0$ implies that $m_0^{*} = 0$. $(a^{*},m^{*},R^{*},\mu^{*})$ satisfies the IC constraints 
\begin{subequations}
\begin{align}
    K_0'(\alpha_0^{*})\cdot \alpha_0^{*} - K_0(\alpha_0^{*}) &\geq \max_{\alpha\in [\underline{c},\overline{c}]}\big[K_1'(\alpha_1^{*})\cdot \alpha - K_0(\alpha)\big], \label{WA-eq13a} \\
    K_1'(\alpha_1^{*})\cdot \alpha_1^{*} - K_1(\alpha_1^{*}) &\geq \max_{\alpha\in [\underline{c}, \overline{r}_0^{*}]}\big[K_0'(\alpha_0^{*})\cdot \alpha - K_1(\alpha)\big]. \label{WA-eq13b}
\end{align}
\end{subequations}

Increase $\alpha_1^{*}$ to $\hat{\alpha}_1$ (and adjust the output distribution accordingly such that $\Theta_{R_1^{*}}(\hat{\alpha}_1)$ is attained) such that $\alpha_1^{*} < \hat{\alpha}_1 <\alpha_1^{MH}$. We require $\hat{\alpha}_1$ to be sufficiently close to $\alpha_1^{*}$ such that \eqref{WA-eq13a} is still satisfied under the new menu. This is obviously possible when $K_0'(\alpha_0^{*})\cdot \alpha_0^{*} - K_0(\alpha_0^{*}) > \max_{\alpha\in [\underline{c},\overline{c}]}\big[K_1'(\alpha_1^{*})\cdot \alpha - K_0(\alpha)\big]$. Suppose we have $K_0'(\alpha_0^{*})\cdot \alpha_0^{*} - K_0(\alpha_0^{*}) = \max_{\alpha\in [\underline{c},\overline{c}]}\big[K_1'(\alpha_1^{*})\cdot \alpha - K_0(\alpha)\big]$, this is only possible when $\alpha_0^{*} = \underline{c}$. In this case, we have $K_1'(\alpha_1^{*}) < K_0'(\alpha_0^{*}) = K_0'(\underline{c})$. We can let $\hat{\alpha}_1$ be sufficiently close to $\alpha_1^{*}$ such that $K_1'(\alpha_1^{*}) < K_1'(\hat{\alpha}_1) < K_0'(\underline{c})$, which guarantees that \eqref{WA-eq13a} is still satisfied under the new menu. \eqref{WA-eq13b} is also maintained since the LHS of \eqref{WA-eq13b} is increased. 

Under the new menu, the profit from the type-0 agent is unchanged, while the profit from the type-1 agent changes from 
\begin{equation}
\Theta_{R_1^{*}}(\alpha_1^{*}) - K_1'(\alpha_1^{*})\cdot \alpha_1^{*} = \big[\Theta_{R_1^{*}}(\alpha_1^{*}) - \Theta(\alpha_1^{*})\big] + \big[\Theta(\alpha_1^{*})-K_1'(\alpha_1^{*})\cdot \alpha_1^{*}\big],
\label{WA-eq14}
\end{equation}
and
\begin{equation}
\Theta_{R_1^{*}}(\hat{\alpha}_1) - K_1'(\hat{\alpha}_1)\cdot \hat{\alpha}_1 = \big[\Theta_{R_1^{*}}(\hat{\alpha}_1) - \Theta(\hat{\alpha}_1)\big] + \big[\Theta(\hat{\alpha}_1)-K_1'(\hat{\alpha}_1)\cdot \hat{\alpha}_1 \big]
\label{WA-eq15}
\end{equation}

From Lemma \ref{WA-lm1}, we have that $\Theta_{R_1^{*}}(\hat{\alpha}_1) - \Theta(\hat{\alpha}_1) \geq \Theta_{R_1^{*}}(\alpha_1^{*})-\Theta(\alpha_1^{*})$. From $\alpha_1^{*}<\hat{\alpha}_1 < \alpha_1^{MH}$, we have that $\Theta(\hat{\alpha}_1)-K_1'(\hat{\alpha}_1)\cdot \hat{\alpha}_1 > \Theta(\alpha_1^{*})-K_1'(\alpha_1^{*})\cdot \alpha_1^{*}$. Hence, the profit from the type-1 agent is strictly increased under the new menu. It is contradictory to the optimality of $(a^{*},m^{*},R^{*},\mu^{*})$. $\hfill\square$
\section{Proofs in Section \ref{MBCF-general}}
\label{MBCF-app-general}
\textbf{Proof of Proposition \ref{NT-pp1}: }\textit{Sufficiency: }Omitted.\\
\\
\textit{Necessity: }Suppose $(a^{*},m^{*},R_F,\mu^{*})$ is feasible, it also optimally solves \eqref{NT-P-N}. Take $\kappa_i^{*} := K_i'(\alpha_i^{*})$. We aim to show that $\kappa_i^{*} = \kappa_j^{*}$ for all $i,j\in \{1,...,N\}$.

Note that in the menu $(a^{*},m^{*},R_F,\mu^{*})$, the contract $s_i^{*}(\cdot)$ assigned to the type-$i$ agent takes the form of $s_i^{*}\circ\theta(\alpha) = \kappa_i^{*} \cdot \alpha + m_i^{*}$ for all $\alpha\in [\underline{c},\overline{c}]$. Hence, we can denote this contract by the pair $(\kappa_i^{*},m_i^{*})$. Let $M := \big\{(\kappa_i^{*},m_i^{*})\big\}_{i\in\{1,...,N\}}$ be the alternative notion of the menu $(a^{*},m^{*},R_F,\mu^{*})$. 

We aim to show that it must hold that $\kappa_1^{*}\leq ...\leq \kappa_n^{*}$. Suppose this is not true, and there exist $i,j\in \{1,...,N\}$ such that $i<j$ and $\kappa_i^{*} > \kappa_j^{*}$. Following the similar argument to that in Lemma \ref{APP-IMP-lm1}, this is impossible. 

Fix $\kappa_1^{*},...,\kappa_N^{*}$ (the corresponding $\alpha_1^{*},...,\alpha_N^{*}$ are also fixed). Consider the constant payments $m_1,...,m_N$ that can implement $\kappa_1^{*},...,\kappa_N^{*}$. For any type $i\in \{1,...,N-1\}$, the upward IC constraints for type-$i$ requires that $m_i - m_{i+1} \geq \max_{\alpha\in [\underline{c},\overline{c}]}\big[\kappa_{i+1}^{*}\cdot \alpha - K_i(\alpha)\big] - \big[\kappa_i^{*}\cdot \alpha_i^{*} - K_i(\alpha_i^{*})\big] = \int_{\kappa_i^{*}}^{\kappa_{i+1}^{*}}k_i^{-1}(k)dk$. Hence, at the menu $M$, it holds that $m_i^{*} -m_{i+1}^{*} \geq \int_{\kappa_i^{*}}^{\kappa_{i+1}^{*}}k_i^{-1}(k)dk$.

Next, we aim to show that $m_1^{*},...,m_N^{*}$ are pinned down by the binding upward IC constraints and the binding LL constraint for type $N$, i.e., 
\begin{equation}
m_i^{*}-m_{i+1}^{*} = \int_{\kappa_i^{*}}^{\kappa_{i+1}^{*}}k_i^{-1}(k)dk \text{ for all }i\in \{1,...,N\}\text{, and } m_N^{*} = 0.
\label{NT-eq1}
\end{equation}

It suffices to show that if we take payments $m_1,...,m_N$ such that $m_i-m_{i+1} = \int_{\kappa_i}^{\kappa_{i+1}^{*}}k_i^{-1}(k)dk$ for all $i\in \{1,...,N-1\}$ and $m_N = 0$, they support $\kappa_1^{*},...,\kappa_N^{*}$ to become a feasible menu. Take any $p,q\in \{1,...,N\}$ such that $p<q$. We have that
\begin{equation*}
m_p - m_q = \int_{\kappa_p}^{\kappa_{p+1}}k_p^{-1}(k)dk+...+\int_{\kappa_{q-1}}^{\kappa_q}k_{q-1}^{-1}(k)dk,
\end{equation*}
which implies that $m_p-m_q\geq \int_{\kappa_p}^{\kappa_q}k_p^{-1}(k)dk$ and $m_p-m_q\leq \int_{\kappa_p}^{\kappa_q}k_q^{-1}(k)dk$. Hence, the IC constraints between arbitrary types $p$ and $q$ ($p<q$) are satisfied. All the LL constraints are also satisfied. In conclusion, we've established that \eqref{NT-eq1} holds at the menu $M$.

Finally, we want to show that $\kappa_1^{*} = \kappa_2^{*}=...=\kappa_N^{*}$. Suppose that it is not the case, and there exists $i\in \{1,...,N-1\}$ such that 
\begin{equation*}
\kappa_1^{*}\leq...\leq \kappa_i^{*} < \kappa_{i+1}^{*}\leq ...\leq \kappa_N^{*}. 
\end{equation*}

Consider any $\hat{\kappa}_i\in [\kappa_i^{*},\kappa_{i+1}^{*}]$. We would like to replace $\kappa_i^{*}$ with $\hat{\kappa}_i$, and we need to find the constant payments that implement the contract powers after the replacement. To do so, we define $\hat{m}_i:= m_i^{*} - \int_{\kappa_i^{*}}^{\hat{\kappa}_i}k_i^{-1}(k)dk$. 

We first verify that $m_1^{*},...,m_{i-1}^{*},\hat{m}_i,m_{i+1}^{*},...,m_N^{*}$ implement $\kappa_1^{*},...,\kappa_{i-1}^{*},\hat{\kappa}_i,\kappa_{i+1}^{*},...,\kappa_N^{*}$. It suffices to verify whether the IC constraints between type $i$ and any other type are satisfied. Take any type $q>i$, 
\begin{equation*}
\hat{m}_i - m_q^{*} = \int_{\hat{\kappa}_i}^{\kappa_{i+1}^{*}}k_i^{-1}(k)dk + ...+\int_{\kappa_{q-1}^{*}}^{\kappa_q^{*}}k_{q-1}^{-1}(k)dk.
\end{equation*}

Hence, we have that $\hat{m}_i - m_1^{*} \geq \int_{\hat{\kappa}_i}^{\kappa_q^{*}}k_i^{-1}(k)dk$ and $\hat{m}_i - m_1^{*} \leq \int_{\hat{\kappa}_i}^{\kappa_q^{*}}k_q^{-1}(k)dk$. The IC constraints between types $i$ and $q$ are satisfied. Then we take any type $p<i$, it holds that
\begin{equation*}
    m_p^{*} - \hat{m}_i = \int_{\kappa_p^{*}}^{\kappa_{p+1}^{*}}k_p^{-1}(k)dk+...+\int_{\kappa_{i-1}^{*}}^{\kappa_i^{*}}k_{i-1}^{-1}(k)dk + \int_{\kappa_i^{*}}^{\hat{\kappa}_i}k_i^{-1}(k)dk.
\end{equation*}

Hence, we have that $m_p^{*} - \hat{m}_i \geq \int_{\kappa_p^{*}}^{\hat{\kappa}_i}k_p^{-1}(k)dk$ and $m_p^{*} - \hat{m}_i \leq \int_{\kappa_p^{*}}^{\hat{\kappa}_i}k_i^{-1}(k)dk$. The IC constraints between types $p$ and $i$ are satisfied. 

Varying $\hat{\kappa}_i$ between $\kappa_i^{*}$ and $\kappa_{i+1}^{*}$, among the menu $\big\{(\kappa_1^{*},m_1^{*}),...,(\hat{\kappa}_1,\hat{m}_1),...,(\kappa_N^{*},m_N^{*})\big\}$, it is optimal to choose the one with $\hat{\kappa_i} = \kappa_i^{*}$. Hence, $(\kappa_i^{*},m_i^{*})$ maximizes
\begin{equation*}
\begin{aligned}
    \max_{\hat{\kappa}_i\in [\kappa_i^{*},\kappa_{i+1}^{*}],\hat{m}_i} & \Theta(k_i^{-1}(\hat{\kappa}_i)) - \hat{\kappa}_i \cdot k_i^{-1}(\hat{\kappa}_i) - \hat{m}_i\\
    s.t.\quad& \hat{m}_i = m_i^{*} - \int_{\kappa_i^{*}}^{\hat{\kappa}_i}k_i^{-1}(k)dk.
\end{aligned}   
\end{equation*}

Equivalently, $\alpha_i^{*}$ maximizes
\begin{equation*}
\begin{aligned}
    \max_{\hat{\alpha}_i \in [\alpha_i^{*},k_i^{-1}(\kappa_{i+1}^{*})],\hat{m}_i} & \Theta(\hat{\alpha}_i) - k_i(\hat{\alpha}_i)\cdot \hat{\alpha}_i - \hat{m}_i\\
    s.t.\quad& \hat{m}_i = m_i^{*} + \big[\kappa_i^{*} \cdot \alpha_i^{*} - K_i(\alpha_i^{*})\big] - \big[k_i(\hat{\alpha}_i)\cdot\hat{\alpha}_i- K_i(\hat{\alpha}_i)\big].
\end{aligned}
\end{equation*}

It implies that
\begin{equation*}
    \alpha_i^{*}\in \argmax_{\hat{\alpha}_i \in [\alpha_i^{*},k_i^{-1}(\kappa_{i+1}^{*})]}\big[\Theta(\hat{\alpha}_i) - K_i(\hat{\alpha}_i)\big],
\end{equation*}
and thus,
\begin{equation*}
    \Theta'(\alpha_i^{*})\leq K_i'(\alpha_i^{*})= \kappa_i^{*}. 
\end{equation*}

For any type $j>i$, we have
\begin{equation}
    \Theta'(\alpha_j^{*}) \leq \Theta'(\alpha_i^{*})\leq \kappa_i^{*} < \kappa_j^{*} \leq K_j'(\alpha_j^{*}) + K_j''(\alpha_j^{*})\cdot \alpha_j^{*}. 
\label{NT-eq2}
\end{equation}

We have that $\alpha_j^{*} > \alpha_j^{MH}$. Take $\Delta \kappa$ sufficiently small such that $\kappa_{i+1}^{*} - \Delta \kappa > \kappa_i^{*}$, and that $k_j^{-1}(\kappa_j^{*}-\Delta\kappa) > \alpha_j^{MH}$ for all $j>i$. Decrease $\kappa_j^{*}$ by $\Delta \kappa$ for all $j>i$. Adjust the constant payments according to \eqref{NT-eq1}. It is straightforward to see that all the constant payments are weakly reduced. The principal's profit from the type-$j$ agent is weakly increased for all $j\in \{1,...,N\}$. And the increase is strict for type $j > i$. Hence, we have found a strict improvement on the menu $M$, contradicting its optimality. $\hfill\square$ \\
\\
\textbf{Proof of Proposition \ref{OSG-pp1}: }Take an optimal menu $M = \{s_t^{*}(\cdot),\mu_t^{*}\}_{t\in T}$ to \eqref{OSG-GP}. Suppose that the statement \circled{2} doesn't hold. Let $\Pi_0$ denote the set of type $t\in T$ such that $s_t^{*}(\cdot)$ does not have full range with regard to $t$. It holds that $\phi(\Pi_0) = 0$ and $s_t^{*}(\cdot)$ has full range with regard to $t$ for all $t\in T\backslash \Pi_0$. We aim to show that $s_t^{*}(\cdot)$ is the same across all types $t\in T\backslash \Pi_0$ except those in a zero-measure subset of $T\backslash \Pi_0$. 

Suppose that this is not the case. Then for any possible contract $s(\cdot)$, there exists a positive-measure set of types such that for any $t$ in that set, it holds that $s_t^{*}(\cdot)\neq s(\cdot)$. We aim to show that, in this case, we can strictly improve the principal's profit by replacing the contracts to agents of types $t\in T\backslash \Pi_0$ with a single contract.

For all $t\in T\backslash \Pi_0$, since $s_t^{*}(\cdot)$ has full range with regard to type $t$, then there exists constant $m_t^{*}\in \mathbb{R}$ such that
\begin{equation*}
\begin{aligned}
    s_t^{*}(x) &= K_t'\big(\int c(y)\mu_t^{*}(dy)\big)\cdot c(x) + m_t^{*}  \\
    &= K_t'\big(\int c(y)\mu_t^{*}(dy)\big)\cdot c\cdot x +  K_t'\big(\int c(y)\mu_t^{*}(dy)\big) \cdot d + m_t^{*}.
\end{aligned}
\end{equation*}
Define $\kappa_t^{*} := K_t'\big(\int c(y)\mu_t^{*}(dy)\big)\cdot c$ and $b_t^{*} := K_t'\big(\int c(y)\mu_t^{*}(dy)\big) \cdot d + m_t^{*}$. Then $s_t^{*}(\cdot)$ can be written as $s_t^{*}(x) = \kappa_t^{*}\cdot x + b_t^{*}$ for all $t\in T\backslash \Pi_0$, where $\kappa_t^{*}\in [K_t'(\underline{c})\cdot c,K_t'(\overline{c})\cdot c]$. 

Define $\overline{T}: = \{t\in T\backslash \Pi_0: \kappa_t^{*}>1\}$. Consider the following two cases based on $\phi(\overline{T})$. \\
\\
\textbf{Case 1: }Suppose that $\phi(\overline{T}) > 0$.

Take some $t_0 \in \overline{T}$. We have that $s_{t_0}^{*}(x) = \kappa_{t_0}^{*}\cdot x + b_{t_0}^{*}$ where $\kappa_{t_0}^{*} >1$. Take $\delta = \frac{1}{\kappa_{t_0}^{*}}\in (0,1)$. Define $\tilde{\kappa}:= \delta \cdot \kappa_{t_0}^{*} = 1$, $\tilde{b}:= \delta \cdot b_{t_0}^{*}$ and $\tilde{s}(x):= \tilde{\kappa}\cdot x + \tilde{b} = \delta \cdot s_{t_0}^{*}(x)$. Then $\tilde{s}(\cdot)$ satisfies limited liability. We consider offering $\tilde{s}(\cdot)$ to each type-$t$ ($t\in T$) agent instead of $s_t^{*}(\cdot)$. Let $\tilde{\mu}_t$ denote the distribution recommended to type-$t$ agent under contract $\tilde{s}(\cdot)$. We aim to show that the principal's profit is strictly increased after replacing the menu $M$ with $\{\tilde{s}(\cdot),\tilde{\mu}_t\}_{t\in T}$. 

For the type-$t$ agent with $t\in T\backslash \Pi_0$, fix $\mu_t$ as his action choice, then the utility obtained by the agent is lower under $\tilde{s}(\cdot)$ than under $s_t^{*}(\cdot)$: 
\begin{equation}
\int s_t^{*}(x)\mu_t^{*}(dx) - C_t(\mu_t^{*})\geq \int s_{t_0}^{*}(x)\mu_t^{*}(dx) - C_t(\mu_t^{*})\geq \int \tilde{s}(x)\mu_t^{*}(dx) - C_t(\mu_t^{*}),
\label{OSG-eq8}
\end{equation}
where the first inequality follows from incentive compatibility and the second one follows from that $\tilde{s}(x) = \delta\cdot s_{t_0}^{*}(x)$ with $\delta \in (0,1)$ and $s_{t_0}^{*}(x)\geq 0$ for all $x\in [\underline{x},\overline{x}]$. It implies that with $\mu_t^{*}$ fixed, the principal's profit from the type-$t$ agent under $\tilde{s}(\cdot)$ is no smaller than that under $s_t^{*}(\cdot)$:
\begin{equation}
    \int \big[x - \tilde{s}(x)\big]\mu_t^{*}(dx) \geq \int \big[x-s_t^{*}(x)\big]\mu_t^{*}(dx). 
\label{OSG-eq7}
\end{equation}

Next, we show that \eqref{OSG-eq7} is strict for type $t \in \overline{T}$. For such a type $t$, we have $\kappa_t^{*}>1$. According to Lemma \ref{PRE-lm2}, $\int c(y)\mu_t^{*}(dy) \in \argmax_{\alpha\in [\underline{c},\overline{c}]}\big[\frac{\kappa_t^{*}}{c}(\alpha -d)+b_t^{*} - K_t'(\alpha)\big]$. Under Assumption \ref{OSG-as1}, $\frac{\kappa_t^{*}}{c} > \frac{1}{c} > K_t'(\underline{c})$, then we have that $\int c(x)\mu_t^{*}(dx) > \underline{c}$, which implies that $\text{supp}(\mu_t^{*}) \neq \{\underline{x}\}$. Due to that $s_{t_0}^{*}(\underline{x})\geq 0$ and that $s_{t_0}^{*}(\cdot)$ is strictly increasing over $[\underline{x},\overline{x}]$, it holds that $\int s_{t_0}^{*}(x)\mu_t^{*}(dx) > 0$, which implies that the second inequality in \eqref{OSG-eq8} is strict in this case, leading to the strictness of \eqref{OSG-eq7}. 

Furthermore, due to that $\tilde{\kappa} = 1$, we have that the type-$t$ agent's shift from $\mu_t^{*}$ to $\tilde{\mu}_t$ facing with $\tilde{s}(\cdot)$ has no impact on the principal's profit: 
\begin{equation}
    \int \big[x - \tilde{s}(x)\big]\tilde{\mu}_t(dx) = \int \big[x - \tilde{s}(x)\big]\mu_t^{*}(dx). 
\label{OSG-eq9}
\end{equation}

Combining \eqref{OSG-eq7} and \eqref{OSG-eq9}, we have that 
\begin{equation*}
\begin{aligned}
    \int \big[x - \tilde{s}(x)\big]\tilde{\mu}_t(dx) &\geq \int \big[x - s_t^{*}(x)\big]\mu_t^{*}(dx),\quad\forall t\in T\backslash \Pi_0, \\
    \text{and }\int \big[x - \tilde{s}(x)\big]\tilde{\mu}_t(dx) &> \int \big[x - s_t^{*}(x)\big]\mu_t^{*}(dx),\quad\forall t\in \overline{T},
\end{aligned}
\end{equation*}
which indicates that the principal's profit from agents with types in $T\backslash \Pi_0$ is increased after the replacement while the increase is strict for types in $\overline{T}$. In conclusion, the replacement strictly increases the principal's expected profit. \\
\\
\textbf{Case 2: }Suppose that $\phi(\overline{T}) = 0$.

It indicates that for all $t\in T\backslash(\Pi_0\cup \overline{T})$, we have that $\kappa_t^{*} \leq 1$. Take $\tilde{\kappa}:= \sup_{t\in T\backslash(\Pi_0\cup \overline{T})}\kappa_t^{*}$. We consider Case 2a and Case 2b based on whether this supremum is attainable, and construct a contract $\tilde{s}(\cdot)$, such that \eqref{OSG-eq7} is established for all $t\in T\backslash(\Pi_0\cup \overline{T})$. \\
\\
\textbf{Case 2a: }If $\argmax_{t\in T\backslash(\Pi_0\cup \overline{T})}\kappa_t^{*}$ is non-empty, take any $t_0 \in \argmax_{t\in T\backslash(\Pi_0\cup \overline{T})}\kappa_t^{*}$. Then $\tilde{\kappa} = \sup_{t\in T\backslash(\Pi_0\cup \overline{T})}\kappa_t^{*} = \kappa_{t_0}^{*}$. Define $\tilde{b}: = b_{t_0}^{*}$ and $\tilde{s}(x): = \tilde{\kappa}\cdot x + \tilde{b} = s_{t_0}^{*}(x)$ for all $x\in [\underline{x},\overline{x}]$. Then the contract $\tilde{s}(\cdot)$ belongs to the original menu $M$, and \eqref{OSG-eq7} can be established for all $t\in T\backslash(\Pi_0\cup \overline{T})$ due to incentive compatibility. \\
\\
\textbf{Case 2b: }If $\argmax_{t\in T\backslash(\Pi_0\cup \overline{T})}\kappa_t^{*}$ is empty, define $\tilde{b} := \min\{b: \tilde{\kappa}\cdot x + b\geq 0, \forall x\in [\underline{x},\overline{x}]\}$ and $\tilde{s}(x) = \tilde{\kappa}\cdot x + \tilde{b}$ for all $x\in [\underline{x},\overline{x}]$.

There exists a sequence of contracts $\{s^n(\cdot)\}_{n\in \mathbb{N}}$ such that for each $n\in \mathbb{N}$, $s^n(\cdot)$ belongs to $\{s_t^{*}(\cdot)\}_{t\in T\backslash(\Pi_0\cup\overline{T})}$ and that $\lim_{n\to \infty}\kappa^n = \tilde{\kappa}$, where $\kappa^n$ is the slope of $s^n(\cdot)$. Let $b^n$ denote the intercept of $s^n(\cdot)$. We aim to show that $\tilde{b}\leq b^n$ for all $n\in \mathbb{N}$. Suppose this is not the case and there exists $n\in \mathbb{N}$ such that $\tilde{b} > b^n$. Since $\tilde{\kappa} > \kappa^n$, we have that $\tilde{\kappa}\cdot \underline{x} + \tilde{b} > \kappa^n\cdot \underline{x} + b^n \geq  0$, which indicates that $\tilde{b}$ can be further reduced without violating limited liability of contract $\tilde{s}(\cdot)$, which is contradictory to the definition of $\tilde{b}$. 

For any $t\in T\backslash (\Pi_0\cup \overline{T})$, for any $n\in \mathbb{N}$,
\begin{equation}
\int s_t^{*}(x)\mu_t^{*}(dx) \geq \int s^n(x)\mu_t^{*}(dx) = \int(\kappa^n\cdot x + b^n)\mu_t^{*}(dx) \geq \int (\kappa^n \cdot x + \tilde{b})\mu_t^{*}(dx). 
\label{OSG-eq10}
\end{equation}
Take $n\to \infty$ on the right-hand side of \eqref{OSG-eq10}. We have that $\lim_{n\to \infty} \int (\kappa^n \cdot x + \tilde{b})\mu_t^{*}(dx) = \int (\tilde{\kappa}\cdot x + \tilde{b})\mu_t^{*}(dx) = \int \tilde{s}(x)\mu_t^{*}(dx)$. Hence, \eqref{OSG-eq10} indicates that $\int s_t^{*}(x)\mu_t^{*}(dx)\geq \int \tilde{s}(x)\mu_t^{*}(dx)$, which leads to \eqref{OSG-eq7}. \\

In conclusion, we have constructed the single contract $\tilde{s}(\cdot)$ and established \eqref{OSG-eq7} for all $t\in T\backslash (\Pi_0\cup \overline{T})$ in both Case 2a and Case 2b.

Now we consider how the principal's profit is changed by replacing the menu $M$ with $\{\tilde{s}(\cdot),\tilde{\mu}_t\}_{t\in T}$, where $\tilde{\mu}_t$ is the distribution recommended to the type-$t$ agent under contract $\tilde{s}(\cdot)$. We require that $\tilde{s}(\cdot)$ implements $\tilde{\mu}_t$ for the type-$t$ agent. 

Since contract $s_t^{*}(\cdot)$ ($t\in T$) implements $\mu_t^{*}$, while $\tilde{s}(\cdot)$ implements $\tilde{\mu}_t$. We have that for all $t\in T\backslash(\Pi_0\cup \overline{T})$,
\begin{equation*}
\begin{aligned}
    \int \tilde{s}(x)\tilde{\mu}_t(dx) - C_t(\tilde{\mu}_t) &\geq \int \tilde{s}(x)\mu_t^{*}(dx) - C_t(\mu_t^{*}), \\
    \int s_t^{*}(x)\mu_t^{*}(dx) - C_t(\mu_t^{*}) &\geq \int s_t^{*}(x)\tilde{\mu}_t(dx) - C_t(\tilde{\mu}_t), 
\end{aligned}
\end{equation*}
\begin{equation}
    \Longrightarrow \int \tilde{s}(x)(\tilde{\mu}_t - \mu_t^{*})(dx) \geq \int s_t^{*}(x)(\tilde{\mu}_t-\mu_t^{*})(dx). 
\label{OSG-eq11}
\end{equation}

We can write \eqref{OSG-eq11} as $\tilde{\kappa}\big(\mathbb{E}[\tilde{\mu}_t] - \mathbb{E}[\mu_t^{*}]\big)\geq \kappa_t^{*}\big(\mathbb{E}[\tilde{\mu}_t] -\mathbb{E}[\mu_t^{*}]\big)$. If $\tilde{\kappa} > \kappa_t^{*}$, it implies that $\mathbb{E}[\tilde{\mu}_t]\geq \mathbb{E}[\mu_t^{*}]$. If $\tilde{\kappa} = \kappa_t^{*}$, then $\mu_t^{*}$ can also be implemented by $\tilde{s}(\cdot)$. Let $\tilde{\mu}_t = \mu_t^{*}$, then we have that $\mathbb{E}[\tilde{\mu}_t] = \mathbb{E}[\mu_t^{*}]$. In conclusion, \eqref{OSG-eq11} implies that $\mathbb{E}[\tilde{\mu}_t]\geq \mathbb{E}[\mu_t^{*}]$ for all $t\in T\backslash (\Pi_0\cup \overline{T})$. 

Consequently, the principal's profit from the type-$t$ agent under contract $\tilde{s}(\cdot)$ is increased when the agent shifts from $\mu_t^{*}$ to $\tilde{\mu}_t$: 
\begin{equation*}
\int \big[ x - \tilde{s}(x)\big]\tilde{\mu}_t(dx) - \int\big[x-\tilde{s}(x)\big]\mu_t^{*}(dx) = (1-\tilde{\kappa}) \cdot \big(\mathbb{E}[\tilde{\mu}_t] - \mathbb{E}[\mu_t^{*}]\big) \geq 0.
\end{equation*}
\begin{equation}
\Longrightarrow \int \big[ x-\tilde{s}(x)\big]\tilde{\mu}_t(dx) \geq \int\big[x-\tilde{s}(x)\big]\mu_t^{*}(dx). 
\label{OSG-eq12}
\end{equation}

Combining \eqref{OSG-eq7} and \eqref{OSG-eq12}, we have that $\int \big[ x-\tilde{s}(x)\big]\tilde{\mu}_t(dx)\geq \int\big[x - s_t^{*}(x)\big]\mu_t^{*}(dx)$ for all $t\in T\backslash(\Pi_0\cup \overline{T})$. It indicates that the principal wouldn't get worse off after replacing menu $M$ with $\{\tilde{s}(\cdot),\mu_t^{*}\}_{t\in T}$. Next, we aim to establish that this replacement actually leads to a strict improvement.

According to our assumption, there must be a positive measure set $\Pi_1 \subset T\backslash (\Pi_0\cup \overline{T})$ of types such that $s_t^{*}(\cdot)\neq \tilde{s}(\cdot)$, for all $t\in \Pi_1$. Then for any $t$ from $\Pi_1$, we have that
\begin{equation*}
    \text{either }\tilde{\kappa} > \kappa_t^{*}\text{ or }\tilde{\kappa} = \kappa_t^{*}, b_t^{*} > \tilde{b}. 
\end{equation*}

We revisit Case 2a and Case 2b  and establish the argument in each case\\
\\
\textbf{Case 2a revisited: }In this case, $\tilde{\kappa} = \argmax_{t\in T\backslash (\Pi_0\cup \overline{T})}\kappa_t^{*} \leq 1$.

Take any $t\in \Pi_1$. Suppose it holds that $\tilde{\kappa} > \kappa_t^{*}$. From Lemma \ref{PRE-lm2}, we have that $\int c(y)\tilde{\mu}_t(dy)\in \argmax_{\alpha\in[\underline{c},\overline{c}]} \big[\frac{\tilde{\kappa}}{c}(\alpha-d)+\tilde{b} - K_t(\alpha)\big]$ and $\int c(y)\mu_t^{*}(dy) \in \argmax_{\alpha\in [\underline{c},\overline{c}]} \big[\frac{\kappa_t^{*}}{c}(\alpha-d)+ b_t^{*} - K_t(\alpha)\big]$. According to Assumption \ref{OSG-as1}, we have that $K_t'(\overline{c})>\frac{1}{c} \geq \frac{\tilde{\kappa}}{c} > \frac{\kappa_t^{*}}{c} \geq K_t'(\underline{c})$. It indicates that
\begin{equation}
    \overline{c} > \int c(y)\tilde{\mu}_t(dy) > \int c(y)\mu_t^{*}(dy),
    \label{OSG-eq17}
\end{equation}
which implies that $\mathbb{E}[\mu_t^{*}] > \mathbb{E}[\mu_t]$. If $\tilde{\kappa}<1$, then the inequality in \eqref{OSG-eq12} is strict in this case for type $t$. While if $\tilde{\kappa} = 1$, we can show that \eqref{OSG-eq7} is strict. Suppose this is not the case and we have \begin{equation}
    \int s_t^{*}(x)\mu_t^{*}(dx)- C_t(\mu_t^{*}) = \int \tilde{s}(x)\mu_t^{*}(dx)-C_t(\mu_t^{*}).
\label{OSG-eq16}
\end{equation}  

From \eqref{OSG-eq17} we have that
\begin{equation*}
\int \tilde{s}(x)\tilde{\mu}_t(dx) - C_t(\tilde{\mu}_t) = \max_{\alpha\in[\underline{c},\overline{c}]} \big[\frac{\tilde{\kappa}}{c}(\alpha-d)+ \tilde{b} - K_t(\alpha)\big] > \int \tilde{s}(x)\mu_t^{*}(dx) - C_t(\mu_t^{*}).
\end{equation*}

Combined with \eqref{OSG-eq16}, it implies that $\int \tilde{s}(x)\tilde{\mu}_t(dx) - C_t(\tilde{\mu}_t) > \int s_t^{*}(x)\mu_t^{*}(dx)- C_t(\mu_t^{*})$, which violates incentive compatibility. Hence, the inequality \eqref{OSG-eq7} must be strict when $\tilde{\kappa} = 1$.

Suppose it holds that $\tilde{\kappa} = \kappa_t^{*}$ yet $b_t^{*} > \tilde{b}$. Then \eqref{OSG-eq7} is strict for type $t$. In conclusion, the replacement leads to a strict improvement in the principal's profit.\\
\\
\textbf{Case 2b revisited: }In this case, $\argmax_{t\in T\backslash (\Pi_0)\cup \overline{T}}$ is empty, and $\tilde{s}(\cdot)\notin \{s_t^{*}(\cdot)\}_{t\in T\backslash (\Pi_0)\cup \overline{T}}$. In this case $\Pi_1$ is exactly $T\backslash (\Pi_0\cup \overline{T})$. If $\tilde{\kappa} < 1$, similar to Case 2a, we can show that for all $t\in \Pi_1$, the inequality in \eqref{OSG-eq12} is strict, thereby establishing that the replacement strictly increases the principal's profit. 

If $\tilde{\kappa} = 1$, we aim to show that the inequality in \eqref{OSG-eq7} is strict for all $t\in \Pi_1$. 

Given $t\in \Pi_1$ and $\kappa_t^{*} < \tilde{\kappa} = 1$, suppose that the inequality in \eqref{OSG-eq7} is not strict, then we have that
\begin{equation}
\int s_t^{*}(x)\mu_t^{*}(dx) - C_t(\mu_t^{*}) = \int \tilde{s}(x)\mu_t^{*}(dx) - C_t(
\mu_t^{*}). 
\label{OSG-eq13}
\end{equation}

Similar to Case 2a, we can establish that
\begin{equation}
\begin{aligned}
\int \tilde{s}(x)\tilde{\mu}_t(dx) - C_t(\tilde{\mu}_t) > \int s_t^{*}(x)\mu_t^{*}(dx) - C_t(\mu_t^{*}).
\end{aligned}
\label{OSG-eq14}
\end{equation}

On the other hand, there exists a sequence $\{s^n(\cdot)\}_{n\in \mathbb{N}}\subseteq \{s_t^{*}(\cdot)\}_{t\in T\backslash (\Pi_0\cup \overline{T})}$, where $s^n(\cdot)$ can be written as $s^n(x) = \kappa^n \cdot x + b^n$, with $\kappa_t^{*} < \kappa^n < 1$ for all $n\in \mathbb{N}$, satisfying that $\lim_{n\to \infty} \kappa^n= 1$. As the next step, we show that $\lim_{n\to\infty} b^n = \tilde{b}$. 

By the definition of $\tilde{b}$ in Case 2b, there is $b^n \geq \tilde{b}$ for all $n\in \mathbb{N}$. By the incentive compatibility of the type-$t$ agent, it holds that, for all $n\in \mathbb{N}$,
\begin{equation}
\int s_t^{*}(x)\mu_t^{*}(dx) \geq \int s^n(x)\mu_t^{*}(dx) = \kappa^n \cdot \mathbb{E}[\mu_t^{*}] + b^n \geq \kappa^n \cdot \mathbb{E}[\mu_t^{*}] + \tilde{b}.
\label{OSG-eq15}
\end{equation}
Take $n\to \infty$ on the right-hand side of \eqref{OSG-eq15}, we have that $\lim_{n\to \infty} \kappa^n\cdot \mathbb{E}[\mu_t^{*}] + \tilde{b} = \tilde{\kappa}\cdot \mathbb{E}[\mu_t^{*}] + \tilde{b} = \int \tilde{s}(x)\mu_t^{*}(dx) = \int s_t^{*}(x)\mu_t^{*}(dx)$, where the last term is equal to the left-hand side of \eqref{OSG-eq15}. It indicates that $\lim_{n\to \infty} \kappa^n \mathbb{E}[\mu_t^{*}] + b^n = \lim_{n\to \infty} \kappa^n \mathbb{E}[\mu_t^{*}] + \tilde{b}$, which implies that $\lim_{n\to \infty} b^n = \tilde{b}$. 

Let $\mu_t^n$ denote some best response of the type-$t$ agent to contract $s^n(\cdot)$. According to Lemma \ref{PRE-lm2}, for sufficiently large $n$, we have that 
\begin{equation*}
\begin{aligned}
\int s^n(x) \mu_t^n(dx) - C_t(\mu_t^n) &= \max_{\alpha\in[\underline{c},\overline{c}]} \big[\frac{\kappa^n}{c}\cdot (\alpha-d) + b^n - K_t(\alpha)\big]\\
& = \frac{\kappa^n}{c}\big[k_t^{-1}(\frac{\kappa^n}{c})-d\big] + b^n - K_t(k_t^{-1}(\frac{\kappa^n}{c})),
\end{aligned}
\end{equation*}
and
\begin{equation*}
\begin{aligned}
\int \tilde{s}(x) \tilde{\mu}_t(dx) - C_t(\tilde{\mu}_t) &= \max_{\alpha \in [\underline{c},\overline{c}]}\big[\frac{\tilde{\kappa}}{c}\cdot(\alpha-d) + \tilde{b} - K_t(\alpha)\big]\\
&= \frac{\tilde{\kappa}}{c}\big[k_t^{-1}(\frac{\tilde{\kappa}}{c})-d\big] + \tilde{b} - K_t(k_t^{-1}(\frac{\tilde{\kappa}}{c})).
\end{aligned}
\end{equation*}

Given $\lim_{n\to\infty}\kappa^n = \tilde{\kappa}$, $\lim_{n\to \infty} b^n = \tilde{b}$ and the continuity of $k_t(\cdot)$, we have that 
\begin{equation*}
\lim_{n\to\infty}\int s^n(x) \mu_t^n(dx) - C_t(\mu_t^n) = \int \tilde{s}(x) \tilde{\mu}_t(dx) - C_t(\tilde{\mu}_t) > \int s_t^{*}(x)\mu_t^{*}(dx) - C_t(\mu_t^{*}),
\end{equation*}
where the last inequality follows from \eqref{OSG-eq14}. Then there exists $n\in \mathbb{N}$ such that 
\begin{equation*}
\int s^n(x)\mu_t^n(dx) - C_t(\mu_t^n)>\int s_t^{*}(x)\mu_t^{*}(dx) - C_t(\mu_t^{*}),
\end{equation*}
which is contradictory to the incentive compatibility of the type-$t$ agent. Hence, we have that the inequality in \eqref{OSG-eq7} is strict for type $t\in \Pi_1$ such that $\kappa_t^{*} < \tilde{\kappa} = 1$. In conclusion, the replacement leads to a strict increase in the principal's profit. $\hfill\square$

\end{document}